\newtheorem{thm}{Theorem}[section]
\newtheorem{prp}[thm]{Proposition}
\newtheorem{cor}[thm]{Corollary}
\newtheorem{lm}[thm]{Lemma}
\newtheorem{df}[thm]{Definition}
\newtheorem{rk}[thm]{Remark}
\newtheorem{proposition}[thm]{Proposition}
\newtheorem{lemma}[thm]{Lemma}
\newcommand{\m}[1]{\mathbb{#1}}
\newcommand{\q}[1]{\mathcal{#1}}
\newcommand{\wht}[1]{\widetilde{#1}}
\newcommand{\gra}[1]{\mathbf{#1}}
\newcommand{\ep}{\varepsilon}
\newcommand{\f}{\frac}
\newcommand{\rd}{\partial}
\newcommand{\nab}{\nabla}
\newcommand{\alp}{\alpha}
\newcommand{\bt}{\beta}
\newcommand{\bA}{{\bf A}}
\newcommand{\bB}{{\bf B}}
\newcommand{\bC}{{\bf C}}
\newcommand{\gi}{(g^{-1})}
\newcommand{\mfg}{\mathfrak g}
\newcommand{\bfG}{{\bf \Gamma}}
\newcommand{\ls}{\lesssim}
\newcommand{\de}{\delta}
\newcommand{\co}[1]{\cos\left(#1\right)}
\newcommand{\si}[1]{\sin\left(#1\right)}
\numberwithin{equation}{section}
\begin{document}

\title
{High-frequency backreaction for the Einstein equations under polarized $\mathbb U(1)$ symmetry}

\begin{abstract}
{K}nown examples in plane symmetry or Gowdy symmetry show that given a $1$-parameter family of solutions to the vacuum Einstein equations, it may have a weak limit which does not satisfy the vacuum equations, but instead has a non-trivial stress-energy-momentum tensor. We consider this phenomenon under polarized $\mathbb U(1)$ symmetry -- a much weaker symmetry than most of the known examples -- such that the stress-energy-momentum tensor can be identified with that of multiple families of null dust propagating in distinct directions. 
We prove that any generic local-in-time small-data polarized-$\mathbb U(1)$-symmetric solution to the Einstein--multiple null dust system can be achieved as a weak limit of vacuum solutions. Our construction allows the number of families to be arbitrarily large, and appears to be the first construction of such examples with more than two families. 
\end{abstract}

\author{C\'ecile Huneau}
\address{Institut Fourier, Universit\'e Grenoble-Alpes, 100 rue des maths, 38610 Gi\`eres, France}
\email{cecile.huneau@univ-grenoble-alpes.fr}
\author{Jonathan Luk}
\address{Department of Mathematics, Stanford University, CA 94304, USA}
\email{jluk@stanford.edu}
	
	\maketitle

\section{Introduction}

There is a long tradition in the physics literature since the work of Isaacson \cite{Isaacson1, Isaacson2} studying high frequency backreaction in general relativity \cite{CBHF, GW1, GW2, MT}. {In particular}, it has been observed that suitably scaled small-amplitude but high-frequency gravitational waves give rise to a non-trivial ``$O(1)$'' contribution to the background metric which mimics an ``effective matter field''. For the purpose of this paper, we consider the following mathematical {formulation} by Burnett \cite{Burnett}: On a fixed manifold, consider a one-parameter family of Lorentzian metrics $g_\lambda$ for $\lambda \in (0,\lambda_0]$, $\lambda_0\in \mathbb R$ which satisfies the Einstein vacuum equations\footnote{We use the notation that for a given metric $g$, $R_{\mu\nu}$ denotes its Ricci tensor while $R(g)$ denotes its scalar curvature.}
\begin{equation}\label{Einstein.vac}
R_{\mu\nu}(g_\lambda)=0.
\end{equation}
Assume moreover that there exists a metric $g_0$ such that as $\lambda\to 0$,
\begin{itemize}
\item $g_{\lambda}$ converges to $g_0$ {uniformly on compact sets}; and 
\item the derivatives of $g_{\lambda}$ converges weakly in {$L^2$} to the derivatives of $g_0$.
\end{itemize}
Due to the nonlinearity of the equations, it can happen that the limiting metric $g_0$ is a solution to the Einstein equations which is \underline{not} vacuum, but has a \underline{non-trivial} stress-energy-momentum tensor, i.e., 
\begin{equation}\label{Einstein.w.matter}
R_{\mu \nu}(g_0)-\frac{1}{2}(g_0)_{\mu \nu} R(g_0)= T_{\mu \nu}
\end{equation}
for some $T_{\mu\nu}\neq 0$. We will call the tensor $T_{\mu\nu}$ for the limiting metric the \emph{effective stress-energy-momentum tensor}.

Two questions arise in this context: First, what are the possible matter models\footnote{{There are important caveats for this question though: $T_{\mu\nu}$ that arises may not always be identifiable with some well-known matter models. Also, in general for a given $T_{\mu\nu}$, there is no reason to expect that it corresponds uniquely to some matter models.}} that can arise as limiting effective stress-energy-momentum tensor? Second, if a certain matter model arise as such limits, is it true that \underline{all} solutions (at least in a certain solution regime) to the Einstein equations coupled with that matter model can be achieved as a limit of vacuum solutions in the sense described above?

In connection to the questions above, Burnett \cite{Burnett} conjectured that all such limits can be identified as solutions to the Einstein--massless Vlasov system. Moreover, in the same work, he also asked the question whether all solutions to the Einstein--massless Vlasov system can be achieved as a limit (in the sense described above) of solutions to the Einstein vacuum equations. Both of these questions remain open. In the direction of constraining possible effective stress-energy-momentum tensors, it is known by the work of Green--Wald \cite{GW1} that the effective stress-energy-momentum tensor $T_{\mu\nu}$ verifies non-trivial constraints: in fact, it must be traceless and satisfy the weak energy condition. These conditions are consistent with - but clearly much weaker than - requiring $T_{\mu\nu}$ be identifiable as that of massless Vlasov matter. In the direction of constructing examples of effective stress-energy-momentum tensors, in the only known examples in the literature, $T_{\mu\nu}$ of the limiting spacetime can be identified with that of one or two families of null dust. This can indeed be viewed at least formally as a special (singular) case of massless Vlasov matter.

The aim of this paper is to construct {one-parameter families of metrics} such that the effective stress-energy-momentum tensor represents that of a sum of an \underline{arbitrary} finite number of families of null dust propagating in different directions. In particular, our work can be viewed as a first step towards approaching the questions of Burnett: one may hope to eventually construct more general solutions to the Einstein--massless Vlasov system by taking the number of families of null dust to infinity. Our construction assumes that the one-parameter families of metrics and the limiting metric are polarized $\mathbb U(1)$ symmetric. {In} this setting, not only do we construct particular examples, we show that in fact any generic, polarized $\mathbb U(1)$-symmetric, small data solution to the Einstein--null dust system (with an arbitrary finite number of families of null dust) arise as a weak limit of solutions to the Einstein vacuum equations.

The mathematical challenge in studying this problem is that the metrics one construct necessarily oscillate with high frequency. In the process, one needs to handle solutions with very weak uniform estimates and deal with issues familiar in the setting of low-regularity problems. {In fact, in order to obtain a non-trivial stress-energy-momentum tensor in the limit, the one parameter family of metrics can at best obey uniform $W^{1,\infty}$ estimates, but not any uniform estimates for the higher order derivatives. On the other hand, even to ensure uniform time of existence of solutions, one in general needs the metrics to be in $W^{2,2}$ uniformly, cf.~\cite{KRS}.} Most previous examples \cite{Burnett, GW2} rely on the fact that under plane or polarized Gowdy symmetry, the solutions to the Einstein vacuum equations, even when the regularity is low, are explicit. On the other hand, these symmetries limits the effective stress-energy-momentum tensor to represent at most two families of null dusts.

{One way to obtain a low regularity local existence result for the Einstein equations which is consistent with the high frequency oscillations described above is to assume that the initial data are more regular in \emph{some} directions.} {This has been achieved in the recent work of Luk--Rodnianski \cite{LR1, LR2}, which} constructed\footnote{The original motivation for such a low-regularity result in \cite{LR1, LR2} is to study the {propagation and} interaction of impulsive gravitational waves (solutions such that the Riemann curvature tensor has a delta singularity on an embedded null hypersurface).} via solving a characteristic initial problem a class of low-regularity solutions to the Einstein vacuum equation on $S\times \mathbb R^2$ (where $S$ is a $2$-dimensional surface) with \underline{no} exact symmetries such that all tangential derivatives of the metric along $S$ are regular, while general derivatives of the metric are only in $L^2$. With such an existence result, one can construct a sequence of metrics which in the limit give rise to an effective stress-energy-momentum tensor of \underline{two} families of null dust, with \underline{no} symmetry assumptions; see {the forthcoming} \cite{LRHF}. Nevertheless, even in this construction, since the metric is regular along $S$, the oscillations in the metrics are limited to $2$ dimensions and therefore the effective stress-energy-momentum tensor can represent at most two families of null dust. 

{In contrast, our present work gives the first result such that the metrics $g_\lambda$ are allowed to oscillate in more than $2$ dimensions. As a consequence, we obtain stress-energy-momentum tensors representing an arbitrary number of families of null dust. To obtain our result, we do not attempt to prove general existence results which are consistent with the metric oscillating in more than $2$ dimensions. Instead, our analysis is specific to the problem at hand, and we construct a parametrix to capture the high frequency oscillations of the metric. Moreover, we rely heavily on the structure of the Einstein equations under polarized $\mathbb U(1)$ symmetry. We refer the reader Section~\ref{ideas} for further discussions on the ideas of the proof, after we give an informal discussion of the main result of the paper in the next subsection.}

\subsection{Main result}\label{sec:main.results}

To construct our examples, we will work under polarized $\m U(1)$ symmetry\footnote{Notice that in contrast to {plane symmetry or Gowdy symmetry which we discussed above}, the symmetry group $\m U(1)$ is only $1$-dimensional.}, which gives certain important technical simplifications that we will explain later. In addition to merely constructing examples, we will in fact show that under polarized $\m U(1)$ symmetry, up to some technical conditions, \underline{most} sufficient small and sufficient regular local solutions $g_0$ to \eqref{Einstein.w.matter} with $T_{\mu\nu}$ given by a sum of a finite number of families of null dust can be weakly approximated by vacuum solutions $g_\lambda$ to \eqref{Einstein.vac} in the sense that we described above. 

We now further describe our setting: We study solution{s} of the vacuum Einstein equations of the form $(I\times \m R^{3}, ^{(4)}g)$, where $I\subset \mathbb R$ is an interval, with
$$^{(4)}g= e^{-2\phi}g+ e^{2\phi}(dx^3)^2,$$
where\footnote{{Abusing notation slightly, we will also view $\phi$ as a function $\phi:I\times \m R^{3}\to \m R$ and $g$ as a metric on $I\times \m R^{3}$, which \emph{do not depend on the variable $x^3$}.}} $\phi:I\times \m R^{2}\to \m R$ is a scalar function and $g$ is a Lorentzian metric on $I\times \m R^{2}$. The vector field $\partial_{x_3}$ is Killing and hypersurface orthogonal.
Then Einstein vacuum equations $R(^{(4)}g)_{\mu\nu}=0$ are equivalent to the following system for $(g,\phi)$:
\begin{equation}
\label{sys}\left\{
\begin{array}{l}
\Box_g \phi = 0,\\
R_{\mu \nu}(g)= 2\partial_\mu \phi \partial_\nu \phi.
\end{array}
\right.
\end{equation}
As mentioned above, we want to show that the weak limits of vacuum solutions (i.e., solutions to \eqref{sys}) give rise to solutions with a stress-energy-momentum tensor which represents a sum of $N$ families of null dust, traveling in arbitrary directions (as long as the directions associated to each pair of families are ``angularly separated'', see Section~\ref{sec.main.2nd}). More precisely, we consider a quadruple $(g,\phi, F_{\bA}, u_{\bA})$, with $\bA\in \mathcal A$ for some finite set $\mathcal A$ with $|\mathcal A|=N$, where $g$ is a Lorentzian metric on $I\times \m R^2$, $\phi:I\times \m R^2\to \m R$ is a scalar function, $F_{\bA}:I\times \m R^2\to \m R_{\geq 0}$ is the \emph{density} of the null dust for each $\bA$ and $u_{\bA}:I\times \m R^2\to \m R$ is an eikonal function such that $(du_{\bA})^\sharp$ is the \emph{direction} of propagation of the null dust for each $\bA$, which is a solution to the system
\begin{equation}\label{back}
\left\{\begin{array}{l}
R_{\mu \nu}(g)= 2\partial_\mu \phi \partial_\nu \phi + \sum_{{\bA}} (F_{{\bA}})^2\partial_\mu u_{{\bA}} \partial_\nu u_{{\bA}},\\
\Box_{g}\phi= 0,\\
2(g^{-1})^{\alpha \beta}\partial_{\alpha} u_{{\bA}} \partial_{\beta} F_{{\bA}} + (\Box_{{g}} u_{{\bA}}) F_{{\bA}} = 0,\\
(g^{-1})^{\alpha \beta}\partial_\alpha u_{{\bA}} \partial_\beta u_{{\bA}}=0.
\end{array}
\right.
\end{equation}
Notice that the system \eqref{back} corresponds to the following system\footnote{To see this, simply note that (1) with the ansatz of the metric, and $\Box_g\phi=0$, we have for $\alp,\bt=0,1,2,$ $R_{\alp\bt}(^{(4)}g)=R_{\alp\bt}(g)-2\rd_\alp \phi \rd_\bt \phi$, $R_{3\alp}=R_{33}=0$; (2) $\Box_g= \Box_{^{(4)}g}$ and (3) $u_{\bA}$, $F_{\bA}$ are independent of $x^3$. For these formulae, the reader can consult \cite[Appendix~VII]{livrecb}.} in the original $(3+1)$-dimensional spacetime $I\times \m R^3$:
\begin{equation*}
\left\{\begin{array}{l}
R_{\mu \nu}(^{(4)} g)= \sum_{{\bA}} (F_{{\bA}})^2\partial_\mu u_{{\bA}} \partial_\nu u_{{\bA}},\\
2(^{(4)}g^{-1})^{\alpha \beta}\partial_{\alpha} u_{{\bA}} \partial_{\beta} F_{{\bA}} + (\Box_{^{(4)}g} u_{{\bA}}) F_{{\bA}} = 0,\\
(^{(4)}g^{-1})^{\alpha \beta}\partial_\alpha u_{{\bA}} \partial_\beta u_{{\bA}}=0.
\end{array}
\right.
\end{equation*}
In other words, \eqref{back} indeed corresponds to the Einstein--null dust system with $N$ families of null dust.

Our main theorem can be stated informally as follows (and we refer the readers to {Theorem~\ref{main.thm.2}} for a precise statement):
\begin{thm}\label{main.intro}
Let $(g_0,\phi_0, F_{\bA}, u_{\bA})$ be a sufficiently small and sufficiently regular local-in-time asymptotically conic solution to \eqref{back} such that
\begin{itemize}
\item The initial hypersurface is maximal;
\item The $u_{\bA}$'s are angularly separated;
\item A genericity condition holds.
\end{itemize}
Then $(g_0,\phi_0)$ can be weakly approximated by a $1$-parameter family of solutions $(g_{\lambda},\phi_{\lambda})$ for $\lambda \in (0,\lambda_0)$, $\lambda_0\in\mathbb R$ to \eqref{sys}, i.e., in a suitable coordinate system, as $\lambda\to 0$, $(g_{\lambda},\phi_{\lambda})\to (g_{0},\phi_{0})$ {uniformly on compact sets} and the derivatives $(\rd g_{\lambda},\rd \phi_{\lambda})\rightharpoonup (\rd g_{0},\rd \phi_{0})$ weakly in {$L^2$} (for each component).
\end{thm}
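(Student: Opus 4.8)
The plan is to construct $(g_\lambda,\phi_\lambda)$ by adding high-frequency oscillations to the given solution $(g_0,\phi_0,F_{\bA},u_{\bA})$ of \eqref{back}, the oscillations being carried by the scalar field $\phi$ (equivalently, by the norm of the Killing field) and tuned so that their quadratic backreaction reproduces, after averaging, the null-dust source $\sum_{\bA}F_{\bA}^2\,\partial_\mu u_{\bA}\partial_\nu u_{\bA}$. At leading order one takes
\[
\phi_\lambda \;\approx\; \phi_0 \;+\; \lambda\sum_{\bA\in\mathcal A} F_{\bA}\,\cos\!\Big(\tfrac{u_{\bA}}{\lambda}\Big),
\]
so that $\partial_\mu\phi_\lambda$ acquires the $O(1)$ oscillatory term $-\sum_{\bA}F_{\bA}\sin(u_{\bA}/\lambda)\,\partial_\mu u_{\bA}$, whence $2\partial_\mu\phi_\lambda\partial_\nu\phi_\lambda$ contains the non-oscillatory piece $\sum_{\bA}F_{\bA}^2\,\partial_\mu u_{\bA}\partial_\nu u_{\bA}$ (from $\sin^2=\tfrac12-\tfrac12\cos$), together with oscillatory pieces at the phases $2u_{\bA}/\lambda$, $u_{\bA}/\lambda$ and, for $\bA\neq\bB$, $(u_{\bA}\pm u_{\bB})/\lambda$; the latter have to be absorbed by oscillatory corrections to $g$. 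Accordingly $g_\lambda$ is taken to be $g_0$ plus oscillatory corrections that are $O(\lambda^2)$ in $L^\infty$ (so that their second derivatives are $O(1)$ and oscillatory) plus lower-order non-oscillatory corrections. I would fix the gauge for the $(2+1)$-dimensional metric $g$ using the maximality of the initial hypersurface, working in an elliptic/maximal gauge with isothermal coordinates on the constant-$t$ slices, under which the $g$-part of \eqref{sys} becomes a coupled system of elliptic equations for the lapse, shift and conformal factor, a wave equation for $\phi$, and transport equations for $(F_{\bA},u_{\bA})$, with the delicate $2$-dimensional behaviour at spatial infinity controlled by the asymptotically-conic assumption.

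Next I would carry out the parametrix construction: expand the unknowns as finite sums of a low-frequency part $\sum_j \lambda^j(\cdot)_j$ and oscillatory parts $\sum_{j,\theta}\lambda^j(\text{amplitude})_{j,\theta}\,e^{i\theta/\lambda}$, with $\theta$ ranging over the finitely many integer combinations $\sum_{\bA}k_{\bA}u_{\bA}$ of the phases generated by the quadratic nonlinearity up to the chosen truncation order, plug this into the gauge-fixed \eqref{sys}, and collect by phase and power of $\lambda$. The top, phase-$u_{\bA}/\lambda$ order reproduces the eikonal equation $(g_0^{-1})^{\alpha\beta}\partial_\alpha u_{\bA}\partial_\beta u_{\bA}=0$, one order below the transport equation $2(g_0^{-1})^{\alpha\beta}\partial_\alpha u_{\bA}\partial_\beta F_{\bA}+(\Box_{g_0}u_{\bA})F_{\bA}=0$, and the zero-phase $O(1)$ order is exactly $R_{\mu\nu}(g_0)-2\partial_\mu\phi_0\partial_\nu\phi_0=\sum_{\bA}F_{\bA}^2\partial_\mu u_{\bA}\partial_\nu u_{\bA}$ -- all three hold by hypothesis since $(g_0,\phi_0,F_{\bA},u_{\bA})$ solves \eqref{back}. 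The remaining equations in the hierarchy fix the corrections: the oscillatory amplitudes are determined by transport equations along the null directions $du_{\bA}$ (for single-phase terms) and, for the interaction phases $u_{\bA}\pm u_{\bB}$, by equations whose principal part is non-degenerate precisely because the $u_{\bA}$ are angularly separated, so that $g_0^{-1}(\partial u_{\bA},\partial u_{\bB})\neq 0$; the non-oscillatory corrections come from the elliptic and wave equations of the gauge. The genericity condition is used here to rule out resonances among the phases and their integer combinations (so the expansion is well defined and no secular, $u/\lambda$-growing, amplitudes are forced). Truncating at sufficiently high order yields an approximate solution $(g_\lambda^{\mathrm{app}},\phi_\lambda^{\mathrm{app}})$ solving the gauge-fixed \eqref{sys} up to an error that tends to $0$ as $\lambda\to 0$ in the norms used below.

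I would then upgrade the parametrix to an exact solution. Writing $(g_\lambda,\phi_\lambda)=(g_\lambda^{\mathrm{app}}+h_\lambda,\phi_\lambda^{\mathrm{app}}+\psi_\lambda)$ and substituting into the gauge-fixed \eqref{sys}, the remainders solve a quasilinear elliptic--hyperbolic system whose source is the parametrix error, and I would solve it by a contraction/iteration in function spaces tailored to the anisotropy of the problem -- measuring derivatives that fall on the phases $u_{\bA}/\lambda$ with an extra $\lambda$ weight, i.e. essentially controlling $\lambda\partial$ rather than $\partial$ on the high-frequency factors -- so that the $W^{2,2}$-type norms that blow up like $\lambda^{-1}$ are applied only to the explicit oscillatory part of the parametrix and never to the genuinely small remainder. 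The estimates combine elliptic estimates for the lapse/shift/conformal-factor equations on $\mathbb{R}^2$ (hence the weighted spaces and the asymptotically-conic structure) with hyperbolic energy estimates -- adapted to the known oscillation directions $du_{\bA}$ -- for $\psi_\lambda$ and the transport equations, all with constants uniform in $\lambda$; smallness of the data and shortness of $I$ then close the iteration. This uniform-in-$\lambda$ local well-posedness is, I expect, the main obstacle: a black-box appeal to existence theory would require uniform $W^{2,2}$ control of the metric and matter (cf.~\cite{KRS}), which fails here by construction, and the point is precisely to peel off the explicit high-frequency parametrix and estimate only the remainder.

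Finally, with the exact solution in hand, passing to the weak limit is immediate from the structure of the ansatz. Since $\phi_\lambda^{\mathrm{app}}-\phi_0=O(\lambda)$ in $L^\infty$, the oscillatory corrections to $g_\lambda^{\mathrm{app}}$ are $O(\lambda^2)$ in $L^\infty$, and $(h_\lambda,\psi_\lambda)\to 0$, one gets $(g_\lambda,\phi_\lambda)\to(g_0,\phi_0)$ uniformly on compact sets; and the $O(1)$ oscillatory piece of $\partial\phi_\lambda$ -- which is what produces the effective null-dust stress-energy-momentum tensor and is the reason the convergence is only weak -- together with the lower-order oscillatory pieces of $\partial g_\lambda$, has vanishing weak limit because, by the angular-separation and genericity hypotheses, every phase that appears (including the combinations $\sum_{\bA}k_{\bA}u_{\bA}$) is non-stationary, so a Riemann--Lebesgue-type averaging gives $(\partial g_\lambda,\partial\phi_\lambda)\rightharpoonup(\partial g_0,\partial\phi_0)$ weakly in $L^2$. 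One then checks that the gauge conditions propagate, so that the family consists of genuine vacuum solutions of \eqref{sys}, and transfers the statement to a fixed coordinate system, completing the argument.
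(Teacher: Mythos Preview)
Your overall strategy matches the paper's: an elliptic/maximal gauge, the leading ansatz $\phi_\lambda\approx\phi_0+\lambda\sum_{\bA}F_{\bA}\cos(u_{\bA}/\lambda)$, a multi-order parametrix with oscillatory corrections to the metric, and estimates for the remainder. Two points, however, are genuine gaps rather than details.

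First, you misidentify the role of the genericity condition. In the paper it is \emph{not} a non-resonance condition on the phases. The control of the interaction phases $u_{\bA}\pm u_{\bB}$, $u_{\bA}\pm 2u_{\bB}$, $u_{\bA}\pm_1 u_{\bB}\pm_2 u_{\bC}$ comes entirely from angular separation, after first \emph{rescaling} the eikonal functions $(u_{\bA},F_{\bA})\mapsto(c_{\bA}u_{\bA},c_{\bA}^{-1}F_{\bA})$ so that they become ``spatially adapted'' and ``null adapted'' (Section~\ref{sec.prepare}); this makes the relevant combination phases uniformly non-characteristic and bounded away from zero spatial gradient. The genericity condition \eqref{genericity} instead concerns $\phi_0$ alone and is used in Lemma~\ref{lmini} to perturb the free initial data by an $O(\lambda^2)$ amount so that the constraint $\int_{\Sigma_0}\dot\phi_\lambda\,\partial_j\phi_\lambda\,dx=0$ (required for admissible data in the elliptic gauge) can be solved. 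Without this you cannot even launch the family $(g_\lambda,\phi_\lambda)$.

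Second, your remainder scheme as stated would not close. The paper's discussion in Section~\ref{ideas} explains why: after the parametrix absorbs the obvious terms, there remain $O(\lambda)$ error terms involving $\partial_t\mfg_3$ (the $\partial_t$-derivative of the non-explicit metric remainder obeys worse bounds than spatial derivatives, since $\mfg_3$ solves spatial elliptic equations) and $\partial_t^2\wht F_{\bA}$ (the second-order amplitude, which couples back to $\mfg_3$). Neither can be made small individually in the norms you propose. The paper closes only by exploiting two structural facts specific to the Einstein equations in this gauge: (i) the maximal condition gives $\partial_t\gamma=\beta^i\partial_i\gamma+\tfrac12\partial_i\beta^i$, which upgrades $\partial_t\gamma_3$ to the level of spatial derivatives (Lemma~\ref{g1.main.lemma}, Proposition~\ref{dtg3.prop}); and (ii) an exact algebraic cancellation between the $\lambda^2(g_0^{-1})^{tt}\partial_t^2\wht F_{\bA}$ term and the $(\partial_t g_3')\partial u_{\bA}\,F_{\bA}$ term (Proposition~\ref{cancellation}). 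A generic ``$\lambda\partial$-weighted'' iteration does not see either of these, and without them the bootstrap fails at order $\lambda$.
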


It is not unlikely that the methods introduced in this paper can also be applied to the situation where the spacetime is only assumed to be $\m U(1)$ symmetric without polarization. In that case, the Einstein vacuum equations no longer reduce to a system of Einstein--scalar wave equations, but instead reduce to a system of Einstein--wave map equations. The wave map equations are more nonlinear, but the semilinear terms satisfy the \emph{classical null condition}, which therefore give hope to extending the results in the present paper. However, {the} full $(3+1)$-dimensional situation, even when restricted to ``almost-$\m U(1)$-symmetry'', seem{s} to require ideas beyond those introduced here.

\subsection{Ideas of the proof}\label{ideas}

The main challenge in constructing our examples is that in order for backreaction to occur, $\phi_\lambda$ must oscillate in a manner that only their first derivatives can be assumed to be uniformly bounded and that their second derivatives (or in fact any ``$1+\ep$ derivative'') must blow up as $\lambda\to 0$. This is much rougher than the required regularity under which the Einstein equations are known to be locally well-posed (see \cite{KRS}). In \cite{LRHF}, this issue was dealt with exactly with an improved local well-posedness result \cite{LR2}, devised first to understand impulsive gravitational waves, in the general setting where $\mathcal M=S\times \mathbb R^2$ and that the $S$-tangent derivatives of the metric are more regular.

In this paper, however, we take another route. Instead of proving local well-posedness result in general function spaces which $(g_\lambda,\phi_\lambda)$ belong, we exploit the fact that the $(g_\lambda,\phi_\lambda)$ are constructed to (weakly) approximate $(g_0,\phi_0)$ and rely more on the regularity of $(g_0,\phi_0)$. In the proof, we construct a parametrix of the high frequency part of $(g_\lambda,\phi_\lambda)$ and prove that it is indeed a good approximation of the actual solution. In particular, we show that the ``high frequency waves'' in $\phi_\lambda$ approximately travel along characteristic hypersurfaces of the \underline{background} metric $g_0$. As a consequence of our approach, we also obtain a precise description of $(g_\lambda,\phi_\lambda)$ as $\lambda\to 0$. {Before we enter a more detailed discussion of the proof,} let us note that in order to carry out this analysis, we rely on the following three facts specific to $\m U(1)$ symmetry:
\begin{enumerate}
\item The Einstein vacuum equations reduce under the symmetry assumptions to the Einstein--scalar field system. Under this reduction, it is only the scalar field that contributes to the backreaction. In particular, the ``high-frequency wave part'' and the ``geometry part'' of the solution ``separate''.
\item By introducing an elliptic gauge, the metric of the reduced system can be completely recovered {by} \underline{elliptic} equations (i.e., the ``geometry part'' of the solution is more regular).
\item Since the reduced equations are in $(2+1)$-dimensions, the elliptic gauge can be introduced such that the equations for the metric of the reduced system are \underline{semilinear}\footnote{This can be viewed as the consequence of the uniformization theorem, together with the conformal invariance of the $2$-dimensional Laplacian.}.
\end{enumerate}

\subsubsection{Model problem and first construction of parametrix}
{Working} in an elliptic gauge, and thinking of $\mfg$ as a metric component, the equations can roughly be modeled by the following model problem:
\begin{equation}\label{model.prob}
\left\{
\begin{array}{l}
\Box_{g(\mfg)} \phi= 0,\\
\Delta \mfg = (\partial \phi)^2,
\end{array}
\right.
\end{equation}
where $g$ is a Lorentzian metric and $\Delta$ is the Laplacian for flat $\mathbb R^2$. This formulation captures the advantages of $\mathbb U(1)$ symmetry as described earlier, in that the ``dynamical'' part and the ``geometric'' part ``sep{a}rate'', and that the geometric part satisfies a semilinear elliptic system.

We will construct our desired one-parameter family of solutions to \eqref{sys} in the form
\begin{equation}
\label{ansatz}\phi_{\lambda}= \phi_0+\sum_\bA \lambda F_\bA \cos\left(\frac{u_\bA}{\lambda}\right)+\wht \phi_{\lambda},\quad \mfg_{\lambda}= \mfg_0 + \wht \mfg_{\lambda},
\end{equation}
such that $(\partial \wht \phi_\lambda, \partial \wht \mfg_{\lambda}) \rightarrow (0,0)$ {uniformly on compact sets} as $\lambda \to 0$.
First, by virtue of the fact that $u_{\bA}$ satisfies the eikonal equation for the background metric, and that $F_{\bA}$ satisfies the transport equation in \eqref{back}, $\sum_\bA \lambda F_\bA \cos\left(\frac{u_\bA}{\lambda}\right)$ can be viewed as an approximate solution to the wave equation, since $\Box_g\left(\sum_\bA \lambda F_\bA \cos\left(\frac{u_\bA}{\lambda}\right)\right)$ consists of terms either of size $O(\lambda)$ or bounded in terms of $\wht \mfg_{\lambda}$. Moreover,
$$2\partial_\mu \phi_{\lambda}\partial_\nu \phi_{\lambda} \rightharpoonup 2\partial_\mu \phi_0 \partial_\nu \phi_0 + \sum_{{\bA}} (F_{{\bA}})^2\partial_\mu u_{{\bA}} \partial_\nu u_{{\bA}},$$
weakly in $L^2$ (with uniform $L^\infty$ bounds) which is exactly the form of the stress-energy-momentum tensor in \eqref{back}.

However, in order for this parametrix to be useful, we at the very least need to say that $\wht \phi_{\lambda}$ is better than the main term $\sum_\bA \lambda F_\bA \cos\left(\frac{u_\bA}{\lambda}\right)$. Suppose, say, by an expansion in $\lambda$, we hope to obtain that $\wht \phi_{\lambda}=O(\lambda^2)$, with a loss of $\lambda^{-1}$ for every derivative. Plugging this into \eqref{model.prob}, we have on the RHS of the elliptic equation a term
\begin{equation}\label{exemple} \partial \wht \phi_{\lambda} \partial  \left(\lambda F_\bA \cos\left(\frac{u_\bA}{\lambda}\right)\right) =O(\lambda),
\end{equation}
so that standard elliptic estimates give
\begin{equation}\label{intro.ex.g}
\| \wht \mfg_{\lambda} \|_{W^{1,\infty}} \leq \lambda.
\end{equation}
Now we plug this back into the wave equation $\Box_g \wht \phi_{\lambda}$ (for which $\wht g_{\lambda}$ enters through $\Box_g$). When estimating the $H^1$ norm of $\wht \phi_{\lambda}$ using energy estimates, we need to control (in $L^2$) a term of the form
$$\frac{1}{\lambda}\left(g_{\lambda}^{\alpha \beta}-g_0^{\alpha \beta}\right)\partial_\alpha u_{\bA}\partial_\beta u_{\bA} \co{\frac{u_\bA}{\lambda}}.$$
However, with \eqref{intro.ex.g}, this term is only $O(1)$ (instead of $O(\lambda)$) and the estimates cannot be closed. In order to deal with this, we need a more precise parametrix, which in particular captures the fact the $O(\lambda)$ term in $\wht \phi_{\lambda}$ are of high frequency, and moreover, that one can gain a smallness parameter from these high frequency terms for appropriate inversions of $\Delta$ and $\Box_{g}$.

\subsubsection{Well-preparedness of the eikonal functions and a more precise construction of the parametrix}

The more precise parametrix for $\phi_\lambda$ takes the form (for details see Section~\ref{sec.parametrix})
\begin{align*}
\phi_\lambda= &\phi_0+\sum_\bA \lambda F_\bA \cos\left(\frac{u_\bA}{\lambda}\right) +\sum_\bA \lambda^2 \wht F_\bA \si{\frac{u_\bA}{\lambda}}\\
&+\sum_\bA \lambda^2 \wht F_\bA^{(2)}\co{\frac{2u_\bA}{\lambda}}+\sum_\bA  \lambda^2\wht F_\bA^{(3)} \si{\frac{3u_\bA}{\lambda }} + \mathcal E_\lambda,
\end{align*}
where one should think of $\wht F_\bA$, $\wht F_\bA^{(2)}$ and $\wht F_\bA^{(3)}$ as bounded and $\mathcal E_\lambda$ as a smaller remainder, bounded in $H^1$ by $\lambda^2$. We also decompose the metric components accordingly:
$$\mfg=\mfg_0+\mfg_1+\mfg_2+\mfg_3,$$
so that roughly speaking $\mfg_1$ are $O(\lambda^2)$ (high-frequency) terms and $\mfg_2$ are $O(\lambda^3)$ (high-frequency) terms.

In order to control the parametrix up to this order, notice that certain terms (in particular $\wht F_{\bA}$, $\mathcal E_\lambda$, $\mfg_2$ and $\mfg_3$; see Section~\ref{sec.parametrix}) couple. As a consequence, one needs to carefully handle the regularity of these terms (in addition to estimating their sizes in terms of $\lambda$).

In the parametrix for $\phi_\lambda$, we only explicitly keep track of the terms oscillating in null directions (with phase function proportional to $u_{\bA}$). This construction is based on the facts that
\begin{itemize}
\item Any high frequency term must be highly oscillatory in the spatial direction, and one gains $\lambda^2$ upon inverting $\Delta$;
\item Any high frequency term arising from non-parallel interaction (i.e., from $u_{\bA}$ and $u_{\bB}$ with $\bA\neq \bB$) can be treated ``elliptically'' for the wave part, and inverting $\Box_g$ gives a smallness of $\lambda^2$.
\end{itemize}
To achieve these, we need to use the assumption that $\{u_{\bA}\}_{\bA\in \mathcal A}$ is angularly separated (see Definition~\ref{ang.sep}) and also to exploit the symmetry $(u_{\bA}, F_{\bA})\mapsto (c_{\bA} u_{\bA}, c_{\bA}^{-1} F_{\bA})$ to prepare the eikonal functions well (see Section~\ref{sec.prepare}).

There are also additional useful structures hidden in the expression of the parametrix which we exploit. For instance, when controlling terms as in \eqref{exemple}, we use that the interaction of the $F$ term with any of $(\wht F, \wht F^{(2)}, \wht F^{(3)})$ is necessarily high frequency, and one can indeed gain powers of $\lambda$.

\subsubsection{Handling the $\rd_t$ derivatives of the error of the metric}
One important challenge we face is that the $\rd_t$ derivative of the metric obeys worse estimates (compared to the spatial derivatives). This is because the metric components solve elliptic equations on a spatial slice, and to control their $\rd_t$ derivative requires differentiating the equation. {Because of} similar reasons, and {also} the fact that $\wht F_{\bA}$ couples with $\mfg_3$, the second time derivative $\rd_t^2\wht F_{\bA}$ also obey worse estimates compared to when at least one of the derivatives is a spatial derivatives.

Fortunately, one can still handle the situation with the worse estimates using the structure of the Einstein equations! We highlight a few points below: Note that in particular, except for point (3), we strongly use the fact that we are solving the Einstein equations and the structure that we rely on is \underline{not} captured by the model problem \eqref{model.prob}.

\begin{enumerate}
\item Hidden in the evolution is the propagation of the maximality of the hypersurfaces. This allows us to rewrite the $\rd_t$ derivative of a metric component (more precisely, $\gamma$, to be introduced in \eqref{g.form}) in terms of spatial derivatives of (precisely, $\beta$). Hence, such a term is better than expected.
\item Importantly, using in particular the above observation, one shows that the two uncontrollable $\rd_t$-derivative error terms, one involving $\rd_t \mfg_3$ and one involving $\rd_t^2 \wht F_{\bA}$, cancel; see Proposition~\ref{cancellation}.
\item Another type of error terms is of the form $\rd_t\mfg_3$ multiplied by a low frequency term. Here, one can exploit the low frequency term using an integration by parts argument; see Proposition~\ref{E.rdtg3.est}.
\end{enumerate}

\subsection{Outline of the paper}
The remainder of the paper will be structured as follows: 
\begin{itemize}
	\item In \textbf{Section \ref{sec.notations}}, we introduce the notations and functions spaces that we will use.
	\item In \textbf{Section \ref{sec.gauge}}, we give a precise definition of our gauge condition, and {state} the local {well-posed}ness results for \eqref{sys} and \eqref{back} {from our companion paper \cite{HL}}.
	\item In \textbf{Section \ref{sec.main}}, we give a precise statement of the main theorem (cf.~Theorems~\ref{main.intro} and \ref{main.thm.2}). We also discuss preliminary steps of the proof: rescaling of the eikonal functions, construction of initial data to the one-parameter family of solutions, and construction of the parametrix for $\phi_{\lambda}$ and $g_{\lambda}$. 
	\item In next few sections are devoted to the proof of the main theorem:
	\begin{itemize}
	\item The proof proceeds by a bootstrap argument, and the main setup and bootstrap assumptions are given in \textbf{Section~\ref{sec.bootstrap}}.
	\item In \textbf{Section~\ref{sec.scalar}}, we derive the estimates for the scalar field.
	\item In \textbf{Section~\ref{secelliptic}}, we derive the estimates for the metric components.
	\item Finally, in \textbf{Section~\ref{sec.concl}}, we conclude the proof.
	\end{itemize}
	\item In \textbf{Appendix~\ref{weightedsobolev}}, we collect some results about Sobolev embedding, product estimates and elliptic estimates in weighted Sobolev spaces in $\mathbb R^2$. 	
\end{itemize}

\subsection*{Acknowledgements} J. Luk thanks Igor Rodnianski and Robert Wald for stimulating discussions. Most of this work was carried out when both authors were in Cambridge University. {C. Huneau is supported by the ANR-16-CE40-0012-01.} J. Luk is supported in part by a Terman fellowship. 

\section{Notations and function spaces}\label{sec.notations}

{\bf Ambient space and coordinates}
In this paper, we will be working on the ambient manifold $\q M:=I\times \mathbb R^2$, where $I\subset \mathbb R$ is an interval. The space will be equipped with a system of coordinates $(t,x^1,x^2)$. We will use $x^i$ with the lower case Latin index $i,j=1,2$ and will also sometime denote $x^0=t$.

{\bf Conventions with indices} We will use the following conventions:
\begin{itemize}
\item Lower case Latin indices run through the spatial indices $1,2$, while lower case Greek indices run through all the spacetime indices.
\item Repeat indices are always summed over: where lower case Latin indices sum over the spatial indices $1,2$ and lower case Greek indices sum over all indices $0,1,2$.
\item Unless otherwise stated, lower case Latin indices are always raised and lowered with respect to the standard Euclidean metric $\delta_{ij}$.
\item In contrast, lower case Greek indices are raised and lowered with respect to the spacetime metric $g$. In cases where there are more than one spacetime metric in the immediate context, we will not use this convention but will instead spell out explicitly how indices are raised and lowered.
\end{itemize}

{\bf Differential operators} We will use the following conventions for differential operators:
\begin{itemize}
\item $\rd$ denotes partial derivatives in the coordinate system $(t,x^1,x^2)$. We will frequently write $\rd_i$ for $\rd_{x^i}$. In particular, we denote
$$|\rd \xi|^2=(\rd_t\xi)^2+\sum_{i=1}^2(\rd_{x^i}\xi)^2.$$
\item The above $\rd$ notation also applied to rank-$r$ covariant tensors $\xi_{\mu_1\dots\mu_r}$ tangential to $I\times \mathbb R^2$ to mean
$$|\rd\xi|^2=\sum_{\mu_1,\dots,\mu_r=t,x^1,x^2}|\rd \xi_{\mu_1\dots\mu_r}|^2$$
and to rank-$r$ {covariant} tensors $\xi_{i_1\dotsi_r}$ tangential to $\mathbb R^2$ to mean
$$|\rd\xi|^2=\sum_{i_1{\dots i_r}=x^1,x^2}|\rd \xi_{i_1{\dots i_r}}|^2.$$
\item $\Delta$ and $\nabla$ denotes the spatial Laplacian and the spatial gradient on $\m R^2$ with the standard \underline{Euclidean metric}. In particular, we use the convention
$$|\nabla\xi|^2=\sum_{i=1}^2|\rd_{x^i}\xi|^2.$$
\item $D$ denotes the Levi--Civita connection associated to the \underline{spacetime metric $g$}.
\item $\Box_g$ denotes the Laplace--Beltrami operator on functions, i.e., 
$$\Box_g\xi:=\f{1}{\sqrt{|\det g|}}\rd_\mu(\gi^{\mu\nu}\sqrt{|\det g|}\rd_\nu\xi).$$
\item $\q L$ denotes the Lie derivatives.
\item $e_0$ defines the vector field $e_0=\rd_t-\beta^i\rd_{x^i}$ (where $\beta$ will be introduced in \eqref{g.form}). We will often use the differential operator $\q L_{e_0}$.
\item $L$ denotes the Euclidean conformal Killing operator acting on vectors on $\m R^2$ to give a symmetric traceless (with respect to $\delta$) covariant $2$-tensor, i.e., 
$$(L\xi)_{ij}:=\delta_{j\ell}\rd_i\xi^\ell+\delta_{i\ell}\rd_j\xi^\ell-\delta_{ij}\rd_k\xi^k.$$
\end{itemize}

{\bf Functions spaces} We will work with standard function spaces $L^p$, $H^k$, $C^m$, $C^\infty_c$, etc. and assume the standard definitions. The following conventions will be important:
\begin{itemize}
\item \underline{Unless otherwise stated, all function spaces will be taken on $\m R^2$} and the measures will be taken to be the 2D Lebesgue measure $dx$. 
\item When applied to quantities defined on a spacetime $I \times \m R^2$, the norms $L^p$, $H^k$, $C^m$ denote \underline{fixed-time} norms (unless otherwise stated). In particular, if in an estimate the time $t\in I$ in question is not explicitly stated, then it means that the estimate holds for \underline{all} $t\in I$ for the time interval $I$ that is appropriate for the context.
\end{itemize}
We will also work in weighted Sobolev spaces, which are well-suited to elliptic equations. We recall here the definition, together with the definition of weighted H\"older space. The properties of these spaces that we need are listed in Appendix \ref{weightedsobolev}.
\begin{df} \label{def.spaces}
Let  $m\in \m N$, $1 \leq p<\infty$, $\delta \in \mathbb{R}$. The weighted Sobolev space $W^m_{\delta,p}$ is the completion of $C^\infty_0$ under the norm 
	$$\|u\|_{W^m_{\delta,p}}=\sum_{|\beta|\leq m}\|(1+|x|^2)^{\frac{\delta +|\beta|}{2}}{\nab}^\beta u\|_{L^p}.$$
We will use the notation $H^m_\delta = W^m_{\delta,2}$ {and $L^p_{\de}=W^0_{\de,p}$}.

	The weighted H\"older space $C^m_{\delta}$ is the complete space of $m$-times continuously differentiable functions under the norm 
	$$\|u\|_{C^m_{\delta}}=\sum_{|\beta|\leq m}\|(1+|x|^2)^{\frac{\delta +|\beta|}{2}}{\nab}^\beta u\|_{L^\infty}.$$
\end{df}

Finally, let us introduce the convention that we will use the above function spaces for both tensors and scalars on $\m R^2$, where the norms in the case of tensors are understood componentwise.

\section{The elliptic gauge and local well-posedness for \eqref{sys} and \eqref{back}}\label{sec.gauge}

\subsection{Elliptic gauge}\label{sec.elliptic.gauge}
We write the $(2+1)$-dimensional metric $g$ on $\q M:=I\times \mathbb R^2$ in the form
\begin{equation}\label{g.form.0}
g=-N^2dt^2 + \bar{g}_{ij}(dx^i + \beta^i dt)(dx^j + \beta^jdt).
\end{equation}
Let $\Sigma_t:=\{(s,x^1,x^2): s=t\}$ {and} $e_0= \partial_t -\beta^i\rd_i${, which is a future directed normal to $\Sigma_t$.}
We introduce the second fundamental form of the embedding $\Sigma_t \subset \q M$
\begin{equation}\label{K}K_{ij}=-\frac{1}{2N}\q L_{e_0} \bar{g}_{ij}.
\end{equation}
 We decompose $K$ into its trace and traceless parts.
\begin{equation}\label{K.tr.trfree} 
K_{ij}=:H_{ij}+\frac{1}{2}\bar{g}_{ij}\tau.
\end{equation}
Here, $\tau:=\mbox{tr}_{\bar{g}} {K}$ and $H_{ij}$ is therefore traceless with respect to $\bar{g}$.

Introduce the following \emph{gauge conditions}:
\begin{itemize}
	\item $\bar{g}$ is conformally flat{, i.e., for some function $\gamma$,}
	\begin{equation}\label{uniformized.g}
	\bar{g}_{ij}=e^{2\gamma}\delta_{ij};
	\end{equation}
	\item The constant $t$-hypersurfaces $\Sigma_t$ are maximal
	$$\tau=0.$$
\end{itemize}
{By \eqref{g.form.0}, it follows that 
\begin{equation}\label{g.form}
g=-N^2dt^2 + e^{2\gamma}\delta_{ij}(dx^i + \beta^i dt)(dx^j + \beta^jdt).
\end{equation}
Hence the determinant of $g$ is given by
\begin{equation}\label{g.det}
\det(g)=e^{2\gamma}\beta^2(-e^{4\gamma}\beta^2)+e^{2\gamma}(e^{2\gamma}(-N^2+e^{2\gamma}|\beta|^2)-e^{4\gamma}\beta^1\beta^1)=-e^{4\gamma}N^2.
\end{equation}
Moreover, the inverse $g^{-1}$ is given by
\begin{equation}\label{g.inverse}
g^{-1}=\frac{1}{N^2}\left(\begin{array}{ccc}-1 & \beta^1 & \beta^2\\
\beta^1 & N^2e^{-2\gamma}-\beta^1\beta^1 & -\beta^1 \beta^2\\
\beta^2 & -\beta^1 \beta^2 & N^2e^{-2\gamma}-\beta^2\beta^2
\end{array}
\right).
\end{equation}
}
In this gauge, the first equation in \eqref{back} implies that $H$, $\gamma$, $N$ and $\beta$ satisfy elliptic equations\footnote{This follows from the computations given in our companion paper \cite[Appendix~B]{HL}.}
\begin{align}
&{\de^{ik}}\partial_{{k}} H_{ij}= -\frac{e^{2\gamma}}{N}\left(2 (e_0 \phi)( \partial_j \phi) + \sum_{\bA}F_{\bA}^2 (e_0 u_\bA)( \partial_j u_{\bA})\right), \label{elliptic.1}\\
&\Delta \gamma = -|\nabla \phi|^2 -\frac{1}{2}\sum F_\bA^2|\nabla u_\bA|^2-\frac{e^{2\gamma}}{N^2}\left( (e_0 \phi)^2+\frac{1}{2}\sum_{\bA}F_{\bA}^2
( e_0 u_{{\bA}})^2\right) - \frac{1}{2}e^{-2\gamma}|H|^2,\label{elliptic.2}\\
&\Delta N =Ne^{-2\gamma}|H|^2+ \frac{e^{2\gamma}}{N}\left(2(e_0 \phi)^2+\sum_{\bA}F_{\bA}^2
(e_0 u_{{\bA}})^2\right),\label{elliptic.3}\\
& (L\beta)_{ij}=2Ne^{-2\gamma}H_{ij},\label{elliptic.4}
\end{align}
where $L$ is the conformal Killing operator given by 
\begin{equation}\label{L.def}
(L\beta)_{ij}:=\delta_{j\ell}\rd_i\beta^\ell+\delta_{i\ell}\rd_j\beta^\ell-\delta_{ij}\rd_k\beta^k.
\end{equation}
\subsection{Initial data}\label{sec.id}

{We recall in this section the notion of initial data that we introduce in \cite{HL}. This applies for the system \eqref{back}, and therefore in particular also for \eqref{sys}.}
\begin{df}[Admissible initial data]\label{def.data}
For $-\frac{1}{2}<\delta<0$, $k \geq 3$, $R>0$ and $\mathcal A$ a finite set, an {\bf admissible initial data set} with respect to the elliptic gauge for \eqref{back} consists of 
\begin{enumerate}
\item {{a} conformally flat} intrinsic metric $e^{2\gamma}\delta_{ij}\restriction_{\Sigma_0}$ {which admits a decomposition}
$${\gamma=-{\gamma_{asymp}} \chi({|x|}) \log ({|x|}) +\widetilde{\gamma},}$$
{where ${\gamma_{asymp}}\geq 0$ is a constant, $\chi({|x|})$ is a fixed smooth cutoff function with $\chi=0$ for ${|x|}\leq 1$ and $\chi=1$ for ${|x|}\geq 2$, and $\tilde{\gamma}\in H^{k+2}_{\delta}$}{;}
\item {{a} second fundamental form $(H_{ij})\restriction_{\Sigma_0} \in H^{k+1}_{\delta+1}$ which is traceless;}
 \item $(\f{1}{N}(e_0\phi),\nabla \phi) \restriction_{\Sigma_0}\in H^k$, compactly supported in $B(0,R)$;
		\item $F_{\bA} \restriction_{\Sigma_0} \in H^k$, compactly supported in $B(0,R)$ for every $\bA\in \mathcal A$;
		\item $u_{\bA}\restriction_{\Sigma_0}$ such that ${\inf_{x\in \mathbb R^2}}|\nabla u_{\bA}\restriction_{\Sigma_0}|{(x)}> {C_{eik}^{-1}}$ {for some $C_{eik}>0$} and $\left(\nabla u_{\bA}\restriction_{\Sigma_0}-\overrightarrow{c_{\bA}} \right)\in H^{k+1}_\delta$, where $\overrightarrow{c_{\bA}}$ is a constant vector field for every $\bA\in \mathcal A$.
\end{enumerate}
{$\gamma$ and $H$}
are required to satisfy the following {\bf constraint equations}:
\begin{align}
&{\de^{ik}\partial_k} H_{ij}=-\f{2e^{2\gamma}}{N}(e_0\phi)\partial_j \phi -\sum_{\bA} e^\gamma F_\bA^2|\nabla u_\bA|\partial_j u_\bA,\label{mom}\\
&\Delta \gamma + e^{-2\gamma}\left(\f{e^{4\gamma}}{N^2}(e_0\phi)^2+\frac{1}{2}|H|^2\right)+|\nabla \phi|^2+\sum_\bA F_\bA^2|\nabla u_\bA|^2=0.\label{ham}
\end{align}
\end{df}

{As we show in \cite{HL} (following \cite{Huneau.constraints}), one can define a notion of ``admissible free initial data'', which consist of (rescaled versions of) data for the matter field, so that one can uniquely construct initial data set as in Definition~\ref{def.data} by solving the constraints. We recall this definition here:
}
\begin{df}[Admissible free initial data]\label{def.free.data}
Define $\dot{\phi}$, $\breve{F}_\bA$ as follows:
\begin{equation}\label{data.rescaled}
\dot{\phi}=\frac{e^{2\gamma}}{N} (e_0 \phi),\quad\breve{F}_\bA= F_{\bA} e^{\frac{\gamma}{2}},
\end{equation}
where $\gamma$ is as in {\eqref{g.form}}.

For $-\frac{1}{2}<\delta<0$, $k \geq 3$, $R>0$ and $\mathcal A$ a finite set, an {\bf admissible free initial data set} with respect to the elliptic gauge is given by the following:
\begin{enumerate}
		\item $(\dot{\phi},\nabla \phi) \restriction_{\Sigma_0}\in H^k$, compactly supported in $B(0,R)$;
		\item $\breve{F}_{\bA} \restriction_{\Sigma_0} \in H^k$, compactly supported in $B(0,R)$ for every $\bA\in \mathcal A$;
		\item $u_{\bA}\restriction_{\Sigma_0}$ such that ${\inf_{x\in \mathbb R^2}}|\nabla u_{\bA}\restriction_{\Sigma_0}|{(x)}> {C_{eik}^{-1}}$ {for some $C_{eik}>0$} and $\left(\nabla u_{\bA}\restriction_{\Sigma_0}-\overrightarrow{c_{\bA}} \right)\in H^{{k+1}}_\delta$, where $\overrightarrow{c_{\bA}}$ is a constant vector field for every $\bA\in \mathcal A$.
	\end{enumerate}
Moreover, $(\dot{\phi},\nabla \phi,\breve{F}_\bA, u_\bA)\restriction_{\Sigma_0}$ is required to satisfy
\begin{equation}\label{main.data.cond}
\int_{\mathbb R^2} \left(-2\dot{\phi}\rd_j \phi- {\sum_{\bA}}\breve{F}_{{\bA}}^2|\nab u_{{\bA}}|\rd_j u_{{\bA}}\right) \, dx =0.
\end{equation}
\end{df}

\subsection{Local well-posedness}\label{seclwpintro}

The following is our main result {in \cite{HL}} on {the} local well-posedness for \eqref{back} (and therefore also \eqref{sys}). {While the theorem requires} a smallness assumption \eqref{smallness.fd}, {it is important to note that the smallness is \underline{not} needed for the higher norms. This is important for the applications to Theorem~\ref{main.intro} since} the approximating solutions $g_\lambda$ that we construct necessarily oscillate with very high frequency.\footnote{{We note that in the statement of Theorem~\ref{lwp} in \cite{HL}, different components of the metric are in spaces of different weight. While that is useful for the proof of Theorem~\ref{lwp}, it is irrelevant for this paper. We simply take the worst weight in the norms in \cite{HL} so that the decay part of the metric components $\gamma$, $N$ and $\beta$ are all in $H^{k+2}_\de$.}}
\begin{thm}[\cite{HL}]\label{lwp}
Let $-\frac{1}{2}<\delta<0$, $k \geq 3$, $R>0$ and $\mathcal A$ be a finite set. Given a free initial data set as in Definition \ref{def.free.data} such that
	\begin{equation}\label{smallness.fd}
\|\dot{\phi}\|_{L^\infty}+\|\nabla \phi\|_{L^\infty} + {\max_{\bA}}\|\breve{F}_{\bA}\|_{L^\infty}\leq \ep{,}
	\end{equation}
	{and}
	\begin{equation}\label{init.u.bd}
	{C_{eik}:= {\left(\min_{\bA} \inf_{x\in \mathbb R^2} |\nab u_{\bA}|(x)\right)^{-1} + \max_{\bA}}\|\nab u_{\bA}-\overrightarrow{c_{\bA}}\|_{H^{k+1}_{\delta}}<\infty,}
	\end{equation}
	and 
	$$C_{high}:=\|\dot{\phi}\|_{H^k}+\|\nabla \phi\|_{H^k}+ \|\breve{F}_{\bA}\|_{H^k} <\infty.$$
	Then{, for any $C_{eik}$ and $C_{high}$,} there exists a constant $\ep_{low}=\ep_{low}({C_{eik}, k,\delta,R})>0$ \underline{independent of $C_{high}$} and a $T=T(C_{high},{C_{eik},} k, \delta, R)>0$ such that if $\ep<\ep_{low}$, there exists a unique solution to \eqref{back} in elliptic gauge on $[0,T]\times \m R^2 $.
	Moreover, {the following holds for some constant $C_h = C_h(C_{eik}, C_{high}, k,\de,R)>0$:} 
	\begin{itemize}
	\item {The following estimates hold for $\phi$, $F_{\bA}$ and $u_{\bA}$ for all $\bA\in \mathcal A$ for $t\in [0,T]$:}
	\begin{align*}
	\|\nabla \phi\|_{H^k} +\|\partial_t \phi \|_{H^k} +\|\partial^2_t \phi\|_{H^{k-1}}\leq &C_h, \\
	{\max_\bA}\left(\|F_\bA \|_{H^k} + \|\partial_t F_{\bA}\|_{H^{k-1}}+\|{\partial^2_t} F_{\bA}\|_{H^{k-2}}\right)\leq &C_h, \\
	{\left(\min_{\bA} \inf_{x\in \mathbb R^2} |\nab u_{\bA}|(x)\right)^{-1}+}{\max_\bA}\left(\|\nabla u_{\bA}-\overrightarrow{c_{\bA}} \|_{ H^{k}_\delta} +\|e^{\gamma}N^{-1} (e_0 u_{\bA}) -|\overrightarrow{c_{\bA}}|\|_{H^{k}_\delta}\right)\leq & C_h,\\
	{\max_\bA}\left(\|\partial_t\nabla u_{\bA} \|_{ H^{k-1}_\delta}{+\|\partial_t^2\nabla u_{\bA} \|_{ H^{k-2}_\delta}}+\| \partial_t \left(\f{{e^{\gamma}}}{N} e_0 u_\bA \right)\|_{ H^{k-1}_\delta}
			+\|\partial_t^2\left(\f{{e^{\gamma}}}{N} e_0 u_\bA \right)\|_{H^{k-2}_\delta}\right)\leq &C_h.
				\end{align*}				
	\item The metric components $\gamma$ {and $N$} can be decomposed as
				$$\gamma = \gamma_{asymp}\chi(|x|)\log(|x|) + \wht \gamma,\quad N = 1 +N_{asymp}(t)\chi(|x|)\log(|x|) + \wht N,$$
				{with $\gamma_{asymp}\leq 0$ a constant, $N_{asymp}(t)\geq 0$ a function of $t$ alone and $\chi(|x|)$ is a fixed smooth, non-negative cutoff function supported in $\{{|x|}\geq 1\}$ which is identically $1$ for ${|x|}\geq 2$. }
	\item $\gamma$, $N$ and $\beta$ obey the following estimates for $t\in [0,T]$:
	\begin{align*}
	{|\gamma_{asymp}|} +\|{\wht \gamma}\|_{ H^{k+2}_{\delta}} + \|\partial_t {\wht \gamma}\|_{H^{k+1}_{{\delta}}} + \|\partial_t^2 {\wht \gamma} \|_{H^k_{{\delta}}}\leq &C_h, \\
	|N_{asymp}| +|\partial_t N_{asymp}|+|\partial^2_t N_{asymp}|\leq &C_h,\\
	\|\wht N\|_{H^{k+2}_\delta}+ \|\partial_t \wht N\|_{H^{k+1}_\delta}+\|\partial^2_t \wht N\|_{H^{k}_\delta}\leq &C_h,\\
	\|\beta \|_{H^{k+2}_{\delta}} + \|\partial_t \beta\|_{H^{k+1}_\delta}+\|\partial^2_t \beta\|_{H^{k}_\delta} \leq &C_h.
	\end{align*}
	\item The support of $\phi$ and $F_{\bA}$ satisfies\footnote{{Here, $J^+$ denotes the causal future.}} $${supp(\phi,F_{\bA}) \subset J^+(\Sigma_0{\cap} B(0, R)).}$$
	\end{itemize}	
\end{thm}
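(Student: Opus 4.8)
The plan (carried out in our companion paper \cite{HL}) is to solve \eqref{back} in the elliptic gauge \eqref{g.form} in two stages. First, on the initial slice $\Sigma_0=\mathbb R^2$ one converts the free data of Definition~\ref{def.free.data} into an admissible data set as in Definition~\ref{def.data}; this amounts to solving the constraints \eqref{mom}--\eqref{ham}, where \eqref{ham} is a semilinear scalar equation for $\gamma$ on $\mathbb R^2$ to be solved in the weighted spaces of Definition~\ref{def.spaces} (its solvability, and the logarithmic tail $\gamma_{asymp}\chi\log|x|$, being governed by the choice $\delta\in(-\frac12,0)$), and \eqref{mom} is a first-order divergence-type equation for the traceless $H$ whose solvability is precisely the orthogonality condition \eqref{main.data.cond} imposed on the free data; this part follows \cite{Huneau.constraints}. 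Second, one propagates in time by an iteration that exploits the separation of the elliptic ``geometry'' from the hyperbolic ``matter'': given a matter iterate $(\phi^{(n)},F_\bA^{(n)},u_\bA^{(n)})$ on $[0,T]\times\mathbb R^2$, one first solves the elliptic system \eqref{elliptic.1}--\eqref{elliptic.4} for the metric components $(H,\gamma,N,\beta)^{(n+1)}$ with sources built from the $n$-th matter iterate, and then, with $g^{(n+1)}$ fixed, one solves $\Box_{g^{(n+1)}}\phi^{(n+1)}=0$, the transport equations for $F_\bA^{(n+1)}$, and the eikonal equations $(g^{(n+1)})^{\alpha\beta}\partial_\alpha u_\bA^{(n+1)}\partial_\beta u_\bA^{(n+1)}=0$ with the constructed data.

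The elliptic step rests on the mapping properties of $\Delta$ and the conformal Killing operator $L$ between the weighted Sobolev spaces $H^m_\delta$ on $\mathbb R^2$ collected in Appendix~\ref{weightedsobolev}. The range $\delta\in(-\frac12,0)$ is exactly the one in which these operators are invertible up to the finite-dimensional obstruction responsible for the logarithmic behaviour, forcing the decompositions $\gamma=\gamma_{asymp}\chi\log|x|+\wht\gamma$ and $N=1+N_{asymp}(t)\chi\log|x|+\wht N$ with $\gamma_{asymp}$ and $N_{asymp}(t)$ determined by the total matter ``energy'' integrals on $\Sigma_t$. Being in $2+1$ dimensions is used twice: the equations \eqref{elliptic.2}--\eqref{elliptic.3} for $\gamma$ and $N$ are \emph{semilinear} (the nonlinearity entering only through $e^{\pm 2\gamma}$), so the elliptic inversion loses no derivatives, and \eqref{elliptic.1} together with \eqref{elliptic.4} are solvable once the orthogonality built into the admissible data holds. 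A subtlety is that the elliptic system is itself coupled --- the shift $\beta$ enters $e_0=\partial_t-\beta^i\partial_i$ and hence the sources of \eqref{elliptic.1}--\eqref{elliptic.3} --- so it is solved by a further (contractive, thanks to the $L^\infty$-smallness) fixed point, naturally phrased in terms of the rescaled variables $\dot\phi=\frac{e^{2\gamma}}{N}e_0\phi$ and $\breve F_\bA=F_\bA e^{\gamma/2}$ of Definition~\ref{def.free.data}. One must also verify that the gauge conditions $\bar g_{ij}=e^{2\gamma}\delta_{ij}$ and $\tau=0$ remain consistent with the elliptic and evolution equations; equivalently, the maximality relation $e_0\gamma=-\frac12\partial_k\beta^k$ holds, which both lets one control $\partial_t\gamma$ by spatial derivatives of $\beta$ and is part of the constraint-propagation argument below.

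For the matter step, $\Box_g\phi=0$ is handled by standard $H^k$ energy estimates using the $W^{1,\infty}$ control of $g$ from the elliptic step, and finite speed of propagation (with $N$ bounded away from $0$, so $g$ stays non-degenerate) yields $supp(\phi,F_\bA)\subset J^+(\Sigma_0\cap B(0,R))$; the densities $F_\bA$ solve linear transport equations, estimated along the integral curves of $(du_\bA)^\sharp$. The eikonal functions $u_\bA$ are the dangerous ingredient, since the eikonal equation is first-order and fully nonlinear and higher derivatives of $u_\bA$ would be lost; one controls them through the geometric variables $\nabla u_\bA-\overrightarrow{c_\bA}\in H^{k+1}_\delta$ and $e^\gamma N^{-1}e_0 u_\bA$, by differentiating the eikonal equation and using that $|\nabla u_\bA|\geq C_{eik}^{-1}$ persists on $[0,T]$. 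An induction then propagates the stated high-norm bounds --- $H^{k+2}_\delta$ for $\gamma,N,\beta$ with each $\partial_t$ costing a derivative, $H^k$ for $\nabla\phi$ and $\partial_t\phi$, and so on --- for a time $T=T(C_{high},C_{eik},k,\delta,R)$, while convergence of the iteration follows by showing it is a contraction in a lower-regularity norm (differences measured in, say, $H^{k-1}$) over a short time, the contraction factor being a positive power of $T$; uniqueness in the full class then follows by a weak-limit argument. Constraint propagation is checked via the twice-contracted Bianchi identity: once the elliptic system and the evolution equations for $(\phi,F_\bA,u_\bA)$ hold and the constraints hold on $\Sigma_0$, the full system \eqref{back} and $\tau=0$ hold on the whole slab. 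Finally, the assertion that $\ep_{low}$ is independent of $C_{high}$ comes from noting that smallness of $\|\dot\phi\|_{L^\infty}+\|\nabla\phi\|_{L^\infty}+\max_\bA\|\breve F_\bA\|_{L^\infty}$ is needed only to keep $\gamma$ and $N$ close to their flat reference --- a scale-critical, low-regularity condition ensuring the elliptic operators stay uniformly invertible and $g$ non-degenerate --- whereas the high norms affect only the length $T$ of the interval.

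The hard part will be the interplay of three features: the $2$D weighted elliptic analysis, in which one must carefully track the logarithmic tails, the $t$-dependence of $N_{asymp}$, and the orthogonality/solvability conditions; the systematic loss of one derivative in every $\partial_t$ of a metric component, which has to be absorbed when differentiating $\Box_g\phi=0$, the transport equations, and especially the eikonal equations; and the fact that $u_\bA$ feeds back into the elliptic sources through the terms $F_\bA^2\partial u_\bA\partial u_\bA$, so that the ``good variables'' for the eikonal functions must be chosen to close the entire coupled hierarchy at the single regularity level $k$ without the metric and matter regularities degrading one another.
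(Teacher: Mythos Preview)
This theorem is not proved in the present paper: it is explicitly imported from the companion paper \cite{HL} (note the attribution ``[\cite{HL}]'' in the theorem heading and the remark in Section~\ref{seclwpintro} that this is ``our main result in \cite{HL}''), so there is no proof here to compare your proposal against. Your outline is a reasonable and internally coherent sketch of how such a local well-posedness result is expected to go --- constraint-solving following \cite{Huneau.constraints}, an elliptic--hyperbolic iteration exploiting the semilinear structure of \eqref{elliptic.1}--\eqref{elliptic.4} in the weighted spaces of Appendix~\ref{weightedsobolev}, energy estimates for $\phi$, transport estimates for $F_\bA$, and careful handling of the eikonal functions --- and you correctly identify the key structural point that the $L^\infty$-smallness \eqref{smallness.fd} is what keeps the elliptic operators invertible and the metric non-degenerate, while $C_{high}$ only governs $T$.
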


\begin{rk}
{From now on, let us fix a cutoff function $\chi$ with the properties as in the statement of Theorem~\ref{lwp}.}
\end{rk}

{Finally, as discussed in \cite{HL}, when specialized in the case that the data are genuinely small - in the sense that even the high norms are small -} the time of existence can be taken to be $T=1$:
\begin{cor}[\cite{HL}]\label{lwp.small}
Suppose the assumptions of Theorem~\ref{lwp} hold and let $\{c_{\bA}\}_{\bA\in \mathcal A}$ be a collection of constant vector fields on the plane. There exists $\ep_{small}=\ep_{small}(\delta,k,R,c_{\bA})$ such that if $C_{high}$ and $\ep$ in Theorem \ref{lwp} both satisfy
$$C_{high},\ep\leq \ep_{{small}}$$
and moreover
$$\sum_{\bA}\|\nabla u_{\bA}- \overrightarrow{c_{\bA}}\|_{H^{k+1}_\delta}\leq \ep_{{small}},$$
then the unique solution exists in $[0,1]\times \m R^2$. Moreover, there exists $C_0{= C_0(\de, k, R,c_{\bA})}$ such that {all the estimates in Theorem~\ref{lwp} hold with $C_h$ replaced by $C_0\ep$.}

\end{cor}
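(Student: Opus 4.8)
The plan is to re-run the bootstrap argument behind Theorem~\ref{lwp}, now tracking the size of the data and exploiting that every nonlinearity in the $\mathbb U(1)$-reduced Einstein--null dust system in elliptic gauge is at least quadratic in the ``matter'' quantities $(\partial\phi, F_\bA, H)$, while the eikonal functions enter only through the $O(1)$ factors $|\nabla u_\bA|$, $e^{\gamma}N^{-1}(e_0 u_\bA)$ and their derivatives, which stay controlled as long as $\nabla u_\bA-\overrightarrow{c_\bA}$ remains small. First I would observe that under the hypotheses of the corollary the quantity $C_{eik}$ appearing in Theorem~\ref{lwp} is bounded by a constant depending only on $\delta,k,R$ and the fixed vectors $\{c_\bA\}$: the weighted Sobolev embedding of Appendix~\ref{weightedsobolev} turns $\|\nabla u_\bA-\overrightarrow{c_\bA}\|_{H^{k+1}_\delta}\leq\ep_{small}$ into $\sup_x|\nabla u_\bA-\overrightarrow{c_\bA}|\lesssim\ep_{small}$, so $\inf_x|\nabla u_\bA|\geq\tfrac12\min_\bA|\overrightarrow{c_\bA}|$ once $\ep_{small}$ is small. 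Hence the threshold $\ep_{low}$ and lifespan $T$ furnished by Theorem~\ref{lwp} depend only on $\delta,k,R,\{c_\bA\}$ and on $C_{high}$; since moreover $C_{high}\leq\ep_{small}$, what remains is to upgrade the qualitative statement of Theorem~\ref{lwp} to the quantitative claim that the lifespan can be taken to be $1$ with all norms $\lesssim\ep$.

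For this I would set up a continuity argument on $[0,T^*]$, where $T^*\leq 1$ is the largest time up to which the solution exists and satisfies the bounds of Theorem~\ref{lwp} with $C_h$ replaced by $A\ep$ for a large constant $A=A(\delta,k,R,\{c_\bA\})$. Such $T^*>0$ exists because the data constructed by solving the constraints \eqref{mom}--\eqref{ham} (as in \cite{HL, Huneau.constraints}) have quadratically small right-hand sides, so the weighted elliptic theory gives $\gamma,H$ data of size $O(\ep^2)\ll\ep$, and all bootstrap norms are continuous in $t$. On $[0,T^*]$ one then improves the estimates: (i) for $\phi$ and the $F_\bA$, the operator $\Box_{g(\mfg)}$ differs from a fixed $t$-independent (flat-plus-logarithmic) reference operator by coefficients of size $O(A\ep)$ in the relevant norms, so the energy estimate for $\Box_g\phi=0$ (resp. with the dust sources in \eqref{back}) and its $t$-differentiated versions give $\|\nabla\phi\|_{H^k}+\|\partial_t\phi\|_{H^k}+\|\partial_t^2\phi\|_{H^{k-1}}\lesssim\ep\,e^{CA\ep T^*}\leq\ep(1+CA\ep)$, while the transport equations for $F_\bA$ and for $\nabla u_\bA-\overrightarrow{c_\bA}$ are integrated to the same effect, finite speed of propagation propagating the weights and the support statement $\mathrm{supp}(\phi,F_\bA)\subset J^+(\Sigma_0\cap B(0,R))$; (ii) for the metric, the elliptic equations \eqref{elliptic.1}--\eqref{elliptic.4} have right-hand sides quadratic in $(\partial\phi,F_\bA,H)$, hence $O(\ep^2)$ in the relevant weighted norm, so the weighted elliptic estimates give $|\gamma_{asymp}|+\|\widehat\gamma\|_{H^{k+2}_\delta}+|N_{asymp}|+\|\widehat N\|_{H^{k+2}_\delta}+\|\beta\|_{H^{k+2}_\delta}\lesssim\ep^2$; (iii) time derivatives of the metric are obtained by differentiating \eqref{elliptic.1}--\eqref{elliptic.4} in $t$, producing linear elliptic equations for $\partial_t\gamma,\partial_t N,\partial_t\beta$ (and their second $t$-derivatives) with $O(\ep^2)$ forcing, where the dangerous $\partial_t\gamma$ is first traded for spatial derivatives of $\beta$ using the propagation of maximality $\tau=0$, cf. point (1) of Section~\ref{ideas}. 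Choosing $A$ large and then $\ep_{small}$ small enough that $1+CA\ep\leq A/2$ and $C\ep^2\leq\tfrac12 A\ep$, all bootstrap bounds are recovered with $A\ep$ improved to $\tfrac{A}{2}\ep$; the continuation and uniqueness part of Theorem~\ref{lwp} then extends the solution past $T^*$ still obeying the improved bounds, contradicting maximality unless $T^*=1$. This yields the solution on $[0,1]\times\mathbb R^2$ with all norms $\leq C_0\ep$, where one may take $C_0=A$.

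The one genuinely delicate point — already flagged in Section~\ref{ideas} — is that $\partial_t$ of the metric, and $\partial_t^2 F_\bA$, a priori obey weaker estimates than their all-spatial counterparts, since the metric is recovered from elliptic equations on each slice rather than from an evolution equation; closing the bootstrap in the right order, and using the propagation of maximality to rewrite $\partial_t\gamma$ in terms of $\partial_i\beta$, is what keeps these terms under control. Beyond that, the argument is just bookkeeping of powers of $\ep$ inside the (already established) proof of Theorem~\ref{lwp}, together with the observation that over a unit time interval the exponential-in-$\ep$ growth in the energy estimates costs only a factor $1+O(\ep)$.
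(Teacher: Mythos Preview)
The paper does not prove Corollary~\ref{lwp.small} at all: it is quoted verbatim from the companion paper \cite{HL}, and the only sentence here is ``Finally, as discussed in \cite{HL}, when specialized in the case that the data are genuinely small \dots\ the time of existence can be taken to be $T=1$.'' There is therefore no proof in this paper to compare your proposal against.

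That said, your sketch is the natural argument and almost certainly matches in spirit what \cite{HL} does: run the same bootstrap as in Theorem~\ref{lwp}, observe that the right-hand sides of the elliptic equations \eqref{elliptic.1}--\eqref{elliptic.4} are quadratic in the matter quantities (hence $O(\ep^2)$), that the energy/transport estimates over a unit time interval cost only $e^{O(\ep)}=1+O(\ep)$, and close. Your handling of $C_{eik}$ via Sobolev embedding, and the remark that $\partial_t\gamma$ must be traded for $\partial_i\beta^i$ using the maximality condition, are both correct and relevant. One small imprecision: you write ``transport equations for $\nabla u_\bA-\overrightarrow{c_\bA}$'' --- strictly speaking $u_\bA$ solves the eikonal equation, and one derives transport-type equations for its derivatives; this is harmless at the level of a sketch. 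Since the present paper offers no argument of its own, your proposal is as close to ``the paper's proof'' as one can get.
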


\section{Main result}\label{sec.main}

\subsection{Second version of the main theorem}\label{sec.main.2nd}

\begin{df}\label{ang.sep}
Given a solution to \eqref{back} on $I\times \m R^2$ for $I\subset \m R$ {(which is as regular as that in Theorem \ref{lwp}), w}e say that the set of eikonal functions $\{u_{\bA}\}_{\bA\in \mathcal A}$ is {\bf angularly separated} if there exists $\eta'\in (0,1)$ such that 
$$\f{\delta^{ij}(\rd_i u_{\bA_1})(\rd_j u_{\bA_2})}{|\nab u_{\bA_1}||\nab u_{\bA_2}|}(t,x)<1-\eta', \quad \forall (t,x)\in I\times\mathbb R^2,\quad\forall \bA_1\neq \bA_2.$$
\end{df}

\begin{thm}[{Main theorem}]\label{main.thm.2}
Let\footnote{Note that while the constants $k$, $\delta$ and $R$ do not appear explicitly in the statement of the theorem, they are needed in the definition of admissible free initial data set.} $k \geq 10$, $-\f 12<\delta<0$, $R>0$ and $\mathcal A$ be a finite set. Given an admissible free initial data set $(\dot{\phi},\nabla \phi,\breve F_\bA, u_\bA)\restriction_{\Sigma_0}$such that 
\begin{itemize}
\item $$|\nab u_\bA\restriction_{\Sigma_0}|>\f 12;$$
\item $u_\bA \restriction_{\Sigma_0}$ is angularly separated\footnote{Strictly speaking, angular separation was defined for spacetimes on $I\times \m R^2$, where $I$ in an interval, but it is easy to extend its definition to include the case $I=\{0\}$.};
\item (Smallness condition) $${\|\nab u_{\bA} - \overrightarrow{c_{\bA}}\|_{H^{11}_\de}} + \|\dot{\phi}\|_{H^{10}}+\|\nabla \phi\|_{H^{10}}+ \|\breve{F}_{\bA}\|_{H^{10}}\leq \ep{;}$$
\item (Genericity condition on initial data) {there} exists a point $p\in \m R^2$ such that 
	\begin{equation}\label{genericity}
	\left(\begin{array}{l}(\partial_1 \phi_0)\restriction_{\Sigma_0}\\ (\partial_1 \dot{\phi_0})\restriction_{\Sigma_0}\end{array}\right)(p)\mbox{ and }\left(\begin{array}{l}(\partial_2 \phi_0)\restriction_{\Sigma_0}\\ (\partial_2 \dot{\phi_0})\restriction_{\Sigma_0}\end{array}\right)(p)\mbox{ are linearly independent.}
	\end{equation}
\end{itemize}
Then, there exists $\ep_0>0$ such that if $\ep<\ep_0$, {a unique solution $(g_0,\phi_0, (F_0)_{\bA}, (u_0)_{\bA})$ to \eqref{back} arising from the given admissible free initial data set} exists on a time interval $[0,1]$, and there exists a one-parameter family of solutions $(g_{\lambda},\phi_{\lambda})$ to \eqref{sys} for $\lambda \in (0,\lambda_0)$ {(for some $\lambda_0\in \m R$ sufficiently small), which are all defined on the time interval $[0,1]$,} such that
	$$(g_{\lambda},\phi_{\lambda}) \rightarrow (g_0,\phi_0)\mbox{ { uniformly on compact sets}}$$
	 and 
	$$(\rd g_{\lambda}, \rd \phi_{\lambda})\rightharpoonup (\rd g_0,\rd \phi_0)\mbox{ { weakly in $L^2$}}$$ 
	with $\rd g_{\lambda}, \rd \phi_{\lambda}\in L^\infty_{loc}$ uniformly. 
\end{thm}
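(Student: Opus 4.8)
The plan is to construct the one-parameter family $(g_\lambda,\phi_\lambda)$ explicitly via a parametrix and then close a bootstrap argument showing that the true solution of \eqref{sys} with suitably chosen initial data is close to the parametrix. First I would set up the data: given the admissible free initial data for \eqref{back}, I construct data for \eqref{sys} by superimposing the high-frequency correction $\sum_\bA \lambda F_\bA\cos(u_\bA/\lambda)$ (plus lower-order corrections as in the ansatz \eqref{ansatz}) onto $\phi_0$, then \emph{re-solve the constraint equations} \eqref{mom}--\eqref{ham} so that the data are genuinely admissible in the sense of Definition~\ref{def.data}; one must check the correction to the constraints is small enough (in the relevant weighted norms) to apply the constraint-solving procedure from \cite{HL}, and that the resulting free data for \eqref{sys} still satisfy the smallness hypothesis \eqref{smallness.fd} (with $\ep$ replaced by something like $C\lambda^0$ from the background plus $O(\lambda^0)$ from the oscillation — crucially the $L^\infty$ norm of $\partial(\lambda F\cos(u/\lambda))$ is $O(1)$, not small, so one needs the sharper statement of Theorem~\ref{lwp} in which only the $L^\infty$ norm must be small while the high $H^k$ norms, which here blow up like $\lambda^{-(k-1)}$, may be large). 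This is exactly why Theorem~\ref{lwp}/Corollary~\ref{lwp.small} are stated with $\ep_{low}$ independent of $C_{high}$, and it is what guarantees a \emph{uniform} (in $\lambda$) time of existence $[0,1]$.

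Next I would carry out the rescaling of the eikonal functions (Section~\ref{sec.prepare} referenced in the excerpt): using the symmetry $(u_\bA,F_\bA)\mapsto(c_\bA u_\bA, c_\bA^{-1}F_\bA)$ of \eqref{back}, normalize each $u_\bA$ so that $|\nabla u_\bA|$ is close to a convenient constant and the phases are "well-prepared" — this is what lets one assert that non-parallel interactions of the phases $u_\bA/\lambda$, $u_\bB/\lambda$ produce genuinely high spatial frequency, so that inverting $\Delta$ gains a factor $\lambda^2$. Then I construct the refined parametrix: $\phi_\lambda = \phi_0 + \sum_\bA \lambda F_\bA\cos(u_\bA/\lambda) + \sum_\bA \lambda^2\widehat F_\bA\sin(u_\bA/\lambda) + \sum_\bA\lambda^2\widehat F_\bA^{(2)}\cos(2u_\bA/\lambda) + \sum_\bA\lambda^2\widehat F_\bA^{(3)}\sin(3u_\bA/\lambda) + \mathcal E_\lambda$ together with the decomposition $\mfg = \mfg_0+\mfg_1+\mfg_2+\mfg_3$, where $\widehat F_\bA,\widehat F_\bA^{(2)},\widehat F_\bA^{(3)}$ and the $\mfg_j$'s are determined by matching powers of $\lambda$ and frequencies in \eqref{back}/\eqref{elliptic.1}--\eqref{elliptic.4}; the amplitudes $\widehat F_\bA$ are fixed by requiring cancellation of the worst $O(1)$ error in $\Box_g\widehat\phi_\lambda$ that appeared in the naive argument of the excerpt.

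The core of the argument is a bootstrap on $[0,T_*]\subset[0,1]$ for the true solution $(g_\lambda,\phi_\lambda)$ of \eqref{sys} with the constructed data: I would assume quantitative closeness of $(g_\lambda,\phi_\lambda)$ to the parametrix (in $H^1$-type energy norms for the scalar field, with $\lambda$-weights, and in weighted Sobolev norms for the metric components, plus $L^\infty$-type bounds on $\partial g_\lambda,\partial\phi_\lambda$ uniformly in $\lambda$), then improve it. For the scalar field one runs an energy estimate for $\Box_g\widehat\phi_\lambda = (\text{parametrix error})$, using the eikonal and transport structure of $u_\bA,F_\bA$ to see the error terms are $O(\lambda^2)$ in $L^2$ — here the angular separation and well-preparedness are essential to handle the $\frac1\lambda(g_\lambda^{\alpha\beta}-g_0^{\alpha\beta})\partial_\alpha u_\bA\partial_\beta u_\bA\cos(u_\bA/\lambda)$ term via the improved $\mfg_1,\mfg_2$ bounds. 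For the metric one runs weighted elliptic estimates on each slice using \eqref{elliptic.1}--\eqref{elliptic.4}. The genericity condition \eqref{genericity} is what prevents degenerate cancellations so that the backreaction is genuinely produced (and the weak limit is genuinely of the claimed form). Once the bootstrap closes on $[0,1]$, the estimates immediately give $\partial\widehat\phi_\lambda\to 0$ and $\partial\widehat{\mfg}_\lambda\to0$ uniformly on compact sets, hence $(g_\lambda,\phi_\lambda)\to(g_0,\phi_0)$ uniformly on compacts, while $\sum_\bA\lambda F_\bA\cos(u_\bA/\lambda)\to 0$ uniformly but $\partial$ of it converges only weakly in $L^2$ (with $2\partial_\mu\phi_\lambda\partial_\nu\phi_\lambda\rightharpoonup 2\partial_\mu\phi_0\partial_\nu\phi_0+\sum_\bA F_\bA^2\partial_\mu u_\bA\partial_\nu u_\bA$), which is precisely the assertion of the theorem.

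The main obstacle I expect is controlling the $\partial_t$ derivatives of the metric error and the term $\partial_t^2\widehat F_\bA$: because the metric components solve elliptic equations on each $\Sigma_t$, time-differentiating loses regularity and produces error terms that, naively estimated, are not small in $\lambda$. The resolution — and the technically hardest part of the whole proof — is to exploit the Einstein structure beyond the model problem \eqref{model.prob}: using propagation of maximality ($\tau=0$) to trade $\partial_t\gamma$ for spatial derivatives of $\beta$, exhibiting an exact cancellation between the $\partial_t\mfg_3$ error term and the $\partial_t^2\widehat F_\bA$ error term, and handling the remaining $\partial_t\mfg_3$-times-low-frequency errors by integration by parts in the oscillatory phase. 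Getting these cancellations to hold requires choosing the parametrix amplitudes $\widehat F_\bA$, $\widehat F_\bA^{(j)}$ and the gauge very carefully, and this is where essentially all the work lies.
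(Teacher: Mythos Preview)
Your overall architecture matches the paper closely: rescaling the eikonal functions to be spatially and null adapted, the refined parametrix with $\wht F_\bA,\wht F_\bA^{(2)},\wht F_\bA^{(3)}$ and the metric decomposition $\mfg_0+\mfg_1+\mfg_2+\mfg_3$, a bootstrap coupling energy estimates for the scalar field with weighted elliptic estimates for the metric, and the three mechanisms you list for the $\partial_t$-derivative difficulties (propagation of maximality, the $\partial_t\mfg_3$/$\partial_t^2\wht F_\bA$ cancellation, and integration by parts against low frequency) are exactly what the paper does.

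There is, however, a concrete misidentification in your treatment of the initial data and the genericity condition. You propose to superimpose the oscillatory correction and then ``re-solve the constraint equations \eqref{mom}--\eqref{ham}''; in the paper's framework one instead specifies \emph{free} initial data $(\phi,\dot\phi)$ for \eqref{sys}, and the only compatibility condition that must be enforced is the two scalar conditions $\int_{\Sigma_0}\dot\phi_\lambda\,\partial_j\phi_\lambda\,dx=0$ for $j=1,2$ (this is \eqref{main.data.cond} with $\breve F_\bA\equiv 0$). The naive data $\phi_0+\sum_\bA\lambda F_\bA\cos(u_\bA/\lambda)$ together with the corresponding $\dot\phi$ fail this condition by $O(\lambda^3)$ (after exploiting the background constraint and the oscillatory integration-by-parts gains), and one must add small corrections $\Omega_0 r_0,\Omega_1 r_1$ with $|\Omega_i|\lesssim\lambda^2$ to restore it. The genericity condition \eqref{genericity} is used \emph{precisely here}: it guarantees that the $2\times 2$ linear system determining $(\Omega_0,\Omega_1)$ is nondegenerate, so a fixed point argument produces the required corrections. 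It has nothing to do with ``preventing degenerate cancellations so that the backreaction is genuinely produced'' --- the backreaction is automatic from the structure of $\partial_\mu\phi_\lambda\partial_\nu\phi_\lambda$ once the parametrix is validated. Without this step you have not explained how to obtain admissible data for \eqref{sys}, and your stated role for \eqref{genericity} would leave you unable to locate where it actually enters the argument.
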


\begin{rk}
As we will see (for instance by considering \eqref{g.para.def} and the estimates for $\mfg$), $\rd g_\lambda$ converges in a stronger sense compared to $\rd \phi_\lambda$. In particular, $\rd g_\lambda \to \rd g_0$ uniformly (in $L^\infty$) on compact sets (of $[0,1]\times \m R^2$).
\end{rk}

{Clearly the existence and uniqueness of $(g_0,\phi_0, (F_0)_{\bA}, (u_0)_{\bA})$ - which from now on will be called the \emph{background solution} - follows from Corollary~\ref{lwp.small}. We therefore have to prove the existence of solutions $(g_{\lambda},\phi_{\lambda})$ to \eqref{sys} and their convergence, for some appropriately chosen initial data (cf. Lemma~\ref{lmini}). Since $F_{\bA} \equiv 0 $ for the solutions to \eqref{sys}, \textbf{we will from now on denote $F_{\bA} = (F_0)_{\bA}$ and $u_{\bA} = (u_0)_{\bA}$} to lighten the notations.
}

\subsection{Angularly separated, spatially adapted and null adapted eikonal functions}\label{sec.prepare}

In Theorem \ref{main.thm.2}, the eikonal functions $\{u_\bA\}_{\bA\in \mathcal A}$ are required to be initially angularly separated with $|\nab u_\bA \restriction_{\Sigma_0}|>\f 12$. It is easy to see that by choosing $\ep_0$ in Theorem \ref{main.thm.2} sufficiently small, $\{u_\bA\}_{\bA\in \mathcal A}$ is in fact angularly separated in $[0,1]\times \m R^2$ with the lower bound $|\nab u_\bA|>\f 14$ {(cf. Lemma~\ref{lm:as.local})}. However, in order to carry out the proof of Theorem \ref{main.thm.2} (and in particular to construct the parametrix), it will be convenient to modify the eikonal functions such that they satisfy the additional properties of being spatially adapted and null adapted (see Definitions \ref{def.spatial} and \ref{def.null} below). That this can always be achieved is proven in Lemma \ref{lemma.adapt}. 

We begin our discussion with the following lemma, which states that our assumptions on $\{u_\bA\}_{\bA\in \mathcal A}$ in Theorem \ref{main.thm.2} together with the smallness assumption, give us good control of $\{u_\bA\}_{\bA\in \mathcal A}$ in the spacetime. 
\begin{lm}\label{lm:as.local}
Under the assumptions of Theorem \ref{main.thm.2}, if $\ep_0$ is sufficiently small, the eikonal functions $\{u_\bA\}_{\bA\in \mathcal A}$ are angularly separated\footnote{{Notice that in $[0,1]\times \m R^2$ the estimate in Definition~\ref{ang.sep} may hold with a different constant compared to the initial data, but we will call both of these constants $\eta'$}} in $[0,1]\times \m R^2$ and satisfy $|\nab u_\bA|>\f 14$.
\end{lm}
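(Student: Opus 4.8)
The plan is to deduce Lemma~\ref{lm:as.local} directly from the local well-posedness theory (Corollary~\ref{lwp.small}) applied to the background solution, plus a continuity/Gr\"onwall argument controlling how the quantities entering Definition~\ref{ang.sep} drift from their values on $\Sigma_0$. First I would observe that under the hypotheses of Theorem~\ref{main.thm.2}, Corollary~\ref{lwp.small} applies (after the rescaling of the eikonal functions is set up, but for the purpose of this lemma one may just as well work with the $u_{\bA}$ as given), so the background solution $(g_0,\phi_0,F_{\bA},u_{\bA})$ exists on $[0,1]\times\m R^2$ and satisfies all the estimates listed there with $C_h$ replaced by $C_0\ep$. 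In particular $\|\nab u_{\bA}-\overrightarrow{c_{\bA}}\|_{H^k_\de}\leq C_0\ep$ and $\|\rd_t\nab u_{\bA}\|_{H^{k-1}_\de}\leq C_0\ep$ for all $t\in[0,1]$, and by the Sobolev embedding $H^k_\de\hookrightarrow L^\infty$ (Appendix~\ref{weightedsobolev}, since $k\geq 10$ is far above threshold) these give $\|\nab u_{\bA}(t)-\overrightarrow{c_{\bA}}\|_{L^\infty}\lesssim C_0\ep$ and $\|\rd_t\nab u_{\bA}(t)\|_{L^\infty}\lesssim C_0\ep$ pointwise in $t$.

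Next I would integrate the $\rd_t$ estimate: for any $(t,x)\in[0,1]\times\m R^2$,
\[
|\nab u_{\bA}(t,x)-\nab u_{\bA}(0,x)|\leq\int_0^t|\rd_t\nab u_{\bA}(s,x)|\,ds\leq C_0\ep.
\]
Combined with the initial lower bound $|\nab u_{\bA}\restriction_{\Sigma_0}|>\f12$, this yields $|\nab u_{\bA}(t,x)|>\f12-C_0\ep>\f14$ provided $\ep_0$ (hence $\ep$) is chosen small enough that $C_0\ep_0<\f14$. For the angular separation, write $\Theta_{\bA_1\bA_2}(t,x):=\f{\delta^{ij}(\rd_iu_{\bA_1})(\rd_ju_{\bA_2})}{|\nab u_{\bA_1}||\nab u_{\bA_2}|}$. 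This is a smooth function of the pair $(\nab u_{\bA_1},\nab u_{\bA_2})$ on the region where both gradients have length $>\f14$, with derivatives bounded there in terms of $C_{eik}$ and the constant vectors $c_{\bA}$ (the denominators are bounded away from $0$). Hence $|\Theta_{\bA_1\bA_2}(t,x)-\Theta_{\bA_1\bA_2}(0,x)|\lesssim_{C_{eik}}|\nab u_{\bA_1}(t,x)-\nab u_{\bA_1}(0,x)|+|\nab u_{\bA_2}(t,x)-\nab u_{\bA_2}(0,x)|\leq C'\ep$ uniformly. Since on $\Sigma_0$ we have $\Theta_{\bA_1\bA_2}(0,x)<1-\eta'$, we conclude $\Theta_{\bA_1\bA_2}(t,x)<1-\eta'+C'\ep<1-\f{\eta'}2=:1-\eta'_{\mathrm{new}}$ once $\ep_0$ is small enough, which is exactly angular separation on $[0,1]\times\m R^2$ (with the constant relabelled $\eta'$ as in the footnote).

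The only mildly delicate point — and the place I expect to spend the most care — is making sure the implied constants in the Sobolev embedding and in the Lipschitz bound for $\Theta_{\bA_1\bA_2}$ genuinely depend only on the fixed data $k,\de,R,\{c_{\bA}\}$ and on $C_{eik}$ (equivalently $\eta'$), and \emph{not} on anything that could degenerate as $\ep\to0$; this is what licenses choosing $\ep_0$ at the end. Since Corollary~\ref{lwp.small} already packages all the background estimates with constants of exactly this form, this is essentially bookkeeping, and there is no genuine obstacle: the lemma is a soft perturbative consequence of the quantitative local well-posedness already in hand. One should also note the trivial extension of Definition~\ref{ang.sep} to $I=\{0\}$ so that the phrase ``angularly separated'' applies to the initial data hypothesis, as already flagged in the footnote to Theorem~\ref{main.thm.2}.
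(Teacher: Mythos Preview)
Your proposal is correct and is precisely the intended argument: the paper's proof is the single line ``This can be easily justified using the estimates in Corollary~\ref{lwp.small},'' and what you have written is a careful unpacking of that sentence. There is no alternative route or missing idea; you have simply made explicit the Sobolev embedding and Lipschitz-in-$\nab u_{\bA}$ steps that the paper leaves implicit.
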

\begin{proof}
This can be easily justified using the estimates in Corollary~\ref{lwp.small}.
\end{proof}

We now discuss the modifications to the eikonal functions. To begin, we make the easy observation that the following symmetry of the system \eqref{back} allows us to modify the eikonal functions:
\begin{lm}
Suppose $(g,\phi,F_{\bA},u_{\bA})$ is a solution to \eqref{back} on $I\times \m R^2$ for $I\subset \m R$ in the sense of Theorem \ref{lwp}. For any sets of positive\footnote{We choose $c_{\bA}$ to be positive so that we also preserve the property that $(e_0 u_\bA)>0$.} constants $\{c_{\bA}\}_{\bA\in \mathcal A}\in \mathbb R_{>0}^{|\mathcal A|}$, if we define
$$F_{\bA}'=c_{\bA}^{-1} F_{\bA},\quad u_{\bA}'=c_{\bA} u_{\bA},$$
then $(g,\phi,F_{\bA}',u_{\bA}')$ is also a solution to \eqref{back}.
\end{lm}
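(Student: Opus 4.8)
The plan is simply to check that each of the four equations in \eqref{back} is invariant under the substitution $F_{\bA}\mapsto F_{\bA}'=c_{\bA}^{-1}F_{\bA}$, $u_{\bA}\mapsto u_{\bA}'=c_{\bA}u_{\bA}$, while $g$ and $\phi$ are kept fixed. First I would observe that the second equation $\Box_g\phi=0$ does not involve $F_{\bA}$ or $u_{\bA}$ at all, and likewise the left-hand side $R_{\mu\nu}(g)$ together with the term $2\partial_\mu\phi\partial_\nu\phi$ on the right-hand side of the first equation are untouched; hence only three expressions need to be examined.

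Next I would carry out the scaling computations. For the matter contribution to the first equation, homogeneity gives $(F_{\bA}')^2\partial_\mu u_{\bA}'\partial_\nu u_{\bA}'=c_{\bA}^{-2}F_{\bA}^2\cdot c_{\bA}\partial_\mu u_{\bA}\cdot c_{\bA}\partial_\nu u_{\bA}=F_{\bA}^2\partial_\mu u_{\bA}\partial_\nu u_{\bA}$, so the summand is unchanged and the first equation survives. For the transport equation, using that $c_{\bA}$ is a constant (so $\Box_g(c_{\bA}u_{\bA})=c_{\bA}\Box_g u_{\bA}$ by linearity of $\Box_g$) one gets
\[
2\gi^{\alpha\beta}\partial_\alpha u_{\bA}'\partial_\beta F_{\bA}' + (\Box_g u_{\bA}')F_{\bA}' = c_{\bA}c_{\bA}^{-1}\Bigl(2\gi^{\alpha\beta}\partial_\alpha u_{\bA}\partial_\beta F_{\bA} + (\Box_g u_{\bA})F_{\bA}\Bigr) = 0,
\]
and for the eikonal equation $\gi^{\alpha\beta}\partial_\alpha u_{\bA}'\partial_\beta u_{\bA}' = c_{\bA}^2\,\gi^{\alpha\beta}\partial_\alpha u_{\bA}\partial_\beta u_{\bA} = 0$. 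Thus $(g,\phi,F_{\bA}',u_{\bA}')$ solves \eqref{back}.

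Finally I would check that the transformation stays within the solution class of Theorem~\ref{lwp}: since $c_{\bA}>0$ is a constant, $F_{\bA}'=c_{\bA}^{-1}F_{\bA}\geq 0$ keeps its sign and its $H^k$ regularity, $e_0 u_{\bA}'=c_{\bA}(e_0 u_{\bA})>0$, $\inf_{x}|\nabla u_{\bA}'|=c_{\bA}\inf_x|\nabla u_{\bA}|>0$, and $\nabla u_{\bA}'-c_{\bA}\overrightarrow{c_{\bA}}=c_{\bA}(\nabla u_{\bA}-\overrightarrow{c_{\bA}})\in H^{k+1}_\delta$ after redefining the asymptotic constant vectors as $c_{\bA}\overrightarrow{c_{\bA}}$. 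There is no real obstacle here: the whole statement reduces to a one-line scaling identity, and the only point requiring (minor) care is bookkeeping the positivity of $F_{\bA}$ and of $e_0 u_{\bA}$, the lower bound on $|\nabla u_{\bA}|$, and the weighted-Sobolev asymptotics, so that the hypotheses of Theorem~\ref{lwp} are genuinely preserved — which is precisely why the constants $c_{\bA}$ are taken to be positive.
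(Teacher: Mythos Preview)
Your proof is correct and is exactly the straightforward verification the paper has in mind; in fact the paper states this lemma as an ``easy observation'' and gives no proof at all. Your additional check that the transformed data remain in the solution class of Theorem~\ref{lwp} (positivity, regularity, asymptotics) is a nice touch that the paper leaves implicit.
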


We now define the notions of spatially adapted and null adapted eikonal functions:
\begin{df}\label{def.spatial}
Given a solution to \eqref{back} on $I\times \m R^2$ for $I\subset \m R$ {(which is as regular as that in Theorem \ref{lwp})}, we say that {the} set of eikonal functions $\{u_{\bA}\}_{\bA\in \mathcal A'}$ is {\bf spatially adapted} for $\mathcal A'\subset \mathcal A$ if  
\begin{enumerate}
\item $|\nab u_{\bA}|> \f 14, \quad \forall \bA\in\mathcal A$,
\item $|\nab (u_{\bA_1}\pm u_{\bA_2})|> \f 14, \quad \forall \bA_1, \bA_2\in \mathcal A',\, \bA_1\neq \bA_2$,
\item $|\nab (u_{\bA_1}\pm 2u_{\bA_2})|> \f 14, \quad \forall \bA_1, \bA_2\in \mathcal A',\, \bA_1\neq \bA_2$,
\item $|\nab (u_{\bA_1}\pm 3u_{\bA_2})|> \f 14, \quad \forall \bA_1, \bA_2\in \mathcal A',\, \bA_1\neq \bA_2$.
\end{enumerate}
\end{df}

\begin{df}\label{def.null}
Given a solution to \eqref{back} on $I\times \m R^2$ for $I\subset \m R$ {(which is as regular as that in Theorem \ref{lwp})}, we say that {the} set of eikonal functions $\{u_{\bA}\}_{\bA\in \mathcal A'}$ is {\bf null adapted} for $\mathcal A'\subset \mathcal A$ if there exists a constant $\eta>0$ such that for any choice of the signs $\pm$, $\pm_1$, $\pm_2$, it holds that\footnote{Here, $\gamma$ is as in \eqref{g.form}.}
\begin{enumerate}
\item $|(g^{-1})^{\alp\bt} \rd_\alp (u_{\bA_1}\pm u_{\bA_2})\rd_\beta(u_{\bA_1}\pm u_{\bA_2}))|> \eta e^{-2\gamma}, \quad \forall \bA_1, \bA_2\in \mathcal A',\, \bA_1\neq \bA_2$,
\item $|(g^{-1})^{\alp\bt} \rd_\alp (u_{\bA_1}\pm 2 u_{\bA_2})\rd_\beta(u_{\bA_1}\pm 2 u_{\bA_2}))|> \eta e^{-2\gamma}, \quad \forall \bA_1, \bA_2\in \mathcal A',\, \bA_1\neq \bA_2$,
\item $|(g^{-1})^{\alp\bt} \rd_\alp (u_{\bA_1}\pm_1 u_{\bA_2}\pm_2 u_{\bA_3})\rd_\beta(u_{\bA_1}\pm_1 u_{\bA_2}\pm_2 u_{\bA_3}))|> \eta e^{-2\gamma}, \quad \forall \bA_1, \bA_2,\bA_3\in \mathcal A'$ all different.
\end{enumerate}
\end{df}

\begin{lm}\label{lemma.adapt}
Given a solution to \eqref{back} on $I\times \m R^2$ for $I\subset \m R$ {(which is as regular as that in Theorem \ref{lwp})} such that the eikonal functions $\{u_{\bA}\}_{\bA\in \mathcal A}$ are angularly separated with $|\nab u_{\bA}|> \f 14, \, \forall \bA\in\mathcal A$, there exists a collection of positive constants $\{c_\bA\}_{\bA\in \mathcal A}$ such that for $u'_\bA=c_\bA u_\bA$, $\{u'_{\bA}\}_{\bA\in \mathcal A}$ is spatially adapted and null adapted.
\end{lm}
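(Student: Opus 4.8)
The plan is to choose the constants $\{c_\bA\}$ greedily (one at a time, inductively over an enumeration of $\mathcal A$), at each step ruling out only a finite union of ``bad'' values, so that any choice outside this finite bad set — together with a final smallness rescaling — works. The key observation is that the spatially adapted conditions in Definition~\ref{def.spatial} and the null adapted conditions in Definition~\ref{def.null} are, after the substitution $u'_\bA = c_\bA u_\bA$, each a statement about a \emph{finite} list of quantities of the form $\nab(c_{\bA_1} u_{\bA_1} \pm c_{\bA_2} u_{\bA_2})$, $\nab(c_{\bA_1}u_{\bA_1}\pm 2 c_{\bA_2}u_{\bA_2})$, etc., or $(g^{-1})^{\alp\bt}\rd_\alp(\cdots)\rd_\bt(\cdots)$ for analogous combinations. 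So it suffices to show: for each such combination, the set of constant ratios $c_{\bA_2}/c_{\bA_1}$ (resp.\ triples of ratios) for which the combination \emph{degenerates somewhere in $[0,1]\times \m R^2$} is contained in a finite set, and that away from this finite set one has a quantitative lower bound.

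First I would record the elementary estimates one gets from Lemma~\ref{lm:as.local} and Corollary~\ref{lwp.small}: writing $\nab u_\bA = \overrightarrow{c_\bA} + v_\bA$ with $\|v_\bA\|_{H^k_\de}\le C_0\ep$ (so $v_\bA$ is small in $L^\infty$ by Sobolev embedding, Appendix~\ref{weightedsobolev}), one has $|\nab u_\bA|>\tfrac14$ and, by angular separation, $\tfrac{\delta^{ij}(\rd_i u_{\bA_1})(\rd_j u_{\bA_2})}{|\nab u_{\bA_1}||\nab u_{\bA_2}|}<1-\eta'$ throughout $[0,1]\times\m R^2$. For the spatially adapted conditions, fix $\bA_1\ne\bA_2$ and consider $|\nab(u_{\bA_1}\pm m u_{\bA_2})|^2 = |\nab u_{\bA_1}|^2 \pm 2m\, \delta^{ij}(\rd_i u_{\bA_1})(\rd_j u_{\bA_2}) + m^2|\nab u_{\bA_2}|^2$ for $m\in\{1,2,3\}$. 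Using the angular-separation bound to estimate the cross term from above, one sees this quantity is bounded below by a positive quantity $p(m)$ which is a quadratic in $m$ with strictly positive leading and constant coefficients (up to the $O(\ep)$ errors); for $\ep_0$ small enough $p(m)>\tfrac1{16}$ for every $m\in\{1,2,3\}$. Crucially this already works \emph{before} rescaling, so condition (1) of Definition~\ref{def.spatial} holds for \emph{any} positive $c_\bA$, and conditions (2)--(4) hold provided the ratios $c_{\bA_2}/c_{\bA_1}$ stay in a compact range — but in fact the same computation with $u'_\bA = c_\bA u_\bA$ shows $|\nab(u'_{\bA_1}\pm m u'_{\bA_2})|^2 \ge c_{\bA_1}^2|\nab u_{\bA_1}|^2 + m^2 c_{\bA_2}^2 |\nab u_{\bA_2}|^2 - 2m c_{\bA_1}c_{\bA_2}(1-\eta')|\nab u_{\bA_1}||\nab u_{\bA_2}| \ge \eta' \big(c_{\bA_1}^2 |\nab u_{\bA_1}|^2 + m^2 c_{\bA_2}^2|\nab u_{\bA_2}|^2\big)$, which is $>\tfrac1{16}$ as soon as $\min_\bA c_\bA$ is bounded below. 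So the spatially adapted part imposes \emph{no} genericity at all: it holds for any choice with the $c_\bA$ bounded away from $0$.

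The real content — and the main obstacle — is the null adapted conditions. Here $(g^{-1})^{\alp\bt}\rd_\alp u_{\bA_i}\rd_\bt u_{\bA_i}=0$ for each single eikonal function, so $(g^{-1})^{\alp\bt}\rd_\alp(c_{\bA_1}u_{\bA_1}\pm c_{\bA_2}u_{\bA_2})\rd_\bt(c_{\bA_1}u_{\bA_1}\pm c_{\bA_2}u_{\bA_2}) = \pm 2 c_{\bA_1}c_{\bA_2}\,(g^{-1})^{\alp\bt}\rd_\alp u_{\bA_1}\rd_\bt u_{\bA_2}$, and this cross term is nonzero precisely because $u_{\bA_1},u_{\bA_2}$ propagate in angularly separated null directions: using \eqref{g.inverse} and $|\nab u_{\bA_i}|>\tfrac14$ one shows $|(g^{-1})^{\alp\bt}\rd_\alp u_{\bA_1}\rd_\bt u_{\bA_2}|\ge \eta_0 e^{-2\gamma}$ for some $\eta_0=\eta_0(\eta',C_0)>0$ (the eikonal equations pin down $e^\gamma N^{-1} e_0 u_\bA \approx |\overrightarrow{c_\bA}|$, so $(g^{-1})$ contracted against two non-collinear spatial gradients is quantitatively nonzero). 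Hence conditions (1) of Definition~\ref{def.null} hold for \emph{every} positive choice of $c_\bA$, with $\eta$ scaling like $c_{\bA_1}c_{\bA_2}$. The delicate cases are (2) and (3): for $(g^{-1})^{\alp\bt}\rd_\alp(c_{\bA_1}u_{\bA_1}\pm 2c_{\bA_2}u_{\bA_2})\rd_\bt(\cdots)$ and for the three-function combination, the single-eikonal terms again drop out, leaving a sum of cross terms $\pm 4 c_{\bA_1}c_{\bA_2}\,Q_{12}$ (resp.\ $\pm_1\pm_2$ combinations of $c_{\bA_i}c_{\bA_j}Q_{ij}$), where $Q_{ij}:=(g^{-1})^{\alp\bt}\rd_\alp u_{\bA_i}\rd_\bt u_{\bA_j}$. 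For case (2) there is still only one cross term so it never vanishes; for the three-function case we need $\pm_1 c_{\bA_1}c_{\bA_2}Q_{12} \pm_2 c_{\bA_1}c_{\bA_3}Q_{13} \pm_1\pm_2 c_{\bA_2}c_{\bA_3}Q_{23}$ to be bounded away from zero at every point of $[0,1]\times\m R^2$ for every sign choice. Since the $Q_{ij}$ are continuous functions on the compact set $[0,1]\times (\m R^2\cup\{\infty\})$ — they have limits at spatial infinity, namely the constant-coefficient expressions built from $\overrightarrow{c_\bA}$ — for each sign choice and each point the vanishing locus in $(c_{\bA_1},c_{\bA_2},c_{\bA_3})$-ratio-space is a proper real-analytic subvariety; I would argue that the bad set of ratios (those for which the expression vanishes \emph{somewhere}) is closed with empty interior, hence one can pick $\{c_\bA\}$ inductively avoiding it, and then a compactness/continuity argument upgrades ``nonvanishing everywhere'' to ``$\ge \eta e^{-2\gamma}$ everywhere'' for some $\eta>0$. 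A cleaner way to organize this, which I would actually carry out: order $\mathcal A = \{\bA_1,\dots,\bA_N\}$ and choose $c_{\bA_1}=1$, then pick $c_{\bA_2}$, then $c_{\bA_3}$, \dots, at each stage choosing $c_{\bA_j}$ outside the finitely many ``resonant'' values relative to the already-chosen $c_{\bA_1},\dots,c_{\bA_{j-1}}$ (these come from requiring that no signed combination $c_{\bA_j}Q_{j\ell} \pm c_{\bA_m}Q_{\ell m} = 0$ type relation holds identically on the — connected — spacetime, which it can only do for finitely many values of $c_{\bA_j}$ given the others, since $Q_{j\ell}$ is not identically zero); finally, having chosen all the $c_\bA$, rescale them all by a common small factor if necessary so that $\min c_\bA$ is bounded as needed for the spatially adapted bounds — note a common rescaling does not affect the \emph{angular} conditions at all and only scales the null-adapted quantities homogeneously. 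The main obstacle is thus the bookkeeping that for the triple combinations all $2^2$ (or $2^3$) sign choices are simultaneously controlled; I expect this to be handled by the observation that $Q_{12},Q_{13},Q_{23}$ are pointwise nonzero of definite sign (each $Q_{ij}$ is $\approx -\overrightarrow{c_{\bA_i}}\cdot\overrightarrow{c_{\bA_j}}$ up to controlled errors and the lapse factor, and by angular separation the leading part is bounded away from $\pm|\overrightarrow{c_{\bA_i}}||\overrightarrow{c_{\bA_j}}|$), so that a generic, in fact almost every, choice of positive ratios makes no signed combination vanish.
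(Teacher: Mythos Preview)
Your approach has two genuine gaps.

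\textbf{Spatially adapted.} Your inequality
\[
|\nab(u'_{\bA_1}\pm m u'_{\bA_2})|^2 \ge c_{\bA_1}^2|\nab u_{\bA_1}|^2 + m^2 c_{\bA_2}^2 |\nab u_{\bA_2}|^2 - 2m c_{\bA_1}c_{\bA_2}(1-\eta')|\nab u_{\bA_1}||\nab u_{\bA_2}|
\]
is false: Definition~\ref{ang.sep} gives only a \emph{one-sided} bound $\cos\theta<1-\eta'$, not $|\cos\theta|<1-\eta'$. Two eikonal functions with $e_0 u_\bA>0$ can perfectly well have $\nab u_{\bA_1}(p)=-\lambda\nab u_{\bA_2}(p)$ at some point $p$ (antiparallel spatial gradients correspond to distinct null directions in $2+1$). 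In that case $|\nab(c_{\bA_1}u_{\bA_1}+c_{\bA_2}u_{\bA_2})|(p)=0$ precisely when $c_{\bA_2}/c_{\bA_1}=\lambda$, and as $p$ varies continuously so does $\lambda(p)$: the bad ratios form an interval, not a finite set. So neither ``holds for all $c_\bA$ bounded below'' nor a genericity argument works here.

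\textbf{Null adapted, condition (3).} Your computation is correct that the three-function expression reduces to $2(\pm_1 c_1c_2 Q_{12}\pm_2 c_1c_3 Q_{13}\pm_1\pm_2 c_2c_3 Q_{23})$ with each $Q_{ij}=e^{-2\gamma}(\nab u_{\bA_i}\cdot\nab u_{\bA_j}-|\nab u_{\bA_i}||\nab u_{\bA_j}|)<-\tfrac{\eta'}{16}e^{-2\gamma}$. But for mixed signs (e.g.\ $(+,-)$) this can vanish, and the vanishing locus in $(c_1,c_2,c_3)$-space depends on the \emph{point} $p$ through the values $Q_{ij}(p)$. As $p$ ranges over $I\times\m R^2$, the union of these hypersurfaces generically sweeps out an open region of ratio-space. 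Your claim that the bad set ``is closed with empty interior'' conflates vanishing \emph{somewhere} with vanishing \emph{identically}; the former is what you need to avoid, and it is not a thin set.

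\textbf{What the paper does instead.} The paper's proof is not a genericity argument at all. It orders $\mathcal A=\{\bA_1,\dots,\bA_N\}$ and at step $m$ chooses $c_{\bA_m}$ \emph{sufficiently large} (conditions \eqref{c.cond.1}, \eqref{c.cond.2}, \eqref{c.cond.3}). Largeness handles the spatially adapted conditions by making the quadratic term $c_{\bA_m}^2|\nab u_{\bA_m}|^2$ dominate the cross term (only Cauchy--Schwarz is used, so the sign of $\cos\theta$ is irrelevant). For the triple null-adapted condition, the paper carries an auxiliary inductive hypothesis \eqref{well.prepared}: roughly, $|Q'_{i_2 j}|\ge 2|Q'_{i_1 j}|$ whenever $i_1<i_2$, which forces the later-indexed cross terms to strictly dominate the earlier ones regardless of signs; then choosing $c_{\bA_m}$ large makes $|II+III|$ in \eqref{eikonal.3} beat $|I|$. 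This hierarchy-of-scales idea is the missing ingredient in your plan.
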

\begin{proof}
We order the eikonal functions as $u_{\bA_1}, u_{\bA_2},\dots, u_{\bA_{{N}}}$, where ${N}=|\mathcal A|$. We then proceed inductively on $m$ to show that
\begin{enumerate}
\item The set $\cup_{i=1}^m \{u'_{\bA_i}\}$ (where $u_{\bA_i}'=c_{\bA_i} u_{\bA_i}$) is both spatially adapted and null adapted.
\item For every $i_1< i_2\leq m$ and $j>m$, the following inequality holds (notice that the following expression involves the primed $u'$ for $i_1$, $i_2$ and the unprimed $u$ for $j$):
\begin{equation}\label{well.prepared}
|\nab u'_{\bA_{i_2}}||\nab u_{\bA_j}|-\nab u'_{\bA_{i_2}}\cdot\nab u_{\bA_j}\geq 2\left(|\nab u'_{\bA_{i_1}}||\nab u_{\bA_j}|-\nab u'_{\bA_{i_1}}\cdot\nab u_{\bA_j}\right).
\end{equation}
\end{enumerate}
Once we have completed this induction, the lemma follows from the case $m={N}$.

Notice that the base case where $m=1$ is trivial. In the following, we will carry out the inductive step, assuming that $c_{\bA_i}, \dots, c_{\bA_{m-1}}$ ha{ve} been chosen such that (1) and (2) above hold with $m$ replaced by $m-1$. We will then choose $c_{\bA_m}>0$ such that (1) and (2) hold.

{\bf Proof of \eqref{well.prepared}{.}}
By the induction hypothesis, it suffices to prove \eqref{well.prepared} with $i_2=m$ and with any $i_1=i<m$, $j>m$. By the assumption on angular separation together with the lower bounds on $|\nab u_{\bA}|$, we have 
$$|\nab u'_{\bA_{m}}||\nab u_{\bA_j}|-\nab u'_{\bA_{m}}\cdot\nab u_{\bA_j}=c_{\bA_m}\left(|\nab u_{\bA_{m}}||\nab u_{\bA_j}|-\nab u_{\bA_{m}}\cdot\nab u_{\bA_j}\right)>\f{c_{\bA_m}\eta'}{16}.$$
On the other hand, since $c_{\bA_i}$ has been fixed, $|\nab u'_{\bA_{i}}||\nab u_{\bA_j}|-\nab u'_{\bA_{i}}\cdot\nab u_{\bA_j}$ is bounded above. We can therefore choose
\begin{equation}\label{c.cond.1}
c_{\bA_m}\geq \f{2\max_{i<m<j}\sup_{\m R^2}|\nab u'_{\bA_{i}}||\nab u_{\bA_j}|}{\min_{j>m}\inf_{\m R^2}\left(|\nab u'_{\bA_{m}}||\nab u_{\bA_j}|-\nab u'_{\bA_{m}}\cdot\nab u_{\bA_j}\right)}
\end{equation}
so that
\begin{equation*}
\begin{split}
|\nab u'_{\bA_{m}}||\nab u_{\bA_j}|-\nab u'_{\bA_{m}}\cdot\nab u_{\bA_j}=&c_{\bA_m}\left(|\nab u_{\bA_{m}}||\nab u_{\bA_j}|-\nab u_{\bA_{m}}\cdot\nab u_{\bA_j}\right)\\
\geq &2\left(|\nab u'_{\bA_{i}}||\nab u_{\bA_j}|-\nab u'_{\bA_{i}}\cdot\nab u_{\bA_j}\right),
\end{split}
\end{equation*}
i.e., \eqref{well.prepared} holds for $i_2=m$, $i_1-i<m$ and $j>m$.

{\bf Spatially adapted{.}}
By the induction hypothesis, we only need to verify the conditions in Definition \ref{def.spatial} when one of the eikonal functions involved is $u_{\bA_m}$. 

First, it is easy to see that if 
\begin{equation}\label{c.cond.2.0}
c_{\bA_m}>1,
\end{equation}
Then $|\nab u'_{\bA_m}|>\f 14$.

Next, we compute that for $i<m$
\begin{equation}\label{spatial.compute}
\begin{split}
|\nab(u_{\bA_i}'\pm u'_{\bA_m})|^2 \geq & c_{\bA_m}^2|\nab u_{\bA_m}|^2-2c_{\bA_m} |\nab u_{\bA_m}||\nab u'_{\bA_i}|,\\
|\nab(u_{\bA_i}'\pm 2 u'_{\bA_m})|^2 \geq & 4 c_{\bA_m}^2|\nab u_{\bA_m}|^2-4c_{\bA_m} |\nab u_{\bA_m}||\nab u'_{\bA_i}|,\\
|\nab(2u_{\bA_i}'\pm u'_{\bA_m})|^2 \geq & c_{\bA_m}^2|\nab u_{\bA_m}|^2-4c_{\bA_m} |\nab u_{\bA_m}||\nab u'_{\bA_i}|,\\
|\nab(u_{\bA_i}'\pm 3 u'_{\bA_m})|^2 \geq & 9 c_{\bA_m}^2|\nab u_{\bA_m}|^2-6c_{\bA_m} |\nab u_{\bA_m}||\nab u'_{\bA_i}|,\\
|\nab(3u_{\bA_i}'\pm u'_{\bA_m})|^2 \geq & c_{\bA_m}^2|\nab u_{\bA_m}|^2-6c_{\bA_m} |\nab u_{\bA_m}||\nab u'_{\bA_i}|.
\end{split}
\end{equation}
Each of these is required to be $>\f 1{16}$. Using the fact that $\inf_{\m R^2}|\nab u_{\bA_m}|\geq \f 14$, we can therefore choose
\begin{equation}\label{c.cond.2}
c_{\bA_m}\geq \f{12\max_{i<m} \sup_{\m R^2}|\nab u'_{\bA_i}|}{\inf_{\m R^2}|\nab u_{\bA_m}|}+{2}
\end{equation}
such that for the {RHS} of each line in \eqref{spatial.compute}
\begin{equation}
\begin{split}
& RHS\\
\geq & c_{\bA_m}^2|\nab u_{\bA_m}|^2-6c_{\bA_m} |\nab u_{\bA_m}||\nab u_{\bA_i}|\\
{>} & c_{\bA_m}^2|\nab u_{\bA_m}|^2-\frac{{c_{\bA_m}^2}}{2}|\nabla u_{\bA m}|\inf_{\m R^2}|\nab u_{\bA_m}|\\
\geq &\frac{1}{2}c_{\bA_m}^2|\nabla u_{\bA_m}|^2 \geq \frac{1}{16}.
\end{split}
\end{equation}

{\bf Null adapted: the easy cases{.}} By the induction hypothesis, in order to verify that $\cup_{i=1}^m \{u'_{\bA_i}\}$ is null adapted, we only need to check the conditions in Definition \ref{def.null} where one of the eikonal functions is $u_{\bA_m}$. We first check the first two conditions in Definition \ref{def.null}, which are easier and they hold for any choice of $c_{\bA_m}>0$.

As a preliminary step, note that for any $\bA$, since $u_{\bA}$ is an eikonal function, by \eqref{g.form}, we have $e_0 u_{\bA}=N e^{-\gamma}|\nab u_{\bA}|$. Therefore, we compute for the sum (or difference) of two eikonal functions with $i<m$
\begin{equation*}
\begin{split}
&(g^{-1})^{\alp\bt} \rd_\alp (u'_{\bA_i}\pm u'_{\bA_m})\rd_\beta(u'_{\bA_i}\pm u'_{\bA_m})\\
=&-\f 1{N^2}(e_0 (u'_{\bA_i}\pm u'_{\bA_m}))(e_0 (u'_{\bA_i}\pm u'_{\bA_m}))+e^{-2\gamma}\nab (u'_{\bA_i}\pm u'_{\bA_m})\cdot \nab (u'_{\bA_i}\pm u'_{\bA_m})\\
=&2 e^{-2\gamma}\left(\mp |\nab u'_{\bA_i}||\nab u'_{\bA_m}|\pm \nab u'_{\bA_i}\cdot \nab u'_{\bA_m}\right).
\end{split}
\end{equation*}
By Cauchy--Schwarz, Definition \ref{ang.sep} and the assumption $|\nab u'_{\bA_i}|\geq \f 14$, $|\nab u_{\bA_m}|\geq \f 14$,
\begin{equation*}
\begin{split}
&\left|{\mp}|\nab u'_{\bA_i}||\nab u'_{\bA_m}|\pm \nab u'_{\bA_i}\cdot \nab u'_{\bA_m}\right|\\
\geq &|\nab u'_{\bA_i}||\nab u'_{\bA_m}|- \nab u'_{\bA_i}\cdot \nab u'_{\bA_m}\\
\geq &c_{\bA_m}\eta'|\nab u'_{\bA_i}||\nab u_{\bA_m}|\geq \f{c_{\bA_m}\eta'}{16}.
\end{split}
\end{equation*}
This verifies the first condition in Definition \ref{def.null} for any $c_{\bA_m}\geq 1$. The second condition in Definition \ref{def.null} can be checked in an identical manner.

{\bf Null adapted: the hard case.} It remains to check the condition in Definition~\ref{def.null} involving three distinct eikonal function. By the induction hypothesis, it suffices to consider $i<k<m$, in which case we compute
\begin{equation}\label{eikonal.3}
\begin{split}
&(g^{-1})^{\alp\bt} \rd_\alp (u'_{\bA_i}\pm_1 u'_{\bA_k}\pm_2 u'_{\bA_m})\rd_\beta(u'_{\bA_i}\pm_1 u'_{\bA_k} \pm_2 u'_{\bA_m}))\\
=&-\f 1{N^2}(e_0 (u'_{\bA_i}\pm_1 u'_{\bA_k}\pm_2 u'_{\bA_m}))(e_0 (u'_{\bA_i}\pm_1 u'_{\bA_k}\pm_2 u'_{\bA_m}))\\
&+e^{-2\gamma}\nab (u'_{\bA_i}\pm_1 u'_{\bA_k}\pm_2  c_{\bA_m} u_{\bA_m})\cdot \nab (u'_{\bA_i}\pm_1 u'_{\bA_k}\pm_2 c_{\bA_m} u_{\bA_m})\\
=&2 e^{-2\gamma}\underbrace{\left(\mp_1 |\nab u'_{\bA_i}||\nab u_{\bA'_k}|\pm_1 \nab u'_{\bA_i}\cdot \nab u'_{\bA_k}\right)}_{=:I}\\
&+2 c_{\bA_m} e^{-2\gamma}\underbrace{\left(\mp_2 |\nab u'_{\bA_i}||\nab u_{\bA_m}|\pm_2 \nab u'_{\bA_i}\cdot \nab u_{\bA_m}\right)}_{=:II}\\
&+2 c_{\bA_m} e^{-2\gamma}\underbrace{\left(\mp_1\mp_2 |\nab u'_{\bA_k}||\nab u_{\bA_m}|\mp_1\pm_2 \nab u'_{\bA_k}\cdot \nab u_{\bA_m}\right)}_{=:III}.
\end{split}
\end{equation}
Let us note that by the angular separation assumption, each of $I$, $II$ and $III$ are bounded below. It remains to check that not too much cancellations can occur. The key is to note that by the induction hypothesis (more precisely, by \eqref{well.prepared} with $m$ replaced by $m-1$), we have
$$|II+III|\geq |\nab u'_{\bA_i}||\nab u_{\bA_m}|- \nab u'_{\bA_i}\cdot \nab u_{\bA_m},$$
which, by the angular separation assumption and the lower bound on $|\nab u_{\bA}|$, is uniformly bounded below for any $i<m$. Therefore, by choosing
\begin{equation}\label{c.cond.3}
c_{\bA_m}\geq \f{2\max_{i<k<m}\sup_{\m R^2}|\nab u'_{\bA_i}||\nab u'_{\bA_k}|}{\min_{i< m}\inf_{\m R^2}|\nab u'_{\bA_i}||\nab u_{\bA_m}|- \nab u'_{\bA_i}\cdot \nab u_{\bA_m}},
\end{equation}
\eqref{eikonal.3} is bounded below as follows
$$\eqref{eikonal.3}\geq 2{e^{{-}2\gamma}}\max_{i<k<m}\sup_{\m R^2}|\nab u'_{\bA_i}||\nab u'_{\bA_k}|\geq \f 18 e^{-2\gamma},$$
and therefore satisfies the required condition.

{\bf End of proof.}
To conclude the induction step, it is easy to see that $c_{\bA_m}$ can be chosen such that \eqref{c.cond.1}, \eqref{c.cond.2.0}, \eqref{c.cond.2} and \eqref{c.cond.3} are simultaneously verified. This then gives the desired result.
\end{proof}

{\bf From now on, we assume that the eikonal functions $u_{\bA}$ with respect to the background solution are both spatially adapted and null adapted.} 

\subsection{Construction of the free initial data for $(g_{\lambda},\phi_\lambda)$}

{We construct the one parameter family $(g_{\lambda},\phi_\lambda)$ by specifying appropriate initial data to \eqref{sys}. In the remainder of the paper, we will prove that the solutions, guaranteed to exist locally by Theorem~\ref{lwp}, in fact exist throughout the time interval $[0,1]$ and have the desired convergence properties as $\lambda \to 0$. The following lemma constructs the initial data:}

\begin{lm}\label{lmini}Under the assumptions of Theorem \ref{main.thm.2}, there exists $\lambda_0>0$ sufficiently small such that for every $\lambda\in (0,\lambda_0]$, there exist initial functions $\phi_{{\lambda}} \restriction_{\Sigma_0}$ and $\dot{\phi}_{{\lambda}}\restriction_{\Sigma_0}$ (for \eqref{sys}) which are compactly supported in $B(0,R)$, such that
\begin{itemize}
\item $$\left\|\phi_{{\lambda}} \restriction_{\Sigma_0}- \phi_0\restriction_{\Sigma_0} - \sum_\bA  \lambda F_\bA \cos\left(\frac{ u_\bA}{\lambda}\right)\restriction_{\Sigma_0}\right\|_{H^5}\leq \sqrt{2}\ep\lambda^2,$$
$$\left\|\dot{\phi}_{{\lambda}}\restriction_{\Sigma_0}- \dot{\phi_0}\restriction_{\Sigma_0} + \sum_\bA e^{\gamma_0}|\nabla u_\bA|  F_\bA \sin\left(\frac{ u_\bA}{\lambda}\right)\restriction_{\Sigma_0}\right\|_{H^4}\leq \sqrt{2}\ep\lambda^2;$$
\item The following condition holds initially
$$\int_{\Sigma_0} \dot{\phi}_{{\lambda}}\,\partial_j\phi_{{\lambda}}\, dx=0.$$
\end{itemize}
\end{lm}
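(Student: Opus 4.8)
The goal is to construct initial data $(\phi_\lambda, \dot\phi_\lambda)\restriction_{\Sigma_0}$ for the vacuum system \eqref{sys} which is close (in $H^5 \times H^4$, modulo the explicit oscillatory correction of size $\lambda$) to the background data, and which in addition satisfies the single scalar constraint $\int_{\Sigma_0}\dot\phi_\lambda\,\partial_j\phi_\lambda\,dx = 0$ for each $j$.

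\medskip
\emph{Plan.} The natural first attempt is to simply take $\phi_\lambda\restriction_{\Sigma_0} = \phi_0\restriction_{\Sigma_0} + \sum_\bA \lambda F_\bA \cos(u_\bA/\lambda)\restriction_{\Sigma_0}$ and $\dot\phi_\lambda\restriction_{\Sigma_0} = \dot\phi_0\restriction_{\Sigma_0} - \sum_\bA e^{\gamma_0}|\nabla u_\bA| F_\bA \sin(u_\bA/\lambda)\restriction_{\Sigma_0}$ (these are the leading terms of the parametrix ansatz \eqref{ansatz}, restricted to $t=0$; note $\dot\phi = \frac{e^{2\gamma}}{N}(e_0\phi)$ and $e_0 u_\bA = Ne^{-\gamma}|\nabla u_\bA|$, so the coefficient $e^{\gamma_0}|\nabla u_\bA|$ is exactly what one gets by differentiating the cosine in time and using the eikonal/chain rule — this is why that precise coefficient appears in the statement). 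The compact support in $B(0,R)$ follows since $\phi_0,\dot\phi_0, F_\bA$ are supported there, and in fact one may multiply the oscillatory terms by a fixed cutoff equal to $1$ on a slightly smaller ball to be safe. The issue is that this naive choice does not exactly satisfy the constraint $\int \dot\phi_\lambda\,\partial_j\phi_\lambda\,dx = 0$: expanding the integrand, the background part integrates to zero by \eqref{main.data.cond} (using $\dot\phi_0\partial_j\phi_0$ and the contribution $-\sum_\bA \breve F_\bA^2 |\nabla u_\bA|\partial_j u_\bA$, with $\breve F_\bA = F_\bA e^{\gamma_0/2}$, matching the cross terms of the oscillatory pieces after averaging $\sin^2 \to \tfrac12$), but there is a genuine error of size $O(\lambda)$ coming from the non-resonant cross terms (products of $\cos(u_\bA/\lambda)$ with $\sin(u_\bB/\lambda)$, $\bA\neq\bB$, and products of the oscillatory pieces with the smooth background), which oscillate and hence integrate to something small but nonzero — one should show by (repeated) integration by parts in $x$, using $|\nabla u_\bA| \gtrsim 1$ and $|\nabla(u_\bA \pm u_\bB)|\gtrsim 1$ (spatial adaptedness), that this error is in fact $O(\lambda^K)$ for any $K$, in particular $O(\lambda^2)$ or smaller in, say, $H^4$.

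\medskip
\emph{Correcting the constraint.} To enforce the constraint exactly, I would add a small, smooth, low-frequency correction. Concretely, perturb $\dot\phi_\lambda\restriction_{\Sigma_0} \mapsto \dot\phi_\lambda\restriction_{\Sigma_0} + \sum_j a_j(\lambda)\, \psi_j$, where $\psi_j$ are fixed smooth compactly supported functions and $a_j(\lambda)$ are small scalars to be determined; the constraint becomes $\sum_j a_j(\lambda)\int \psi_j\,\partial_\ell\phi_\lambda\,dx = -\,(\text{error}_\ell)$, a linear system for $(a_j)$ whose coefficient matrix $M_{\ell j} = \int \psi_j\,\partial_\ell\phi_\lambda\,dx$ is, as $\lambda\to 0$, a small perturbation of $\int \psi_j\,\partial_\ell\phi_0\,dx$. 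Here is exactly where the genericity hypothesis \eqref{genericity} enters: at the point $p$ the vectors $(\partial_1\phi_0, \partial_1\dot\phi_0)(p)$ and $(\partial_2\phi_0,\partial_2\dot\phi_0)(p)$ are linearly independent, so one can choose $\psi_1,\psi_2$ as bump functions localized near $p$ (plus possibly use $\partial_\ell\phi_0$ directly) to make the $2\times 2$ matrix $\int \psi_j \partial_\ell \phi_0\,dx$ invertible; then $M_{\ell j}$ is invertible for $\lambda$ small, and $a_j(\lambda) = O(\text{error}) = O(\lambda^2)$ (or smaller), so the correction is negligible in $H^4$. Since $\phi_\lambda\restriction_{\Sigma_0}$ is unchanged and $\dot\phi_\lambda\restriction_{\Sigma_0}$ is changed only by $O(\lambda^2)$ in $H^4$, the two displayed closeness estimates still hold (with room to spare, e.g. with constant $\sqrt 2$ as stated).

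\medskip
\emph{Main obstacle.} The only real content beyond bookkeeping is the non-resonance / integration-by-parts estimate showing that the mismatch in the constraint produced by the naive data is $O(\lambda^2)$ rather than merely $O(\lambda)$: one must carefully enumerate the cross terms, identify which ones average to the background matter contribution $-\sum_\bA \breve F_\bA^2|\nabla u_\bA|\partial_j u_\bA$ appearing in \eqref{main.data.cond} (the ``resonant'' self-interaction of each $\bA$), and show all remaining terms are oscillatory with spatial phase gradient bounded below — invoking spatial adaptedness (Definition~\ref{def.spatial}) and $|\nabla u_\bA| > \tfrac14$ — so that integration by parts gains arbitrarily many powers of $\lambda$. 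Everything else (compact support, solvability of the small linear system via genericity, propagation of the $H^5$/$H^4$ bounds) is routine. I would also remark that the choice of $\psi_j$ and the resulting $a_j(\lambda)$ can be taken smooth in $\lambda$, though this is not needed for the statement.
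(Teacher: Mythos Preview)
Your overall strategy is the right one and closely parallels the paper's, but there is a genuine gap in your use of the genericity hypothesis. You propose to correct only $\dot\phi_\lambda\restriction_{\Sigma_0}$ by adding $\sum_j a_j(\lambda)\psi_j$, which leads to the coefficient matrix $M_{\ell j}=\int\psi_j\,\partial_\ell\phi_\lambda\,dx$, whose $\lambda\to 0$ limit is $\int\psi_j\,\partial_\ell\phi_0\,dx$. But the genericity condition \eqref{genericity} asserts linear independence of the pairs $(\partial_\ell\phi_0,\partial_\ell\dot\phi_0)(p)$; it does \emph{not} guarantee that $\partial_1\phi_0$ and $\partial_2\phi_0$ are linearly independent as functions. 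For instance, if $\phi_0\restriction_{\Sigma_0}$ happens to depend only on $x^1$ (so $\partial_2\phi_0\equiv 0$) while $\dot\phi_0\restriction_{\Sigma_0}$ genuinely depends on $x^2$, then \eqref{genericity} can hold yet your matrix $M$ is singular for every choice of $\psi_1,\psi_2$. Your parenthetical ``plus possibly use $\partial_\ell\phi_0$ directly'' does not help, since the linearized constraint pairs your correction only against $\partial_\ell\phi_\lambda$.

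The fix, which is what the paper does, is to correct \emph{both} pieces of data: add $\Omega_0 r_0$ to $\phi_\lambda\restriction_{\Sigma_0}$ and $\Omega_1 r_1$ to $\dot\phi_\lambda\restriction_{\Sigma_0}$. Then the linearized constraint produces the matrix
\[
\begin{pmatrix}-\int r_0\,\partial_1\dot\phi_0 & \int r_1\,\partial_1\phi_0\\ -\int r_0\,\partial_2\dot\phi_0 & \int r_1\,\partial_2\phi_0\end{pmatrix},
\]
which involves both $\phi_0$ and $\dot\phi_0$ and can indeed be made invertible by choosing $r_0,r_1$ to be bumps near $p$, exactly by \eqref{genericity}. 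The price is that the system for $(\Omega_0,\Omega_1)$ acquires a bilinear term $\Omega_0\Omega_1\int(\partial_j r_0)r_1\,dx$ and is no longer linear; the paper handles this with Brouwer's fixed point theorem on the ball $\overline{B(0,\lambda^2)}$. Your integration-by-parts argument for the oscillatory error terms (using spatial adaptedness to get $O(\lambda^K)$) is correct and matches the paper's.
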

\begin{proof}
We will construct $\phi_{{\lambda}} \restriction_{\Sigma_0}$ and $\dot{\phi}_{{\lambda}}\restriction_{\Sigma_0}$ of the following form:
\begin{equation}\label{phi.i.def}
\phi_{{\lambda}} \restriction_{\Sigma_0}= \phi_0\restriction_{\Sigma_0} + \sum_\bA  \lambda F_\bA \cos\left(\frac{ u_\bA}{\lambda}\right)\restriction_{\Sigma_0} + \Omega_0 r_0,
\end{equation}
\begin{equation}\label{phid.i.def}
\dot{\phi}_{{\lambda}}\restriction_{\Sigma_0}= \dot{\phi_0}\restriction_{\Sigma_0} - \sum_\bA e^{\gamma_0}|\nabla u_\bA|  F_\bA \sin\left(\frac{ u_\bA}{\lambda}\right)\restriction_{\Sigma_0}
+\Omega_1 r_1,
\end{equation}
where $r_0, r_1:\m R^2\to \m R$ are smooth functions compactly supported in $B(0,R)$ {independent of $\lambda$ to be chosen below}, with
\begin{equation}\label{r.small}	
\|r_0\|_{H^{10}}+\|r_1\|_{H^{9}}\leq \ep,
\end{equation}
and $\Omega_0, \,\Omega_1\in \mathbb R$ {(depending on $\lambda$)}.

By the genericity assumption \eqref{genericity}, there exists $r_0$ and $r_1$ (smooth and compactly supported in $B(0,R)$) such that for some $c_p>0$,
\begin{equation}\label{gen.quan}
\left|\det\left(\begin{array}{ll}
-\int_{\Sigma_0} r_0 \partial_1 \dot{\phi_0}\, dx &\int_{\Sigma_0} r_1 \partial_1 \phi_0\, dx\\
-\int_{\Sigma_0} r_0 \partial_2 \dot{\phi_0}\, dx&  \int_{\Sigma_0} r_1 \partial_2 \phi_0\, dx\\
\end{array}\right)\right|> c_p>0.
\end{equation}
This can for instance be achieved by taking $r_0$ and $r_1$ to be supported in a very small neighborhood of $p$ (as in \eqref{genericity}). 

Without loss of generality, we can rescale the functions $r_0$ and $r_1$ so that \eqref{r.small} holds. Notice that we must have $c_p\lesssim \ep^4$, but in general, it is possible that $c_p\ll \ep$.
	
Our goal now is to find $\Omega_0$ and $\Omega_1$ (satisfying appropriate bounds) such that
\begin{equation}\label{goal.Om}
\int_{\Sigma_0} \dot{\phi}_{{\lambda}}\,\partial_j \phi_{{\lambda}}\, dx= 0.
\end{equation}
This condition can be written as\footnote{where for simplicity we have suppressed the measure $dx$.}
\begin{equation}\label{main.cond.data}
\begin{split}
0=&\int_{\Sigma_0} \underbrace{\Omega_0 \partial_j r_0  \dot{\phi_0}+ \Omega_1 r_1 \partial_j \phi_0}_{=:Main\;term}
+  \underbrace{\dot{\phi_0}\partial_j \phi_0 +  \sum_\bA e^{\gamma_0}F_\bA^2 |\nabla u_{\bA}|\partial_j u_{\bA}\sin^2\left(\frac{ u_\bA}{\lambda}\right)}_{Background\; terms} \\
&+ \int_{\Sigma_0} \underbrace{-\Omega_0 \partial_j r_0 \sum_\bA  e^{\gamma_0}|\nabla u_\bA|  F_\bA \sin\left(\frac{ u_\bA}{\lambda}\right)}_{=:HF_1}+\underbrace{\Omega_1r_1\partial_j\left(\sum_\bA  \lambda F_\bA \cos\left(\frac{ u_\bA}{\lambda}\right)\right)}_{=:HF_2}\\
&+\int_{\Sigma_0} \underbrace{-\partial_j \phi_0  \sum_\bA  e^{\gamma_0}|\nabla u_\bA|  F_\bA \sin\left(\frac{ u_\bA}{\lambda}\right)}_{=:HF_3}+\underbrace{\dot{\phi_0}\partial_j\left(\sum_{{\bA}}  \lambda F_\bA \cos\left(\frac{ u_\bA}{\lambda}\right)\right)}_{=:HF_4}\\
&-\int_{\Sigma_0} \underbrace{\left(\sum_\bA  e^{\gamma_0}|\nabla u_\bA|  F_\bA \sin\left(\frac{ u_\bA}{\lambda}\right)\right)  \left(\sum_\bB  \lambda(\partial_j F_\bB) \cos\left(\frac{ u_\bB}{\lambda}\right)\right)}_{=:HF_5}\\
&+\int_{\Sigma_0} \underbrace{\sum_{\bA}\sum_{\bB \neq  \bA} e^{\gamma_0}F_\bA F_{\gra{B}} |\nabla u_{\bA}|\partial_j u_{\gra B}\sin\left(\frac{ u_\bA}{\lambda}\right)\sin\left(\frac{ u_{\gra B}}{\lambda}\right)}_{=:HF_6}+\underbrace{\Omega_0 \Omega_1\partial_j r_0 r_1}_{=:Small\;term}.
\end{split}
\end{equation}
We first consider the ``background terms'' above. We have 
$$\sum_\bA e^{\gamma_0}F_\bA^2 |\nabla u_{\bA}|\partial_j u_{\bA}\sin^2\left(\frac{ u_\bA}{\lambda}\right) 
=\f 12\sum_\bA \breve{F}_\bA^2 |\nabla u_{\bA}|\partial_j u_\bA\left(1+\cos\left(\frac{ 2 u_\bA}{\lambda}\right) \right).$$
Let $v_\bA=2u_\bA$. Since $\{u_\bA\}_{\bA\in \mathcal A}$ is spatially adapted (see Definition \ref{def.spatial}), we have $|\nabla u_{\bA} |> \f 14$ and consequently $|\nabla v_{\bA} |> \f 12$. Therefore, we can iteratively apply $\si{\f{v_{\bA}}{\lambda}}=-\f{\lambda^2\Delta\big(\si{\f{v_{\bA}}{\lambda}}\big)}{|\nab v_{\bA}|^2}+\f{\lambda(\Delta v_{\bA})\co{\f{v_{\bA}}{\lambda}}}{|\nab v_{\bA}|^2}$ and $\co{\f{v_{\bA}}{\lambda}}=-\f{\lambda^2\Delta\big(\co{\f{v_{\bA}}{\lambda}}\big)}{|\nab v_{\bA}|^2}-\f{\lambda(\Delta v_{\bA})\si{\f{v_{\bA}}{\lambda}}}{|\nab v_{\bA}|^2}$ and integrate by parts to obtain\footnote{Here, and below in this proof, the constant depends on the initial data for $u_\bA$.}
$$\int_{\Sigma_0} \breve{F}_\bA^2 |\nabla u_{\bA}|\cos\left(\frac{2 u_\bA}{\lambda}\right)\, dx= O(\ep^2\lambda^3).$$
Notice that we indeed have (more than) sufficient regularity for the background to justify this calculation. By the hypothesis on the initial data for the background metric we have
$$\int_{\Sigma_0}\left( 2\dot{\phi_0}\partial_j \phi_0+ \sum \breve{F}_\bA^2 |\nabla u_{\bA}|\partial_j u_\bA\right)\, dx= 0.$$
Therefore, we have
$$|\mbox{``Background term''}|= O(\ep^2\lambda^3).$$

Next, note that all the terms that we have denoted as ``high frequency'' (HF), i.e., $HF_1$, $HF_2$, $HF_3$, $HF_4$, $HF_5$ and $HF_6$, contain an oscillating factor $\si{\f{v}{\lambda}}$ or $\co{\f{v}{\lambda}}$ with $|\nab v|>\f 14$ since $\{u_\bA\}_{\bA\in \mathcal A}$ is spatially adapted. Here, we have used the simple trigonometric identities
$$\sin\left(\frac{ u_\bA}{\lambda}\right)\sin\left(\frac{ u_{\gra B}}{\lambda}\right)
=\f 12\sum_{\pm}(\mp 1)\cos\left(\frac{u_\bA\pm u_{\gra B}}{\lambda}\right),\quad \sin\left(\frac{ u_\bA}{\lambda}\right)\cos\left(\frac{ u_{\gra B}}{\lambda}\right)
=\f 12\sum_{\pm}\sin\left(\frac{u_\bA\pm u_{\gra B}}{\lambda}\right).$$
Therefore, by the argument as above, we also have
$$HF_1 =O(\ep^2\lambda^3)\Omega_0,\quad HF_2 =O(\ep^2\lambda^3)\Omega_1.$$
and for $i=3,4,5, 6$,
$$|HF_i |=O(\ep^2\lambda^3).$$

Therefore, to solve for \eqref{main.cond.data} for $j=1,2$ is equivalent to finding $\Omega_0,\Omega_1$ solving the following problem
\begin{equation}\label{fd.prob}
{\bf A}\left(\begin{array}{l}\Omega_0\\\Omega_1
\end{array}\right)= {\bf a}+{\bf B}\left(\begin{array}{l}\Omega_0\\\Omega_1
\end{array}\right)+\Omega_0\Omega_1{\bf c},
\end{equation}
where ${\bf A}=\left(\begin{array}{ll}
-\int_{\Sigma_0} r_0 \partial_1 \dot{\phi_0} \, dx &\int_{\Sigma_0} r_1 \partial_1 \phi_0\, dx\\
-\int_{\Sigma_0} r_0 \partial_2 \dot{\phi_0} \, dx&  \int_{\Sigma_0} r_1 \partial_2 \phi_0\, dx\\
\end{array}\right)$, ${\bf a}$ is a $2$D vector with $\|{\bf a}\|_2\lesssim \ep^2\lambda^3$, ${\bf B}$ is a $2\times 2$ matrix with $\|{\bf B}\|_2\lesssim \ep^2\lambda^3$ and ${\bf c}$ is a $2$D vector with $\|{\bf c}\|_2\lesssim \ep^2$. Since the determinant is a continuous function, by choosing $\lambda$ sufficiently small (depending on $c_p$), ${\bf A}-{\bf B}$ is invertible. We can thus define the (continuous) map $\Phi:\m R^2 \to \m R^2$ by
$$\Phi\left(\begin{array}{l}x_0\\ x_1
\end{array}\right)
=\left({\bf A}-{\bf B}\right)^{-1}\left({\bf a}+x_0 x_1 {\bf c}\right).$$
Choosing $\lambda$ sufficiently small (say, $\lambda^{\f 12}\ll c_p^{-1}$), it is easy to verify that $\Phi$ maps the closed ball $\overline{B(0,\lambda^2)}$ to itself. Hence, by Brouwer's fixed point theorem, $\Phi$ has a fixed point $\left(\begin{array}{l}\Omega_0\\\Omega_1
\end{array}\right)$, which is a solution to \eqref{fd.prob} satisfying
\begin{equation}\label{Om.small}
|\Omega_0|+|\Omega_1|\leq \sqrt{2}\sqrt{\Omega_0^2+\Omega_1^2}\leq \sqrt{2}\lambda^2.
\end{equation}
Therefore, we have now found $\phi_{{\lambda}}\restriction_{\Sigma_0}$ and $\dot{\phi}_{{\lambda}}\restriction_{\Sigma_0}$ such that \eqref{goal.Om} holds. Moreover, combining \eqref{phi.i.def}, \eqref{phid.i.def}, \eqref{r.small} and \eqref{Om.small}, it is easy to check that the required bounds for $\phi_{{\lambda}}\restriction_{\Sigma_0}$ and $\dot{\phi}_{{\lambda}}\restriction_{\Sigma_0}$ hold.
\end{proof}

\begin{rk}
We note from the proof above that the choice of $\lambda_0$ not only depends on the initial $u_\bA\restriction_{\Sigma_0}$, but also depends on the profile of the background solution. This is because the choice of $\lambda_0$ depends on $c_p$ in \eqref{gen.quan}, which can be viewed as a more quantitative version of \eqref{genericity}.
\end{rk}

\subsection{Construction of the parametrix}\label{sec.parametrix}

By Theorem~\ref{lwp}, the one-parameter family of free admissible initial data to \eqref{sys} constructed in Lemma~\ref{lmini} gives rise to a unique one-parameter family of solutions in the {elliptic} gauge. To analy{z}e the solution arising from data given in Lemma~\ref{lmini}, we decompose the scalar field $\phi_{\lambda}$ and the metric $g_{\lambda}$. {\bf It will be convenient notationally to suppress the subscripts ${ }_{\lambda}$ when there is no danger of confusion. We will however keep the subscripts ${ }_0$ for the background $\phi_0$ and $g_0$.}

The definitions of the parametrix for $\phi_\lambda$ and $g_{\lambda}$ are coupled in the sense that their mains terms are defined not only in terms of the background solution, but are defined to satisfy a coupled system of PDEs. {A}t this point, it is already useful to keep in mind that we will decompose $\phi_\lambda$ as
\begin{align*}
\phi_\lambda= &\phi_0+\sum_\bA \lambda F_\bA \cos\left(\frac{u_\bA}{\lambda}\right) +\sum_\bA \lambda^2 \wht F_\bA \si{\frac{u_\bA}{\lambda}}\\
&+\sum_\bA \lambda^2 \wht F_\bA^{(2)}\co{\frac{2u_\bA}{\lambda}}+\sum_\bA  \lambda^2\wht F_\bA^{(3)} \si{\frac{3u_\bA}{\lambda }} + \mathcal E_\lambda
\end{align*}
and decompose each metric component $\mfg$ (i.e., $\mfg\in \{\gamma,N,\beta^i\}$) as
$$\mfg=\mfg_0+\mfg_1+\mfg_2+\mfg_3,$$
where $\mfg_0$ is the corresponding metric component of the background solution. We will define all of these terms below in Sections \ref{sec.para.phi} and \ref{sec.para.g}.

\subsubsection{Parametrix for $\phi_\lambda$}\label{sec.para.phi}
The parametrix for $\phi_\lambda$ is constructed as follows:
\begin{equation}\label{phi.para}
\begin{split}
\phi_\lambda= &\phi_0+\sum_\bA \lambda F_\bA \cos\left(\frac{u_\bA}{\lambda}\right) +\sum_\bA \lambda^2 \wht F_\bA \si{\frac{u_\bA}{\lambda}}\\
&+\sum_\bA \lambda^2 \wht F_\bA^{(2)}\co{\frac{2u_\bA}{\lambda}}+\sum_\bA  \lambda^2\wht F_\bA^{(3)} \si{\frac{3u_\bA}{\lambda }} + \mathcal E_\lambda.
\end{split}
\end{equation}
Here, $\phi_0$, $F_\bA$ and $u_\bA$ are the background quantities; $\mathcal E_\lambda$ is the ``error term\footnote{As we will see later in Section \ref{secelambda}, we will further decompose $\mathcal E_\lambda$ into extra terms and some of which can be viewed as ``main terms'' of $\mathcal E_\lambda$. Nevertheless, at this point of the discussion, it suffices to note that each of these decomposed terms of $\mathcal E_\lambda$ are indeed ``smaller'' than the terms $\phi_0$, $\sum_\bA \lambda F_\bA \cos\left(\frac{u_\bA}{\lambda}\right)$, $\sum_\bA \lambda^2 \wht F_\bA \si{\frac{u_\bA}{\lambda}}$, $\sum_\bA \lambda^2 \wht F_\bA^{(2)}\co{\frac{2u_\bA}{\lambda}}$ and $\sum_\bA  \lambda^2\wht F_\bA^{(3)} \si{\frac{3u_\bA}{\lambda }}$}''; and  $\wht F_\bA$, $\wht F_\bA^{(2)}$ and $\wht F_\bA^{(3)}$ are defined to be solutions to the following transport equations {\bf with zero initial data}:
\begin{equation}
\label{whtfi}2(g_0^{-1})^{\alpha \beta}\partial_\alpha u_\bA \partial_\beta \wht F_\bA +(\Box_{g_0} u_\bA) \wht F_\bA
= \frac{1}{ \lambda^2} F_\bA (g_3')^{\alpha \beta}\partial_\alpha u_\bA \partial_\beta u_\bA- \Box_{g_0} F_\bA +A_\bA^{(1)},
\end{equation}
\begin{equation}
\label{whtgi2}2(g_0^{-1})^{\alpha \beta}\partial_\alpha u_\bA \partial_\beta \wht F^{(2)}_\bA +(\Box_{g_0} u_\bA )\wht F_\bA^{(2)} 
=A_\bA^{(2)},
\end{equation}
\begin{equation}
\label{whtgi3}2(g_0^{-1})^{\alpha \beta}\partial_\alpha u_\bA \partial_\beta \wht F^{(3)}_\bA +(\Box_{g_0} u_\bA) \wht F_\bA^{(3)} 
=A_\bA^{(3)},
\end{equation}
where
\begin{itemize}
\item $A_\bA^{(1)}$, $A_\bA^{(2)}$ and $A_\bA^{(3)}$ are compactly {supported} expressions which depend only on the background (and can be explicitly written down\footnote{We however will not do that as the explicit expressions are rather tedious.}) and obey the bounds
\begin{equation}\label{A.bound.1}
\|A^{(a)}_\bA\|_{H^8\cap {C^{8}}}+\|\rd_tA^{(a)}_\bA\|_{H^7\cap {C^{7}}}\leq C(C_0)\ep^3\quad\mbox{  for $a=1,2,3$}  ;
\end{equation}
and 
\item for $a={1,3}$, we define
\begin{equation}\label{gp.def}
\begin{split}
g_a'=&\left(\begin{array}{ccc}\f{2N_a}{N_0^3} & 0 & 0\\
0 & e^{-2\gamma_0}(-1+e^{-2\gamma_a}) & 0\\
0 & 0 & e^{-2\gamma_0}(-1+e^{-2\gamma_a})
\end{array}
\right)\\
&-\frac{2 N_a}{N_0^3}\left(\begin{array}{ccc}0 & \beta_0^1 & \beta_0^2\\
(\beta_0)^1 & -(\beta_0)^1(\beta_0)^1 & -(\beta_0)^1 (\beta_0)^2\\
(\beta_0)^2 & -(\beta_0)^1 (\beta_0)^2 & -(\beta_0)^2(\beta_0)^2
\end{array}
\right)\\
&+\frac{1}{N_0^2}\left(\begin{array}{ccc}0 & (\beta_a)^1 & (\beta_a)^2\\
(\beta_a)^1 & -2(\beta_0)^1(\beta_a)^1 & -(\beta_0)^1 (\beta_a)^2-(\beta_a)^1 (\beta_0)^2\\
(\beta_a)^2 & -(\beta_0)^1 (\beta_a)^2-(\beta_a)^1 (\beta_0)^2 & -2(\beta_0)^2(\beta_a)^2
\end{array}
\right).
\end{split}
\end{equation}
\end{itemize}
We will not define until Section \ref{sec.sf.main} the expressions $A_\bA^{(1)}$, $A_\bA^{(2)}$ and $A_\bA^{(3)}$ (see Definition \ref{A.def}). They are of course chosen so that {we} indeed {have a} good parametrix, i.e., so that the main terms in the expression $\Box_g\phi_\lambda$ cancel.

We remark that $g_a'$ is constructed as follows: Let $g_0+g_a$ be the matrix with components given by $\mfg_0+\mfg_a$. Then $g_a'$ is { the linear in $\mfg_a$ term of the expression $(g_0+g_a)^{-1}-(g_0)^{-1}$.}

Here, two important remarks are {in order}:
\begin{itemize}
\item In the definition above, $F_\bA$ and $u_\bA$ manifestly depend only on the background. The functions $\wht F^{(2)}_\bA$ and $\wht F^{(3)}_\bA$ are also chosen to depend only on the background. Therefore, the only part that interacts with the metric $g_\lambda$ are $\wht F_\bA$ and the error term $\mathcal E_\lambda$.
\item The parametrix as defined above captures up to order\footnote{i.e., terms that are $\sim \lambda^2$ with a constant depending on $\ep$.} $O_\ep(\lambda^2)$ of terms that are oscillating along a null direction of the background. The d'Alembertian of these terms is a $O_\ep(\lambda)$.
 The multilinear interactions also give rise to terms of order $O_\ep(\lambda)$ with oscillating factors such as $\si{\f{u_\bA\pm u_\bB}\lambda}$ (with $\bA\neq \bB$){.} Since the $\{u_\bA\}$ are null adapted, these terms oscillate in a non-null direction and therefore behave better. Hence, we treat these terms as part of the error term.
\end{itemize}
\subsubsection{Parametrix for $g_\lambda$}\label{sec.para.g}

For the metric $g_\lambda$, we construct a parametrix for each metric component.  More precisely, we will define
$$N=N_0+N_1+N_2+N_3,\quad \gamma=\gamma_0+\gamma_1+\gamma_2+\gamma_3,\quad \beta^{i}=(\beta^{i})_0+(\beta^{i})_1+(\beta^{i})_2+(\beta^{i})_3.$$
We will use the convention that $\mfg$ denotes one of these metric components. In this notation, the decomposition above reads
\begin{equation}\label{g.para.def}
\mfg=\mfg_0+\mfg_1+\mfg_2+\mfg_3.
\end{equation}
In order to treat these metric components in a unified manner, we consider the Poisson-type equations satisfied by them. We consider the equations \eqref{elliptic.1}-\eqref{elliptic.4} (in the vacuum case, i.e., when $F_{\bA}=0$). Taking the divergence of \eqref{elliptic.4}, using \eqref{elliptic.1} and eliminating $H$ (using \eqref{elliptic.4}), we thus obtain the following system of elliptic equations for each of the metric components

\begin{align}
&\Delta \gamma = -|\nabla \phi|^2-\frac{e^{2\gamma}}{N^2} (e_0 \phi)^2-\frac{e^{2\gamma}}{8N^2}|L\beta|^2,\label{elliptic.g.1}\\
&\Delta N =\f{e^{2\gamma}}{4N}|L\beta|^2+ \frac{2e^{2\gamma}}{N}(e_0 \phi)^2,\label{elliptic.g.2}\\
& \Delta \beta^j={\delta^{ik}}\delta^{j\ell}\rd_k\left(\log(Ne^{-2\gamma})\right)(L\beta)_{i\ell}-4 \delta^{ij}(e_0 \phi)( \partial_i \phi). \label{elliptic.g.3}
\end{align}

We now introduce some notations which will allow us to deal with \eqref{elliptic.g.1}, \eqref{elliptic.g.2} and \eqref{elliptic.g.3} simultaneously.

For $\gamma$, $N$, $\beta^i$, define the following {matrices}. Notice that these correspond to the quadratic terms in $\phi$ in \eqref{elliptic.g.1}-\eqref{elliptic.g.3} {(cf. \eqref{g.elliptic})}:
\begin{align}
{\bf \Gamma}(\gamma)^{\mu\nu}=& %-\delta^{ij}\xi_i\xi_j-\f{e^{2\gamma}}{N^2}(\xi_t-\beta^i\xi_i)(\xi_t-\beta^j\xi_j)\\
\left(\begin{array}{ccc}-\f{e^{2\gamma}}{N^2} & \f{e^{2\gamma}}{N^2}\beta^1 & \f{e^{2\gamma}}{N^2}\beta^2\\
\f{e^{2\gamma}}{N^2}\beta^1 & -1-\f{e^{2\gamma}}{N^2}(\beta^1)^2 & -\f{e^{2\gamma}}{N^2}\beta^1 \beta^2\\
\f{e^{2\gamma}}{N^2}\beta^2 & -\f{e^{2\gamma}}{N^2}\beta^1 \beta^2 & -1-\f{e^{2\gamma}}{N^2}(\beta^2)^2
\end{array}
\right),\label{G.g}\\
{\bf \Gamma}(N)^{\mu\nu}=& 
\left(\begin{array}{ccc}-\f{2e^{2\gamma}}{N} & \f{2e^{2\gamma}}{N}\beta^1 & \f{2e^{2\gamma}}{N}\beta^2\\
\f{2e^{2\gamma}}{N}\beta^1 & -\f{2e^{2\gamma}}{N}(\beta^1)^2 & -\f{2e^{2\gamma}}{N}\beta^1 \beta^2\\
\f{2e^{2\gamma}}{N}\beta^2 & -\f{2e^{2\gamma}}{N}\beta^1 \beta^2 & -\f{2e^{2\gamma}}{N}(\beta^2)^2
\end{array}
\right),\label{G.N}\\
{\bf \Gamma}(\beta^1)^{\mu\nu}=& %-\delta^{ij}\xi_i\xi_j-\f{e^{2\gamma}}{N^2}(\xi_t-\beta^i\xi_i)(\xi_t-\beta^j\xi_j)\\
\left(\begin{array}{ccc} 0 & -2 & 0\\
-2 & 4\beta^1 & 2\beta^2\\
0 & 2\beta^2 & -0
\end{array}
\right),\label{G.b1}\\
{\bf \Gamma}(\beta^2)^{\mu\nu}=& %-\delta^{ij}\xi_i\xi_j-\f{e^{2\gamma}}{N^2}(\xi_t-\beta^i\xi_i)(\xi_t-\beta^j\xi_j)\\
\left(\begin{array}{ccc} 0 & 0 & -2\\
0 & 0 & 2\beta^1\\
-2 & 2\beta^1 & 4\beta^2
\end{array}
\right)\label{G.b2}.
\end{align}

%\begin{align}
%{\bf \Gamma}_0(\gamma)^{\mu\nu}=& %-\delta^{ij}\xi_i\xi_j-\f{e^{2\gamma}}{N^2}(\xi_t-\beta^i\xi_i)(\xi_t-\beta^j\xi_j)\\
%\left(\begin{array}{ccc}-\f{e^{2\gamma_0}}{N_0^2} & \f{e^{2\gamma_0}}{N_0^2}\beta_0^1 & \f{e^{2\gamma}}{N^2}\beta_0^2\\
%\f{e^{2\gamma_0}}{N_0^2}\beta_0^1 & -1-\f{e^{2\gamma_0}}{N_0^2}(\beta_0^1)^2 & -\f{e^{2\gamma_0}}{N^2}\beta_0^1 \beta_0^2\\
%\f{e^{2\gamma_0}}{N_0^2}\beta_0^2 & -\f{e^{2\gamma_0}}{N_0^2}\beta_0^1 \beta_0^2 & -1-\f{e^{2\gamma_0}}{N_0^2}(\beta_0^2)^2
%\end{array}
%\right),\\
%{\bf \Gamma}_0(N)^{\mu\nu}=& 
%\left(\begin{array}{ccc}-\f{2e^{2\gamma_0}}{N_0} & \f{2e^{2\gamma_0}}{N_0}\beta_0^1 & \f{2e^{2\gamma_0}}{N_0}\beta_0^2\\
%\f{2e^{2\gamma_0}}{N_0}\beta_0^1 & -\f{2e^{2\gamma_0}}{N_0}(\beta_0^1)^2 & -\f{2e^{2\gamma_0}}{N_0}\beta_0^1 \beta_0^2\\
%\f{2e^{2\gamma_0}}{N_0}\beta_0^2 & -\f{2e^{2\gamma_0}}{N_0}\beta_0^1 \beta_0^2 & -\f{2e^{2\gamma_0}}{N_0}(\beta_0^2)^2
%\end{array}
%\right),\\
%{\bf \Gamma}_0(\beta^1)^{\mu\nu}=& %-\delta^{ij}\xi_i\xi_j-\f{e^{2\gamma}}{N^2}(\xi_t-\beta^i\xi_i)(\xi_t-\beta^j\xi_j)\\
%\left(\begin{array}{ccc} 0 & -2 & 0\\
%-2 & 4\beta_0^1 & 2\beta_0^2\\
%0 & 2\beta_0^2 & -0
%\end{array}
%\right),\\
%{\bf \Gamma}_0(\beta^2)^{\mu\nu}=& %-\delta^{ij}\xi_i\xi_j-\f{e^{2\gamma}}{N^2}(\xi_t-\beta^i\xi_i)(\xi_t-\beta^j\xi_j)\\
%\left(\begin{array}{ccc} 0 & 0 & -2\\
%0 & 0 & 2\beta_0^1\\
%-2 & 2\beta_0^1 & 4\beta_0^2
%\end{array}
%\right).
%\end{align}
Define also the matrices $\bfG_0(\gamma)$, $\bfG_0(N)$ and $\bfG_0(\beta^i)$ in a similar manner as above except that all of the metric components $\gamma$, $N$ and $\beta^i$ are replaced by their background value $\gamma_0$, $N_0$ and $\beta_0^i$.

Next, define the following functions 
\begin{align}
\Upsilon(\gamma):=&-\frac{e^{2\gamma}}{8N^2}|L\beta|^2,\label{up1}\\
\Upsilon(N):=&\f{e^{2\gamma}}{4N}|L\beta|^2,\label{up2}\\
\Upsilon(\beta^i):=&{\delta^{{j}k}}\delta^{{i}\ell}\left((\rd_k\log N)(L\beta)_{{j}\ell}-2(\rd_k\gamma)(L\beta)_{{j}\ell}\right).\label{up3}
\end{align}
Similarly as above, we use $\Upsilon_0(\gamma)$, $\Upsilon_0(N)$ and $\Upsilon_0(\beta^i)$ to denote the above expressions when all of the metric components $\gamma$, $N$ and $\beta^i$ are replaced by their background value $\gamma_0$, $N_0$ and $\beta_0^i$ (including both the ones that are differentiated and those that are not). 

Therefore, by the above notations, \eqref{elliptic.g.1}, \eqref{elliptic.g.2} and \eqref{elliptic.g.3} now take the form
\begin{equation}\label{g.elliptic}
\Delta \mfg= {\bf \Gamma}(\mfg)^{\mu\nu}\rd_\mu\phi\,\rd_\nu\phi+\Upsilon(\mfg).
\end{equation}
The background metric components, which are denoted by $\mfg_0$, satisfies a similar equation with an extra term
\begin{equation}\label{g0.elliptic}
\Delta \mfg_0= {\bf \Gamma}_0(\mfg)^{\mu\nu}\rd_\mu\phi_0\,\rd_\nu\phi_0+\f 12 \sum_{\bA} F^2_{\bA}{\bf \Gamma}_0(\mfg)^{\mu\nu}(\rd_\mu u_{\bA})(\rd_\nu u_{\bA})+\Upsilon_0(\mfg).
\end{equation}

{\bf Definition of $\mfg_1$}

Define $\mfg_1$ by
\begin{equation}\label{g1.def}
\begin{split}
\mfg_1=&-\f 18{\bf \Gamma}_0(\mfg)^{\mu\nu}\sum_\bA \frac{\lambda^2 F_\bA^2}{|\nabla u_\bA|^2}  (\partial_\mu u_\bA)( \partial_\nu u_\bA) \co{\frac{2u_\bA}{\lambda}}\\
&-2{\bf \Gamma}_0(\mfg)^{\mu\nu}\sum_\bA\frac{\lambda^2 F_\bA}{|\nabla u_\bA|^2} (\partial_\mu \phi_0 )(\partial_\nu u_\bA)  \si{\frac{u_\bA}{\lambda}}\\
&-{\f 12}{\bf \Gamma}_0(\mfg)^{\mu\nu}\sum_{\pm} \sum_{\bA}\sum_{\bB \neq \bA}\frac{(\mp 1)\cdot \lambda^2 F_\bA F_{\gra B}}{|\nabla (u_\bA \pm u_{\gra B})|^2} (\partial_\mu u_\bA)( \partial_\nu u_{\gra B}) \co{\frac{ u_\bA \pm u_{\gra B}}{\lambda}}.
\end{split}
\end{equation}
Recall that the (background) collection $\{u_{\bA}\}_{\bA\in \mathcal A}$ is spatially adapted and therefore the above expression is well-defined and bounded. Notice that the definition of $\mfg_1$ depends \underline{only} on the \underline{background}. One can think of $\mfg_1$ as the part of the parametrix that is constructed to cancel all the\footnote{i.e., these terms may be small in terms of $\ep$ but are not small in terms of $\lambda$.} $O_\ep(1)$ high-frequency terms in $\Delta(\mfg-\mfg_0)$. 

{\bf Definition of $\mfg_2$}

{Define $\mfg_2$ by}
\begin{equation}\label{g2.def}
\begin{split}
\mfg_2:= &-\sum_{\bA}\f{\lambda^3 \mathcal G_{1,1,\bA}(\mfg)}{|\nabla u_\bA|^2}\co{\frac{u_\bA}{ \lambda}}-\sum_{\bA}\f{\lambda^3\mathcal G_{1,2,\bA}(\mfg)}{4|\nab u_\bA|^2} \si{\frac{2u_\bA}{\lambda}}\\
&-\sum_{\bA}\f{\lambda^3\mathcal G_{1,3,\bA}(\mfg)}{9|\nab u_\bA|^2} \co{\frac{3u_\bA}{\lambda}}-\sum_{\pm}\sum_{\bA}\sum_{\bB\neq \bA}\f{\lambda^3 \mathcal G_{2,1,\bA,\bB,\pm}(\mfg)}{|\nab(u_\bA\pm u_{\bB})|^2}\si{\frac{ u_\bA \pm u_{\gra B}}{\lambda}}\\
&-\sum_{\pm}\sum_{\bA}\sum_{\bB\neq \bA}\f{\lambda^3 \mathcal G_{2,2,\bA,\bB,\pm}(\mfg)}{|\nab(u_\bA\pm 2 u_{\bB})|^2}\co{\frac{ u_\bA \pm 2 u_{\gra B}}{\lambda}}\\
&-\sum_{\pm}\sum_{\bA,\bB} \f{\lambda^3 \mathcal G_{2,3,\bA,\bB,\pm}(\mfg)}{|\nab(u_\bA\pm 3u_\bB)|^2}\si{\f{u_\bA\pm 3u_\bB}{\lambda}},
\end{split}
\end{equation}
where for each $\mfg$, the functions $\mathcal G_{1,1,\bA}(\mfg)$, $\mathcal G_{1,2,\bA}(\mfg)$, $\mathcal G_{1,3,\bA}(\mfg)$, $\mathcal G_{2,1,\bA,\bB,\pm}(\mfg)$, $\mathcal G_{2,2,\bA,\bB,\pm}(\mfg)$ and $\mathcal G_{2,3,\bA,\bB,\pm}(\mfg)$ are all compactly supported. All these functions will be defined precisely in Section \ref{sec.g2.def} and can in principle be written down expl{i}citly. At this point, let us only not{e} that these functions \underline{not} only depend on the background solution, but also depends of $\wht F_\bA$ defined earlier in \eqref{whtfi}. One can think of $\mfg_2$ as the part of the parametrix that is constructed to cancel all the $O_\ep(\lambda)$ high-frequency terms in $\Delta(\mfg-\mfg_0-\mfg_1)$.

{\bf Definition of $\mfg_3$}

Finally, of course the term $\mfg_3$ is determined by the evolution equation as well as \eqref{g.para.def}.

\section{The bootstrap assumptions}\label{sec.bootstrap}

In this section, we will describe all the bootstrap assumptions. {Our goal will be to use a bootstrap argument to prove that for $\ep$ and $\lambda$ sufficiently small (with $\lambda\ll \ep$, consistent with Theorem~\ref{main.thm.2}) the solution arising from initial data given in Lemma~\ref{lmini} exists for time $[0,1]$ and that the parametrix we introduced in Section~\ref{sec.parametrix} is indeed a good approximation of the solution. This then also allows us to show convergence as $\lambda \to 0$ (cf. Section~\ref{sec.concl}).} {\bf We will use $C_1$ as a bootstrap constant, which will be determined later and will depend only on $C_0$, which is given by Corollary \ref{lwp.small}}. Our goal will be to improve all the bounds made in the bootstrap assumption. {\bf In order to emphasize which constants depend on $C_1$, we will use $C(C_0)$ to denote a constant depending only on $C_0$ and $C(C_1)$ to denote a constant depending on both $C_0$ and $C_1$. {We will also use $C$ (or $\ls$) for constants which are independent of both $C_0$ and $C_1$.}}

Recall that in Section \ref{sec.parametrix}, we have introduced the decomposition of $\phi_\lambda$ and $g_\lambda$ into various pieces. We will therefore need to obtain estimates for all of them. Notice however that $\phi_0$, $F_\bA$, $\wht F^{(2)}_\bA$, $\wht F^{(3)}_\bA$, $\mfg_0$ and $\mfg_1$ are defined in a way that \underline{only} depend on the \underline{background} solution. We therefore do not need bootstrap assumptions on them. Nevertheless, for $\wht F^{(2)}_\bA$, $\wht F^{(3)}_\bA$ and $\mfg_1$, it is convenient to already state the bounds that they satisfy\footnote{We will however not restate the bounds for $\phi_0$, $F_\bA$ and $\mfg_0$ but will simply refer the readers to {Corollary~\ref{lwp.small}.}}. Notice that all these bounds depend only on $C_0$ and are independent of the bootstrap constant.
\begin{align}
\|\wht F_\bA^{(a)}\|_{H^{9}}+ \|\partial_t \wht F_{\bA}^{(a)}\|_{H^8}+\|\rd_t^2 \wht F_\bA^{(a)}\|_{H^7}\leq C(C_0) \ep^2,\quad a=2,3,\label{B1}\tag{B1}\\
\sum_{k\leq 8}\lambda^{k}\|\mfg_1\|_{H^k\cap {C^{k}}}+\sum_{k\leq 7}\lambda^{k+1}\|\rd_t\mfg_1\|_{H^k\cap {C^{k}}}+\sum_{k\leq 6}\lambda^{k+2}\|\rd_t^2\mfg_1\|_{H^k\cap {C^{k}}}\leq C(C_0)\lambda^2\ep^2.\label{B2}\tag{B2}
\end{align}
\eqref{B1} and \eqref{B2} will be proven in Propositions \ref{whtG.prop} and \ref{mfg1.prop} respectively.

In view of the above discussions, we only need to introduce bootstrap assumptions for $\wht F_\bA$, $\mathcal E_\lambda$, $\mfg_2$ and $\mfg_3$. Introduce the following bootstrap assumptions\footnote{We will later decompose $\mathcal E_\lambda$ into three different pieces (see Section \ref{secelambda} and \eqref{E.decompose}). We will need the bootstrap assumption \eqref{BA2} to hold for each of these pieces. We do not write this out explicitly at this point to simplify notations, but it will be clear in Section \ref{secelambda} that we indeed improve the bootstrap assumptions for each of those decomposed pieces.}:
\begin{align}
&\sum_{k\leq 3} \left(\lambda^k \|\wht F_\bA\|_{H^{2+k}}+\lambda^k\|\partial_t \wht F_\bA\|_{H^{1+k}}+\lambda^{k+1}\|\rd_t^2\wht F_\bA\|_{H^k}\right)\leq C_1\ep,\label{BA1}\tag{BA1}\\
&\sum_{k\leq 3}\lambda^k\|\partial \mathcal E_\lambda\|_{H^k}
+\sum_{k\leq 2}\lambda^{k+1} \|\partial^2 \mathcal E_\lambda\|_{H^k}\leq C_1\ep\lambda^2 ,\label{BA2}\tag{BA2}\\
&\sum_{k \leq 5}
\lambda^k\|\mfg_2\|_{H^k}+\sum_{k\leq 4}\lambda^{k+1} \|\partial_t \mfg_2\|_{H^k}\leq C_1 \ep \lambda^3 .\label{BA3}\tag{BA3}
\end{align}
{F}or $\mfg_3$, each of the metric component can be decomposed as 
$$\mfg_3=(\mfg_3)_{asymp}(t)\chi({|x|})\log({|x|})+\wht{\mfg}_3,$$
where
\begin{align}
|(\mfg_3)_{asymp}|+|\rd_t(\mfg_3)_{asymp}|+\sum_{k\leq 3}\lambda^k\|\wht{\mfg}_3\|_{H^{2+k}_{{\delta}}}+\sum_{k\leq 2}\lambda^{k+1}\|\rd_t\wht{\mfg}_3\|_{H^{{2}+k}_{{\delta}}}\leq C_1\ep \lambda^2{.}\label{BA4}\tag{BA4}
\end{align}
{The existence of such a decomposition is a consequence of the local existence result (Theorem~\ref{lwp}). Notice that it also implies the following conditions for the full metric,}
{$$\mfg = (\mfg_0)_c + ((\mfg_0)_{asymp}(t) + (\mfg_3)_{asymp}(t))\chi({|x|})\log({|x|})+\wht\mfg_0 + \mfg_1 + \mfg_2 +\wht{\mfg}_3$$}
{\begin{equation}\label{g.asymp.sign.1}
(\mfg_0)_c = \begin{cases}
1\quad \mbox{if $\mfg = N$}\\
0\quad \mbox{if $\mfg = \gamma, \bt^i$},
\end{cases}
\end{equation}}
{and
\begin{equation}\label{g.asymp.sign.2}
\mfg_{asymp}(t):= (\mfg_0)_{asymp}(t) + (\mfg_3)_{asymp}(t) \begin{cases}
\geq 0\quad \mbox{if $\mfg = N$}\\
\leq 0\quad \mbox{if $\mfg = \gamma$}\\
= 0\quad \mbox{if $\mfg = \bt^i$},
\end{cases},\quad
(\mfg_0)_{asymp}(t) \begin{cases}
\geq 0\quad \mbox{if $\mfg = N$}\\
\leq 0\quad \mbox{if $\mfg = \gamma$}\\
= 0\quad \mbox{if $\mfg = \bt^i$}.
\end{cases}
\end{equation}
}

{At this point it is useful to note the following support properties, which follows immediately from the estimates for the metric in Corollary~\ref{lwp.small}, \eqref{B2}, \eqref{BA3}, \eqref{BA4} and the support statement in Theorem~\ref{lwp}:}
{
\begin{lm}\label{lm:Rsupp}
For $\ep$ and $\lambda$ sufficiently small, as long as the bootstrap assumptions \eqref{BA3} and \eqref{BA4} hold on $[0,T]$ ($T\leq 1$), there exists $R_{supp}>R$ (independent of $\lambda$ and $T$) such that\footnote{Here, $J_0^+$ is the causal future with respect to the metric $g_0$ and $J_{\lambda}^+$ is the causal future with respect to the metric $g_\lambda$.} 
$$J_0^+(\Sigma_0{\cap} B(0,R)),\, J_\lambda^+(\Sigma_0{\cap} B(0,R))\subset [0,T]\times B(0,R_{supp}).$$
In particular, $\phi_0$ and $F_{\bA}$ associated to the background solution to \eqref{back}, as well as $\phi_\lambda$ associated to the one-parameter family of solutions to \eqref{sys}, are all supported in $B(0,R_{supp})$ for all $t\in [0,T]$.
\end{lm}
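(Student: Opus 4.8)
\emph{Proof strategy.} The plan is to run a finite-speed-of-propagation argument and reduce the claim about the causal futures $J_0^+$ and $J_\lambda^+$ to a pointwise upper bound on the coordinate speed of causal curves, which in turn follows from the (global) $L^\infty$ bounds on the metric components already available from Corollary~\ref{lwp.small}, \eqref{B2}, \eqref{BA3} and \eqref{BA4}. There is no circularity in this reduction: the metric components solve elliptic equations on all of $\m R^2$, so the bounds we use hold globally and require no a priori knowledge of the supports.

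First I would record the speed bound. Writing a metric in the form \eqref{g.form}, any future-directed causal curve, parametrized by $t$ as $t\mapsto(t,x(t))$, satisfies $-N^2 + e^{2\gamma}\delta_{ij}(\dot x^i+\beta^i)(\dot x^j+\beta^j)\leq 0$, hence $|\dot x(t)|\leq |\beta|(t,x(t)) + (Ne^{-\gamma})(t,x(t)) =: v(t,x(t))$. This applies both to $g_0$ and to $g_\lambda$. Thus it suffices to show $v(t,x)\leq C(1+|x|)^{1/2}$ on $[0,T]\times\m R^2$ with $C=C(C_0)$ independent of $\lambda$ and $T$: given such a bound, along any causal curve from $B(0,R)$ one has $\frac{d}{dt}|x(t)|\leq |\dot x(t)|\leq C(1+|x(t)|)^{1/2}$, and comparison with the ODE $\dot\rho = C(1+\rho)^{1/2}$, $\rho(0)=R$, yields $|x(t)|\leq \big((1+R)^{1/2}+\frac{C}{2}t\big)^2-1$ for $t\in[0,1]$. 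One then simply sets $R_{supp}:=\big((1+R)^{1/2}+\frac{C}{2}\big)^2-1 > R$, which depends only on $R$ and $C$, hence only on $C_0$ and $R$.

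Second I would verify the speed bound. For the background, $\beta_0\in H^{k+2}_\delta$ is bounded, while $N_0 = 1 + N_{asymp}(t)\chi(|x|)\log|x| + \wht N_0$ and $\gamma_0 = \gamma_{asymp}\chi(|x|)\log|x| + \wht\gamma_0$ with $|N_{asymp}|,|\gamma_{asymp}|\leq C_0\ep$ and $\wht N_0,\wht\gamma_0$ bounded by weighted Sobolev embedding; since $\gamma_{asymp}\leq 0$ and $N_{asymp}\geq 0$ this gives $N_0 e^{-\gamma_0}\lesssim (1+\log|x|)\,|x|^{C_0\ep}$, which is $\leq C(C_0)(1+|x|)^{1/2}$ once $\ep$ is small enough (depending only on $C_0$). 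For $g_\lambda = g_0 + \mfg_1 + \mfg_2 + \mfg_3$, the corrections $\mfg_1$, $\mfg_2$ and $\wht{\mfg}_3$ are $O(\lambda^2)$ in $C^0$ by \eqref{B2}, by \eqref{BA3} together with $H^2(\m R^2)\hookrightarrow C^0$, and by \eqref{BA4} together with weighted Sobolev embedding, while the only asymptotically growing correction $(\mfg_3)_{asymp}(t)\chi(|x|)\log|x|$ has $|(\mfg_3)_{asymp}|\leq C_1\ep\lambda^2$; hence for $\lambda$ small these change $v$ by at most a bounded amount and do not worsen the growth exponent, so $v\leq C(C_0)(1+|x|)^{1/2}$ holds for $g_\lambda$ as well, uniformly in $\lambda$ and $T$.

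Finally, to conclude, I would combine this with the support statement of Theorem~\ref{lwp}: for the background $\mathrm{supp}(\phi_0,F_{\bA})\subset J_0^+(\Sigma_0\cap B(0,R))$, and for the one-parameter family, whose data $\phi_\lambda\restriction_{\Sigma_0}$ is supported in $B(0,R)$ by Lemma~\ref{lmini}, $\mathrm{supp}(\phi_\lambda)\subset J_\lambda^+(\Sigma_0\cap B(0,R))$; the domain-of-dependence bound established above says exactly that both of these causal futures are contained in $[0,T]\times B(0,R_{supp})$. The only delicate point is the logarithmic-times-polynomial growth of the lapse and conformal factor at spatial infinity, which a priori makes the characteristic speed grow in $|x|$; the hard part is therefore to observe that the relevant growth exponents are $O(\ep)$ and hence harmless for $\ep$ sufficiently small, which is precisely the reason the statement is only claimed for small $\ep$ (and small $\lambda$).
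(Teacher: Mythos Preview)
Your argument is correct and is precisely a fleshed-out version of what the paper does: the paper simply states that the lemma ``follows immediately from the estimates for the metric in Corollary~\ref{lwp.small}, \eqref{B2}, \eqref{BA3}, \eqref{BA4} and the support statement in Theorem~\ref{lwp}'' without writing any details, and your proof supplies exactly those details via the standard finite-speed-of-propagation bound $|\dot x|\leq |\beta|+Ne^{-\gamma}$ together with the observation that the asymptotic exponents are $O(\ep)$.
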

}
{In fact, we will choose $A_\bA^{(1)}$, $A_\bA^{(2)}$ and $A_\bA^{(3)}$ (cf. \eqref{whtfi}, \eqref{whtgi2}, \eqref{whtgi3}) and $\mathcal G_{1,1,\bA}(\mfg)$, $\mathcal G_{1,2,\bA}(\mfg)$, $\mathcal G_{1,3,\bA}(\mfg)$, $\mathcal G_{2,1,\bA,\bB,\pm}(\mfg)$, $\mathcal G_{2,2,\bA,\bB,\pm}(\mfg)$ and $\mathcal G_{2,3,\bA,\bB,\pm}(\mfg)$ (cf. \eqref{g2.def}) so that they are all supported in $J^+_0(\Sigma_0{\cap} B(0,R))$. Hence, by \eqref{phi.para}, \eqref{whtfi}, \eqref{whtgi2}, \eqref{whtgi3}, \eqref{g1.def}, \eqref{g2.def}, 
\begin{equation}\label{eq:supp}
supp(F_{\bA}),\, supp(\wht F_{\bA}),\, supp(\wht F^{(2)}_{\bA}),\, supp(\wht F^{(3)}_{\bA}),\, supp(\mathcal E_{\lambda}),\, supp(\mfg_1),\, supp(\mfg_2)\subset B(0,R_{supp}),
\end{equation}
as long as the bootstrap assumption holds.
}

{Finally, we need to introduce one more bootstrap assumption.} Notice that the above bootstrap assumptions distinguish between the spatial $\nab$ and the $\rd_t$ derivative. On the other hand, it will be crucial to our argument that $\rd_t\gamma_3$ is in fact better that the $\rd_t$ derivative of a general $\mfg_3$ component. Namely, we make the following bootstrap assumption for $\rd_t \wht \gamma_3$ on the compact set $B(0,R_{supp}{+1})$ {(where $R_{supp}$ is as in Lemma~\ref{lm:Rsupp})}:
\begin{align}
\|\partial_t \wht \gamma_3\|_{L^2(B(0,R_{supp}{+1}))}
\leq C_1\ep\lambda^2.\label{BA5}\tag{BA5}
\end{align}
Our goal will be to improve all the constants in \eqref{BA1}, \eqref{BA2}, \eqref{BA3}, \eqref{BA4} and \eqref{BA5}{. The proof of 
 which will occupy the rest of the paper. More precisely, we will prove}
\begin{thm}[Bootstrap theorem]\label{thm:BS}
{Let $\ep$ and $\lambda$ be sufficiently small (with $\lambda\ll \ep$) and $C_1$ be sufficiently large (depending on $C_0$ but independent of $\ep$ and $\lambda$). Suppose that the bootstrap assumptions \eqref{BA1}-\eqref{BA5} hold for $t\in [0,T]$ for the unique solution arising from the initial data given in Lemma~\ref{lmini}. Then in fact all of \eqref{BA1}-\eqref{BA5} hold for $t\in [0,T]$ with the constant $C_1$ replaced by $\f{C_1}{2}$.}
\end{thm}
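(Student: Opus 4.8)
### Proof proposal for Theorem~\ref{thm:BS}

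\textbf{Overall strategy.}

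The plan is to run a standard continuity/bootstrap argument: assuming \eqref{BA1}--\eqref{BA5} on $[0,T]$, we derive the \emph{improved} bounds with $C_1$ replaced by $C_1/2$, after first fixing $C_1$ large (depending only on $C_0$) and then $\varepsilon,\lambda$ small with $\lambda\ll\varepsilon$. The logical structure is the usual one for bootstrap arguments: the set of $T'\le T$ on which the improved bounds hold is nonempty (by Lemma~\ref{lmini} and the smallness $\Omega_0,\Omega_1=O(\lambda^2)$ of the data perturbation, so that at $t=0$ everything is $\ll C_1\varepsilon\lambda^2/2$, etc.), closed, and open; the improved estimates give openness. The bulk of the work is the derivation of the improved estimates, which will be carried out in Sections~\ref{sec.scalar} and \ref{secelliptic} and assembled in Section~\ref{sec.concl}; here I only outline the order and the interdependencies. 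First I would record, from the parametrix construction, that $\Box_{g_\lambda}\phi_\lambda$ and $\Delta\mathfrak g - (\text{RHS})$ have been arranged to cancel all the ``dangerous'' $O_\varepsilon(1)$ and $O_\varepsilon(\lambda)$ oscillatory contributions; what is left is then the source terms for the equations satisfied by the error quantities $\widehat F_\bA$, $\mathcal E_\lambda$, $\mathfrak g_2$, $\mathfrak g_3$.

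\textbf{Order of the estimates.}

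The natural order is: (i) prove \eqref{B1} and \eqref{B2} (Propositions~\ref{whtG.prop}, \ref{mfg1.prop}), which depend only on the background and Corollary~\ref{lwp.small} --- these are unconditional input; (ii) estimate $\widehat F_\bA$ via the transport equation \eqref{whtfi}: the source contains $\lambda^{-2}F_\bA (g_3')^{\alpha\beta}\partial_\alpha u_\bA\partial_\beta u_\bA$, which by \eqref{BA4} is $O(\lambda^{-2})\cdot O(\lambda^2\varepsilon) = O(\varepsilon)$, plus $-\Box_{g_0}F_\bA + A_\bA^{(1)} = O(\varepsilon)$ (using $C(C_0)\varepsilon^3\le\varepsilon$ for $\varepsilon$ small), so integrating the transport equation along the (background) characteristics over a time $\le 1$ gives $\|\widehat F_\bA\|_{H^{2+k}}\lesssim C(C_0)\varepsilon + (\text{error-coupled pieces})$; the point is that the constant multiplying $C_1\varepsilon$ from the $\mathfrak g_3$ coupling carries an extra $\lambda$ or an extra $\varepsilon$, so it can be absorbed into $C_1\varepsilon/2$ once $\lambda,\varepsilon$ are small; (iii) feed the improved $\widehat F_\bA$ bound into the definitions of $\mathcal G_{\ast}(\mathfrak g)$ and hence estimate $\mathfrak g_2$ elliptically on each $\Sigma_t$ (Appendix~\ref{weightedsobolev}), using that each summand in \eqref{g2.def} carries $\lambda^3$ and the denominators $|\nabla(u_\bA\pm\cdots)|^2$ are bounded below by the spatially-adapted property --- this yields \eqref{BA3} improved; (iv) estimate $\mathcal E_\lambda$ by energy estimates for $\Box_{g_\lambda}\mathcal E_\lambda$, where the source now genuinely consists of $O_\varepsilon(\lambda^2)$ remainders (this is the whole point of having gone to the more precise parametrix): the metric perturbation enters $\Box_{g_\lambda}$ through $(g_\lambda^{-1}-g_0^{-1})$, and one checks that after subtracting $\mathfrak g_1$, $\mathfrak g_2$ the leftover is $O(\lambda^2)$ in the relevant norms, closing \eqref{BA2}; (v) finally estimate $\mathfrak g_3$ elliptically from \eqref{g.elliptic} minus \eqref{g0.elliptic} minus the equations satisfied by $\mathfrak g_1,\mathfrak g_2$: the RHS is quadratic in $\partial\phi_\lambda$ and, after the cancellations built into $\mathfrak g_1,\mathfrak g_2$, is $O(\lambda^2)$ in the appropriate weighted spaces, giving \eqref{BA4}.

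\textbf{The $\partial_t$ issues and \eqref{BA5}.}

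The genuinely delicate part --- and the main obstacle --- is the control of the $\partial_t$ derivatives of the metric, for which the elliptic equations must be differentiated in $t$, losing one order and, a priori, a power of $\lambda$. Here I would follow the three structural observations flagged in Section~\ref{ideas}: first, use the propagated maximality $\tau=0$ to rewrite $\partial_t\gamma_3$ in terms of spatial derivatives of $\beta$ (this is where \eqref{BA5} comes from, and it is improved \emph{together with} the other $\mathfrak g_3$ bounds rather than separately, using the $\tau=0$ identity on the compact set $B(0,R_{supp}+1)$ where all matter is supported); second, exploit the cancellation between the $\partial_t\mathfrak g_3$ error term and the $\partial_t^2\widehat F_\bA$ error term in the $\mathcal E_\lambda$ energy estimate (Proposition~\ref{cancellation}) --- without this, the $H^1$ estimate for $\mathcal E_\lambda$ simply does not close, exactly as the failed first attempt \eqref{intro.ex.g} in the introduction shows; third, handle the remaining $\partial_t\mathfrak g_3 \times (\text{low frequency})$ errors by integration by parts in $x$ to move a derivative off $\partial_t\mathfrak g_3$ onto the low-frequency factor, trading the bad $\partial_t$ estimate for a good spatial one (Proposition~\ref{E.rdtg3.est}). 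Each of these must be done carefully because $\widehat F_\bA$, $\mathcal E_\lambda$, $\mathfrak g_2$ and $\mathfrak g_3$ are mutually coupled, so one should set up a single ``total norm'' combining \eqref{BA1}--\eqref{BA5}, close a linear-with-small-coefficients inequality for it, and then read off the improvement $C_1\mapsto C_1/2$. I expect the estimate for $\mathcal E_\lambda$ (step (iv)), with all the $\partial_t$-derivative bookkeeping and the use of the cancellation lemma, to be by far the most involved step; the elliptic estimates for $\mathfrak g_2,\mathfrak g_3$ and the transport estimate for $\widehat F_\bA$ are comparatively routine given the structure already set up.
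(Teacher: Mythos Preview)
Your proposal is essentially correct and follows the paper's approach closely: the scheme $C(C_0)\varepsilon + C(C_1)\varepsilon^2 \le \tfrac{C_1}{2}\varepsilon$ for $C_1$ large and $\varepsilon$ small is exactly how the paper closes the argument in Section~\ref{sec.concl}, and you have correctly identified the three structural ingredients (maximality for \eqref{BA5}, the cancellation of Proposition~\ref{cancellation}, and the integration-by-parts trick of Proposition~\ref{E.rdtg3.est}). Two small corrections to your description: (a) $\mathfrak g_2$ is \emph{not} estimated elliptically --- it is given by an explicit formula \eqref{g2.def} and Proposition~\ref{g2.est} bounds it directly from that formula and the bounds on the $\mathcal G$ coefficients; (b) in Proposition~\ref{E.rdtg3.est} the key integration by parts is first in \emph{$t$} (moving $\partial_t$ from $g_3'$ onto $\partial_t\mathcal E_\lambda^{(\partial_t\mathfrak g_3)}$ to produce $\partial_t^2\mathcal E_\lambda^{(\partial_t\mathfrak g_3)}$), followed by use of the wave equation to trade $\partial_t^2$ for spatial second derivatives, and only then an integration by parts in $x$ --- the gain comes from replacing $\partial_t g_3'$ by $g_3'$ and $\nabla g_3'$, not from moving a spatial derivative off $\partial_t g_3'$ directly.
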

{In view of Theorem~\ref{thm:BS},} {\bf we will assume below that \eqref{BA1}-\eqref{BA5} hold and that we have the following hierarchy of constants:
$$\lambda\ll \ep\ll C_0\ll C_1. $$}
In particular, we will freely take $C(C_1)\ep<10^{-2}$, $C(C_1)\lambda<10^{-2}$ for constants $C(C_1)$ depending on $C_1$.

{Before we end this section, let us note that our bootstrap assumptions imply immediate the following pointwise estimates via the Sobolev embbeding theorems in Appendix~\ref{weightedsobolev}:}

\begin{prp}\label{prp:Linfty.BA}
Under the bootstrap assumptions, the following estimates hold (for some $C(C_0)$ depending on $C_0$ and some universal constant $C$):
\begin{align}
\sum_{k\leq 3}\lambda^k\|\wht F_{\bA}\|_{C^k} + \sum_{k\leq 2} \lambda^{k+\f 12}\|\rd_t \wht F_{\bA}\|_{C^k} + \sum_{k\leq 1}\lambda^{k+2}\|\partial^2_t\wht F_\bA\|_{C^k} \leq &C C_1 \ep, \label{F.Linfty}\\
\left(\|\wht F^{(a)}_\bA\|_{H^7} + \|\rd_t\wht F^{(a)}_\bA\|_{H^6} + \|\rd_t^2\wht F^{(a)}_\bA\|_{H^5}\right)\leq &C(C_0)\ep^2,\quad a=2,3, \label{Fa.Linfty}\\
\sum_{k\leq 1} \lambda^k\|\rd \mathcal E_\lambda\|_{C^k} + \lambda\|\rd^2 \mathcal E_{\lambda}\|_{C^0}\leq &CC_1\ep \lambda,\label{E.Linfty}\\
\sum_{k\leq 3}\lambda^k\|\mfg_2\|_{C^k} + \sum_{k\leq 2} \lambda^{k+1}\|\partial_t \mfg_2\|_{C^k}\leq &C C_1\ep \lambda^2, \label{g2.Linfty}\\
\sum_{k\leq 3}\lambda^k\|\wht\mfg_3\|_{C^k_{\de+1}}+ \sum_{k\leq 2} \lambda^{k+1}\|\partial_t \wht\mfg_3\|_{C^k_{\de+1}}\leq &C C_1\ep \lambda^2.\label{g3.Linfty}
\end{align}
\end{prp}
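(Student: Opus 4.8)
## Proof plan for Proposition~\ref{prp:Linfty.BA}

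The plan is to derive each pointwise estimate directly from the corresponding bootstrap assumption together with the Sobolev embedding and weighted Sobolev embedding results collected in Appendix~\ref{weightedsobolev}. All of these are routine ``trading $L^2$-based Sobolev control for $L^\infty$-based H\"older control at the cost of two derivatives in $\mathbb R^2$'' arguments, and the only bookkeeping to keep track of is the distribution of powers of $\lambda$. First I would recall the basic embedding $H^{k+2}(\mathbb R^2)\hookrightarrow C^k(\mathbb R^2)$ (and its weighted analogue $H^{k+2}_\de \hookrightarrow C^k_{\de+1}$ from the appendix), since $2$ is the ``Sobolev gap'' in two spatial dimensions.

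For \eqref{F.Linfty}: from \eqref{BA1} we have, for each $k\le 3$, $\lambda^k\|\wht F_\bA\|_{H^{2+k}}\le C_1\ep$, so $\lambda^k\|\wht F_\bA\|_{C^k}\lesssim \lambda^k\|\wht F_\bA\|_{H^{k+2}}\le C_1\ep$, giving the first sum. For the $\rd_t$ terms, \eqref{BA1} gives $\lambda^k\|\rd_t\wht F_\bA\|_{H^{1+k}}\le C_1\ep$ for $k\le 3$; to put $\rd_t\wht F_\bA$ in $C^k$ I need $H^{k+2}$ control, i.e. I apply the $H^{1+k}$ bound with $k\to k+1$, i.e. $\lambda^{k+1}\|\rd_t\wht F_\bA\|_{H^{k+2}}\le C_1\ep$, hence $\lambda^{k+1}\|\rd_t\wht F_\bA\|_{C^k}\lesssim C_1\ep$; this loses exactly a half-power relative to the claimed $\lambda^{k+\f12}$, so the stated estimate (which is weaker) follows a fortiori — one may simply absorb the extra $\lambda^{\f12}$ since $\lambda<1$. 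Wait: the claim has $\lambda^{k+\f12}$ on the left, i.e. it claims \emph{more} than $\lambda^{k+1}$; so in fact one should keep the stronger $\lambda^{k+1}$ bound, which trivially implies the $\lambda^{k+\f12}$ one since $\lambda^{k+\f12}\|\cdot\|\le \lambda^{k+1}\|\cdot\|\cdot\lambda^{-\f12}$ — no. The correct reading is: the bootstrap gives control only up to $k\le 3$ for $\rd_t\wht F_\bA$ in $H^{1+k}$, so for $C^k$ with $k\le 2$ we use $H^{k+2}\subset H^{1+(k+1)}$ with the index $k+1\le 3$, giving $\lambda^{k+1}\|\rd_t \wht F_\bA\|_{C^k}\le C C_1\ep$, and since $\lambda<1$ this implies $\lambda^{k+\f12}\|\rd_t\wht F_\bA\|_{C^k}\le CC_1\ep$; similarly $\rd_t^2\wht F_\bA$ controlled in $H^k$ up to $k\le 3$ yields $C^k$ control for $k\le 1$ with weight $\lambda^{k+2}$. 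The analogous reading applies to \eqref{g2.Linfty} (from \eqref{BA3}, using $H^{k+2}$-embedding and the available indices $k\le 5$ for $\mfg_2$ and $k\le 4$ for $\rd_t\mfg_2$) and to \eqref{E.Linfty} (from \eqref{BA2}, noting $\|\rd\mathcal E_\lambda\|_{H^k}$ is controlled up to $k\le 3$ so $\|\rd\mathcal E_\lambda\|_{C^k}\lesssim \|\rd\mathcal E_\lambda\|_{H^{k+2}}$ for $k\le 1$, with weight $\lambda^k$ times the $\lambda^2$ from \eqref{BA2}, i.e. $\lambda^{k}\|\rd\mathcal E_\lambda\|_{C^k}\le CC_1\ep\lambda^2\le CC_1\ep\lambda$ — one extra $\lambda$ to spare; and $\|\rd^2\mathcal E_\lambda\|_{H^k}$ controlled up to $k\le 2$ gives $\|\rd^2\mathcal E_\lambda\|_{C^0}$ with weight $\lambda^{0+1}$ from \eqref{BA2}'s second sum, i.e. $\lambda\|\rd^2\mathcal E_\lambda\|_{C^0}\le CC_1\ep\lambda^2$). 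The term \eqref{Fa.Linfty} is not a pointwise estimate but a restatement of part of \eqref{B1} with two fewer derivatives, and follows immediately from \eqref{B1} (hence the constant $C(C_0)$, depending only on $C_0$). Finally \eqref{g3.Linfty} follows from \eqref{BA4} by the \emph{weighted} Sobolev embedding $H^{k+2}_\de\hookrightarrow C^k_{\de+1}$ of Appendix~\ref{weightedsobolev}: \eqref{BA4} gives $\lambda^k\|\wht\mfg_3\|_{H^{2+k}_\de}\le C_1\ep\lambda^2$ for $k\le 3$, hence $\lambda^k\|\wht\mfg_3\|_{C^k_{\de+1}}\lesssim C_1\ep\lambda^2$, and similarly for $\rd_t\wht\mfg_3$ using the $H^{2+k}_\de$ control for $k\le 2$.

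There is essentially no obstacle here; the ``hard part'' is purely organizational — making sure that for each desired $C^k$ norm one invokes the bootstrap assumption at the index shifted up by $2$, that this shifted index is still within the range for which the bootstrap assumption provides control (which is why the claimed ranges of $k$ in the Proposition are smaller by $2$ than those in \eqref{BA1}--\eqref{BA4}), and that the power of $\lambda$ accompanying that higher Sobolev norm is the one stated (possibly with room to spare, as in \eqref{F.Linfty} and \eqref{E.Linfty}, where $\lambda<1$ lets us weaken the exponent). One should also note that all quantities in question are compactly supported in a fixed ball $B(0,R_{supp})$ by \eqref{eq:supp} and \eqref{BA4}'s decomposition, so there is no issue with the embeddings at spatial infinity (and the weighted embedding is only genuinely needed for $\wht\mfg_3$, which is not compactly supported but merely decaying). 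I would write the proof as a short paragraph-per-estimate verification, citing Lemma(s) from Appendix~\ref{weightedsobolev} for the (weighted) Sobolev embedding and \eqref{BA1}--\eqref{BA4} (resp. \eqref{B1}) as the input.
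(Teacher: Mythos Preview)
There is a genuine gap in your argument, and it stems from the direction of the inequality you invoke with $\lambda<1$. You correctly derive $\lambda^{k+1}\|\rd_t\wht F_\bA\|_{C^k}\le CC_1\ep$ from the plain embedding $H^{k+2}\subset C^k$, but then claim that $\lambda<1$ yields $\lambda^{k+\f12}\|\rd_t\wht F_\bA\|_{C^k}\le CC_1\ep$. This is backwards: since $\lambda^{k+\f12}=\lambda^{-\f12}\cdot\lambda^{k+1}$, the bound $\lambda^{k+1}X\le C$ only gives $\lambda^{k+\f12}X\le C\lambda^{-\f12}$, which diverges. The same error appears in your treatment of \eqref{E.Linfty} and \eqref{g2.Linfty}: the plain embedding $H^{k+2}\subset C^k$ applied to \eqref{BA2} gives only $\lambda^k\|\rd\mathcal E_\lambda\|_{C^k}\le CC_1\ep$ (not $CC_1\ep\lambda$), and applied to \eqref{BA3} gives only $\lambda^k\|\mfg_2\|_{C^k}\le CC_1\ep\lambda$ (not $CC_1\ep\lambda^2$). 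In each case you are one power of $\lambda$ (or $\lambda^{\f12}$) short.

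The missing ingredient is the \emph{scaled} Sobolev inequality of Proposition~\ref{scaled.Sobolev},
\[
\|u\|_{L^\infty}\lesssim \|u\|_{L^2}^{\f12}\Big(\sum_{|\alp|=2}\|\nab^\alp u\|_{L^2}\Big)^{\f12},
\]
which interpolates between the $L^2$ and $H^2$ norms rather than simply using the top norm. For instance, for $k=0$ in the $\rd_t\wht F_\bA$ sum, \eqref{BA1} gives $\|\rd_t\wht F_\bA\|_{L^2}\le C_1\ep$ and $\|\rd_t\wht F_\bA\|_{H^2}\le C_1\ep\lambda^{-1}$, so the interpolation yields $\|\rd_t\wht F_\bA\|_{L^\infty}\lesssim C_1\ep\lambda^{-\f12}$, i.e.~exactly $\lambda^{\f12}\|\rd_t\wht F_\bA\|_{C^0}\le CC_1\ep$. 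This is precisely what the paper does: Proposition~\ref{scaled.Sobolev} is invoked for the last two sums in \eqref{F.Linfty}, for \eqref{E.Linfty}, and for \eqref{g2.Linfty}. Your approach via plain $H^{k+2}\subset C^k$ is correct only for the first sum in \eqref{F.Linfty}, for \eqref{Fa.Linfty}, and for \eqref{g3.Linfty} --- exactly those cases where the bootstrap assumption already carries the ``$+2$'' shift in Sobolev index (i.e.~$\|\wht F_\bA\|_{H^{2+k}}$, $\|\wht\mfg_3\|_{H^{2+k}_\de}$), so no $\lambda$-power is lost.
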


\begin{proof}

Estimates for the last two terms in \eqref{F.Linfty}, \eqref{E.Linfty} and \eqref{g2.Linfty} follow from \eqref{BA1}, \eqref{BA2}, \eqref{BA3} and Proposition~\ref{scaled.Sobolev}; while estimates for the first term in \eqref{F.Linfty}, as well as \eqref{Fa.Linfty} and \eqref{g3.Linfty}{,} follo{w} from {\eqref{B1}, \eqref{BA1},} \eqref{BA4} and Proposition~\ref{holder} {(in the particular case $H^{2+k}\subset C^k$)}.

\end{proof}

\section{Estimates for the scalar field}\label{sec.scalar}

\subsection{Decomposition of the inverse metric}

For the purpose of controlling the scalar field, we need to decompose $g^{-1}$. Our strategy is to write it as a sum of the following four terms
\begin{equation}\label{g.inverse.decomp}
g^{-1}= (g_0)^{-1}+g_1' +g_3' +\mbox{Error},
\end{equation}
where $(g_0)^{-1}$ is the inverse of the background metric and for $a=1,3$, $g_a'$ is defined as in \eqref{gp.def}. The ``Error'' term is then defined so that \eqref{g.inverse.decomp} holds. 

In the following propositions, we will give precise estimates for $g_1'$, $g_3'$ and the ``Error'' term. Let us note that we will only need to control the inverse metric when it is multiplied with (the decomposed pieces of) the scalar field. Therefore, it suffices to obtain estimates on the \underline{compact} set $B(0,R_{supp})$ and we need not be concerned with the growth near infinity\footnote{which, of course, is important for the estimates of the metric itself!}.

\begin{proposition}\label{g1p.prop}
$g_1'$ depends only on the background and can be decomposed as follows:
\begin{equation*}
\begin{split}
(g_1')^{\mu\nu}=&\sum_\bA \lambda^2 (  G_{1,1,\bA})^{\mu\nu}\si{\f{u_\bA}{\lambda}}+\sum_{\bA} \lambda^2 (G_{1,2,\bA})^{\mu\nu} \co{\f{2u_\bA}{\lambda}}\\
&+\sum_{\pm}\sum_{\bA}\sum_{\bB\neq \bA} \lambda^2 (G_{1,bil,\bA,\bB,\pm})^{\mu\nu}\co{\f{u_\bA\pm u_\bB}{\lambda}},
\end{split}
\end{equation*}
where each of $G_{1,1,\bA}$, $G_{1,2,\bA}$ and $G_{1,bil,\bA,\bB,\pm}$ is compactly supported in $B(0,R_{supp})$ and they satisfy the estimate
\begin{equation*}
\begin{split}
\|(G_{1,1,\bA},&G_{1,2,\bA},G_{1,bil,\bA,\bB,\pm})\|_{((H^8\cap {C^{8}})(B(0,R_{supp})))^3}\\
&+\|(\rd_tG_{1,1,\bA},\rd_tG_{1,2,\bA},\rd_tG_{1,bil,\bA,\bB,\pm})\|_{((H^7\cap {C^{7}})(B(0,R_{supp})))^3}\leq C(C_0)\ep^2.
\end{split}
\end{equation*}
In particular, we have
$$\sum_{k\leq 8}\lambda^k\|g_1' \|_{(H^k\cap {C^{k}})(B(0,R_{supp}))}+\sum_{k\leq 7}\lambda^{k+1}\|\rd_t g_1' \|_{(H^k\cap {C^{k}})(B(0,R_{supp}))}\leq C(C_0)\ep^2\lambda^2.$$
\end{proposition}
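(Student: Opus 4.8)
The plan is to unravel the definition of $g_1'$ in \eqref{gp.def} by substituting the explicit parametrix expressions for the metric components $\mfg_1$, then to identify the oscillating structure term by term. Recall from the remark following \eqref{gp.def} that $g_1'$ is precisely the linear-in-$\mfg_1$ part of $(g_0+g_1)^{-1} - (g_0)^{-1}$; since each component $\mfg_1$ (for $\mfg\in\{\gamma,N,\beta^i\}$) is given explicitly by \eqref{g1.def} as a background-dependent coefficient times one of the oscillating factors $\co{\frac{2u_\bA}{\lambda}}$, $\si{\frac{u_\bA}{\lambda}}$ or $\co{\frac{u_\bA\pm u_\bB}{\lambda}}$ (all multiplied by $\lambda^2$), the linearity of $g_1'$ in $\mfg_1$ immediately produces a sum of terms of exactly these three oscillation types. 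First I would write down, from \eqref{gp.def}, the matrix-valued linear map $\mfg_1\mapsto g_1'$: each entry of $g_1'$ is a background-dependent linear combination of $\gamma_1$, $N_1$ and $\beta^i_1$ (with coefficients built from $\gamma_0$, $N_0$, $\beta_0^i$). Then I would substitute \eqref{g1.def} for each of $\gamma_1$, $N_1$, $(\beta^i)_1$ and collect terms according to their phase, reading off the coefficients $G_{1,1,\bA}$, $G_{1,2,\bA}$, $G_{1,bil,\bA,\bB,\pm}$ as the resulting background expressions divided by $\lambda^2$.

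The second step is to establish the claimed regularity and smallness of these coefficients. Each $G$ is a finite sum of products of: (i) smooth functions of the background metric components $\gamma_0,N_0,\beta_0^i$ (bounded in $C^\infty$ on the compact set $B(0,R_{supp})$ with all norms $\ls C(C_0)$, by Corollary~\ref{lwp.small}); (ii) factors $\frac{F_\bA^2}{|\nabla u_\bA|^2}$, $\frac{F_\bA}{|\nabla u_\bA|^2}\partial_\mu\phi_0$, or $\frac{F_\bA F_\bB}{|\nabla(u_\bA\pm u_\bB)|^2}$, each of which is controlled in $H^8\cap C^8$ (and its $\partial_t$ in $H^7\cap C^7$) using the $H^{k}$ bounds on $F_\bA$, $\phi_0$ from Corollary~\ref{lwp.small} together with the spatially adapted lower bounds $|\nabla u_\bA|>\frac14$, $|\nabla(u_\bA\pm u_\bB)|>\frac14$ from Definition~\ref{def.spatial} (which make $\frac{1}{|\nabla u_\bA|^2}$ etc.\ smooth with controlled norms — here one uses the algebra/quotient estimates for $H^k\cap C^k$ on a bounded domain, noting $k\geq 10$ so $H^8$ is an algebra embedded in $C^8$). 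Since $F_\bA, \partial\phi_0 = O(\ep)$, each coefficient picks up a factor $\ep^2$ (quadratic in the matter), giving $\|G\|_{H^8\cap C^8}+\|\partial_t G\|_{H^7\cap C^7}\leq C(C_0)\ep^2$. The compact support in $B(0,R_{supp})$ follows since $F_\bA$ is supported there (Lemma~\ref{lm:Rsupp}), and these supports are inherited by every $G$.

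The final step is to derive the consequence $\sum_{k\leq 8}\lambda^k\|g_1'\|_{(H^k\cap C^k)(B(0,R_{supp}))}+\sum_{k\leq 7}\lambda^{k+1}\|\partial_t g_1'\|_{(H^k\cap C^k)(B(0,R_{supp}))}\leq C(C_0)\ep^2\lambda^2$. Since $g_1' = \lambda^2\sum (\text{coeff})\cdot(\text{oscillating factor})$ with oscillating factors of the form $\si{v/\lambda}$ or $\co{v/\lambda}$ where $v\in\{u_\bA, 2u_\bA, u_\bA\pm u_\bB\}$ has $|\nabla v|\gtrsim 1$ and all higher derivatives of $v$ bounded, each spatial derivative hitting the oscillation costs a factor $\lambda^{-1}$ (so $\|\nabla^j \cos(v/\lambda)\|_{C^0}\ls \lambda^{-j}$ by Leibniz on $B(0,R_{supp})$, using boundedness of $\nabla^{\leq j} v$ and $|\nabla v|$ there), while derivatives on the coefficient cost nothing in $\lambda$; summing over the Leibniz distribution and using the $H^8\cap C^8$ bound on the coefficients gives $\lambda^k\|g_1'\|_{H^k\cap C^k}\ls \lambda^2 \cdot C(C_0)\ep^2$ for $k\leq 8$. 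The $\partial_t$ version is identical, noting $\partial_t(\cos(v/\lambda)) = -\lambda^{-1}(\partial_t v)\sin(v/\lambda)$ with $\partial_t v$ bounded in $C^7$ (from the $\partial_t u_\bA$ bounds of Corollary~\ref{lwp.small}), which is the source of the extra $\lambda^{-1}$, hence the $\lambda^{k+1}$ weight and the range $k\leq 7$ (one derivative budget is spent). The main obstacle — really the only non-bookkeeping point — is bundling the quotient/product estimates cleanly: one must verify that $\frac{1}{|\nabla(u_\bA\pm u_\bB)|^2}$ and the like lie in $H^8\cap C^8$ on $B(0,R_{supp})$ with norm $\ls C(C_0)$, which requires both the uniform lower bound from the spatially adapted condition (Definition~\ref{def.spatial}) and the higher Sobolev control on $\nabla u_\bA - \overrightarrow{c_\bA}$ from Corollary~\ref{lwp.small}; everything else is a routine, if tedious, expansion of \eqref{gp.def} and \eqref{g1.def}.
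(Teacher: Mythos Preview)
Your proposal is correct and follows exactly the same approach as the paper, which simply records this as an immediate consequence of the definitions \eqref{gp.def}, \eqref{g1.def} and the background estimates in Corollary~\ref{lwp.small}. You have merely spelled out in detail what ``immediate'' means here; there is no substantive difference.
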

\begin{proof}
This is an immediate consequence of the definitions \eqref{gp.def}, \eqref{g1.def} and the estimates on the background solution (see Corollary \ref{lwp.small}).
\end{proof}

\begin{proposition}\label{g3p.prop}
On the compact set $B(0,R_{supp}{+1})$, $g_3'$ satisfies {the following $H^k$ estimates}
$$\sum_{k\leq 3}\lambda^k\|g_3'\|_{H^{2+k}(B(0,R_{supp}{+1}))}+\sum_{k\leq 2}\lambda^{k+1}\|\rd_t g_3'\|_{H^{1+k}(B(0,R_{supp}{+1}))}\leq C(C_1)\ep \lambda^2{,}$$
{and the following $C^k$ estimates}
{$$\sum_{k\leq 3} \lambda^k \|g_3'\|_{C^k(B(0,R_{supp}+1))} + \sum_{k\leq 2} \lambda^{k+1} \|\rd_t g_3'\|_{C^k(B(0,R_{supp}+1))}\leq C(C_1)\ep\lambda^2. $$
}
\end{proposition}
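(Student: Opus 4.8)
The plan is to establish the claimed bounds for $g_3'$ directly from the definition of $g_3'$ as the term linear in $\mfg_3$ of $(g_0+g_3)^{-1}-(g_0)^{-1}$, together with the bootstrap assumptions \eqref{BA4} (which control $\wht\mfg_3$ in weighted Sobolev spaces) and the background estimates from Corollary~\ref{lwp.small}.

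\begin{proof}
Recall from the construction in Section~\ref{sec.para.phi} that $g_3'$ is the linear-in-$\mfg_3$ term of $(g_0+g_3)^{-1}-(g_0)^{-1}$, so explicitly $(g_3')^{\mu\nu}$ is given by the same formula as \eqref{gp.def} with $a=3$, i.e.\ a sum of products of $\mfg_3$ components (namely $N_3$, $\gamma_3$, $\beta_3^i$) with rational expressions in the background components $N_0, \gamma_0, \beta_0^i$. By the decomposition $\mfg_3=(\mfg_3)_{asymp}(t)\chi(|x|)\log(|x|)+\wht\mfg_3$, on the compact set $B(0,R_{supp}+1)$ the cutoff $\chi(|x|)\log(|x|)$ is a smooth, bounded function with all derivatives bounded, so \eqref{BA4} gives
$$\sum_{k\leq 3}\lambda^k\|\mfg_3\|_{H^{2+k}(B(0,R_{supp}+1))}+\sum_{k\leq 2}\lambda^{k+1}\|\rd_t\mfg_3\|_{H^{2+k}(B(0,R_{supp}+1))}\leq C(C_1)\ep\lambda^2,$$
using that on a bounded domain the unweighted $H^{2+k}$ norm is controlled by the weighted $H^{2+k}_\delta$ norm. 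For the background factors, Corollary~\ref{lwp.small} bounds $\gamma_0$ (after subtracting the logarithmic part, which is again harmless on the compact set), $N_0-1$, $\beta_0^i$ and their $t$-derivatives in $H^{k+2}_\delta$ (hence in $H^{\geq 4}$ on $B(0,R_{supp}+1)$) by $C_0\ep$, and in particular $N_0$ is bounded below away from zero so that the rational expressions in \eqref{gp.def} are smooth functions of the background data with the same bounds.

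The $H^k$ estimate then follows from the weighted (or here, bounded-domain) product estimates in Appendix~\ref{weightedsobolev}: each term of $(g_3')^{\mu\nu}$ is a product of a $\mfg_3$-factor obeying the displayed bound above and a background factor lying in $H^{k+2}$ on $B(0,R_{supp}+1)$ with $k\geq 3$, so by the algebra property of $H^m$ on a bounded domain for $m\geq 2$ (and since we are estimating at most three spatial derivatives, well within the $H^{\geq 4}$ regularity of the background) the product lies in the same space as the $\mfg_3$-factor with a loss of only a constant $C(C_0)$. Counting $\lambda$-weights, this produces exactly $\sum_{k\leq 3}\lambda^k\|g_3'\|_{H^{2+k}(B(0,R_{supp}+1))}\leq C(C_1)\ep\lambda^2$, and the analogous argument applied to $\rd_t g_3'=\rd_t\mfg_3\cdot(\text{bkgd})+\mfg_3\cdot\rd_t(\text{bkgd})$, using the $\rd_t\mfg_3$ bound from \eqref{BA4} and the $\rd_t$(background) bounds from Corollary~\ref{lwp.small}, yields the stated bound with one extra power of $\lambda$ and one fewer derivative. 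The $C^k$ estimates are obtained from the $H^k$ estimates by the Sobolev embedding $H^{2+k}(B(0,R_{supp}+1))\hookrightarrow C^k(B(0,R_{supp}+1))$ (Proposition~\ref{scaled.Sobolev} / Proposition~\ref{holder}), since on a bounded domain $H^{k+2}\subset C^k$ with $k+2-1>2/2$; no new $\lambda$-weights are introduced because the Sobolev constant is $\lambda$-independent.

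The only mild subtlety — and the main point to check — is that all of this is carried out on the \emph{compact} set $B(0,R_{supp}+1)$, so we may freely use unweighted norms and the bounded-domain algebra property, and we never need to control $g_3'$ near spatial infinity; the enlargement of the support radius from $R_{supp}$ in Lemma~\ref{lm:Rsupp} to $R_{supp}+1$ here is precisely to leave room for the domain-of-dependence arguments later when $g_3'$ is multiplied against the scalar-field pieces, and plays no role in the estimate itself. The bounds we just derived are consistent with (indeed slightly stronger in the weight than) \eqref{BA4}, which is what is needed.
\end{proof}
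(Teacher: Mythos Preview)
Your proof is correct and follows essentially the same approach as the paper's (one-line) argument, namely combining the definition \eqref{gp.def} with \eqref{BA4} and the Sobolev embedding behind \eqref{g3.Linfty}; you have simply made the details explicit. One very minor imprecision: the entry $e^{-2\gamma_0}(-1+e^{-2\gamma_3})$ in \eqref{gp.def} is not literally a product of a $\mfg_3$-component with a background rational expression, but since $(-1+e^{-2\gamma_3})$ obeys the same bounds as $\gamma_3$ on the compact set, this does not affect your argument.
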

\begin{proof}
This is an immediate consequence of the definition of $g_3'$ in \eqref{gp.def} together with the bootstrap assumption \eqref{BA4} {and the estimate \eqref{g3.Linfty}}.
\end{proof}

\begin{proposition}\label{gp.error.prop}
On the compact set $B(0,R_{supp}{+1})$, the error term in \eqref{g.inverse.decomp} can be controlled {in $H^k$} as follows:
\begin{equation*}
\begin{split}
\sum_{k\leq 5}\lambda^k &\|g^{-1}- (g_0)^{-1} - g_1' - g_3'\|_{H^k(B(0,R_{supp}{+1}))}\\
&+\sum_{k\leq 4}\lambda^{k+1} \|\rd_t(g^{-1}- (g_0)^{-1} - g_1' - g_3')\|_{H^k(B(0,R_{supp}{+1}))}\leq C(C_1) \ep \lambda^3,
\end{split}
\end{equation*}
{and in $C^k$ as follows:}
\begin{equation*}
\begin{split}
 {\sum_{k\leq 3}\lambda^k} &\|g^{-1}- (g_0)^{-1} - g_1' - g_3'\|_{{C^k(B(0,R_{supp}+1))}}\\
&+ {\sum_{k\leq 2}}\lambda^{{k+1}}\|\partial(g^{-1}- (g_0)^{-1} - g_1' - g_3')\|_{{C^k(B(0,R_{supp}+1))}}\leq C(C_1) \ep \lambda^2. 
\end{split}
\end{equation*}

\end{proposition}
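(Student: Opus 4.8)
The plan is to Taylor-expand $g^{-1}$ about the background metric and to match the first-order term against $g_1'+g_3'$. Formula \eqref{g.inverse} exhibits $g^{-1}$ as $\Psi(\gamma,N,\bt^1,\bt^2)$, where $\Psi$ is smooth (componentwise) on the open region $\{N>0\}$; on $B(0,R_{supp}+1)$ the bootstrap assumptions keep $N=N_0+N_1+N_2+N_3$ — and every convex combination $N_0+\theta(N_1+N_2+N_3)$, $\theta\in[0,1]$ — inside a fixed compact subinterval of $(0,\infty)$ (as $N_0\gtrsim 1$ there by Corollary~\ref{lwp.small} while $N_1,N_2,N_3$ are small), so Taylor's theorem applies there. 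Write $\mfg_{123}:=\mfg_1+\mfg_2+\mfg_3=\mfg-\mfg_0$ for each component $\mfg\in\{\gamma,N,\bt^1,\bt^2\}$, let $P$ be the exact first-order Taylor term of $g^{-1}-(g_0)^{-1}$ in $(\gamma_{123},N_{123},\bt_{123}^1,\bt_{123}^2)$ and $\mathcal Q$ the corresponding second-order integral remainder; thus $g^{-1}-(g_0)^{-1}=P+\mathcal Q$, where $P=P_1+P_2+P_3$ with $P_a$ linear in $\mfg_a$ and with coefficients depending only on the background, and where $\mathcal Q$ is a finite sum of terms of the form $c\,\mfg_{123}^{(a)}\mfg_{123}^{(b)}$ with $\mfg_{123}^{(a)},\mfg_{123}^{(b)}\in\{\gamma_{123},N_{123},\bt_{123}^1,\bt_{123}^2\}$ and $c$ a $\theta$-average over $[0,1]$ of second-order derivatives of $\Psi$ (in its arguments) evaluated at $\mfg_0+\theta\mfg_{123}$. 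By the construction of $g_a'$ (the remark after \eqref{gp.def}; equivalently, by direct inspection of \eqref{gp.def}, whose only nonlinearity in $\mfg_a$ is the factor $e^{-2\gamma_a}$ in two diagonal entries) one has $g_a'=P_a+q_a$ for $a=1,3$, where $q_a$ is a finite sum of (smooth functions of the background) times monomials of degree $\geq 2$ in $\gamma_a$. Hence, on $B(0,R_{supp}+1)$,
\begin{equation*}
g^{-1}-(g_0)^{-1}-g_1'-g_3'=P_2+(\mathcal Q-q_1-q_3),
\end{equation*}
and it remains to estimate the two pieces there.

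For $P_2$: its coefficients are smooth functions of $(\gamma_0,N_0,\bt_0)$, hence bounded together with all their spatial and $\rd_t$ derivatives on $B(0,R_{supp}+1)$ by Corollary~\ref{lwp.small} (the $\chi(|x|)\log|x|$ parts of $\gamma_0,N_0$ being smooth and bounded there). Multiplying by $\mfg_2$ and invoking the product estimates of Appendix~\ref{weightedsobolev} with the $H^k$ and $\rd_t$ bounds of \eqref{BA3} and the pointwise bounds \eqref{g2.Linfty} yields $C(C_1)\ep\lambda^3$ in the $H^k$-norms and $C(C_1)\ep\lambda^2$ in the $C^k$-norms. For $\mathcal Q-q_1-q_3$: each term is a coefficient times at least two factors from $\{\gamma_a,N_a,\bt_a^i:a=1,2,3\}$; the coefficients are either functions of the background alone ($q_1,q_3$) or values of second-order derivatives of $\Psi$ along $\mfg_0+\theta\mfg_{123}$, and in the latter case the composition (Moser-type) estimates of Appendix~\ref{weightedsobolev} bound them — uniformly in $\theta\in[0,1]$ — in the $\lambda$-rescaled norm $\sum_j\lambda^j\|\cdot\|_{H^j(B(0,R_{supp}+1))}$ (and with one $\rd_t$) by a constant $C(C_1)$, since $\mfg_0+\theta\mfg_{123}$ is bounded in that norm (Corollary~\ref{lwp.small} for the background, \eqref{B2}--\eqref{BA4} for the perturbations) and stays in the smoothness region of $\Psi$. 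Now, measured in these same $\lambda$-rescaled Sobolev norms (and their $C^k$ analogues), each of $\mfg_1,\mfg_3$ and $\rd_t\mfg_1,\rd_t\mfg_3$ is $O(C_1\ep\lambda^2)$ and each of $\mfg_2,\rd_t\mfg_2$ is $O(C_1\ep\lambda^3)$ on the compact set — precisely \eqref{B2}--\eqref{BA4}, \eqref{g2.Linfty}, \eqref{g3.Linfty}, after discarding the weight and the $\chi(|x|)\log|x|$ tail of $\mfg_3$ at the cost of a fixed constant. Hence every term of $\mathcal Q-q_1-q_3$, having two perturbation factors, is $O(C_1^2\ep^2\lambda^4)$ in those norms, so, using $\lambda\ll\ep$, it obeys $C(C_1)\ep^2\lambda^4\leq C(C_1)\ep\lambda^3$ in the $H^k$-norms and $\leq C(C_1)\ep\lambda^2$ in the $C^k$-norms; the $\sum_{k\leq 2}\lambda^{k+1}\|\rd(\cdots)\|_{C^k}$ term is dominated by the $\sum_{k\leq 3}\lambda^k\|\cdots\|_{C^k}$ term.

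I expect the only genuinely delicate step to be this last estimate of $\mathcal Q-q_1-q_3$: one must apply the product estimates so that in each monomial \emph{every} perturbation factor is measured in a $\lambda$-rescaled norm and hence contributes a power of $\lambda^2$ — a naive split forcing two factors into an unrescaled norm such as $\|\mfg_1\|_{H^2}\sim\ep^2$ would only give $O(\ep^4\lambda^2)$, which is not admissible — and working throughout in the $\lambda$-weighted spaces, in which $\mfg_1,\mfg_2,\mfg_3$ (and their $\rd_t$ derivatives, with the extra $\lambda$-weight dictated by \eqref{B2}--\eqref{BA4}) are uniformly small across the relevant derivative orders, is what closes the argument. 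The $\rd_t$ estimates introduce no new difficulty beyond carrying the $\rd_t$-weighted norms through the same product and composition estimates.
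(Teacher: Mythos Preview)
Your argument is correct and follows the same route as the paper's one-line proof: expand $g^{-1}-(g_0)^{-1}$ so that, after subtracting $g_1'+g_3'$, only terms linear in $\mfg_2$ or at least quadratic in $(\mfg_1,\mfg_2,\mfg_3)$ remain, then invoke \eqref{B2}, \eqref{BA3}, \eqref{BA4}, \eqref{g2.Linfty}, \eqref{g3.Linfty}. Two small remarks: the inequality $\ep^2\lambda^4\leq\ep\lambda^3$ needs only $\ep\lambda\leq 1$, not $\lambda\ll\ep$; and the ``Moser-type estimates'' you cite are not stated in Appendix~\ref{weightedsobolev}, though since $g^{-1}$ is an explicit rational/exponential expression in $(N,\gamma,\beta)$ the needed composition bounds follow directly from the product estimates there.
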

\begin{proof}
$g_1'$ and $g_3'$ are defined so that $g^{-1}- (g_0)^{-1} - g_1' - g_3'$ can be expanded into terms which are either linear in $\mfg_2$ or at least quadratic in $\mfg_1$, $\mfg_2$ and $\mfg_3$. It is then easy to check that all these terms obey the desired bounds thanks to \eqref{B2}, \eqref{BA3}, {\eqref{BA4}, \eqref{g2.Linfty} and \eqref{g3.Linfty}}.
\end{proof}

As a consequence of the decomposition \eqref{g.inverse.decomp}, we also obtain the following estimate for $\Box_g-\Box_{g_0}$:
\begin{proposition}\label{box.diff.prop}
Let $f$ be compactly supported on $B(0,R_{supp})$. Then, for $k\leq 4$,
\begin{equation}\label{box.diff}
\begin{split}
&\left\|(\Box_g-\Box_{g_0})f-\left(\rd_\alpha (g'_1)^{\alpha\beta}+(g_0^{-1})^{\alpha\beta}(2\rd_\alpha\gamma_1+\f{\rd_\alpha N_1}{N})+\rd_t (g'_3)^{t\beta}+(g_0^{-1})^{t\beta}\f{\rd_t N_3}{N_0}\right)(\rd_\beta f)\right\|_{H^k}\\
\leq &C(C_1)\ep\left(\lambda\|\partial f\|_{H^{k}}+\lambda^2\|\partial^2 f\|_{H^{{k}}}+ \lambda^{2-k}\|\partial f\|_{L^\infty}+ \lambda^{2-k}\|\partial^2 f\|_{L^\infty}\right)
.\end{split}
\end{equation}
\end{proposition}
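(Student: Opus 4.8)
The plan is to expand $\Box_g - \Box_{g_0}$ explicitly using the coordinate formula for the wave operator and the decomposition \eqref{g.inverse.decomp} of the inverse metric, keeping only the terms that are linear in the ``large'' high-frequency pieces $g_1'$ and $\partial_t g_3'$ (which are the ones that produce the claimed main term), and then estimating the remainder. Recall
$$\Box_g f = \frac{1}{\sqrt{|\det g|}}\rd_\mu\!\left( (g^{-1})^{\mu\nu}\sqrt{|\det g|}\,\rd_\nu f\right) = (g^{-1})^{\mu\nu}\rd_\mu\rd_\nu f + \left(\rd_\mu (g^{-1})^{\mu\nu} + (g^{-1})^{\mu\nu}\rd_\mu\log\sqrt{|\det g|}\right)\rd_\nu f,$$
and similarly for $\Box_{g_0}$. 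Taking the difference, the second-order part contributes $\big((g^{-1})^{\mu\nu}-(g_0^{-1})^{\mu\nu}\big)\rd_\mu\rd_\nu f$, and the first-order part contributes $\big(\rd_\mu((g^{-1})^{\mu\nu}-(g_0^{-1})^{\mu\nu})\big)\rd_\nu f$ together with terms coming from the difference of the $\log\sqrt{|\det g|}$ factors. Using \eqref{g.det}, $\sqrt{|\det g|} = e^{2\gamma}N$, so $\log\sqrt{|\det g|} = 2\gamma + \log N$, and the difference of these factors between $g$ and $g_0$ is governed by $2(\gamma - \gamma_0) + (\log N - \log N_0) = 2(\gamma_1+\gamma_2+\gamma_3) + \log(N/N_0)$.

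First I would insert $g^{-1}-(g_0)^{-1} = g_1' + g_3' + \mathrm{Error}$ from \eqref{g.inverse.decomp} into the above. The second-order contribution $(g_1' + g_3')^{\mu\nu}\rd_\mu\rd_\nu f$ and the $\mathrm{Error}$ contribution will both be part of what is estimated; note that $g_1'$ carries a factor $\lambda^2$ and oscillates, $g_3'$ is of size $C(C_1)\ep\lambda^2$ (Proposition~\ref{g3p.prop}), and $\mathrm{Error}$ is of size $C(C_1)\ep\lambda^3$ (Proposition~\ref{gp.error.prop}); each of these feeds into the $\lambda^2\|\partial^2 f\|_{H^k}$ (and, via $C^k$ bounds and Sobolev on the compact support set, the $\lambda^{2-k}\|\partial^2 f\|_{L^\infty}$) term on the right-hand side — here one uses that the $\lambda^{-k}$ loss in $H^k$ when differentiating oscillatory $g_1'$ is exactly compensated by its $\lambda^2$ smallness and by \eqref{B2}. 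For the first-order contribution one writes $\rd_\mu (g_1'+g_3'+\mathrm{Error})^{\mu\nu}$; the term $\rd_\alpha (g_1')^{\alpha\beta}$ appears verbatim in the claimed main term, $\rd_t (g_3')^{t\beta}$ appears as well (the spatial derivatives $\rd_i (g_3')^{i\beta}$ being absorbed into the error since $\nabla g_3'$ has better estimates than $\rd_t g_3'$, cf. \eqref{BA4}), and $\rd_\mu \mathrm{Error}$ is small. Finally the $\log\sqrt{|\det g|}$ difference: expanding $2(\gamma-\gamma_0) + \log(N/N_0)$ to first order in the perturbations gives $2\gamma_1 + \frac{N_1}{N_0} + (\text{higher order / } \gamma_2,\gamma_3, N_2, N_3 \text{ terms})$, and when multiplied by $(g_0^{-1})^{\mu\nu}\rd_\nu f$ this reproduces $(g_0^{-1})^{\alpha\beta}(2\rd_\alpha\gamma_1 + \frac{\rd_\alpha N_1}{N})\rd_\beta f$ plus a $\rd_t$-term $(g_0^{-1})^{t\beta}\frac{\rd_t N_3}{N_0}\rd_\beta f$ (the $\gamma_3$ contribution being controlled better — this is where one would invoke the improved bound \eqref{BA5} on $\rd_t\wht\gamma_3$, or note $\gamma_2$-type remainders are $O(\lambda^3)$) and a genuinely small remainder.

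The estimate is then completed by collecting all the remainder terms — products of two or more perturbation factors, the $\mathrm{Error}$ term and its derivative, the spatial derivatives of $g_3'$, the $\gamma_2, N_2, \gamma_3$-type first-order factors, and the $O(\lambda^3)$ pieces — and bounding each in $H^k$ for $k\le 4$ using: the bootstrap bounds \eqref{B2}, \eqref{BA2}, \eqref{BA3}, \eqref{BA4}, \eqref{BA5}; the $C^k$ consequences in Proposition~\ref{prp:Linfty.BA}; Propositions~\ref{g1p.prop}, \ref{g3p.prop}, \ref{gp.error.prop}; and the weighted-Sobolev product estimates of Appendix~\ref{weightedsobolev}. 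The key mechanical point throughout is the bookkeeping of $\lambda$-powers: each spatial derivative hitting an oscillatory factor $e^{i u_\bA/\lambda}$ costs $\lambda^{-1}$, so a term with $k$ derivatives and an intrinsic size $\lambda^m$ contributes $\lambda^{m-k}$, and one must check that $m-k$ always meets or beats the target exponents $1$, $2$, $2-k$, $2-k$ on the right-hand side; the compact support on $B(0,R_{supp})$ (Lemma~\ref{lm:Rsupp}) is what lets us trade $H^k$ for $C^k$ (via the embeddings in Appendix~\ref{weightedsobolev}) and thereby produce the $L^\infty$-type terms $\lambda^{2-k}\|\partial f\|_{L^\infty}$ and $\lambda^{2-k}\|\partial^2 f\|_{L^\infty}$ on the right. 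I expect the main obstacle to be precisely this bookkeeping — in particular verifying that the $\rd_t g_3'$ and $\rd_t N_3$ terms, which by the bootstrap assumptions have the worst estimates among the metric pieces, are nonetheless \emph{exactly} the ones displayed in the main term so that no uncontrolled $\rd_t$-derivative of the metric is left in the remainder; this is the structural reason the proposition is stated with that specific main term rather than with $\Box_g - \Box_{g_0}$ simply estimated in its entirety.
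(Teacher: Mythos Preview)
Your proposal is correct and follows essentially the same approach as the paper: decompose $(\Box_g-\Box_{g_0})f$ into the second-order piece $(g^{-1}-g_0^{-1})^{\alpha\beta}\rd_{\alpha\beta}^2 f$, the first-order piece $\rd_\alpha(g^{-1}-g_0^{-1})^{\alpha\beta}\rd_\beta f$, and the $\log\sqrt{|\det g|}$-difference piece, then insert \eqref{g.inverse.decomp} and isolate exactly the four main terms, using \eqref{BA4} to absorb spatial derivatives of $g_3'$ and \eqref{BA5} to absorb $\rd_t\gamma_3$. Two minor remarks: the relevant product estimate is the unweighted one on the compact support set (Proposition~\ref{product.local}) rather than the weighted estimates, and \eqref{BA2} plays no role here.
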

\begin{proof}
The difference of the wave operators takes the form
\begin{equation*}
\begin{split}
(\Box_g-\Box_{g_0})f=&\underbrace{(\gi^{\alpha\beta}-(g^{-1}_0)^{\alpha\beta})\rd^2_{\alpha\beta}f}_{=:Error_1}+\underbrace{\left(\rd_\alpha(\gi^{\alpha\beta}-(g^{-1}_0)^{\alpha\beta})\right)\rd_\beta f}_{=:I}\\
&+\underbrace{\f 12 \left(\gi^{\alpha\beta}\rd_\alpha\log |\det g|-(g_0^{-1})^{\alpha\beta}\rd_\alpha\log |\det g_0|\right)\rd_\beta f}_{=:II}.
\end{split}
\end{equation*}
First, notice that $Error_1$ is an acceptable error\footnote{in the sense that it can be bounded by the {RHS} of \eqref{box.diff}. We will also use the same language below without further comment.} by Propositions \ref{g1p.prop}, \ref{g3p.prop}{,} \ref{gp.error.prop} {and \ref{product.local}:}
\begin{equation*}
\begin{split}
& \|(\gi^{\alpha\beta}-(g^{-1}_0)^{\alpha\beta})\rd^2_{\alpha\beta}f\|_{H^k}\\
\lesssim & \|\gi - g^{-1}_0\|_{L^\infty(B(0,R_{supp}+1))}\|\partial^2 f\|_{H^k}+ \|\partial^2 f\|_{L^\infty}
\|\gi - g^{-1}_0\|_{H^k(B(0,R_{supp}+1))}\\
\leq & C(C_1) \ep\lambda^2 \|\partial^2 f\|_{H^k}+C(C_1)\ep\lambda^{2-k} \|\partial^2 f\|_{L^\infty}.
\end{split}
\end{equation*}
Next, we consider the term $I$. By Proposition \ref{gp.error.prop}, it suffices to consider the contributions from $g_1'$ and $g_3'$ as the remainder is an acceptable error. The contribution from $g_1'$ is precisely $(\rd_\alpha (g'_1)^{\alpha\beta})(\rd_\beta f)$. For the contribution from $g_3'$, we have $(\rd_\alpha (g'_3)^{\alpha\beta})(\rd_\beta f)$. Then notice that by Proposition \ref{g3p.prop}, if $\alpha=1$ or $\alpha=2$, this term is an acceptable error. 
More precisely, we estimate using Proposition~\ref{product.local}
\begin{equation}\label{nablag3}
\begin{split}
\|\nabla g'_3\partial f\|_{H^k}
\lesssim & \|\nabla g'_3\|_{L^\infty{(B(0,R_{supp}+1)}}\|\partial f\|_{H^k}+ \|\partial f\|_{L^\infty}
\|\nabla g'_3\|_{H^k{(B(0,R_{supp}+1)}}\\
\leq & {C(C_1)}\ep\lambda \|\partial f\|_{H^k} + {C(C_1)}\ep\lambda^{2-k} \|\partial f\|_{L^\infty},
\end{split}
\end{equation}
where we have used {\eqref{BA4} and \eqref{g3.Linfty}.}
We are thus left with $(\rd_t (g'_3)^{t\beta})(\rd_\beta f)$, which is one of the main terms.

Finally, we compute the term $II$ using \eqref{g.det} and suppress all the terms which according to Propositions \ref{g1p.prop}, \ref{g3p.prop} and \ref{gp.error.prop} can be treated as acceptable errors:
\begin{equation*}
\begin{split}
&\f 12 \left(\gi^{\alpha\beta}\rd_\alpha\log |\det g|-(g_0^{-1})^{\alpha\beta}\rd_\alpha\log |\det g_0|\right)\\
=& \gi^{\alpha\beta}\rd_\alpha (2\gamma+\log N)-(g_0^{-1})^{\alpha\beta}\rd_\alpha (2\gamma_0+\log N_0)\\
=& (g_0^{-1})^{\alpha\beta}\rd_\alpha(2\gamma_1+2 \gamma_3+\log(1+\f{N_1+N_3}{N}))+\left((g_1')^{\alpha\beta}+(g_3')^{\alpha\beta}\right)\rd_\alpha (2\gamma_0+\log N_0)+\dots\\
=&(g_0^{-1})^{\alpha\beta}\rd_\alpha(2\gamma_1+2 \gamma_3)+(g_0^{-1})^{\alpha\beta}\f{\rd_\alpha(N_1+N_3)}{N+N_1+N_3}+\dots\\
=&(g_0^{-1})^{\alpha\beta}(2\rd_\alpha\gamma_1+\f{\rd_\alpha N_1}{N})+(g_0^{-1})^{t\beta}\f{\rd_t N_3}{N_0}+\dots,
\end{split}
\end{equation*}
where in the last line we have used the fact that (according to \eqref{BA4}) we only need to treat the $\rd_t$ derivative of $\mfg_3$ and moreover that $\rd_t\gamma_3$ satisfies better bounds according to \eqref{BA5}. More precisely{,} we {estimate using Proposition~\ref{product.local}} 
\begin{equation}\label{rdtgammardf}
\|\partial_t \gamma_3 \partial f\|_{H^k}
\lesssim \|\partial_t  \gamma_3\|_{L^\infty}\|\partial f\|_{H^k}
+\|\partial f \|_{L^\infty}\|\partial_t \gamma_3\|_{H^k}
{\leq} {C(C_1)}\ep \lambda \|\partial f\|_{H^k}
+ {C(C_1)} \ep\lambda^{2-k}\|\partial f\|_{L^\infty},
\end{equation}
where {for $\|\partial_t  \gamma_3\|_{L^\infty}$, we have used \eqref{g3.Linfty}; and for $\|\partial_t  \gamma_3\|_{H^k}$, we have used} \eqref{BA5} for $k=0$ and \eqref{BA4} for $k\geq 1$.
\end{proof}

{We also have another variant of} {Proposition~\ref{box.diff.prop}, which requires more integrability for derivatives of $f$:}

\begin{prp}
\label{box.diff.bis.prop}
Let $f$ be compactly supported on $B(0,R_{supp})$. Then, for $k\leq 4$,
\begin{equation}\label{box.diff.bis}
\begin{split}
&\left\|(\Box_g-\Box_{g_0})f-\left(\rd_\alpha (g'_1)^{\alpha\beta}+(g_0^{-1})^{\alpha\beta}(2\rd_\alpha\gamma_1+\f{\rd_\alpha N_1}{N})+\rd_t (g'_3)^{t\beta}+(g_0^{-1})^{t\beta}\f{\rd_t N_3}{N_0}\right)(\rd_\beta f) \right\|_{H^k}\\
\leq &C(C_1)\ep
\sum_{{\ell} \leq k} \lambda^{2-{\ell}}\left(\|\partial^2 f\|_{{C^{k-{\ell}}}}{+ \|\partial f\|_{C^{k-{\ell}}}}\right).
\end{split}
\end{equation}
\end{prp}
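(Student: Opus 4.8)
The plan is to run the proof of Proposition~\ref{box.diff.prop} verbatim, changing only the way the product estimate is invoked. As there, decompose $(\Box_g - \Box_{g_0})f = Error_1 + I + II$ with
\[
Error_1 = (\gi^{\alpha\beta} - (g_0^{-1})^{\alpha\beta})\rd^2_{\alpha\beta}f, \quad I = \big(\rd_\alpha(\gi^{\alpha\beta} - (g_0^{-1})^{\alpha\beta})\big)\rd_\beta f,
\]
and $II = \tfrac12\big(\gi^{\alpha\beta}\rd_\alpha\log|\det g| - (g_0^{-1})^{\alpha\beta}\rd_\alpha\log|\det g_0|\big)\rd_\beta f$. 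Using \eqref{g.inverse.decomp}, \eqref{g.det} and the expansion of $\log(1 + (N_1+N_3)/N)$ exactly as in Proposition~\ref{box.diff.prop}, one peels off the same main terms — namely $\rd_\alpha(g_1')^{\alpha\beta}\rd_\beta f$ and $\rd_t(g_3')^{t\beta}\rd_\beta f$ from $I$, and $(g_0^{-1})^{\alpha\beta}(2\rd_\alpha\gamma_1 + \rd_\alpha N_1/N)\rd_\beta f$ together with $(g_0^{-1})^{t\beta}(\rd_t N_3/N_0)\rd_\beta f$ from $II$ — and is left to estimate a finite list of error terms, each of the form $u\,\rd_\beta f$ or $u\,\rd^2_{\alpha\beta}f$ for a metric-difference factor $u$ which, since $f$ is supported in $B(0,R_{supp})$, need only be controlled on $B(0,R_{supp}+1)$.

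The only change is that instead of the split $\|uv\|_{H^k} \lesssim \|u\|_{L^\infty}\|v\|_{H^k} + \|u\|_{H^k}\|v\|_{L^\infty}$ used before, I apply a Leibniz-type product estimate (Proposition~\ref{product.local}, or simply $\|\rd^k(uv)\|_{L^2} \lesssim \sum_{\ell\le k}\|\rd^\ell u\|_{L^2}\|\rd^{k-\ell}v\|_{L^\infty}$) in the form
\[
\|u\,\rd f\|_{H^k(B(0,R_{supp}))} \lesssim \sum_{\ell\le k}\|u\|_{H^\ell(B(0,R_{supp}+1))}\,\|\rd f\|_{C^{k-\ell}},
\]
and likewise with $\rd^2 f$ in place of $\rd f$ (when $u$ also carries an undifferentiated background factor such as $\rd(2\gamma_0 + \log N_0)$, its $C^{k-\ell}$ norm is $O(\ep)$ by Corollary~\ref{lwp.small} and is simply absorbed). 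It then suffices to observe that every metric-difference factor $u$ occurring in an error term satisfies
\[
\|u\|_{H^\ell(B(0,R_{supp}+1))} \le C(C_1)\,\ep\,\lambda^{2-\ell}, \qquad 0\le\ell\le 4.
\]
Indeed: for $u = \gi - (g_0)^{-1} - g_1' - g_3'$ and for $u = \rd(\gi - (g_0)^{-1} - g_1' - g_3')$ this follows from Proposition~\ref{gp.error.prop} (in fact with $\lambda^{3-\ell}$); for $u = g_1'$ from Proposition~\ref{g1p.prop}; for $u = g_3'$ and for its spatial derivatives $u = \rd_i g_3'$ (only these enter, cf.\ \eqref{nablag3}) from Proposition~\ref{g3p.prop} and \eqref{g3.Linfty}; for $u = \rd_i\gamma_3$, $\rd_i N_3$ from \eqref{BA4} and \eqref{g3.Linfty}; for $u = \rd\gamma_2$, $\rd N_2$ from \eqref{BA3}; and for the single delicate factor $u = \rd_t\gamma_3$ from \eqref{BA5} at $\ell = 0$ (which supplies exactly the needed $\lambda^2$) and from \eqref{BA4} at $1\le\ell\le 4$ (which gives $\lambda^{3-\ell}$, even better than required). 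Summing the resulting bounds over $\ell$ produces precisely the right-hand side of \eqref{box.diff.bis}.

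The only real point to check — and the main, mild, obstacle — is the bookkeeping: one must verify that after distributing the $k\le 4$ derivatives over the two factors the metric factor never carries more than four derivatives, so that the displayed $\lambda^{2-\ell}$ bounds apply (this is the source of the restriction $k\le 4$), and one must ensure that the single genuinely marginal term $\rd_t\gamma_3\,\rd_\beta f$ is handled by \eqref{BA5}, not \eqref{BA4}, at the bottom order $\ell = 0$: \eqref{BA4} alone would only give $\|\rd_t\gamma_3\|_{L^2}\lesssim\|\rd_t\gamma_3\|_{H^2}\lesssim C_1\ep\lambda$, which is one power of $\lambda$ short. Every other error term is at least linear in $\mfg_2$, or in the higher-order remainder $\gi - (g_0)^{-1} - g_1' - g_3'$, or in a spatial (rather than $\rd_t$) derivative of $\mfg_3$, hence carries a spare power of $\lambda$, leaving ample room.
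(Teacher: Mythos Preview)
Your proposal is correct and follows essentially the same approach as the paper: refer back to the decomposition in Proposition~\ref{box.diff.prop}, replace the two-term product split $\|uv\|_{H^k}\lesssim \|u\|_{L^\infty}\|v\|_{H^k}+\|u\|_{H^k}\|v\|_{L^\infty}$ by the Leibniz-type estimate $\|uv\|_{H^k}\lesssim \sum_{\ell\le k}\|u\|_{H^\ell}\|v\|_{C^{k-\ell}}$, and note that the only marginal term $\rd_t\gamma_3\,\rd f$ is saved precisely by invoking \eqref{BA5} at $\ell=0$ and \eqref{BA4} for $\ell\ge 1$. The paper's proof is terser (it only rewrites the two estimates \eqref{nablag3} and \eqref{rdtgammardf} and notes that compact support gives $\|\rd f\|_{H^k}\lesssim\|\rd f\|_{C^k}$ for the rest), but the content is the same.
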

\begin{proof}

Since $supp(f)\subset B(0,R_{supp})$, $\|\rd f\|_{H^k}\ls \|\rd f\|_{C^{k}}$ and $\|\rd^2 f\|_{H^k}\ls \|\rd^2 f\|_{C^{k}}$. We can therefore proceed in the same way {as} in the proof of {Proposition~\ref{box.diff.prop}}, except that we need to improve the estimate \eqref{nablag3} and \eqref{rdtgammardf} in terms of the power of $\lambda$.
 
{For \eqref{nablag3}, we have 
$$
\|\nabla g'_3\partial f\|_{H^k}
\lesssim \sum_{\ell\leq k} \|\nabla g'_3\|_{H^{\ell}}\|\partial f\|_{C^{k-\ell}}\leq  C(C_1)\ep\sum_{\ell\leq k}\lambda^{2-\ell} \|\partial f\|_{C^{k-\ell}},
$$
where we have used \eqref{BA4}.} For {\eqref{rdtgammardf},} we have
$$\|\partial_t \gamma_3 \partial f\|_{H^k}
\lesssim \sum_{\ell\leq k} \|\partial_t  \gamma_3\|_{H^{\ell}}\|\partial f\|_{C^{k-\ell}}
\leq C(C_1)\ep \sum_{\ell\leq k} \lambda^{2-\ell} \|\partial f\|_{C^{k-\ell}},$$
where we used \eqref{BA4} for $\ell\geq 1$ and \eqref{BA5} for $\ell=0$.
\end{proof}

\subsection{Computations for the main terms (and definitions of $A^{(1)}_\bA$, $A^{(2)}_\bA$ , $A^{(3)}_\bA$)}\label{sec.sf.main}
In this subsection, we compute the $\Box_g$ of the main part of $\phi_\lambda$, i.e., of $\phi_\lambda-\mathcal E_\lambda$. We will show that it consists either of (1) terms of size $C(C_1)\ep\lambda^2$ or (2) terms of size $C(C_0)\ep^2\lambda$ but oscillating in a non-null direction or (3) terms of size $C(C_1)\ep\lambda$ and roughly of the form $\rd_t \mfg_3$ multiplied by a regular function of size $C(C_0)\ep$. The main result of the section is given in Proposition \ref{main.term}.

In the process of obtaining Proposition \ref{main.term}, we will define $A^{(1)}_\bA$, $A^{(2)}_\bA$ , $A^{(3)}_\bA$, which appeared in the definitions \eqref{whtfi}, \eqref{whtgi2} and \eqref{whtgi3} (see Definition \ref{A.def}). As we will see, they are defined so that certain cancellations take place.

\begin{proposition}\label{box.phi0.prop}
There exist functions $A^{(1)}_\bA[\phi_0]$, $A^{(2)}_\bA[\phi_0]$ and $B^{(2,\pm)}_{\bA\bB}[\phi_0]$, each of which depends only on the background solution, is compactly supported in $B(0,R_{supp})$ and obeys the bound
$$\|\cdot\|_{H^8\cap {C^{8}}}+\|\rd_t(\cdot)\|_{H^7\cap {C^{7}}}+\|\rd_t^2(\cdot)\|_{H^6\cap {C^{6}}}\leq C(C_0)\ep^3$$
such that 
\begin{equation*}
\begin{split}
\Box_g\phi_0=&\left(\rd_t (g'_3)^{t\beta}+(g_0^{-1})^{t\beta}\f{\rd_t N_3}{N_0}\right)(\rd_\beta \phi_0)+\lambda\sum_{\bA}\left(A^{(1)}_\bA[\phi_0] \co{\f{u_\bA}{\lambda}}+A^{(2)}_\bA[\phi_0]\si{\f{2u_\bA}{\lambda}}\right)\\
&+\lambda\sum_{\pm}\sum_{\bA}\sum_{\bB\neq \bA} B^{(2,\pm)}_{\bA\bB}[\phi_0] \si{\f{u_\bA\pm u_\bB}{\lambda}}+R[\phi_0],
\end{split}
\end{equation*}
where 
$R[\phi_0]$ is compactly supported in $B(0,R_{supp})$ and {satisfies the following estimate}:
$$\sum_{k\leq 3}\lambda^k\|R[\phi_0]\|_{H^k}\leq C(C_1)\ep \lambda^2.$$
\end{proposition}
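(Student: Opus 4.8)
The plan is to compute $\Box_g \phi_0$ directly by splitting $\Box_g = \Box_{g_0} + (\Box_g - \Box_{g_0})$ and tracking which terms oscillate and at what order in $\lambda$. First I would apply Proposition~\ref{box.diff.prop} (or its variant Proposition~\ref{box.diff.bis.prop}) with $f = \phi_0$: since $\phi_0$ is a background quantity it is compactly supported in $B(0,R_{supp})$, regular uniformly in $\lambda$ (by Corollary~\ref{lwp.small}), and has $\lambda$-independent $C^m$ norms, so all the ``acceptable error'' terms on the right-hand side of \eqref{box.diff} are $\leq C(C_1)\ep\lambda^2$ in $H^k$ for $k\leq 3$ — hence absorbed into $R[\phi_0]$. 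The output is that $\Box_g\phi_0$ equals $\Box_{g_0}\phi_0$ plus the two ``main terms'' $\left(\rd_\alpha(g_1')^{\alpha\beta} + (g_0^{-1})^{\alpha\beta}(2\rd_\alpha\gamma_1 + \tfrac{\rd_\alpha N_1}{N})\right)(\rd_\beta\phi_0)$ and $\left(\rd_t(g_3')^{t\beta} + (g_0^{-1})^{t\beta}\tfrac{\rd_t N_3}{N_0}\right)(\rd_\beta\phi_0)$, modulo $R[\phi_0]$. The second of these is exactly the stated $\rd_t\mfg_3$-term that survives into the conclusion, so it is left as is; the first must be expanded further.

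Next I would observe that $\Box_{g_0}\phi_0$ can be rewritten using the background wave equation. Since $(g_0,\phi_0,F_\bA,u_\bA)$ solves \eqref{back}, we have $\Box_{g_0}\phi_0 = 0$ exactly (it is the second equation of \eqref{back}), so this contribution vanishes identically. Therefore all the oscillatory $O(\lambda)$ terms in the conclusion must come from the first main term, i.e.\ from $\rd(g_1')\cdot\rd\phi_0$ and $(g_0^{-1})^{\alpha\beta}(2\rd_\alpha\gamma_1 + \tfrac{\rd_\alpha N_1}{N})\rd_\beta\phi_0$. Here I would plug in the explicit formula \eqref{g1p.prop}'s decomposition of $g_1'$ (or directly the definition \eqref{g1.def} of $\gamma_1,N_1$), which expresses $g_1', \gamma_1, N_1$ as sums of background coefficients times $\si{u_\bA/\lambda}$, $\co{2u_\bA/\lambda}$, and $\co{(u_\bA\pm u_\bB)/\lambda}$. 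Differentiating these oscillatory terms, the $\rd$ hitting the phase produces a factor $\lambda^{-1}$ times $\rd u$ times the conjugate trig function; but crucially $g_1'$ already carries a prefactor $\lambda^2$, so each term is $O(\lambda)$. Collecting: terms with phase $u_\bA/\lambda$ (now $\co{u_\bA/\lambda}$ after differentiating $\si{}$) define $\lambda A^{(1)}_\bA[\phi_0]$; terms with phase $2u_\bA/\lambda$ (now $\si{2u_\bA/\lambda}$) define $\lambda A^{(2)}_\bA[\phi_0]$; terms with phase $(u_\bA\pm u_\bB)/\lambda$, $\bA\neq\bB$, define $\lambda B^{(2,\pm)}_{\bA\bB}[\phi_0]$. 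The $\lambda^{-1}\cdot\lambda^2 = \lambda$ bookkeeping, together with the uniform $H^8\cap C^8$ control on the background (Corollary~\ref{lwp.small}) and on $F_\bA, u_\bA$, gives the claimed $C(C_0)\ep^3$ bounds on $A^{(1)}_\bA[\phi_0], A^{(2)}_\bA[\phi_0], B^{(2,\pm)}_{\bA\bB}[\phi_0]$ and their time derivatives (the $\ep^3$ coming from one $F_\bA$ or $\rd\phi_0$ factor inside $g_1'$ times one $\rd\phi_0$ factor, times the $\ep$-smallness of $\bfG_0$'s geometric coefficients — or, more honestly, from two factors of matter-field smallness and tracking the cubic homogeneity). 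Any leftover cross terms (e.g.\ where $\rd$ hits the background amplitude rather than the phase, producing $O(\lambda^2)$) go into $R[\phi_0]$, and the $H^k$ bound $\sum_{k\leq 3}\lambda^k\|R[\phi_0]\|_{H^k}\leq C(C_1)\ep\lambda^2$ follows by combining Proposition~\ref{g1p.prop}, \eqref{B2}, and product estimates in $H^k$ (Proposition~\ref{product.local}).

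The main obstacle I anticipate is the careful bookkeeping of which oscillatory frequencies appear and ensuring the non-null ones ($u_\bA\pm u_\bB$) are genuinely separated out with the correct regularity — this is where null-adaptedness of $\{u_\bA\}$ is implicitly used so that later these $B^{(2,\pm)}_{\bA\bB}$ terms can be inverted elliptically — and, more delicately, verifying that \emph{no} un-tracked $O(\lambda)$ term with a \emph{null} phase $u_\bA/\lambda$ beyond the ones named gets produced, since such a term could not be absorbed into $R[\phi_0]$ (whose bound is only $O(\lambda^2)$ after the $\lambda^k$ weights). Concretely this means checking that the contribution of $(g_0^{-1})^{\alpha\beta}(2\rd_\alpha\gamma_1 + \tfrac{\rd_\alpha N_1}{N})\rd_\beta\phi_0$, when $\rd_\alpha$ falls on the phase inside $\gamma_1$ or $N_1$, produces precisely a multiple of $(g_0^{-1})^{\alpha\beta}(\rd_\alpha u_\bA)(\rd_\beta\phi_0)$ — and that, together with the analogous term from $\rd_\alpha(g_1')^{\alpha\beta}$, this is exactly captured by $A^{(1)}_\bA[\phi_0]$. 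The definition of $A^{(1)}_\bA[\phi_0]$ (part of Definition~\ref{A.def}) is then simply \emph{declared} to be this collected coefficient, which is why the statement asserts existence of such functions rather than deriving them from elsewhere; the content is that the remainder after subtracting them is genuinely of size $C(C_1)\ep\lambda^2$.
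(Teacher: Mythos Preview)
Your proposal is correct and follows essentially the same approach as the paper: split $\Box_g\phi_0 = (\Box_g-\Box_{g_0})\phi_0$ using $\Box_{g_0}\phi_0=0$, apply Proposition~\ref{box.diff.bis.prop} to extract the $\rd_t\mfg_3$ main term and the $g_1'$-contribution, then expand the latter via \eqref{g1.def}/Proposition~\ref{g1p.prop} into the three oscillatory families and push the rest into $R[\phi_0]$. Your bookkeeping of the $\ep^3$ power (two matter factors from $g_1'$ times one $\rd\phi_0$) and of the $\lambda$ powers matches the paper's reasoning.
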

\begin{proof}
Since $\Box_{g_0}\phi_0=0$, it suffices to compute $(\Box_g-\Box_{g_0})\phi_0$. Using {Corollary~\ref{lwp.small} and} Proposition~\ref{box.diff.bis.prop},
\begin{equation}\label{box.phi0.1}
\begin{split}
&\Box_g\phi_0=(\Box_g-\Box_{g_0})\phi_0\\
=&\left(\rd_\alpha (g'_1)^{\alpha\beta}+(g_0^{-1})^{\alpha\beta}(2\rd_\alpha\gamma_1+\f{\rd_\alpha N_1}{N})+\rd_t (g'_3)^{t\beta}+(g_0^{-1})^{t\beta}\f{\rd_t N_3}{N_0}\right)(\rd_\beta \phi_0)+\dots,
\end{split}
\end{equation}
where $\dots$ denotes terms that in the $\sum_{k\leq 3}\lambda^k\|\cdot\|_{H^k}$ norm is bounded by $C(C_1)\ep \lambda^2$. These terms are then treated as $R[\phi_0]$.

Notice that in \eqref{box.phi0.1}, the terms involving $\rd_t g'_3$ and $\rd_t N_3$ are exactly of the form in the statement of the proposition. It thus remains to consider the terms that involve the derivative of $g_1'$, $\gamma_1$ and $N_1$. Now by \eqref{g1.def}, \eqref{B2} and Proposition \ref{g1p.prop}, $g_1'$, $\gamma_1$ and $N_1$ can be written as sum of terms with oscillating factors $\si{\f{u_\bA}{\lambda}}$, $\co{\f{2u_\bA}{\lambda}}$ and $\co{\f{u_\bA\pm u_\bB}{\lambda}}$ (with $\bA\neq \bB$) with coefficients which in the $\|\cdot\|_{H^8\cap {C^{8}}}+\|\rd_t(\cdot)\|_{H^7\cap {C^{7}}}$ norm are of size $C(C_0)\ep^2\lambda^2$. Therefore, their derivatives, when multiplied by the derivatives of $\phi_0$, takes the form 
$$\lambda\sum_{\bA}\left(A^{(1)}_\bA[\phi_0] \co{\f{u_\bA}{\lambda}}+A^{(2)}_\bA[\phi_0]\si{\f{{2u_\bA}}{\lambda}}\right)+\lambda\sum_{\pm}\sum_{\bA}\sum_{\bB\neq \bA} B^{(2,\pm)}_{\bA\bB}[\phi_0] \si{\f{u_\bA\pm u_\bB}{\lambda}}+\dots$$
This concludes the proof of the proposition.
\end{proof}

\begin{proposition}\label{box.F.prop} There exist functions $A^{(1)}_\bA[F]$, $A^{(2)}_\bA[F]$ , $A^{(3)}_\bA[F]$, $A^{(cross)}_{\bA\bB}[F]$, $B^{(2,\pm)}_{\bA\bB}[F]$, $B^{(3,\pm)}_{\bA\bB}[F]$ and $D_{\bA\bB{\bf C}}^{(\pm_1,\pm_2)}[F]$, each of which depends only on the background solution, is compactly supported in $B(0,R_{supp})$ and obeys the bound
$$\|\cdot\|_{H^8\cap {C^{8}}}+\|\rd_t(\cdot)\|_{H^7\cap {C^{7}}}+\|\rd_t^2(\cdot)\|_{H^6\cap {C^{6}}}\leq C(C_0)\ep^3$$
such that for every $\bA\in \mathcal A$,
\begin{equation*}
\begin{split}
&\Box_g \left(\lambda F_\bA \co{\frac{u_\bA}{\lambda}}\right)\\
=&\left(\lambda\Box_{g_0} F_\bA-\frac{1}{\lambda} (g_3')^{\alpha\beta}\partial_\alpha u_\bA \partial_\beta u_\bA F_\bA \right)\co{\frac{u_\bA}{\lambda}}-\left(\rd_t (g'_3)^{t\beta}+(g_0^{-1})^{t\beta}\f{\rd_t N_3}{N_0}\right)(\rd_\beta u_\bA) F_\bA \si{\frac{u_\bA}{\lambda}}\\
&+\lambda A^{(1)}_\bA[F] \co{\f{u_\bA}{\lambda}}+\lambda A^{(2)}_\bA[F] \si{\f{2 u_\bA}{\lambda}}+\lambda A^{(3)}_\bA[F] \co{\f{3 u_\bA}{\lambda}}+\sum_{\bB\neq \bA} \lambda A^{(cross)}_{\bA\bB}[F] \co{\f{u_\bB}{\lambda}}\\
&+\sum_{\pm}\sum_{\bB\neq \bA} \lambda B^{(2,\pm)}_{\bA\bB}[F]\si{\f{u_\bA\pm u_\bB}{\lambda}}+\sum_{\pm}\sum_{\bB}  \lambda B^{(3,\pm)}_{\bA\bB}[F]\co{\f{u_\bA\pm 2u_\bB}{\lambda}}\\
&+\sum_{\pm_1}\sum_{\pm_2}\sum_{\bB\neq \bA}\sum_{\substack{{\bf C}\neq \bB\\{\bf C}\neq \bA}}\lambda D_{\bA\bB{\bf C}}^{(\pm_1,\pm_2)}[F]\co{\f{u_\bA\pm_1 u_\bB\pm_2 u_{\bf C}}{\lambda}}+R_\bA[F],
\end{split}
\end{equation*}
where $R_\bA[F]$ is compactly supported in $B(0,R_{supp})$ and is small in the following sense:
$$\sum_{k\leq 3}\lambda^k\|R_\bA[F]\|_{H^k}\leq C(C_1)\ep \lambda^2.$$
\end{proposition}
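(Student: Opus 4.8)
The plan is to split $\Box_g=\Box_{g_0}+(\Box_g-\Box_{g_0})$ and handle the two pieces separately. For the first, apply the Leibniz rule for the scalar wave operator, $\Box_{g_0}(fh)=(\Box_{g_0}f)h+2(g_0^{-1})^{\alpha\beta}\rd_\alpha f\,\rd_\beta h+f\,\Box_{g_0}h$, with $f=\lambda F_\bA$ and $h=\co{u_\bA/\lambda}$. Since $u_\bA$ solves the eikonal equation $(g_0^{-1})^{\alpha\beta}\rd_\alpha u_\bA\rd_\beta u_\bA=0$ (last line of \eqref{back}), the a priori worst, $O(\lambda^{-2})$, term in $\Box_{g_0}\co{u_\bA/\lambda}$ vanishes, leaving $\Box_{g_0}\co{u_\bA/\lambda}=-\tfrac1\lambda(\Box_{g_0}u_\bA)\si{u_\bA/\lambda}$; and since $F_\bA$ solves the transport equation $2(g_0^{-1})^{\alpha\beta}\rd_\alpha u_\bA\rd_\beta F_\bA+(\Box_{g_0}u_\bA)F_\bA=0$ (third line of \eqref{back}), the would-be $O(1)$ coefficient of $\si{u_\bA/\lambda}$ also cancels, and one is left with exactly $\Box_{g_0}\bigl(\lambda F_\bA\co{u_\bA/\lambda}\bigr)=\lambda(\Box_{g_0}F_\bA)\co{u_\bA/\lambda}$, which is the first displayed main term.

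For the second piece I would insert the decomposition \eqref{g.inverse.decomp}, $g^{-1}=(g_0)^{-1}+g_1'+g_3'+\mathrm{Error}$, and follow the $Error_1=(\gi-g_0^{-1})^{\alpha\beta}\rd^2_{\alpha\beta}f$, $I$, $II$ split from the proof of Proposition~\ref{box.diff.prop}, but applied to the highly oscillatory $f=\lambda F_\bA\co{u_\bA/\lambda}$. The relevant expansions are $\rd_\beta f=-F_\bA(\rd_\beta u_\bA)\si{u_\bA/\lambda}+\lambda(\rd_\beta F_\bA)\co{u_\bA/\lambda}$ and $\rd^2_{\alpha\beta}f=-\tfrac1\lambda F_\bA(\rd_\alpha u_\bA)(\rd_\beta u_\bA)\co{u_\bA/\lambda}+O(1)$. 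Contracting the $O(\lambda^{-1})$ part of $\rd^2 f$ with $g_3'$ produces the new main term $-\tfrac1\lambda(g_3')^{\alpha\beta}\rd_\alpha u_\bA\rd_\beta u_\bA F_\bA\co{u_\bA/\lambda}$ (the one that the transport equation \eqref{whtfi} for $\wht F_\bA$ is designed to absorb later), while contracting it with $g_1'$ (of amplitude $\sim\ep^2\lambda^2$ by \eqref{B2} and Proposition~\ref{g1p.prop}) gives an $O(\ep^3\lambda)$ oscillatory term. From $I$ and $II$, the contributions of $\rd_t(g'_3)^{t\beta}$ and $(g_0^{-1})^{t\beta}\tfrac{\rd_t N_3}{N_0}$ against the leading $\si{u_\bA/\lambda}$ part of $\rd_\beta f$ give the second displayed main term, exactly as these were isolated in Proposition~\ref{box.diff.prop} (with $\rd_t\gamma_3$ discarded into the remainder via \eqref{BA5}); the remaining contributions of $\rd g_1'$, $\rd\gamma_1$, $\rd N_1$ from $I$, $II$ and of $g_1'$ against $\rd^2 f$ from $Error_1$ — all $O(\ep^2\lambda)$ and carrying the oscillating factors $\si{u_\bB/\lambda}$, $\co{2u_\bB/\lambda}$, $\co{(u_\bB\pm u_{\bf C})/\lambda}$ coming from \eqref{g1.def} and Proposition~\ref{g1p.prop} — get multiplied by $\co{u_\bA/\lambda}$ or $\si{u_\bA/\lambda}$ and expanded with the product-to-sum identities already used in the proof of Lemma~\ref{lmini}. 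Collecting by phase ($u_\bA$; $2u_\bA$; $3u_\bA$; $u_\bB$ with $\bB\neq\bA$; $u_\bA\pm u_\bB$; $u_\bA\pm 2u_\bB$; and, from the bilinear factors in $g_1'$, the trilinear phases $u_\bA\pm_1 u_\bB\pm_2 u_{\bf C}$) defines $A^{(1)}_\bA[F]$, $A^{(2)}_\bA[F]$, $A^{(3)}_\bA[F]$, $A^{(cross)}_{\bA\bB}[F]$, $B^{(2,\pm)}_{\bA\bB}[F]$, $B^{(3,\pm)}_{\bA\bB}[F]$ and $D^{(\pm_1,\pm_2)}_{\bA\bB{\bf C}}[F]$; their support in $B(0,R_{supp})$ and their bounds by $C(C_0)\ep^3$ in $H^8\cap C^8$ (and with one or two $\rd_t$'s) follow from Corollary~\ref{lwp.small}, \eqref{B2}, Proposition~\ref{g1p.prop} and the product estimates of Appendix~\ref{weightedsobolev}.

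Finally, $R_\bA[F]$ collects everything else: the $O(1)$ parts of $\rd^2 f$ and the $\lambda\co{u_\bA/\lambda}$ part of $\rd_\beta f$ contracted with $g_1'$, $g_3'$ and $\mathrm{Error}$, the $\nabla g'_3\,\rd_\beta f$ terms, and the contribution of $\mathrm{Error}$ against $\rd^2 f$. To verify $\sum_{k\le3}\lambda^k\|R_\bA[F]\|_{H^k}\le C(C_1)\ep\lambda^2$ I would argue as in Propositions~\ref{box.diff.prop} and \ref{box.diff.bis.prop}, distributing derivatives with the product estimate (Proposition~\ref{product.local}) and using Propositions~\ref{g1p.prop}, \ref{g3p.prop}, \ref{gp.error.prop} together with \eqref{BA1}, \eqref{BA3}, \eqref{BA4}, \eqref{BA5}; the delicate point is to gain the full extra power of $\lambda$, using that $g'_3$ has $L^\infty$-size $C(C_1)\ep\lambda^2$ (so that placing all $k$ derivatives on an oscillating factor, costing $\lambda^{-k}$, still leaves $C(C_1)\ep\lambda^{2-k}$) and the oscillatory structure of $g_1'$ and $\mfg_2$ in \eqref{g1.def}, \eqref{g2.def} (so that a contribution which naively is only $\lambda^{1-k}$ in fact becomes a collection of non-degenerately oscillating terms of size $\lambda^{2-k}$). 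The main obstacle is exactly this $\lambda$-power bookkeeping: because $f=\lambda F_\bA\co{u_\bA/\lambda}$ is highly oscillatory, Proposition~\ref{box.diff.prop} cannot simply be invoked as a black box, and one must track every term's phase and $\lambda$-weight, relying on the eikonal/transport cancellations and the oscillatory structure of the metric parametrix; the only genuinely new feature compared with the analogous Proposition~\ref{box.phi0.prop} for $\phi_0$ is the appearance of the main term $-\tfrac1\lambda(g'_3)^{\alpha\beta}\rd_\alpha u_\bA\rd_\beta u_\bA F_\bA\co{u_\bA/\lambda}$ (which has no analogue for $\phi_0$, lacking an eikonal structure) and of the trilinear terms $D^{(\pm_1,\pm_2)}_{\bA\bB{\bf C}}[F]$ generated by the bilinear part of $g_1'$.
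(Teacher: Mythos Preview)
Your proposal is correct and follows essentially the same approach as the paper. The only organizational difference is that you apply the $Error_1$, $I$, $II$ split of Proposition~\ref{box.diff.prop} directly to the full oscillatory product $f=\lambda F_\bA\co{u_\bA/\lambda}$, whereas the paper first Leibniz-expands $\Box_g$ on the product (for the full metric $g$), uses the eikonal and transport equations to reduce to four terms $\mathfrak F_1,\ldots,\mathfrak F_4$, and only then applies Proposition~\ref{box.diff.bis.prop} to the single factor $u_\bA$ inside $\mathfrak F_4=-\bigl((\Box_g-\Box_{g_0})u_\bA\bigr)F_\bA\si{u_\bA/\lambda}$. The two routes produce the same main terms and the same remainder; in particular your $Error_1$ contribution from the leading $-\tfrac1\lambda F_\bA(\rd_\alpha u_\bA)(\rd_\beta u_\bA)\co{u_\bA/\lambda}$ part of $\rd^2 f$ is exactly the paper's $\mathfrak F_2+\mathfrak F_3$, and your $I+II$ contributions against the leading $\si{u_\bA/\lambda}$ part of $\rd f$ recover $\mathfrak F_4$. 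Your recognition that Proposition~\ref{box.diff.prop} cannot be invoked as a black box (since $\|\rd^2 f\|_{L^\infty}\sim\lambda^{-1}$) and that the $g_1'$ and $g_3'$ pieces of $Error_1$ must be extracted by hand is precisely the point.
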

\begin{proof}
Using the expansion \eqref{g.inverse.decomp}, the estimate in Proposition \ref{gp.error.prop} and the fact that $(g_0^{-1})^{\alpha\beta}\partial_\alpha u_\bA \partial_\beta u_\bA=0$ and $(\Box_{{g_0}} u_\bA ) F_\bA + 2 (g_{{0}}^{-1})^{\alpha \beta}\partial_\alpha u_\bA \partial_\beta F_\bA=0$ (recall that $u_\bA$ and $F_\bA$ satisfy \eqref{back}), we get
\begin{align*}
&\Box_g \left(\lambda F_\bA \co{\frac{u_\bA}{\lambda}}\right)\\
=& \lambda (\Box_g F_\bA) \co{\frac{u_\bA}{\lambda}}
-2\gi^{\alpha \beta}\partial_\alpha u_\bA \partial_\beta F_\bA \si{\frac{u_\bA}{\lambda}}\\
 &-(\Box_g u_\bA) F_\bA \si{\frac{u_\bA}{\lambda}}
-\frac{1}{\lambda} \gi^{\alpha \beta}\partial_\alpha u_\bA \partial_\beta u_\bA F_\bA \co{\frac{u_\bA}{\lambda}}\\
=& \underbrace{\lambda (\Box_{g_0} F_\bA) \co{\frac{u_\bA}{\lambda}}}_{=:\mathfrak F_1} \underbrace{-\frac{1}{\lambda} (g_1')^{\alpha\beta}\partial_\alpha u_\bA \partial_\beta u_\bA F_\bA \co{\frac{u_\bA}{\lambda}}}_{=:\mathfrak F_2} \underbrace{ -{\f{1}{\lambda}} (g_3')^{\alpha\beta}\partial_\alpha u_\bA \partial_\beta u_\bA F_\bA \co{\frac{u_\bA}{\lambda}}}_{=:\mathfrak F_3}\\
&\underbrace{-((\Box_{g}-\Box_{g_0}) u_\bA ) F_\bA \si{\frac{u_\bA}{\lambda}}}_{=:\mathfrak F_4}+\dots,
\end{align*}
where we have used {the} convention {(as} in the proof of Proposition \ref{box.phi0.prop}{) that} $\dots$ denotes terms which are compactly supported in $B(0,R_{supp})$ that by Propositions \ref{gp.error.prop} and \ref{box.diff.prop} are bounded in the $\sum_{k\leq 3} \lambda^k\|\cdot\|_{H^k}$ norm by $ C(C_1)\ep\lambda^2$. {These terms can then be grouped into $R_{\bA}[F]$.}

This computation already shows that the potentially most harmful term of order $O(\f{\ep}{\lambda})$ is not present since $u_\bA$ is an eikonal function of the background solution. Also, an $O(\ep)$ term is not present thanks {to} $(\Box_{{g_0}} u_\bA ) F_\bA+2 (g_{{0}}^{-1})^{\alpha \beta}\partial_\alpha u_\bA \partial_\beta F_\bA=0$. The remaining terms {are} of size at most $O(\ep\lambda)$. This smallness will however not be sufficient to close the argument and we will need to analyze all the $O(\ep\lambda)$ terms, only allowing $O({\ep}\lambda^2)$ terms to be treated as error terms.

{\bf The terms $\mathfrak F_1$ and $\mathfrak F_3$.} First, there are the two main terms $\mathfrak F_1$ and $\mathfrak F_3$, which are simply included into the main terms in the statement of the proposition.

{\bf The term $\mathfrak F_2$.} Using Proposition \ref{g1p.prop}, we expand $\mathfrak F_2$ as follows\footnote{Note that we have relabeled the indices in Proposition \ref{g1p.prop}.}:
\begin{equation*}
\begin{split}
\mathfrak F_2=&-\lambda \sum_{\bB}(G_{1,1,\bB})^{\alpha\beta} \partial_\alpha u_\bA \partial_\beta u_\bA F_\bA \co{\frac{u_\bA}{\lambda}}\si{\f{u_\bB}{\lambda}}\\
&-\lambda \sum_{\bB} (G_{1,2,\bB})^{\alpha\beta}\partial_\alpha u_\bA \partial_\beta u_\bA F_\bA \co{\frac{u_\bA}{\lambda}}\co{\f{2u_\bB}{\lambda}}\\
&-\sum_{\pm}\sum_{\bB}\sum_{{\bf C}\neq \bB}\lambda (G_{1,bil,\bB,{\bf C},\pm})^{\alpha\beta}\partial_\alpha u_\bA \partial_\beta u_\bA F_\bA \co{\frac{u_\bA}{\lambda}}\co{\f{u_\bB\pm u_\bC}{\lambda}} {.}
\end{split}
\end{equation*}
We can expand this using standard trigonometric identities. First, notice that there are \underline{no} low frequency terms. We then separate terms which according to whether $\bB\neq \bA$ (and $\bB\neq {\bf C}$). It is easy to check that each of the terms takes the form of one of the terms in the statement of the proposition.

{\bf The term $\mathfrak F_4$.} Finally, for the term $\mathfrak F_4$, we apply Proposition \ref{box.diff.bis.prop} to get
\begin{equation*}
\begin{split}
\mathfrak F_4=-\left(\rd_\alpha (g'_1)^{\alpha\beta}+(g_0^{-1})^{\alpha\beta}(2\rd_\alpha\gamma_1+\f{\rd_\alpha N_1}{N})+\rd_t (g'_3)^{t\beta}+(g_0^{-1})^{t\beta}\f{\rd_t N_3}{N_0}\right)(\rd_\beta u_\bA)F_\bA \si{\frac{u_\bA}{\lambda}}+{\dots}
\end{split}
\end{equation*}
The term $-\left(\rd_t (g'_3)^{t\beta}+(g_0^{-1})^{t\beta}\f{\rd_t N_3}{N_0}\right)(\rd_\beta u_\bA)F_\bA \si{\frac{u_\bA}{\lambda}}$ is one of the main terms in the statement of the proposition. The remaining terms involving the derivatives of $g_1'$, $\gamma_1$ and $N_1$ can, according to \eqref{g1.def} and Proposition \ref{g1p.prop}, be written as a sum of terms with oscillating factors $ \si{\frac{u_\bA}{\lambda}}\co{\f{u_\bB}{\lambda}}$, $\si{\frac{u_\bA}{\lambda}}\si{\f{2u_\bB}{\lambda}}$ and $\si{\frac{u_\bA}{\lambda}}\si{\f{u_\bB\pm u_{\bf C}}{\lambda}}$ (with $ \bB\neq {\bf C}$) with coefficients which in the $\|\cdot\|_{H^8\cap {C^{8}}}+\|\rd_t(\cdot)\|_{H^7\cap {C^{7}}}$ norm are of size $C(C_0)\ep^3\lambda$. Expanding again with standard trigonometric identities\footnote{Let us note that compared to the terms in $\mathfrak F_2$, the roles of all the $\sin$ and $\cos$ are flipped and therefore the products of these trigonometric functions still take the same form.}, it is easy to check that each of the terms takes the form of one of the terms in the statement of the proposition.

This concludes the proof of the proposition.
\end{proof}

We can now define $A^{(1)}_\bA$, $A^{(2)}_\bA$ and $A^{(3)}_\bA$, which appeared in the definitions \eqref{whtfi}, \eqref{whtgi2} and \eqref{whtgi3}:
\begin{df}\label{A.def}
Let $A^{(1)}_\bA$, $A^{(2)}_\bA$ and $A^{(3)}_\bA$ be defined as\footnote{Note that the swapping of the indices in $A^{(cross)}$ as compared to Proposition \ref{box.F.prop} is intentional.}
\begin{equation*}
\begin{split}
A^{(1)}_\bA :=& -\left(A^{(1)}_\bA[\phi_0]+A^{(1)}_\bA[F]+\sum_{\bB\neq \bA} A^{(cross)}_{\bB \bA}[F]\right),\\
A^{(2)}_\bA :=& \f 12 \left(A^{(2)}_\bA[\phi_0]+A^{(2)}_\bA[F]\right) {,}\\
A^{(3)}_\bA :=& -\f 13 A^{(3)}_\bA[F],
\end{split}
\end{equation*}
where all the terms on the {RHS} are as in Propositions \ref{box.phi0.prop} and \ref{box.F.prop}.
\end{df}

It is obvious that $A^{(1)}_\bA$, $A^{(2)}_\bA$ and $A^{(3)}_\bA$ obey the bounds as asserted in \eqref{A.bound.1}. We collect them in the following proposition:
\begin{proposition}\label{A.prop}
$A^{(1)}_\bA$, $A^{(2)}_\bA$ and $A^{(3)}_\bA$ obey the bound
$$\|A^{(a)}_\bA\|_{H^8\cap {C^{8}}}+\|\rd_tA^{(a)}_\bA\|_{H^7\cap {C^{7}}}\leq C(C_0)\ep^3\quad\mbox{  for $a=1,2,3$}.$$
\end{proposition}
\begin{proof}
This is immediate from Propositions \ref{box.phi0.prop} and \ref{box.F.prop}.
\end{proof}

\begin{proposition}\label{box.whtF.prop}
For every $\bA\in \mathcal A$, $\Box_g \left(\lambda^2 \wht F_{\bA}\si{\f{u_\bA}{\lambda}}\right)$ can be written as follows:
\begin{equation*}
\begin{split}
&\Box_g \left(\lambda^2 \wht F_{\bA}\si{\f{u_\bA}{\lambda}}\right)-\left(\frac{1}{ \lambda} F_\bA (g_3')^{\alpha \beta}\partial_\alpha u_\bA \partial_\beta u_\bA- \lambda\Box_{g_0} F_\bA +\lambda A_\bA^{(1)}\right) \co{\f{u_\bA}{\lambda}}\\
=&\lambda^2(g_0^{-1})^{tt}\rd_t^2\wht F_{\bA}\si{\f{u_\bA}{\lambda}}+R_\bA[\wht F],
\end{split}
\end{equation*}
where $R_\bA[\wht F]$ denotes terms that are compactly supported in $B(0,R_{supp})$ and satisfy the following estimate:
$$\sum_{k\leq 3}\lambda^k\|R_\bA [\wht F]\|_{H^k}\leq C(C_1)\ep \lambda^2.$$
\end{proposition}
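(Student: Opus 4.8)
The plan is to expand $\Box_g\big(\lambda^2\wht F_\bA\si{\f{u_\bA}{\lambda}}\big)$ the same way we did for the main oscillating term in Proposition~\ref{box.F.prop}, then use the transport equation \eqref{whtfi} defining $\wht F_\bA$ to cancel the leading terms, and finally check that what remains — apart from the isolated $\lambda^2(g_0^{-1})^{tt}\rd_t^2\wht F_\bA$ — is bounded by $C(C_1)\ep\lambda^2$ in $\sum_{k\le 3}\lambda^k\|\cdot\|_{H^k}$. First I would write, using the Leibniz rule and $\rd_\alpha\si{\f{u_\bA}{\lambda}}=\f1\lambda(\rd_\alpha u_\bA)\co{\f{u_\bA}{\lambda}}$, $\rd^2_{\alpha\beta}\si{\f{u_\bA}{\lambda}}=-\f1{\lambda^2}(\rd_\alpha u_\bA)(\rd_\beta u_\bA)\si{\f{u_\bA}{\lambda}}+\f1\lambda(\rd^2_{\alpha\beta}u_\bA)\co{\f{u_\bA}{\lambda}}$,
\begin{align*}
\Box_g\big(\lambda^2\wht F_\bA\si{\tfrac{u_\bA}{\lambda}}\big)
=&\,\lambda^2(\Box_g\wht F_\bA)\si{\tfrac{u_\bA}{\lambda}}
+2\lambda\,\gi^{\alpha\beta}(\rd_\alpha\wht F_\bA)(\rd_\beta u_\bA)\co{\tfrac{u_\bA}{\lambda}}\\
&\,+\lambda(\Box_g u_\bA)\wht F_\bA\co{\tfrac{u_\bA}{\lambda}}
-\gi^{\alpha\beta}(\rd_\alpha u_\bA)(\rd_\beta u_\bA)\wht F_\bA\si{\tfrac{u_\bA}{\lambda}}.
\end{align*}
The $O(1)$ coefficient in front of $\si{\f{u_\bA}{\lambda}}$ in the last term is $-\gi^{\alpha\beta}(\rd_\alpha u_\bA)(\rd_\beta u_\bA)\wht F_\bA$; here I would substitute $\gi=(g_0)^{-1}+g_1'+g_3'+\mathrm{Error}$ via \eqref{g.inverse.decomp}, use $(g_0^{-1})^{\alpha\beta}(\rd_\alpha u_\bA)(\rd_\beta u_\bA)=0$, note the $g_1'$ contribution is high-frequency times $O(\ep^2\lambda^2)$ hence absorbable after trigonometric expansion into $R_\bA[\wht F]$, and the $\mathrm{Error}$ contribution is $O(\ep\lambda^3)$ by Proposition~\ref{gp.error.prop}; this leaves exactly $-(g_3')^{\alpha\beta}(\rd_\alpha u_\bA)(\rd_\beta u_\bA)\wht F_\bA\si{\f{u_\bA}{\lambda}}$, which is $O(\ep\lambda^2)$ by Proposition~\ref{g3p.prop} but is \emph{not} negligible at the $\lambda$-weighted $H^k$ level we are tracking and so must be kept — however it is a $\si{}$-term, not a $\co{}$-term, so it does not interfere with the cancellation below and can be grouped into $R_\bA[\wht F]$ provided one is careful about powers: indeed $\sum_{k\le3}\lambda^k\|(g_3')^{\alpha\beta}(\rd_\alpha u_\bA)(\rd_\beta u_\bA)\wht F_\bA\|_{H^k}\le C(C_1)\ep\lambda^2$ using Proposition~\ref{g3p.prop}, \eqref{BA1}, and the product estimate Proposition~\ref{product.local}.

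Next, for the two $\co{\f{u_\bA}{\lambda}}$ terms I would replace $\Box_g$ by $\Box_{g_0}$ at the cost of $\Box_g-\Box_{g_0}$ applied to $\wht F_\bA$ and to $u_\bA$: by Proposition~\ref{box.diff.prop} (with $f=\wht F_\bA$, $k\le3$, using \eqref{BA1} and \eqref{F.Linfty}) the error $\lambda(\Box_g-\Box_{g_0})\wht F_\bA\cdot\co{}$ is $C(C_1)\ep\lambda^2$ in the relevant norm, hence part of $R_\bA[\wht F]$; similarly $\lambda((\Box_g-\Box_{g_0})u_\bA)\wht F_\bA\co{}$, after inserting the explicit form of $\Box_g-\Box_{g_0}$ from Proposition~\ref{box.diff.prop}, produces $\rd_t g_3'$ and $\rd_t N_3$ pieces multiplied by the regular $O(\ep)$ factor $(\rd_\beta u_\bA)\wht F_\bA$ (an acceptable error by the same mechanism used in Proposition~\ref{box.F.prop}, cf.~\eqref{rdtgammardf}) plus $g_1'$-type high-frequency pieces absorbable into $R_\bA[\wht F]$. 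We are thus left with
\[
\lambda^2(\Box_{g_0}\wht F_\bA)\si{\tfrac{u_\bA}{\lambda}}
+\Big(2\lambda(g_0^{-1})^{\alpha\beta}(\rd_\alpha\wht F_\bA)(\rd_\beta u_\bA)+\lambda(\Box_{g_0}u_\bA)\wht F_\bA\Big)\co{\tfrac{u_\bA}{\lambda}}
\]
modulo $R_\bA[\wht F]$ and modulo the kept $g_3'$-term above. The $\co{}$-coefficient is precisely $\lambda$ times the left-hand side of the transport equation \eqref{whtfi} (after dividing \eqref{whtfi} by $2$... more precisely the combination $2(g_0^{-1})^{\alpha\beta}\rd_\alpha u_\bA\rd_\beta\wht F_\bA+(\Box_{g_0}u_\bA)\wht F_\bA$ is exactly the left side of \eqref{whtfi}), so by \eqref{whtfi} it equals $\lambda\big(\f1{\lambda^2}F_\bA(g_3')^{\alpha\beta}\rd_\alpha u_\bA\rd_\beta u_\bA-\Box_{g_0}F_\bA+A_\bA^{(1)}\big)$, which is exactly the combination subtracted on the left-hand side of the statement; moving it there gives the claimed identity with leftover $\lambda^2(\Box_{g_0}\wht F_\bA)\si{\f{u_\bA}{\lambda}}$.

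Finally, to identify the leftover with $\lambda^2(g_0^{-1})^{tt}\rd_t^2\wht F_\bA\si{\f{u_\bA}{\lambda}}+R_\bA[\wht F]$, I would expand $\Box_{g_0}\wht F_\bA=(g_0^{-1})^{tt}\rd_t^2\wht F_\bA+2(g_0^{-1})^{ti}\rd^2_{ti}\wht F_\bA+(g_0^{-1})^{ij}\rd^2_{ij}\wht F_\bA+(\text{first-order in }\wht F_\bA)$. Every term except $(g_0^{-1})^{tt}\rd_t^2\wht F_\bA$ involves at most one $\rd_t$ on $\wht F_\bA$, hence is controlled by $\sum_{k\le3}\lambda^k(\|\wht F_\bA\|_{H^{2+k}}+\|\rd_t\wht F_\bA\|_{H^{1+k}})\le C_1\ep$ from \eqref{BA1}; multiplying by $\lambda^2$, the contribution to $\sum_{k\le3}\lambda^k\|\cdot\|_{H^k}$ is $\le C(C_1)\ep\lambda^2$ (using also that $(g_0)^{-1}$ and the background Christoffel-type coefficients are $O(1)$ with all derivatives bounded on $B(0,R_{supp})$ by Corollary~\ref{lwp.small}, and Proposition~\ref{product.local} for the products). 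Hence all of these go into $R_\bA[\wht F]$, and only $\lambda^2(g_0^{-1})^{tt}\rd_t^2\wht F_\bA\si{\f{u_\bA}{\lambda}}$ survives, as claimed. The main obstacle is bookkeeping: one must verify that every term produced by the various trigonometric product expansions (products of $\si{\f{u_\bA}{\lambda}},\co{\f{u_\bA}{\lambda}}$ with the $\si{\f{u_\bB}{\lambda}},\co{\f{2u_\bB}{\lambda}},\co{\f{u_\bB\pm u_\bC}{\lambda}}$ coming from $g_1'$) either is genuinely small ($O(\ep^2\lambda^2)$ coefficient, hence absorbable) or has the structural form already allowed in the statement — but unlike Proposition~\ref{box.F.prop}, here there is no need to name these coefficients since they all land in $R_\bA[\wht F]$, which makes the argument shorter. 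The only term demanding genuine care is the $\rd_t^2\wht F_\bA$ term, which is deliberately \emph{not} absorbed because, as flagged in Section~\ref{ideas}, it carries a worse estimate and will later be cancelled against a $\rd_t\mfg_3$ term (Proposition~\ref{cancellation}).
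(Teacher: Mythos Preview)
Your approach is the paper's in spirit, just reorganized: the paper writes $\Box_g=\Box_{g_0}+(\Box_g-\Box_{g_0})$ at the outset, expands only the $\Box_{g_0}$ part via Leibniz (giving the transport-equation piece $\wht{\mathfrak F}_2$ and the $\lambda^2\Box_{g_0}\wht F_\bA$ piece $\wht{\mathfrak F}_1$), and treats the entire remainder $\wht{\mathfrak F}_3:=(\Box_g-\Box_{g_0})\big(\lambda^2\wht F_\bA\si{\f{u_\bA}{\lambda}}\big)$ as a single block via Proposition~\ref{box.diff.prop} with $f=\lambda^2\wht F_\bA\si{\f{u_\bA}{\lambda}}$. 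You instead expand the full $\Box_g$ by Leibniz first and then replace $g\to g_0$ term by term. Both routes reach the same conclusion; the paper's packaging is a bit cleaner.

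There is, however, a bookkeeping confusion in your middle paragraph worth flagging. You claim the $\cos$-term error is ``$\lambda(\Box_g-\Box_{g_0})\wht F_\bA\cdot\co{}$'' and cite Proposition~\ref{box.diff.prop} with $f=\wht F_\bA$, but no such term appears in your expansion. Your two $\cos$ terms are (i) $2\lambda\,\gi^{\alpha\beta}(\rd_\alpha\wht F_\bA)(\rd_\beta u_\bA)\co{}$, whose error after $\gi\to g_0^{-1}$ is bounded directly by Propositions~\ref{g1p.prop}--\ref{gp.error.prop} (not via $\Box_g-\Box_{g_0}$), and (ii) $\lambda(\Box_g u_\bA)\wht F_\bA\co{}$, for which the right tool is Proposition~\ref{box.diff.bis.prop} with $f=u_\bA$. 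The place where Proposition~\ref{box.diff.prop} with $f=\wht F_\bA$ is actually relevant is the \emph{sin} term $\lambda^2\big((\Box_g-\Box_{g_0})\wht F_\bA\big)\si{}$, which you skip over silently when you write ``We are thus left with $\lambda^2(\Box_{g_0}\wht F_\bA)\si{}$\dots''. None of these omissions is fatal --- each missing estimate does go through with \eqref{BA1}, \eqref{F.Linfty} and the $g'_1,g'_3$ bounds --- but this is exactly the sort of mislabeling the paper's bundle-then-estimate organization avoids.
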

\begin{proof}
Using $(g_0^{-1})^{\alpha\beta}\rd_\alpha u_\bA \rd_\beta u_\bA=0$ and \eqref{whtfi}, we have
\begin{equation*}
\begin{split}
&\Box_g \left(\lambda^2 \wht F_{\bA}\si{\f{u_\bA}{\lambda}}\right)\\
=&\lambda^2(\Box_{g_0}\wht F_{\bA})\si{\f{u_\bA}{\lambda}}+\lambda\left(2 (g_0^{-1})^{\alpha\beta}\rd_\alpha u_\bA\rd_\beta\wht F_{\bA}+(\Box_{g_0} u_\bA)\wht F_{\bA}\right) \co{\f{u_\bA}{\lambda}}\\
&+ \left(\Box_g-\Box_{g_0}\right)\left(\lambda^2 \wht F_{\bA}\si{\f{u_\bA}{\lambda}}\right)\\
=&\underbrace{\lambda^2(\Box_{g_0}\wht F_{\bA})\si{\f{u_\bA}{\lambda}}}_{=:\wht {\mathfrak F}_1}+\underbrace{\left(\frac{1}{ \lambda} {F_\bA} (g_3')^{\alpha \beta}\partial_\alpha u_\bA \partial_\beta u_\bA- \lambda\Box_{g_0} {F_\bA} +\lambda A_\bA^{(1)}\right) \co{\f{u_\bA}{\lambda}}}_{=:\wht {\mathfrak F}_2}\\
&+\underbrace{ \left(\Box_g-\Box_{g_0}\right)\left(\lambda^2 \wht F_{\bA}\si{\f{u_\bA}{\lambda}}\right)}_{=:\wht {\mathfrak F}_3}.
\end{split}
\end{equation*}
For the term $\wht {\mathfrak F}_1$, notice that according to the bootstrap assumption \eqref{BA1}, this is an acceptable error unless both derivatives on $\wht F_\bA$ are $\rd_t$ derivatives, i.e.,
$$\wht {\mathfrak F}_1=\lambda^2(g_0^{-1})^{tt}\rd_t^2\wht F_{\bA}\si{\f{u_\bA}{\lambda}}+\dots,$$
which gives the main term on the {RHS} in the statement of the proposition.

The term $\wht {\mathfrak F}_2$ contributes to the main terms on the {LHS} in the statement of the proposition.

Finally, we claim that according to Proposition~\ref{box.diff.prop}, the bootstrap assumptions and Proposition~\ref{prp:Linfty.BA}, $\wht {\mathfrak F}_3$ can be considered as part of $R_\bA[\wht F]$.
To see this, we first estimate the term on the RHS of \eqref{box.diff} in Proposition~\ref{box.diff.prop} (with $f= \lambda^2 \wht F_{\bA}\si{\f{u_\bA}{\lambda}}$),
\begin{align*}
&C(C_1)\ep\left(\lambda\|\partial f\|_{H^{k}}+\lambda^2\|\partial^2 f\|_{H^{{k}}}+ \lambda^{2-k}\|\partial f\|_{L^\infty}+ \lambda^{2-k}\|\partial^2 f\|_{L^\infty}\right)\\
\leq &C(C_1)\ep\sum_{k_1+k_2= k+1}\lambda^{3-k_1}\|\wht F_\bA\|_{H^{k_2}}
+\sum_{k_1+k_2=k} {(}\lambda^{3-k_1}\|\partial_t \wht F_\bA\|_{H^{k_2}}+ \lambda^{4-k_1}\|\partial^2_t \wht F_\bA\|_{H^{k_2}}{)}\\
&+  \lambda^{2-k}(\|\wht F_\bA \|_{L^\infty}+ \lambda\|\partial \wht F_\bA\|_{L^\infty}+ \lambda^2 \|\partial^{{2}} \wht F_\bA\|_{L^\infty})
{\leq} {C(C_1)\ep^2} \lambda^{2-k},
\end{align*}
where we have used \eqref{BA1}{, \eqref{F.Linfty}} {and the bounds of $u_{\bA}$ in Corollary~\ref{lwp.small}}.
We {then} estimate the {terms on the LHS} in Proposition \ref{box.diff.prop} {(with $f= \lambda^2 \wht F_{\bA}\si{\f{u_\bA}{\lambda}}$)}. {By Corollary~\ref{lwp.small}, \eqref{B2} and \eqref{BA1}, w}e have
$$\|\partial g_1' \partial f \|_{H^k}{\leq C(C_0)\ep^2} \sum_{k_1+k_2=k}\lambda^{1-k_1}\|\partial f\|_{H^{k_2}}
{\leq C(C_1)\ep^4} \sum_{k_1+k_2=k}\lambda^{2-k_1-k_2}{\leq C(C_1)\ep^4} \lambda^{2-k},$$
and {by \eqref{BA1}, \eqref{BA4}, \eqref{F.Linfty}, \eqref{g3.Linfty} and Proposition~\ref{product.local},}
\begin{equation*}
\begin{split}
\|\partial_t g_3'\partial f\|_{H^k}
{\leq} & {C}\left( \|\partial_t g_3'\|_{L^\infty{(B(0,R_{supp}+1)}}\|\partial f\|_{H^k}
+ \|\partial_t g_3'\|_{H^k{(B(0,R_{supp}+1)}}\|\partial f\|_{L^\infty}\right)\\
{\leq} & {C(C_1)} \left(\ep\lambda \|\partial f\|_{H^k}+ \ep\lambda^{1-k}\|\partial f\|_{L^\infty}\right)
{\leq C(C_1)} \ep^2 \lambda^{2-k}.
\end{split}
\end{equation*}
\end{proof}

\begin{proposition}\label{box.whtG.prop}
For every $\bA\in \mathcal A$, $\Box_g \left(\lambda^2 \wht F^{(2)}_{\bA}\co{\f{2u_\bA}{\lambda}}\right)$ and $\Box_g \left(\lambda^2 \wht F^{(3)}_{\bA}\si{\f{3u_\bA}{\lambda}}\right)$ can be written as follows:
$$\Box_g \left(\lambda^2 \wht F^{(2)}_{\bA}\co{\f{2u_\bA}{\lambda}}\right)+2\lambda A^{(2)}_\bA\si{\frac{2u_\bA}{ \lambda}}=R_\bA[\wht F^{(2)}],$$
and
$$\Box_g \left(\lambda^2 \wht F^{(3)}_{\bA}\si{\f{3u_\bA}{\lambda}}\right)-3\lambda A^{(3)}_\bA\co{\frac{3u_\bA}{ \lambda}}=R_\bA[\wht F^{(3)}],$$
where for $a=2,3$, $R_\bA[{\wht F}^{(a)}]$ denotes terms that are compactly supported in $B(0,R_{supp})$ and satisfy the following estimate:
$$\sum_{k\leq 3}\lambda^k\|R_\bA [{\wht F}^{(a)}]\|_{H^k}\leq C(C_1)\ep \lambda^2.$$
\end{proposition}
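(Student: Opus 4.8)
\textbf{Proof proposal for Proposition~\ref{box.whtG.prop}.}

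The plan is to mimic closely the computation in the proof of Proposition~\ref{box.whtF.prop}, but the situation here is simpler because $\wht F^{(2)}_\bA$ and $\wht F^{(3)}_\bA$ depend only on the background solution (they solve the transport equations \eqref{whtgi2}, \eqref{whtgi3} whose right-hand sides $A^{(2)}_\bA$, $A^{(3)}_\bA$ are purely background quantities), and they obey the \emph{background-level} bounds \eqref{B1} and \eqref{Fa.Linfty}, which are far better (by powers of $\ep$ rather than $C_1\ep$, and with no $\lambda$-loss) than the bootstrap bounds for $\wht F_\bA$. First I would write, using $(g_0^{-1})^{\alpha\beta}\rd_\alpha u_\bA\rd_\beta u_\bA = 0$ (since $u_\bA$ is a background eikonal function) and the Leibniz rule for $\Box_{g_0}$,
\begin{equation*}
\Box_{g_0}\left(\lambda^2 \wht F^{(2)}_{\bA}\co{\f{2u_\bA}{\lambda}}\right)
= \lambda^2(\Box_{g_0}\wht F^{(2)}_{\bA})\co{\f{2u_\bA}{\lambda}}
- 2\lambda\left(2(g_0^{-1})^{\alpha\beta}\rd_\alpha u_\bA\rd_\beta\wht F^{(2)}_{\bA} + (\Box_{g_0}u_\bA)\wht F^{(2)}_{\bA}\right)\si{\f{2u_\bA}{\lambda}},
\end{equation*}
and then observe that by the transport equation \eqref{whtgi2} the coefficient of $\si{\f{2u_\bA}{\lambda}}$ is exactly $-2\lambda A^{(2)}_\bA$, which is the claimed main term. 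The term $\lambda^2(\Box_{g_0}\wht F^{(2)}_{\bA})\co{\f{2u_\bA}{\lambda}}$ is an acceptable remainder: by \eqref{B1} one has $\|\Box_{g_0}\wht F^{(2)}_\bA\|_{H^k} \leq C(C_0)\ep^2$ for $k\leq 7$, and the $\lambda$-derivatives falling on $\co{\f{2u_\bA}{\lambda}}$ cost at worst $\lambda^{-k}$, so that $\sum_{k\leq 3}\lambda^k\|\lambda^2(\Box_{g_0}\wht F^{(2)}_\bA)\co{\f{2u_\bA}{\lambda}}\|_{H^k} \leq C(C_0)\ep^2\lambda^2$, absorbed into $R_\bA[\wht F^{(2)}]$. (Here I use the Leibniz/product estimates together with the $H^k_\delta$ bounds on $u_\bA$ and $\gamma_0$, $N_0$, $\beta_0$ from Corollary~\ref{lwp.small}, exactly as in the preceding propositions; note the $\rd_t^2$ term causes no trouble now since $\wht F^{(2)}_\bA$ is a background quantity with good $\rd_t^2$ bounds.)

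Next I would treat the difference $(\Box_g - \Box_{g_0})\left(\lambda^2 \wht F^{(2)}_{\bA}\co{\f{2u_\bA}{\lambda}}\right)$ via Proposition~\ref{box.diff.prop} with $f = \lambda^2 \wht F^{(2)}_\bA\co{\f{2u_\bA}{\lambda}}$. The right-hand side of \eqref{box.diff} is estimated just as in Proposition~\ref{box.whtF.prop}: using $\|\wht F^{(2)}_\bA\|_{H^k}$, $\|\rd_t\wht F^{(2)}_\bA\|_{H^{k-1}}$, $\|\rd_t^2\wht F^{(2)}_\bA\|_{H^{k-2}}\leq C(C_0)\ep^2$ from \eqref{B1} and the $C^k$ bounds from \eqref{Fa.Linfty}, together with the $L^\infty$ bounds on $u_\bA$ and its derivatives, one finds $\lambda\|\rd f\|_{H^k} + \lambda^2\|\rd^2 f\|_{H^k} + \lambda^{2-k}\|\rd f\|_{L^\infty} + \lambda^{2-k}\|\rd^2 f\|_{L^\infty} \leq C(C_0)\ep^2\lambda^{2-k}$, so this part contributes $C(C_1)\ep^2\lambda^2$ after multiplying by $\lambda^k$ and summing. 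For the explicit main terms on the left of \eqref{box.diff} — those involving $\rd g_1'$, $\rd\gamma_1$, $\rd N_1$, $\rd_t g_3'$, $\rd_t N_3$ — I would note that when multiplied against $\rd f$ they all carry at least one factor of a high-frequency small quantity ($g_1'$, $\gamma_1$, $N_1$ are $O_\ep(\lambda^2)$ by \eqref{B2}/Proposition~\ref{g1p.prop}; $\rd_t g_3'$, $\rd_t N_3$ are $O(C_1\ep\lambda)$ by \eqref{BA4}), and a short computation as in the $\mathfrak F_4$ case of Proposition~\ref{box.F.prop}, using Proposition~\ref{product.local}, shows each is bounded in $\sum_{k\leq 3}\lambda^k\|\cdot\|_{H^k}$ by $C(C_1)\ep\lambda^2$; hence all of these go into $R_\bA[\wht F^{(2)}]$. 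The case of $\wht F^{(3)}_\bA$ is identical with $\co{\f{2u_\bA}{\lambda}} \mapsto \si{\f{3u_\bA}{\lambda}}$, the Leibniz term producing $-3\lambda\cdot(\text{transport RHS})\co{\f{3u_\bA}{\lambda}} = -(-3\lambda A^{(3)}_\bA)\co{\f{3u_\bA}{\lambda}}$ after using \eqref{whtgi3} (so the stated sign $-3\lambda A^{(3)}_\bA\co{\f{3u_\bA}{\lambda}}$ appears on the left-hand side of the identity), and the remainder estimates are verbatim.

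The only point requiring any care — and the place I would be most careful — is the bookkeeping of which oscillating factors appear and the verification that nothing combines into a \emph{low-frequency} or \emph{null-direction} term that would fail to be small. But because the right-hand sides here are single-frequency ($\co{\f{2u_\bA}{\lambda}}$, $\si{\f{3u_\bA}{\lambda}}$) rather than products of many oscillations, this is strictly easier than in Proposition~\ref{box.F.prop}: the only products arising are $\co{\f{2u_\bA}{\lambda}}$ or $\si{\f{3u_\bA}{\lambda}}$ times $\si{\f{u_\bB}{\lambda}}$, $\co{\f{2u_\bB}{\lambda}}$, $\co{\f{u_\bB\pm u_\bC}{\lambda}}$ coming from $\mfg_1$, and after trigonometric expansion these have phases of the form $c u_\bA + $ (combination of other $u$'s) or pure multiples $\pm 2u_\bA$, $\pm 3u_\bA$, $\pm u_\bA$ — all with gradient bounded below by $\tfrac14$ since $\{u_\bA\}$ is spatially adapted (Definition~\ref{def.spatial}), so they remain genuinely oscillatory and the iterated-Laplacian/integration-by-parts gain applies. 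Thus no new structural cancellation is needed beyond what Definition~\ref{def.spatial} already supplies, and all error terms are absorbed into $R_\bA[\wht F^{(a)}]$ with the stated bound $\sum_{k\leq 3}\lambda^k\|R_\bA[\wht F^{(a)}]\|_{H^k} \leq C(C_1)\ep\lambda^2$. $\qed$
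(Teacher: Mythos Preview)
Your proposal is correct and follows essentially the same approach as the paper: expand $\Box_{g_0}$ on the product via Leibniz and the eikonal equation, identify the main term $-2\lambda A^{(2)}_\bA\si{\frac{2u_\bA}{\lambda}}$ (resp.\ $3\lambda A^{(3)}_\bA\co{\frac{3u_\bA}{\lambda}}$) from the transport equations \eqref{whtgi2}--\eqref{whtgi3}, bound $\lambda^2(\Box_{g_0}\wht F^{(a)}_\bA)\cdot(\text{phase})$ using \eqref{B1}, and absorb $(\Box_g-\Box_{g_0})(\cdot)$ via Proposition~\ref{box.diff.prop}. Your final paragraph about tracking phases and invoking spatial adaptedness is superfluous: the remainder bound $\sum_{k\leq 3}\lambda^k\|R_\bA[\wht F^{(a)}]\|_{H^k}\leq C(C_1)\ep\lambda^2$ is a direct $H^k$ estimate and requires no phase analysis or integration-by-parts gain---the paper does not use that here, and you do not need it either.
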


\begin{proof}
Using $(g_0^{-1})^{\alpha\beta}\rd_\alpha u_\bA \rd_\beta u_\bA=0$ and \eqref{whtgi2}, we have
\begin{equation*}
\begin{split}
&\Box_{g} \left( \lambda^2 \wht F^{(2)}_\bA \co{\frac{{2 u_\bA}}{ \lambda}}\right)\\
= &\lambda^2 \Box_{g_0} \wht F^{(2)}_\bA \co{\frac{{2u_\bA}}{ \lambda}}
-4 \lambda (g_0^{-1})^{\alpha \beta}\partial_\alpha u_\bA \partial_\beta \wht F^{(2)}_\bA \si{\frac{2u_\bA}{ \lambda}}\\
&-2 \lambda \Box_{g_0} u_\bA \wht F^{(3)}_\bA \si{\frac{2u_\bA}{ \lambda}}+(\Box_g-\Box_{g_0})\left( \lambda^2 \wht F^{(2)}_\bA \co{2\frac{u_\bA}{ \lambda}}\right)\\
=& -2\lambda A^{(2)}_\bA\si{\frac{2u_\bA}{ \lambda}}+\dots,
\end{split}
\end{equation*}
where in the last line we have used \eqref{B1} and Proposition \ref{box.diff.prop}. In a similar manner, we have
\begin{equation*}
\begin{split}
\Box_{g} \left( \lambda^2 \wht F^{(3)}_\bA \si{\frac{{3 u_\bA}}{ \lambda}}\right)
%= &\lambda^2 \Box_{g_0} \wht F^{(3)}_\bA \si{\frac{3 u_\bA}{ \lambda}}
%+6 \lambda (g_0)^{\alpha \beta}\partial_\alpha u_\bA \partial_\beta \wht F^{(3)}_\bA \co{\frac{3u_\bA}{ \lambda}}\\
%&+3 \lambda \Box_{g_0} u_\bA \wht F^{(3)}_\bA \co{\frac{3u_\bA}{ \lambda}}+(\Box_g-\Box_{g_0})\left( \lambda^2 \wht F^{(3)}_\bA \si{\frac{3u_\bA}{ \lambda}}\right)\\
= 3\lambda A^{(3)}_\bA \co{\frac{3u_\bA}{ \lambda}}+\dots
\end{split}
\end{equation*}\qedhere

\end{proof}

Using Propositions \ref{box.phi0.prop}, \ref{box.F.prop}, \ref{box.whtF.prop} and \ref{box.whtG.prop}, we see that by the choice of $\wht F_\bA$, $\wht F^{(2)}_\bA$ and $\wht F^{(3)}_\bA$, many of the terms cancel. However, Propositions \ref{box.F.prop} and \ref{box.whtF.prop} each has a term which we cannot control. Moreover, we cannot introduce additional terms in the parametrix to cancel these terms as we otherwise would not have sufficient regularity. Fortunately, there is the following additional cancellation!
\begin{proposition}\label{cancellation}
For every $\bA\in \mathcal A$, the following estimate holds:
$$\sum_{k\leq 3}\lambda^k \left\|\lambda^2(g_0^{-1})^{tt}\rd_t^2 \wht F_\bA-\left(\rd_t (g'_3)^{t\beta}+(g_0^{-1})^{t\beta}\f{\rd_t N_3}{N_0}\right)\rd_\beta u_\bA F_\bA \right\|_{H^k}\leq C(C_1)\ep\lambda^2.$$
\end{proposition}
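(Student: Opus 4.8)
The plan is to extract an evolution equation for $\rd_t^2\wht F_\bA$ from the transport equation \eqref{whtfi} defining $\wht F_\bA$, and to compare the term $\lambda^2 (g_0^{-1})^{tt}\rd_t^2\wht F_\bA$ with the $\rd_t\mfg_3$-term coming from Proposition~\ref{box.F.prop}. The key point is that the ``bad'' $\rd_t^2\wht F_\bA$ term is actually \emph{not} bad: differentiating \eqref{whtfi} in $t$ and isolating the highest-order time derivative, one sees that $(g_0^{-1})^{tt}\rd_t^2\wht F_\bA$ is, modulo lower-order (hence $H^k$-acceptable after multiplying by $\lambda^2$) terms, equal to a combination of $\lambda^{-2}$ times derivatives of $g_3'$ contracted with $\rd u_\bA$ (times $F_\bA$). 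Since \eqref{whtfi} has the schematic form
$$2(g_0^{-1})^{\alpha\beta}\rd_\alpha u_\bA\,\rd_\beta \wht F_\bA + (\Box_{g_0} u_\bA)\wht F_\bA = \tfrac{1}{\lambda^2}F_\bA (g_3')^{\alpha\beta}\rd_\alpha u_\bA\rd_\beta u_\bA - \Box_{g_0}F_\bA + A_\bA^{(1)},$$
and since $(g_0^{-1})^{\alpha\beta}\rd_\alpha u_\bA\rd_\beta u_\bA = 0$, the operator $2(g_0^{-1})^{\alpha\beta}\rd_\alpha u_\bA\,\rd_\beta(\cdot)$ is a transport operator whose principal part contains $\rd_t$ with coefficient $2(g_0^{-1})^{t\beta}\rd_\beta u_\bA = 2 e^{-\gamma_0}N_0^{-1}(e_0 u_\bA)$, which is bounded below by the lower bounds in Corollary~\ref{lwp.small} and Lemma~\ref{lm:as.local}. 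So we can solve for $\rd_t\wht F_\bA$ and hence iterate to get $\rd_t^2\wht F_\bA$.

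\textbf{Key steps, in order.} First, I rewrite \eqref{whtfi} to solve for $\rd_t\wht F_\bA$: write $2(g_0^{-1})^{t\beta}(\rd_\beta u_\bA)\,\rd_t\wht F_\bA = \tfrac{1}{\lambda^2}F_\bA(g_3')^{\alpha\beta}\rd_\alpha u_\bA\rd_\beta u_\bA + (\text{lower order})$, where ``lower order'' collects $-2(g_0^{-1})^{i\beta}(\rd_\beta u_\bA)\rd_i\wht F_\bA$, $-(\Box_{g_0}u_\bA)\wht F_\bA$, $-\Box_{g_0}F_\bA$, $A_\bA^{(1)}$ — all involving either spatial derivatives of $\wht F_\bA$ (controlled with a \emph{better} power of $\lambda$ by \eqref{BA1}) or quantities of size $C(C_0)\ep^2$. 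Multiplying through by the (smooth, bounded-below, background-only) factor $(2(g_0^{-1})^{t\beta}\rd_\beta u_\bA)^{-1}$, one gets $\rd_t\wht F_\bA = \tfrac{1}{\lambda^2}\cdot(\text{background factor})\cdot (g_3')^{\alpha\beta}\rd_\alpha u_\bA\rd_\beta u_\bA + (\text{l.o.})$. Second, I differentiate this relation once more in $t$, applying the same inversion; the leading term becomes $\rd_t^2\wht F_\bA = \tfrac{1}{\lambda^2}\cdot(\text{background factor})\cdot \rd_t\big((g_3')^{\alpha\beta}\rd_\alpha u_\bA\rd_\beta u_\bA\big) + (\text{l.o.})$, where now the new lower-order terms include $\rd_t$ of background quantities times $\tfrac{1}{\lambda^2}(g_3')^{\alpha\beta}\rd_\alpha u_\bA\rd_\beta u_\bA$ (which is $O(\ep)$ by \eqref{BA4}, \eqref{g3.Linfty}, so $\lambda^2$ times it is $C(C_1)\ep\lambda^2$ in $H^k$ for $k\leq 3$, using Proposition~\ref{product.local}) and $\rd_t\rd_i\wht F_\bA$, $\rd_t\wht F_\bA$, etc., again controlled by \eqref{BA1}. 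Third, I expand $\rd_t\big((g_3')^{\alpha\beta}\rd_\alpha u_\bA\rd_\beta u_\bA\big)$ by Leibniz: the terms where $\rd_t$ hits $\rd_\alpha u_\bA$ are, by the estimates for $\rd_t\nabla u_\bA$ in Corollary~\ref{lwp.small} together with Proposition~\ref{g3p.prop}, of the desired small size (again using that $g_3'$ contracted with $\rd u_\bA\,\rd u_\bA$ is $O(\ep)$), so the only term that survives is $(\rd_t g_3')^{\alpha\beta}\rd_\alpha u_\bA\rd_\beta u_\bA$. Fourth, and crucially, I observe that the $\alpha=\beta=i$ (spatial) components of $\rd_t g_3'$ are acceptable errors by Proposition~\ref{g3p.prop} (which gives $\|\rd_t g_3'\|_{H^{1+k}}\leq C(C_1)\ep\lambda^2$, one better power of $\lambda$ than needed after dividing by $\lambda^2$... wait — more precisely $\lambda^2\cdot\lambda^{-2}\|\rd_t g_3'\|_{H^k} = \|\rd_t g_3'\|_{H^k}\leq C(C_1)\ep\lambda^2$ for $k\leq 3$), whereas the term with a $\rd_t$-component of $g_3'$, namely $\tfrac{1}{\lambda^2}(\rd_t g_3')^{t\beta}(\rd_\beta u_\bA)(\rd_t u_\bA)\cdot(\text{background factor})$, must be matched against the corresponding term in the claim. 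Finally, I recall that $g_3'$ is the linear-in-$\mfg_3$ part of $(g_0+g_3)^{-1}-(g_0)^{-1}$ (its relation to $N_3$ in particular is via \eqref{g.inverse}), and the background factor $(2(g_0^{-1})^{t\beta}\rd_\beta u_\bA)^{-1} = \tfrac{1}{2}(g_0^{-1})^{t\gamma}\rd_\gamma u_\bA / ((g_0^{-1})^{t\delta}\rd_\delta u_\bA)^2$ — plugging the explicit form of $(g_0)^{-1}$ from \eqref{g.inverse} — one checks by direct computation that
$$\lambda^2(g_0^{-1})^{tt}\rd_t^2\wht F_\bA = \left(\rd_t(g_3')^{t\beta} + (g_0^{-1})^{t\beta}\tfrac{\rd_t N_3}{N_0}\right)(\rd_\beta u_\bA)F_\bA + (\text{acceptable error}),$$
the $\tfrac{\rd_t N_3}{N_0}$ piece arising precisely because $(g_0^{-1})^{tt} = -N_0^{-2}$ must be converted: note $(g_0')^{tt}$-type contributions come with $\tfrac{2N_3}{N_0^3}$ from the $N$-dependence of $\det$ and inverse, and the algebra reduces using $(g_0^{-1})^{\alpha\beta}\rd_\alpha u_\bA\rd_\beta u_\bA=0$ and $e_0 u_\bA = N_0 e^{-\gamma_0}|\nabla u_\bA|$.

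\textbf{Main obstacle.} The hard part is the fourth/fifth step: the precise bookkeeping of which components of $\rd_t g_3'$ and of the background contraction factors survive, and verifying that the surviving combination is \emph{exactly} $\rd_t(g_3')^{t\beta} + (g_0^{-1})^{t\beta}\tfrac{\rd_t N_3}{N_0}$ contracted with $\rd_\beta u_\bA\, F_\bA$, rather than something off by a computable but nonzero multiple. This requires carefully tracking the $N_3$-dependence through the definitions \eqref{gp.def} of $g_3'$ and the inverse-metric formula \eqref{g.inverse}, and using the null relation $(g_0^{-1})^{\alpha\beta}\rd_\alpha u_\bA\rd_\beta u_\bA = 0$ to simplify $\rd_t u_\bA / (e_0 u_\bA)$-type ratios; it is essentially an exact-cancellation computation, and the role of Proposition~\ref{cancellation} downstream (in the proof of the main parametrix estimate, via the combination with Proposition~\ref{box.F.prop} and Proposition~\ref{box.whtF.prop}) is exactly to exploit this. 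The remaining estimates — bounding every ``lower order'' term by $C(C_1)\ep\lambda^2$ in $\sum_{k\leq 3}\lambda^k\|\cdot\|_{H^k}$ — are routine applications of \eqref{BA1}, \eqref{BA4}, \eqref{BA5}, \eqref{F.Linfty}, \eqref{g3.Linfty}, Corollary~\ref{lwp.small}, Proposition~\ref{g3p.prop} and the product estimates of Proposition~\ref{product.local}.
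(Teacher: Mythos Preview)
Your overall strategy is right and matches the paper's: differentiate the transport equation \eqref{whtfi} in $t$, isolate $\rd_t^2\wht F_\bA$ on the left, and compare the surviving $\rd_t g_3'$ term against the target expression. But step~4 contains a genuine error. You claim the purely spatial components $\rd_t(g_3')^{ij}$ are acceptable ``by Proposition~\ref{g3p.prop}, which gives $\|\rd_t g_3'\|_{H^k}\leq C(C_1)\ep\lambda^2$.'' That misreads the proposition: it only gives $\lambda^{k+1}\|\rd_t g_3'\|_{H^{1+k}}\leq C(C_1)\ep\lambda^2$, i.e.\ $\|\rd_t g_3'\|_{L^2}\leq C(C_1)\ep\lambda$, a full power of $\lambda$ short of what you need. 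Concretely, $(g_3')^{ij}$ contains $\tfrac{2N_3}{N_0^3}\beta_0^i\beta_0^j$ (see \eqref{gp.def}), so $\rd_t(g_3')^{ij}\rd_i u_\bA\rd_j u_\bA$ includes $\tfrac{2\rd_t N_3}{N_0^3}(\beta_0^i\rd_i u_\bA)^2$, which in $L^2$ is $O(\ep^3\lambda)$; since the hierarchy allows $\lambda\ll\ep^2$, this is \emph{not} bounded by $C(C_1)\ep\lambda^2$. With the spatial piece wrongly discarded, your step~5 ``direct computation'' cannot close: you are left comparing $\tfrac{\rd_t u_\bA}{e_0 u_\bA}\rd_t(g_3')^{t\beta}$ with $\rd_t(g_3')^{t\beta}+(g_0^{-1})^{t\beta}\tfrac{\rd_t N_3}{N_0}$, and the mismatch is again $O(\ep^2\lambda)$, not acceptable.

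The fix (which is what the paper does) is to split the RHS not by tensor index but by which component of $\mfg_3$ the $\rd_t$ hits. The $\rd_t\gamma_3$ piece of $(g_3')^{ij}$ \emph{is} acceptable, using \eqref{BA5} for $k=0$ and \eqref{BA4} for $k\geq 1$. The $\rd_t N_3$ and $\rd_t\beta_3^i$ pieces coming from \emph{all} of $(g_3')^{tt}$, $(g_3')^{ti}$, $(g_3')^{ij}$ combine, via the explicit formulas \eqref{g3p.recall}, to give exactly
\[
\rd_t(g_3')^{\alpha\beta}\rd_\alpha u_\bA\rd_\beta u_\bA = \tfrac{2\rd_t N_3}{N_0^3}(e_0 u_\bA)^2+\tfrac{2\rd_t\beta_3^i}{N_0^2}(e_0 u_\bA)\rd_i u_\bA+(\text{acceptable}),
\]
the $\beta_0$-quadratic piece you tried to discard being precisely what completes the square $(\rd_t u_\bA-\beta_0^i\rd_i u_\bA)^2=(e_0 u_\bA)^2$. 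After dividing by $2(e_0 u_\bA)/\lambda^2$ this matches \eqref{boxdiff.u.F} exactly, yielding the cancellation.
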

\begin{proof}
By \eqref{g.inverse}, 
\begin{equation}\label{g0.inverse.recall}
(g_0^{-1})^{tt}=-\f 1{N_0^2},\quad (g_0^{-1})^{ti}=\f{(\beta_0)^i}{N_0^2}.
\end{equation}
{By} \eqref{gp.def}, we have 
\begin{equation}\label{g3p.recall}
(g_3')^{tt}=\f{2N_3}{N_0^3},\quad (g_3')^{ti}=-\f{2N_3 \beta_0^i}{N_0^3}+\f{\beta_3^i}{N_0^2},\quad (g_3')^{ij}=\f{2N_3}{N_0^3}\beta_0^i\beta_0^j-\f 1{N_0^2}(\beta_0^i\beta_3^j+\beta_3^i\beta_0^j){+ e^{-2\gamma_0}(-1+e^{-2\gamma})}.
\end{equation}
Therefore, up to terms which are acceptable error terms, we have
\begin{equation}\label{boxdiff.u.F}
\begin{split}
&\left(\rd_t (g'_3)^{t\beta}+(g_0^{-1})^{t\beta}\f{\rd_t N_3}{N_0}\right)\rd_\beta u_\bA F_\bA\\
=&\left(\f{\rd_t N_3}{N_0^3}\rd_t u_\bA-\f{(\rd_t N_3)\beta_0^i}{N_0^3}\rd_i u_\bA+\f{\rd_t\beta_3^i}{N_0^2}\rd_i u_\bA\right)F_\bA +\dots
\end{split}
\end{equation}
Notice that we have used in the above computation the observation that whenever $\rd_t$ acts on a background $\mfg_0$, the term is an acceptable error. This observation will also be used later without further comments.

To compute $\rd_t^2 \wht F_\bA$, we begin with \eqref{whtfi}. Differentiating \eqref{whtfi} by $\rd_t$, we notice that by the estimates on the background solution and the bounds on $A^{(1)}_\bA$, the only terms which are not acceptable error are those where $\rd_t$ acts on $\wht F_\bA$ or $g_3'$. Therefore, suppressing acceptable error terms, we have
\begin{equation}\label{rdt2.whtF.1}
2(g_0^{-1})^{\alpha \beta}\partial_\alpha u_\bA \partial_\beta \rd_t\wht F_\bA 
= \frac{1}{ \lambda^2} F_\bA \rd_t(g_3')^{\alpha \beta}\partial_\alpha u_\bA \partial_\beta u_\bA+\dots
\end{equation}
Notice now that the {LHS} of \eqref{rdt2.whtF.1} consists of the term where $\beta=t$ and the other terms are acceptable error terms. {On the other hand, for the RHS of \eqref{rdt2.whtF.1}, we compute using \eqref{g3p.recall}. Notice that the only terms that are not acceptable error terms are when $\rd_t$ acts on $\bt^i_3$ or $N_3$ - this is because by \eqref{BA5} (and \eqref{BA4}), $\rd_t\gamma_3$ obeys sufficiently good estimates.} Therefore,
\begin{equation}\label{rdt2.whtF.2}
\begin{split}
&-\f{2}{N_0^2}(\rd_t u_\bA-\beta_0^i\rd_i u_\bA)\rd_t^2 \wht F_\bA\\
=&\frac{1}{ \lambda^2} F_\bA \left(\f{2\rd_t N_3}{N_0^3}(\partial_t u_\bA-\beta_0^i\rd_i u_\bA)^2+\f{2(\rd_t \beta_3^i)}{N_0^2} (\partial_t u_\bA-\beta_0^j \rd_i u_\bA)( \partial_i u_\bA)\right)+\dots
\end{split}
\end{equation}
Since $|\nab u_\bA|>\f 14$ and $u_\bA$ is null with respect to $g_0$, $(\rd_t u_\bA-\beta_0^i\rd_i u_\bA)$ is bounded away from $0$ on the compact set $B(0,R_{supp})${, on which $F_{\bA}$ is supported}. Therefore, after dividing \eqref{rdt2.whtF.2} by $2{\lambda^{-2}}(\rd_t u_\bA-\beta_0^i\rd_i u_\bA)$, we obtain
\begin{equation}\label{rdt2.whtF.3}
\begin{split}
&\lambda^2(g_0^{-1})^{tt} \rd_t^2\wht F_\bA= -\f{\lambda^2}{N_0^2} \rd_t^2 \wht F_\bA\\
=&F_\bA \left(\f{\rd_t N_3}{N_0^3}(\partial_t u_\bA-\beta_0^i\rd_i u_\bA)+\f{\rd_t \beta_3^i}{N_0^2}\rd_i u_\bA \right)+\dots
\end{split}
\end{equation}
Finally, notice that there is an exact cancellation in the main terms in \eqref{boxdiff.u.F} and \eqref{rdt2.whtF.3}, which yields the proposition.
\end{proof}

We now summarize all the calculations and arrive at the following main result of the subsection:
\begin{proposition}\label{main.term}
There exist functions $B^{(2,\pm)}_{\bA\bB}$, $B^{(3,\pm)}_{\bA\bB}$ and $D_{\bA\bB{\bf C}}^{(\pm_1,\pm_2)}$, each of which depends only on the background solution, is compactly supported in $B(0,R_{supp})$ and obeys the bound
$$\|\cdot\|_{H^8\cap {C^{8}}}+\|\rd_t(\cdot)\|_{H^7\cap {C^{7}}}+\|\rd_t^2(\cdot)\|_{H^6\cap {C^{6}}}\leq C(C_0)\ep^3$$
such that
\begin{equation*}
\begin{split}
\Box_g(\phi_\lambda-\mathcal E_\lambda)=&\left(\rd_t (g'_3)^{t\beta}+(g_0^{-1})^{t\beta}\f{\rd_t N_3}{N_0}\right)(\rd_\beta \phi_0)\\
&+\sum_{\pm}\sum_\bA\sum_{\bB\neq \bA} \lambda B^{(2,\pm)}_{\bA\bB}\si{\f{u_\bA\pm u_\bB}{\lambda}}+\sum_{\pm}\sum_\bA\sum_{\bB}  \lambda B^{(3,\pm)}_{\bA\bB}\co{\f{u_\bA\pm 2u_\bB}{\lambda}}\\
&+\sum_{\pm_1}\sum_{\pm_2}\sum_\bA\sum_{\bB\neq \bA}\sum_{\substack{{\bf C}\neq \bB\\{\bf C}\neq \bA}}\lambda D_{\bA\bB{\bf C}}^{(\pm_1,\pm_2)}\co{\f{u_\bA\pm_1 u_\bB\pm_2 u_{\bf C}}{\lambda}}+R,
\end{split}
\end{equation*} 
where $R$ is compactly supported in $B(0,R_{supp})$ and satisfies the estimate
\begin{equation}\label{R.bound}
\sum_{k\leq 3}\lambda^k \| R\|_{H^k}\leq C(C_1)\ep\lambda^2.
\end{equation}
\end{proposition}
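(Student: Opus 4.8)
## Proof Proposal

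The plan is to simply assemble the computations already carried out in Propositions~\ref{box.phi0.prop}, \ref{box.F.prop}, \ref{box.whtF.prop}, \ref{box.whtG.prop} and \ref{cancellation}, using the definition of $A^{(1)}_\bA$, $A^{(2)}_\bA$, $A^{(3)}_\bA$ in Definition~\ref{A.def}, and to check that everything that is not explicitly displayed in the statement can be absorbed into $R$. Recall that $\phi_\lambda - \mathcal E_\lambda$ decomposes, by \eqref{phi.para}, as the sum of $\phi_0$, the terms $\lambda F_\bA \co{u_\bA/\lambda}$, the terms $\lambda^2 \wht F_\bA \si{u_\bA/\lambda}$, the terms $\lambda^2 \wht F^{(2)}_\bA \co{2u_\bA/\lambda}$ and the terms $\lambda^2 \wht F^{(3)}_\bA \si{3u_\bA/\lambda}$, summed over $\bA$. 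Applying $\Box_g$ to this finite sum and invoking the four ``$\Box_g$'' propositions term by term, one collects all contributions into groups according to their oscillatory phase.

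The key cancellations to track are as follows. First, the $\co{u_\bA/\lambda}$-terms: Proposition~\ref{box.F.prop} produces $\left(\lambda \Box_{g_0}F_\bA - \tfrac1\lambda (g_3')^{\alpha\beta}\partial_\alpha u_\bA \partial_\beta u_\bA F_\bA\right)\co{u_\bA/\lambda}$ together with $\lambda A^{(1)}_\bA[F]\co{u_\bA/\lambda}$ and $\lambda A^{(cross)}_{\bA\bB}[F]\co{u_\bB/\lambda}$, while Proposition~\ref{box.phi0.prop} produces $\lambda A^{(1)}_\bA[\phi_0]\co{u_\bA/\lambda}$, and Proposition~\ref{box.whtF.prop} produces exactly the term $\left(\tfrac1\lambda F_\bA (g_3')^{\alpha\beta}\partial_\alpha u_\bA \partial_\beta u_\bA - \lambda \Box_{g_0}F_\bA + \lambda A^{(1)}_\bA\right)\co{u_\bA/\lambda}$. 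By the definition $A^{(1)}_\bA = -\left(A^{(1)}_\bA[\phi_0] + A^{(1)}_\bA[F] + \sum_{\bB\neq\bA}A^{(cross)}_{\bB\bA}[F]\right)$, all $\co{u_\bA/\lambda}$-contributions cancel; the relabelling of indices in $A^{(cross)}$ flagged in Definition~\ref{A.def} is precisely what makes the cross terms match. Likewise the $\si{2u_\bA/\lambda}$-terms from Propositions~\ref{box.phi0.prop}, \ref{box.F.prop} combine with the $-2\lambda A^{(2)}_\bA \si{2u_\bA/\lambda}$ from Proposition~\ref{box.whtG.prop} and cancel by $A^{(2)}_\bA = \tfrac12(A^{(2)}_\bA[\phi_0] + A^{(2)}_\bA[F])$; the $\co{3u_\bA/\lambda}$-terms cancel by $A^{(3)}_\bA = -\tfrac13 A^{(3)}_\bA[F]$ against the contribution of Proposition~\ref{box.whtG.prop}.

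Next come the two a priori uncontrollable terms: the $\si{u_\bA/\lambda}$-term $-\left(\rd_t(g_3')^{t\beta} + (g_0^{-1})^{t\beta}\tfrac{\rd_t N_3}{N_0}\right)(\rd_\beta u_\bA)F_\bA \si{u_\bA/\lambda}$ coming from $\Box_g(\lambda F_\bA \co{u_\bA/\lambda})$ in Proposition~\ref{box.F.prop}, and the term $\lambda^2 (g_0^{-1})^{tt}\rd_t^2 \wht F_\bA \si{u_\bA/\lambda}$ coming from $\Box_g(\lambda^2 \wht F_\bA \si{u_\bA/\lambda})$ in Proposition~\ref{box.whtF.prop}. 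Proposition~\ref{cancellation} states exactly that their difference is $O_{\ep}(\lambda^2)$ in the relevant norm, so together they go into $R$. The remaining explicitly-displayed term in the conclusion, $\left(\rd_t(g_3')^{t\beta} + (g_0^{-1})^{t\beta}\tfrac{\rd_t N_3}{N_0}\right)(\rd_\beta \phi_0)$, is produced directly by $\Box_g \phi_0$ in Proposition~\ref{box.phi0.prop} and is retained. Finally, the non-null-oscillating terms, with phases $u_\bA \pm u_\bB$ ($\bA\neq\bB$), $u_\bA \pm 2u_\bB$, and $u_\bA \pm_1 u_\bB \pm_2 u_{\bf C}$ (all distinct), are collected from Propositions~\ref{box.phi0.prop} and \ref{box.F.prop}; the coefficients $B^{(2,\pm)}_{\bA\bB}$, $B^{(3,\pm)}_{\bA\bB}$, $D^{(\pm_1,\pm_2)}_{\bA\bB{\bf C}}$ are defined as the corresponding sums of the $[\phi_0]$ and $[F]$ coefficients, so they obey the stated $H^8\cap C^8$ bound with $O(\ep^3)$ since each summand does and $\mathcal A$ is finite. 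All error terms $R[\phi_0]$, $R_\bA[F]$, $R_\bA[\wht F]$, $R_\bA[\wht F^{(2)}]$, $R_\bA[\wht F^{(3)}]$ from the four propositions, plus the discrepancy from Proposition~\ref{cancellation}, are collected into $R$; since $|\mathcal A|$ is finite and each obeys $\sum_{k\leq 3}\lambda^k \|\cdot\|_{H^k}\leq C(C_1)\ep\lambda^2$, so does $R$, giving \eqref{R.bound}. The only genuine content beyond bookkeeping is the cancellation of Proposition~\ref{cancellation}, which has already been established; here the main obstacle is merely to verify that no mismatched phases or stray terms survive, i.e.\ that the finite list of phases is exhaustive and the coefficient identities in Definition~\ref{A.def} line up the indices correctly.
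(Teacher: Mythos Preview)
Your proposal is correct and follows essentially the same approach as the paper's own proof: assemble Propositions~\ref{box.phi0.prop}, \ref{box.F.prop}, \ref{box.whtF.prop}, \ref{box.whtG.prop} and \ref{cancellation}, check that the phase-by-phase cancellations work via Definition~\ref{A.def}, and collect the residual terms into $R$. Your write-up is in fact more detailed than the paper's (which simply names the propositions and writes down the definitions of $B^{(2,\pm)}_{\bA\bB}$, $B^{(3,\pm)}_{\bA\bB}$, $D^{(\pm_1,\pm_2)}_{\bA\bB{\bf C}}$ and $R$), but the logical content is identical.
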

\begin{proof}
This is achieved by combining Propositions \ref{box.phi0.prop}, \ref{box.F.prop}, \ref{box.whtF.prop}, \ref{box.whtG.prop} and \ref{cancellation}, where we have defined 
$$B^{(2,\pm)}_{\bA\bB}:=B^{(2,\pm)}_{\bA\bB}[\phi_0]+B^{(2,\pm)}_{\bA\bB}[F],\quad B^{(3,\pm)}_{\bA\bB}:=B^{(3,\pm)}_{\bA\bB}[F],\quad D_{\bA\bB{\bf C}}^{(\pm_1,\pm_2)}:=D_{\bA\bB{\bf C}}^{(\pm_1,\pm_2)}[F],$$
and 
\begin{equation*}
\begin{split}
R:=&R[\phi_0]+\sum_\bA\left(R_\bA[F]+R_\bA[\wht F]+R_\bA[\wht F^{(2)}]+R_\bA[\wht F^{(3)}]\right)\\
&+\sum_\bA \left(\lambda^2(g_0^{-1})^{tt}\rd_t^2 \wht F_\bA-\left(\rd_t (g'_3)^{t\beta}+(g_0^{-1})^{t\beta}\f{\rd_t N_3}{N_0}\right)\rd_\beta u_\bA F_\bA \right).
\end{split}
\end{equation*}
\end{proof}

\subsection{Estimates for $\wht F_\bA$}\label{secwhtf}

We now begin to estimate the terms in the parametrix for $\phi$. Since $\wht F_\bA$, $\wht F^{(2)}_\bA$ and $\wht F^{(3)}_\bA$ are defined via transport equations, it is convenient to have a general lemma for obtaining estimates:
\begin{lemma}\label{lemma.transport}
Suppose $f$ satisfies the transport equation
$$2(g_0^{-1})^{\alpha\beta} (\partial_\alpha u_\bA) \partial_\beta f +(\Box_{g_0}u_\bA) f= h,$$
with $f$ and $h$ both\footnote{Note the trivial fact that just assuming that $supp(h)\subset B(0,R_{supp})$ does \underline{not} imply $supp(f)\subset B(0,R_{supp})$. This lemma however applies when the support properties are assumed for \underline{both} $f$ and $h$.} compactly supported in $B(0,R_{supp})$.
Then, the following estimate holds for $k\leq 9$:
\begin{equation}\label{lemma.transport.0}
\sup_{t\in [0,1]}\|f\|_{H^k}(t)\leq C(C_0)\left(\|f\|_{H^k}(0)+\int_0^1 \|h\|_{H^k}(s)\, ds\right),
\end{equation}
the following estimate holds for $k\leq 8$:
\begin{equation}\label{lemma.transport.t}
\sup_{t\in [0,1]}\|\rd_t f\|_{H^k}(t)\leq C(C_0)\left(\|f\|_{H^{k+1}}(0)+\int_0^1 \|h\|_{H^{k+1}}(s)\, ds\right){,}
\end{equation}
and the following estimate holds for $k\leq 7$:
\begin{equation}\label{lemma.transport.tt}
\sup_{t\in [0,1]}\|\rd_t^2 f\|_{H^k}(t)\leq C(C_0)\left(\|\rd_t f\|_{H^{k+1}}(0)+\int_0^1 \left(\|h\|_{H^{k+1}}+\|\rd_t h\|_{H^{k+1}}\right)(s)\, ds\right).
\end{equation}
\end{lemma}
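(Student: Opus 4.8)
The plan is to prove the three estimates by standard energy-type estimates for the transport operator $P := 2(g_0^{-1})^{\alpha\beta}(\partial_\alpha u_\bA)\partial_\beta + (\Box_{g_0}u_\bA)$, which is a \emph{first-order} operator. The key structural observation is that, on the compact set $B(0,R_{supp})$ on which both $f$ and $h$ are supported, the vector field $X := 2(g_0^{-1})^{\alpha\beta}(\partial_\alpha u_\bA)\partial_\beta$ is a smooth vector field whose $\partial_t$-component is $X(t) = 2(g_0^{-1})^{t\beta}\partial_\beta u_\bA = -\frac{2}{N_0^2}(\partial_t u_\bA - \beta_0^i\partial_i u_\bA) = -\frac{2}{N_0^2}e_0 u_\bA$, which by the eikonal equation equals $-\frac{2}{N_0}e^{-\gamma_0}|\nabla u_\bA|$; since $|\nabla u_\bA| > \tfrac14$ and $N_0$, $\gamma_0$ are controlled via Corollary~\ref{lwp.small}, this component is bounded \emph{away from zero} (with constants depending only on $C_0$). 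Hence $P$ is a genuine transport operator in the $t$-direction and the equation $Pf = h$ can, after dividing by $X(t)$, be written as $\partial_t f = -\frac{1}{X(t)}\big(X^i\partial_i f + (\Box_{g_0}u_\bA)f - h\big)$, i.e.\ $\partial_t f = a^i\partial_i f + bf + \tilde h$ with coefficients $a^i, b$ controlled in $H^k\cap C^k$ (for $k$ up to roughly $9$, using the $u_\bA$ and metric estimates from Corollary~\ref{lwp.small}) and $\tilde h = -h/X(t)$ controlled in terms of $h$.

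First I would prove \eqref{lemma.transport.0}: apply $\nabla^\alpha$ for $|\alpha|\le k$ to $\partial_t f = a^i\partial_i f + bf + \tilde h$, pair with $\nabla^\alpha f$ in $L^2(\mathbb R^2)$, and use the product/commutator estimates in Appendix~\ref{weightedsobolev} (e.g.\ Proposition~\ref{product.local}) together with integration by parts to move the transport term $a^i\partial_i$ so that only one derivative lands badly; the top-order commutator $[\nabla^\alpha, a^i\partial_i]$ is controlled by $\|a\|_{C^1}\|f\|_{H^k} + \|a\|_{H^k}\|\partial f\|_{L^\infty}$, and since $\|a\|_{H^k}$ on $B(0,R_{supp})$ is bounded by $C(C_0)$ while $\|f\|$ is being estimated, Grönwall's inequality in $t\in[0,1]$ closes the estimate and yields \eqref{lemma.transport.0}. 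Here one must be slightly careful that all norms are on $\mathbb R^2$ but $f$ is compactly supported, so cutting off coefficients to $B(0,R_{supp}+1)$ is harmless.

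Next, \eqref{lemma.transport.t} follows by differentiating the equation once in $t$: $\partial_t(\partial_t f) = a^i\partial_i(\partial_t f) + b(\partial_t f) + (\partial_t a^i)\partial_i f + (\partial_t b)f + \partial_t\tilde h$. Applying \eqref{lemma.transport.0} at level $k$ to $\partial_t f$ (viewing $\partial_t f$ as the solution of a transport equation of the same form with the same principal part but with a source that now includes $(\partial_t a^i)\partial_i f + (\partial_t b)f$), and bounding this source in $H^k$ using \eqref{lemma.transport.0} at level $k+1$ for $f$ itself together with $\|\partial_t a\|_{H^k}, \|\partial_t b\|_{H^k}\le C(C_0)$, gives the claimed bound in terms of $\|f\|_{H^{k+1}}(0)$ and $\int_0^1\|h\|_{H^{k+1}}$. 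The initial datum $\|\partial_t f\|_{H^k}(0)$ is itself read off from the equation at $t=0$: $\partial_t f = a^i\partial_i f + bf + \tilde h$, so $\|\partial_t f\|_{H^k}(0)\lesssim \|f\|_{H^{k+1}}(0) + \|h\|_{H^k}(0)$, which is absorbed. Finally \eqref{lemma.transport.tt} is obtained by one more $t$-differentiation: $\partial_t^2 f$ solves a transport equation whose source involves $\partial_t\tilde h$ and lower-order terms bounded via \eqref{lemma.transport.t}, and the initial value $\|\partial_t^2 f\|_{H^k}(0)$ is expressed from the differentiated equation in terms of $\|\partial_t f\|_{H^{k+1}}(0)$ and $\|h\|_{H^{k+1}}(0) + \|\partial_t h\|_{H^k}(0)$.

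The main obstacle is bookkeeping rather than conceptual: one must track how many derivatives of the background coefficients $a^i, b$ (hence of $u_\bA$, $N_0$, $\gamma_0$) are available, which caps $k$ at $9$, $8$, $7$ respectively in the three estimates — matching exactly the regularity $H^{k+1}_\delta$ etc.\ for $u_\bA$ in Corollary~\ref{lwp.small} minus the derivatives spent on successive $\partial_t$-differentiations — and one must be careful that the ellipticity constant for the $t$-direction (the lower bound on $|X(t)| = \tfrac{2}{N_0}e^{-\gamma_0}|\nabla u_\bA| \ge c(C_0) > 0$) is uniform in $t\in[0,1]$, which again follows from Corollary~\ref{lwp.small}. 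No genuinely hard analysis is needed; the whole lemma is a clean application of the method of characteristics / energy estimates for linear transport equations with controlled coefficients.
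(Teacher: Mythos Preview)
Your approach to \eqref{lemma.transport.0} is essentially the paper's (multiply by $f$, commute with $\nabla^\alpha$, integrate by parts, Gr\"onwall), and your treatment of \eqref{lemma.transport.tt} also matches.

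For \eqref{lemma.transport.t}, however, there is a small gap. You differentiate the equation in $t$ to obtain a transport equation for $\partial_t f$ whose source includes $\partial_t\tilde h$; applying \eqref{lemma.transport.0} to $\partial_t f$ then produces a term $\int_0^1\|\partial_t h\|_{H^k}$ on the right-hand side, which is \emph{not} present in the stated estimate \eqref{lemma.transport.t}. The paper sidesteps this by a more direct argument: since the coefficient $X(t)=2(g_0^{-1})^{t\beta}\partial_\beta u_\bA$ of $\partial_t f$ is bounded away from zero (exactly the observation you already made), one can solve the transport equation \emph{algebraically} for $\partial_t f$ at every time $t$, obtaining $\|\partial_t f\|_{H^k}(t)\lesssim \|f\|_{H^{k+1}}(t)+\|h\|_{H^k}(t)$, and then feed \eqref{lemma.transport.0} at level $k+1$ into the first term. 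No $t$-differentiation of $h$ is needed. You actually use this pointwise trick yourself, but only at $t=0$ to control the initial datum $\|\partial_t f\|_{H^k}(0)$; the fix is simply to use it at every $t$. Your $t$-differentiation method is the right one for \eqref{lemma.transport.tt}, where $\partial_t h$ is allowed on the right-hand side, and indeed that is what the paper does there.
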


\begin{proof}
In coordinates, we can write
\begin{equation}\label{transport.coord}
2(g_0^{-1})^{\alpha t} \partial_\alpha u_\bA \partial_t f +2 (g_0^{-1})^{i\alpha}\partial_{\alpha} u_\bA \partial_i f+(\Box_{g_0}u_\bA) f= h.
\end{equation}
To derive the desired estimate in the $k=0$ case, simply multiply \eqref{transport.coord} by $f$, integrate in a spacetime region $[0,t]\times \m R^2$ for $t\in [0,1]$ with respect to the measure $dx\, dt$, integrate by parts and apply the obvious estimates. Notice that the compact support of $f$ and $h$ allows us not to track the behavior at infinity. For higher $k$, one simply differentiate the equation $k$ times in $x$. Notice that we need $k\leq 9$ due to the regularity\footnote{In fact, if we carefully track the regularity of $\Box_{g_0} u_\bA$ in the proof of Theorem \ref{lwp}, we can have $k\leq 10$. Such improvement is of course completely irrelevant.} of the background solution given in Corollary \ref{lwp.small}.

Next, for the estimates on $\rd_t f$ (i.e., \eqref{lemma.transport.t}), one uses the equation \eqref{transport.coord} to rewrite $\rd_t f$ as $\rd_i f$ plus lower order terms. This is possible since we have a lower bound on $\left|(g_0^{-1})^{\alpha t} \partial_\alpha u_\bA\right|$ thanks to the bound $|\nab u_\bA|>\f 14$ (see Definition \ref{def.spatial}) and the fact that $u_\bA$ is an eikonal function (with respect to $g_0$).

Finally, for \eqref{lemma.transport.tt}, we simply differentiate \eqref{transport.coord} by $\rd_t$, apply \eqref{lemma.transport.t} to $\tilde{f}=\rd_t f$ (with $\tilde{h}=\rd_t h+[2(g_0^{-1})^{\alpha\beta} (\partial_\alpha u_\bA) \partial_\beta, \rd_t] f-(\rd_t\Box_{g_0} u_\bA)f$) and then use \eqref{lemma.transport.0} and \eqref{lemma.transport.t} to control the error terms.
\end{proof}

Consequently, we have
\begin{proposition}\label{whtF.est}
$\wht F_\bA$ satisfies the following estimates:
$$\sum_{k \leq 3} \left(\lambda^k \|\wht F_\bA\|_{H^{2+k}} + \lambda^k \|\partial_t \wht F_\bA\|_{H^{1+k}} +\lambda^{k+1}\|\rd_t^2 \wht F_\bA\|_{H^k}\right)\leq C(C_0)\ep + C(C_1)\ep^2.$$
\end{proposition}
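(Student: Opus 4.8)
The plan is to combine the transport equation \eqref{whtfi} defining $\wht F_\bA$ with the general transport estimate of Lemma~\ref{lemma.transport}. First I would verify that the right-hand side of \eqref{whtfi} is compactly supported in $B(0,R_{supp})$: the term $\Box_{g_0}F_\bA$ is supported where $F_\bA$ is, $A^{(1)}_\bA$ is supported in $B(0,R_{supp})$ by construction (cf. \eqref{A.bound.1} and Proposition~\ref{A.prop}), and the factor $F_\bA$ in $\frac{1}{\lambda^2}F_\bA(g_3')^{\alpha\beta}\rd_\alpha u_\bA\rd_\beta u_\bA$ confines that term to $B(0,R_{supp})$ as well. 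Since $\wht F_\bA$ has zero initial data, Lemma~\ref{lemma.transport} applies and reduces everything to bounding $\int_0^1 \|h_\bA\|_{H^{k+j}}$ (and $\int_0^1\|\rd_t h_\bA\|_{H^{k+1}}$ for the second-time-derivative estimate), where $h_\bA := \frac{1}{\lambda^2}F_\bA(g_3')^{\alpha\beta}\rd_\alpha u_\bA\rd_\beta u_\bA - \Box_{g_0}F_\bA + A^{(1)}_\bA$ is the source.

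The key step is estimating $h_\bA$ in $H^{k+j}$. The terms $\Box_{g_0}F_\bA$ and $A^{(1)}_\bA$ are purely background quantities, bounded by $C(C_0)\ep$ and $C(C_0)\ep^3$ respectively in the relevant norms (via Corollary~\ref{lwp.small} and Proposition~\ref{A.prop}), with no loss in $\lambda$; these contribute the $C(C_0)\ep$ part of the bound. The genuinely delicate term is $\frac{1}{\lambda^2}F_\bA(g_3')^{\alpha\beta}\rd_\alpha u_\bA\rd_\beta u_\bA$: here I would use the estimates for $g_3'$ on the compact set $B(0,R_{supp}+1)$ from Proposition~\ref{g3p.prop}, namely $\sum_{k\leq 3}\lambda^k\|g_3'\|_{H^{2+k}} + \sum_{k\leq 2}\lambda^{k+1}\|\rd_t g_3'\|_{H^{1+k}}\leq C(C_1)\ep\lambda^2$, together with the Sobolev product estimate on compact sets (Proposition~\ref{product.local}) and the background bounds on $F_\bA$ and $u_\bA$. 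The factor $\lambda^{-2}$ is exactly cancelled by the $\lambda^2$ gain in $g_3'$, so one gets $\lambda^k\|\frac{1}{\lambda^2}F_\bA(g_3')^{\alpha\beta}\rd_\alpha u_\bA\rd_\beta u_\bA\|_{H^{2+k}}\leq C(C_1)\ep^2$ for $k\leq 3$ (using that $F_\bA, u_\bA$ carry an extra $\ep$ of smallness from the data), and similarly with one $\rd_t$ losing one degree of regularity and costing one power of $\lambda$. Feeding this into \eqref{lemma.transport.0}, \eqref{lemma.transport.t}, \eqref{lemma.transport.tt} with the appropriate shifts in $k$ (noting $2+k\leq 9$ requires $k\leq 7$, comfortably satisfied for $k\leq 3$, and the time-derivative estimates lose the needed one or two degrees) yields the stated bounds, with the background contributions giving $C(C_0)\ep$ and the $g_3'$ contribution giving $C(C_1)\ep^2$.

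The main obstacle, and the only place real care is needed, is the second-time-derivative estimate $\lambda^{k+1}\|\rd_t^2\wht F_\bA\|_{H^k}$: applying \eqref{lemma.transport.tt} requires controlling $\rd_t h_\bA$, which involves $\rd_t g_3'$. Proposition~\ref{g3p.prop} only gives $\rd_t g_3'$ bounds with one fewer degree of regularity and one more power of $\lambda$ relative to $g_3'$ itself, i.e. $\sum_{k\leq 2}\lambda^{k+1}\|\rd_t g_3'\|_{H^{1+k}}\leq C(C_1)\ep\lambda^2$; after multiplying by $\lambda^{-2}$ this is $C(C_1)\ep\lambda^{-1-k}$, so $\lambda^{k+1}\int_0^1\|\rd_t h_\bA\|_{H^{k+1}}$ is controlled by $C(C_1)\ep^2$ for $k\leq 2$ — exactly matching the range of $k$ claimed for the $\rd_t^2$ estimate (which also needs $k\leq 3$ but the $\rd_t g_3'$ regularity caps it, consistent with the statement's summation $\sum_{k\leq 3}\lambda^{k+1}\|\rd_t^2\wht F_\bA\|_{H^k}$ where for $k=3$ one reads off the weaker $H^3$ bound from the $k=2$ line by monotonicity of the weights, or more carefully tracks that the loss is absorbed). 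I would double-check the bookkeeping of which $C(C_1)$ versus $C(C_0)$ dependence each term carries so that the final constant structure $C(C_0)\ep + C(C_1)\ep^2$ is respected — in particular that only the $g_3'$-dependent term, which uses bootstrap assumption \eqref{BA4}, picks up the $C_1$ dependence.
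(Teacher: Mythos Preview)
Your approach is correct and is precisely the one the paper takes: apply Lemma~\ref{lemma.transport} to the transport equation \eqref{whtfi}, observe that the background source terms $\Box_{g_0}F_\bA$ and $A^{(1)}_\bA$ contribute $C(C_0)\ep$, and bound the remaining term $\frac{1}{\lambda^2}F_\bA(g_3')^{\alpha\beta}\rd_\alpha u_\bA\rd_\beta u_\bA$ via Proposition~\ref{g3p.prop} (equivalently \eqref{BA4}) to produce the $C(C_1)\ep^2$ contribution. Your extra caution about the $k=3$ case for $\rd_t^2\wht F_\bA$ is unnecessary once you read off $\|\rd_t\wht\mfg_3\|_{H^4}$ directly from \eqref{BA4} (with $k=2$ there) rather than from the slightly weaker statement of Proposition~\ref{g3p.prop}; the paper's proof simply does not spell this out.
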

\begin{proof}
The idea is to apply Lemma \ref{lemma.transport}. In the transport equation \eqref{whtfi} for $\wht F_\bA$, except for the term $\frac{1}{ \lambda^2} F_\bA (g_3')^{\alpha \beta}\partial_\alpha u_\bA \partial_\beta u_\bA$, all the terms on the {RHS} depend only on the background and can be controlled in the norm $\sum_{k\leq 3}\lambda^k\|\cdot\|_{H^{2+k}}$ by $\leq C(C_0)\ep$. (In particular, the $A^{(1)}_\bA$ term can be controlled by Proposition \ref{A.prop}.) It thus remains to control the term $\frac{1}{ \lambda^2} F_\bA (g_3')^{\alpha \beta}\partial_\alpha u_\bA \partial_\beta u_\bA$. The main contribution of this term comes from $g_3'$, which can be estimated by Proposition \ref{g3p.prop}.
\end{proof}

Notice that we have in particular improved the bootstrap assumptio{n \eqref{BA1}.}

\subsection{Estimates for the terms $\wht F^{(2)}_\bA$ and $\wht F^{(3)}_\bA$}\label{secwhtg}

\begin{proposition}\label{whtG.prop}
For $a=2,3$, ${\wht F}^{(a)}_\bA$ (which depends only on the background solution) obeys the following estimates
$$\|{\wht F}_\bA^{(a)}\|_{H^{8}}+ \|\partial_t {\wht F}_{\bA}^{(a)}\|_{H^{7}}
+\|\rd_t^2 {\wht F}_\bA^{(a)}\|_{H^{6}}
\leq C(C_0) \ep^2.$$
\end{proposition}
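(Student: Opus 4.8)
The plan is to apply the transport‐equation Lemma~\ref{lemma.transport} to the equations \eqref{whtgi2} and \eqref{whtgi3} defining $\wht F^{(2)}_\bA$ and $\wht F^{(3)}_\bA$. The key observation is that, unlike $\wht F_\bA$, these quantities are defined purely in terms of the background solution: the right-hand sides are $A^{(2)}_\bA$ and $A^{(3)}_\bA$ respectively, which by Proposition~\ref{A.prop} obey the bounds
$$\|A^{(a)}_\bA\|_{H^8\cap C^8}+\|\rd_t A^{(a)}_\bA\|_{H^7\cap C^7}\leq C(C_0)\ep^3\quad\mbox{for }a=2,3.$$
Moreover both sides of \eqref{whtgi2} and \eqref{whtgi3} are compactly supported in $B(0,R_{supp})$ (indeed, $A^{(a)}_\bA$ is supported there by construction, and the zero-initial-data transport equation propagates this support along the background characteristics of $u_\bA$, which remain within $B(0,R_{supp})$ by Lemma~\ref{lm:Rsupp}), so Lemma~\ref{lemma.transport} applies directly.

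First I would invoke \eqref{lemma.transport.0} with $k=8$: since the initial data for $\wht F^{(a)}_\bA$ vanish, we obtain
$$\sup_{t\in[0,1]}\|\wht F^{(a)}_\bA\|_{H^8}(t)\leq C(C_0)\int_0^1 \|A^{(a)}_\bA\|_{H^8}(s)\,ds\leq C(C_0)\ep^3\leq C(C_0)\ep^2.$$
Next I would apply \eqref{lemma.transport.t} with $k=7$, again using that the initial data vanish (so the term $\|\wht F^{(a)}_\bA\|_{H^8}(0)$ is zero), to get
$$\sup_{t\in[0,1]}\|\rd_t\wht F^{(a)}_\bA\|_{H^7}(t)\leq C(C_0)\int_0^1\|A^{(a)}_\bA\|_{H^8}(s)\,ds\leq C(C_0)\ep^2.$$
Finally I would apply \eqref{lemma.transport.tt} with $k=6$; here one needs to control $\|\rd_t\wht F^{(a)}_\bA\|_{H^7}(0)$, which again vanishes since $\wht F^{(a)}_\bA\equiv 0$ on $\Sigma_0$ forces $\rd_i\wht F^{(a)}_\bA=0$ there, and then the transport equation itself expresses $\rd_t\wht F^{(a)}_\bA$ on $\Sigma_0$ in terms of spatial derivatives of $\wht F^{(a)}_\bA$ and $A^{(a)}_\bA$ — the former vanish, and $A^{(a)}_\bA$ is supported away from $\Sigma_0$ if we arrange it within $J^+_0(\Sigma_0\cap B(0,R))$, or in any case contributes only $O(\ep^3)$. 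Thus
$$\sup_{t\in[0,1]}\|\rd_t^2\wht F^{(a)}_\bA\|_{H^6}(t)\leq C(C_0)\int_0^1\left(\|A^{(a)}_\bA\|_{H^7}+\|\rd_t A^{(a)}_\bA\|_{H^7}\right)(s)\,ds\leq C(C_0)\ep^2.$$
Combining the three displays yields the claimed bound $\|\wht F^{(a)}_\bA\|_{H^8}+\|\rd_t\wht F^{(a)}_\bA\|_{H^7}+\|\rd_t^2\wht F^{(a)}_\bA\|_{H^6}\leq C(C_0)\ep^2$.

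There is essentially no serious obstacle here: the proposition is a routine consequence of the transport estimates once the support properties and the bound on $A^{(a)}_\bA$ from Proposition~\ref{A.prop} are in place. The only mild subtlety is bookkeeping the initial-data terms in \eqref{lemma.transport.t} and \eqref{lemma.transport.tt} — one must use the zero-initial-data stipulation in the definitions \eqref{whtgi2}–\eqref{whtgi3} together with the transport equation to see that even the $\rd_t$-data at $t=0$ is controlled by $A^{(a)}_\bA$ alone. Note that this proposition also establishes \eqref{B1}, as asserted in Section~\ref{sec.bootstrap}.
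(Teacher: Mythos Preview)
Your proof is correct and follows the same approach as the paper: apply Lemma~\ref{lemma.transport} to the transport equations \eqref{whtgi2}--\eqref{whtgi3} and use the bounds on $A^{(a)}_\bA$ from Proposition~\ref{A.prop}. Your write-up is simply more detailed than the paper's two-line proof, including the explicit handling of the zero initial data and the $\rd_t$-data at $t=0$.
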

\begin{proof}
By Lemma \ref{lemma.transport} and the transport equations \eqref{whtgi2} and \eqref{whtgi3}, it suffices to control the terms $A_\bA^{(a)}$ and $\rd_t A_\bA^{(a)}$. This then follows from Proposition \ref{A.prop}.

\end{proof}

\subsection{Estimates for the remaining term $\mathcal E_\lambda$}\label{secelambda}

Given the estimates derived in Sections \ref{secwhtf} and \ref{secwhtg} above, in order to control the scalar field $\phi_\lambda$, it remains to estimate $\mathcal E_\lambda$. Since $\Box_g\phi_\lambda=0$ by \eqref{sys}, $\mathcal E_\lambda$ obeys the equation 
\begin{equation}\label{box.E.final}
\Box_g\mathcal E_\lambda=-\Box_g(\phi_\lambda-\mathcal E_\lambda)=-(\mbox{terms in Proposition \ref{main.term}}).
\end{equation}

We further decompose $\mathcal E_\lambda$ and the remainder of this section will be dedicated to estimating each of the decomposed piece. More precisely, define
\begin{equation}\label{E.decompose}
\mathcal E_\lambda:=\mathcal E_\lambda^{(elliptic)}+\mathcal E_\lambda^{(\rd_t\mfg_3)}+\mathcal E_\lambda^{(Error)},
\end{equation}
where 
\begin{equation}\label{E1.def}
\begin{split}
&\mathcal E_\lambda^{(elliptic)}\\
:=&{\sum_{\pm}}\sum_\bA\sum_{\bB\neq \bA}  \f{\lambda^3 B^{(2,\pm)}_{\bA\bB}}{(g_0^{-1})^{\alpha\beta}(\rd_\alpha (u_\bA\pm  u_\bB))(\rd_\beta (u_\bA\pm  u_\bB))}\si{\f{u_\bA\pm u_\bB}{\lambda}}\\
&{+}\sum_{\pm}\sum_\bA\sum_{\bB}  \f{\lambda^3 B^{(3,\pm)}_{\bA\bB}}{(g_0^{-1})^{\alpha\beta}(\rd_\alpha (u_\bA\pm  2u_\bB))(\rd_\beta (u_\bA\pm  2u_\bB))}\co{\f{u_\bA\pm 2u_\bB}{\lambda}}\\
&{+}\sum_{\pm_1}\sum_{\pm_2}\sum_\bA\sum_{\bB\neq \bA}\sum_{\substack{{\bf C}\neq \bB\\{\bf C}\neq \bA}}\f{\lambda^3 D_{\bA\bB{\bf C}}^{(\pm_1,\pm_2)}}{(g_0^{-1})^{\alpha\beta}(\rd_\alpha (u_\bA\pm_1  u_\bB \pm_2 u_{\bf C}))(\rd_\beta (u_\bA \pm_1  u_\bB\pm_2 u_{\bf C}))}\co{\f{u_\bA\pm_1 u_\bB\pm_2 u_{\bf C}}{\lambda}},
\end{split}
\end{equation}
with $B^{(2,\pm)}_{\bA\bB}$, $B^{(3,\pm)}_{\bA\bB}$ and $D_{\bA\bB{\bf C}}^{(\pm_1,\pm_2)}$ as in Proposition \ref{main.term}; $\mathcal E_\lambda^{(\rd_t\mfg_3)}$ is the unique solution to the wave equation
\begin{equation}\label{E2.def}
\begin{cases}
\Box_g \mathcal E_\lambda^{(\rd_t\mfg_3)}={-}\left(\rd_t (g'_3)^{t\beta}+(g_0^{-1})^{t\beta}\f{\rd_t N_3}{N_0}\right)(\rd_\beta \phi_0)\\
(\mathcal E_\lambda^{(\rd_t\mfg_3)}, \rd_t\mathcal E_\lambda^{(\rd_t\mfg_3)})\restriction_{\Sigma_0}=(0,0);
\end{cases}
\end{equation}
and $\mathcal E_\lambda^{(Error)}$ is defined so that \eqref{E.decompose} holds. According to Proposition \ref{main.term}, this means that $\mathcal E_\lambda^{(Error)}$ satisfies
\begin{equation}\label{E3.eqn}
\begin{split}
{-}\Box_g \mathcal E_\lambda^{(Error)}=&R+\sum_{\pm}\sum_\bA\sum_{\bB\neq \bA} \lambda B^{(2,\pm)}_{\bA\bB}\si{\f{u_\bA\pm u_\bB}{\lambda}}+\sum_{\pm}\sum_\bA\sum_{\bB}  \lambda B^{(3,\pm)}_{\bA\bB}\co{\f{u_\bA\pm 2u_\bB}{\lambda}}\\
&+\sum_{\pm_1}\sum_{\pm_2}\sum_\bA\sum_{\bB\neq \bA}\sum_{\substack{{\bf C}\neq \bB\\{\bf C}\neq \bA}}\lambda D_{\bA\bB{\bf C}}^{(\pm_1,\pm_2)}\co{\f{u_\bA\pm_1 u_\bB\pm_2 u_{\bf C}}{\lambda}}{+}\Box_g\mathcal E_\lambda^{(elliptic)}{,}
\end{split}
\end{equation}
with initial conditions
\begin{equation}\label{E3.IC}
\begin{split}
\mathcal E_\lambda^{(Error)}\restriction_{\Sigma_0}=&
\phi-\phi_0-\sum_\bA \lambda F_\bA \cos\left(\frac{u_\bA}{\lambda}\right)-\mathcal E_\lambda^{(elliptic)}\\
\rd_t\mathcal E_\lambda^{(Error)}\restriction_{\Sigma_0}
=&\rd_t\Big(\phi-\phi_0-
\sum_\bA \lambda F_\bA \cos\left(\frac{u_\bA}{\lambda}\right) -\sum_\bA \lambda^2 \wht F_\bA \si{\frac{u_\bA}{\lambda}}\\
&\quad-\sum_\bA \lambda^2 \wht F_\bA^{(2)}\co{\frac{2u_\bA}{\lambda}}-\sum_\bA  \lambda^2\wht F_\bA^{(3)} \si{\frac{3u_\bA}{\lambda }}
-\mathcal E_\lambda^{(elliptic)}\Big)\restriction_{\Sigma_0}.
\end{split}
\end{equation}

Notice that $\mathcal E_\lambda^{(elliptic)}$ is well-defined thanks to the fact that the $u_\bA$'s are null adapted (i.e., by Definition \ref{def.null}, we can divide by $(g_0^{-1})^{\alpha\beta}(\rd_\alpha (u_\bA\pm  u_\bB))(\rd_\beta (u_\bA\pm  u_\bB))$, etc.) The choice of $\mathcal E_\lambda^{(elliptic)}$ (and $\mathcal E_\lambda^{(\rd_t\mfg_3)}$ ) is such that the {RHS} of \eqref{E3.eqn} and the initial conditions in \eqref{E3.IC} are at most of size $C(C_1)\ep\lambda^2$ and so that $\mathcal E_\lambda^{(elliptic)}$ itself is also at most $C(C_0)\ep^3 \lambda^3$. That this can be achieved is precisely because the $u_\bA$'s are null adapted so that order $C(C_0)\ep^2\lambda$ oscillating terms are oscillating in a \underline{non-null} direction. Consequently, for the high frequency part, $\Box_g$ heuristically ``looks like an elliptic operator''.

In the next two propositions, we make precise the smallness of $\mathcal E_\lambda^{(elliptic)}$, of the {RHS} of \eqref{E3.eqn} and of \eqref{E3.IC}.

\begin{proposition}\label{E1.prop}
{$\mathcal E_\lambda^{(elliptic)}$ satisfies}
$$\sum_{k\leq 8}\lambda^k\|\mathcal E_\lambda^{(elliptic)}\|_{H^k}+\sum_{k\leq 7}\lambda^{k+1}\|\rd_t\mathcal E_\lambda^{(elliptic)}\|_{H^k}+\sum_{k\leq 6}\lambda^{k+2}\|\rd_t^2\mathcal E_\lambda^{(elliptic)}\|_{H^k}\leq C(C_0)\ep^3 \lambda^3.$$
\end{proposition}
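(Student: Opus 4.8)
\textbf{Proof of Proposition~\ref{E1.prop}.}
The plan is to read off all the required bounds directly from the explicit formula \eqref{E1.def}, since $\mathcal E_\lambda^{(elliptic)}$ is defined in closed form in terms of background quantities, the coefficients $B^{(2,\pm)}_{\bA\bB}$, $B^{(3,\pm)}_{\bA\bB}$, $D_{\bA\bB{\bf C}}^{(\pm_1,\pm_2)}$ from Proposition~\ref{main.term}, and the denominators $(g_0^{-1})^{\alpha\beta}(\rd_\alpha(u_\bA\pm u_\bB))(\rd_\beta(u_\bA\pm u_\bB))$ etc. First, by Proposition~\ref{main.term} each coefficient $B^{(2,\pm)}_{\bA\bB}$, $B^{(3,\pm)}_{\bA\bB}$, $D_{\bA\bB{\bf C}}^{(\pm_1,\pm_2)}$ is compactly supported in $B(0,R_{supp})$ and obeys $\|\cdot\|_{H^8\cap C^8}+\|\rd_t(\cdot)\|_{H^7\cap C^7}+\|\rd_t^2(\cdot)\|_{H^6\cap C^6}\leq C(C_0)\ep^3$. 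Second, since the $u_\bA$'s are null adapted (Definition~\ref{def.null}), the denominators are bounded below in absolute value by $\eta e^{-2\gamma_0}$, hence their reciprocals, together with their $\rd$- and $\rd_t$-derivatives, are controlled in $H^k\cap C^k$ on $B(0,R_{supp}+1)$ by $C(C_0)$ using Corollary~\ref{lwp.small} and the product/quotient estimates of Appendix~\ref{weightedsobolev} (in the form of Proposition~\ref{product.local}, applied on the compact set). Third, the phase factors $\sin$, $\cos$ of $(u_\bA\pm u_\bB)/\lambda$, $(u_\bA\pm 2u_\bB)/\lambda$, $(u_\bA\pm_1 u_\bB\pm_2 u_{\bf C})/\lambda$ obey $\|\rd^k(\sin\text{ or }\cos)\|_{L^\infty}\lesssim \lambda^{-k}\|\nabla u\|_{C^{k-1}}^k$, i.e. each spatial or time derivative hitting the oscillation costs exactly one power of $\lambda^{-1}$, with the implicit constant controlled by $C(C_0)$ via the background bounds on $u_\bA$.

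The main step is then just Leibniz: for each term in \eqref{E1.def}, which carries a prefactor $\lambda^3$, differentiating $k$ times distributes derivatives among the (smooth, $C(C_0)$-bounded) coefficient, the (smooth, $C(C_0)$-bounded) reciprocal of the denominator, and the oscillatory factor. The worst case puts all $k$ derivatives on the oscillation, producing a factor $\lambda^{-k}$; combined with the $\lambda^3$ prefactor and the $C(C_0)\ep^3$ size of the coefficient this yields $\|\mathcal E_\lambda^{(elliptic)}\|_{H^k}\leq C(C_0)\ep^3\lambda^{3-k}$, i.e. $\lambda^k\|\mathcal E_\lambda^{(elliptic)}\|_{H^k}\leq C(C_0)\ep^3\lambda^3$ for $k\leq 8$ (the ceiling $k=8$ being exactly the regularity available for the $B$'s and $D$'s). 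For $\rd_t\mathcal E_\lambda^{(elliptic)}$ one extra derivative either hits the oscillation (costing $\lambda^{-1}$, giving $\lambda^{k+1}\|\rd_t(\cdot)\|_{H^k}\leq C(C_0)\ep^3\lambda^3$ for $k\leq 7$) or hits a coefficient, in which case $\|\rd_t(\cdot)\|_{H^7\cap C^7}\leq C(C_0)\ep^3$ is still available and the bound is even better. For $\rd_t^2\mathcal E_\lambda^{(elliptic)}$ one uses $\|\rd_t^2(\cdot)\|_{H^6\cap C^6}\leq C(C_0)\ep^3$ and the same bookkeeping, getting $\lambda^{k+2}\|\rd_t^2(\cdot)\|_{H^k}\leq C(C_0)\ep^3\lambda^3$ for $k\leq 6$. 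All norms are finite because everything is compactly supported in $B(0,R_{supp})$, so weighted and unweighted norms are comparable there and we need not track decay at infinity.

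The only mild subtlety — and the place to be slightly careful rather than a genuine obstacle — is estimating the reciprocal of a denominator like $(g_0^{-1})^{\alpha\beta}(\rd_\alpha(u_\bA\pm u_\bB))(\rd_\beta(u_\bA\pm u_\bB))$ in $H^k\cap C^k$: one writes $1/D$ and uses that $D$ is bounded away from $0$ (null adaptedness) together with $D-D_\infty\in H^{k}$ for a suitable constant $D_\infty$ (coming from the asymptotically conic structure and $\nabla u_\bA-\overrightarrow{c_\bA}\in H^k_\delta$, plus the background metric bounds), so that $1/D - 1/D_\infty$ lies in a weighted Sobolev space and, on the compact set $B(0,R_{supp}+1)$, in $H^k\cap C^k$ with norm $\leq C(C_0)$ by the Moser-type product and composition estimates of Appendix~\ref{weightedsobolev}. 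With this in hand, the estimate is a direct (if tedious) application of the product rule, and no use of the bootstrap assumptions is needed since $\mathcal E_\lambda^{(elliptic)}$ depends only on the background; this is exactly why the constant is $C(C_0)$ rather than $C(C_1)$. $\qed$
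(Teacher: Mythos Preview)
Your proof is correct and follows the same approach as the paper, which simply states that the estimate ``is an immediate consequence of the definition \eqref{E1.def} and the estimates in Proposition~\ref{main.term} for $B^{(2,\pm)}_{\bA\bB}$, $B^{(3,\pm)}_{\bA\bB}$ and $D_{\bA\bB{\bf C}}^{(\pm_1,\pm_2)}$.'' You have simply unpacked this one-liner by spelling out the Leibniz bookkeeping and the control of the denominators via null adaptedness; your discussion of $1/D$ near infinity is slightly more than needed (the coefficients $B$, $D$ are compactly supported, so only the behavior of $1/D$ on $B(0,R_{supp})$ matters), but this does not affect correctness.
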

\begin{proof}
This is an immediate consequence of the definition \eqref{E1.def} and the estimates in Proposition \ref{main.term} for $B^{(2,\pm)}_{\bA\bB}$, $B^{(3,\pm)}_{\bA\bB}$ and $D_{\bA\bB{\bf C}}^{(\pm_1,\pm_2)}$.
\end{proof}

\begin{proposition}\label{est.E3.inho}
The {RHS} of \eqref{E3.eqn} (for $\mathcal E_\lambda^{(Error)}$) can be estimated by
\begin{equation}\label{est.E3.inho.1}
\sum_{k\leq 3}\lambda^k\|\cdot\|_{H^k}\leq C(C_1)\ep\lambda^2.
\end{equation}
Moreover, the initial data for $\mathcal E_\lambda^{(Error)}$ can be bounded as follows:
\begin{equation}\label{est.E3.inho.2}
\sum_{k\leq {5}}\lambda^k\left\|\mathcal E_\lambda^{(Error)}\restriction_{\Sigma_0}\right\|_{H^k}+\sum_{k\leq {4}}\lambda^{k{+1}}\left\| (\rd_t\mathcal E_\lambda^{(Error)})\restriction_{\Sigma_0}\right\|_{H^k}\leq  C(C_0)\ep\lambda^2.
\end{equation}
\end{proposition}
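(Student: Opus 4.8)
\textbf{Proof plan for Proposition~\ref{est.E3.inho}.}

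The strategy is to estimate each of the terms on the right-hand side of \eqref{E3.eqn} and in \eqref{E3.IC} separately, using the definitions of the decomposed pieces together with the bounds from Proposition~\ref{main.term}, Proposition~\ref{E1.prop}, and the properties of the eikonal functions. For the estimate \eqref{est.E3.inho.1}, we go term by term. The term $R$ is bounded directly by \eqref{R.bound}. For the oscillating terms $\lambda B^{(2,\pm)}_{\bA\bB}\si{\f{u_\bA\pm u_\bB}{\lambda}}$, $\lambda B^{(3,\pm)}_{\bA\bB}\co{\f{u_\bA\pm 2u_\bB}{\lambda}}$, and $\lambda D_{\bA\bB{\bf C}}^{(\pm_1,\pm_2)}\co{\f{u_\bA\pm_1 u_\bB\pm_2 u_{\bf C}}{\lambda}}$, each explicit prefactor $B$ or $D$ depends only on the background and is bounded (together with its $\rd_t$ derivative) by $C(C_0)\ep^3$ in $H^8\cap C^8$, so differentiating the product $k\leq 3$ times and distributing derivatives (each hitting an oscillating factor costs a $\lambda^{-1}$, each hitting a phase $u_{\bA}\pm\cdots$ contributes a bounded factor using the bounds on $\nabla u_{\bA}$ in Corollary~\ref{lwp.small}) yields $\lambda^k\|\cdot\|_{H^k}\leq C(C_0)\ep^3\lambda \leq C(C_1)\ep\lambda^2$ once we absorb the extra power of $\lambda$; but in fact we gain more — the point is exactly that these are the ``$O(\ep\lambda)$'' oscillating terms that are harmless, and since they come already multiplied by $\lambda$, we have $\lambda^k\|\lambda B\,(\text{osc})\|_{H^k}\lesssim C(C_0)\ep^3\lambda^{1}\cdot\lambda^{k}\cdot\lambda^{-k}=C(C_0)\ep^3\lambda$, which is $\leq C(C_1)\ep\lambda^2$ only if we are more careful. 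The resolution is that we do \emph{not} need to differentiate the oscillating factor: we simply bound $\lambda^k\|\lambda B\,(\text{osc})\|_{H^k}$ by distributing the $k$ derivatives, noting that each derivative landing on the oscillating factor produces $\lambda^{-1}(\text{bounded})$, compensated by the prefactor $\lambda^k$, giving $C(C_0)\ep^3\lambda^{1}$; to get the extra $\lambda$ we observe that these terms appear in \eqref{E3.eqn} already as $\lambda B$, i.e.\ with one power $\lambda$ more than the ``$O(1)$'' scale — wait, this requires care, so in the write-up I will just invoke that all these terms are of the stated form with coefficient of size $C(C_0)\ep^3$ and an overall prefactor $\lambda$, hence contribute $C(C_0)\ep^3\lambda\le C(C_1)\ep\lambda^2$ after using $\ep\le\ep_0$ and $C(C_1)\ge C(C_0)\ep^2$.

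Let me restate this more cleanly. The genuinely new contribution to the right-hand side of \eqref{E3.eqn} is $\Box_g\mathcal E_\lambda^{(elliptic)}$, and this is where the null-adaptedness is essential. By construction \eqref{E1.def}, $\mathcal E_\lambda^{(elliptic)}$ is built so that $\Box_{g_0}\mathcal E_\lambda^{(elliptic)}$ reproduces, to leading order, precisely the $\lambda B$ and $\lambda D$ oscillating terms on the right-hand side of \eqref{E3.eqn} with the opposite sign: indeed applying $\Box_{g_0}$ to $\f{\lambda^3 B}{(g_0^{-1})^{\alpha\beta}\rd_\alpha\Phi\,\rd_\beta\Phi}(\text{osc with phase }\Phi)$ produces $-\lambda B\,(\text{osc})$ from the two derivatives hitting the oscillating factor — this uses exactly that the denominator $(g_0^{-1})^{\alpha\beta}\rd_\alpha\Phi\rd_\beta\Phi$ is bounded away from zero (null-adaptedness, Definition~\ref{def.null}), which is what makes division legitimate and the quotient as regular as $B$ itself (here I use the weighted/local product and quotient estimates from Appendix~\ref{weightedsobolev}, plus the $H^k_\delta$ bounds on $\nabla u_{\bA}-\overrightarrow{c_{\bA}}$ and on $\gamma_0, N_0, \beta_0$). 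The remaining terms from $\Box_{g_0}\mathcal E_\lambda^{(elliptic)}$ — where at most one derivative hits the oscillating factor — carry at least one extra power of $\lambda$ relative to $\mathcal E_\lambda^{(elliptic)}$ itself, hence are of size $C(C_0)\ep^3\lambda^{1}$ in the relevant norm, and $(\Box_g-\Box_{g_0})\mathcal E_\lambda^{(elliptic)}$ is handled by Proposition~\ref{box.diff.prop} (or \ref{box.diff.bis.prop}) applied with $f=\mathcal E_\lambda^{(elliptic)}$, using the $H^k$ and $C^k$ bounds on $\mathcal E_\lambda^{(elliptic)}$ from Proposition~\ref{E1.prop}; since $\mathcal E_\lambda^{(elliptic)}$ and its first two derivatives are already $O_\ep(\lambda^3)$-small, this commutator term is comfortably within $C(C_1)\ep\lambda^2$. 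Thus all contributions to $\sum_{k\leq 3}\lambda^k\|\cdot\|_{H^k}$ of the right-hand side of \eqref{E3.eqn} are $\leq C(C_1)\ep\lambda^2$.

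For the initial-data estimate \eqref{est.E3.inho.2}, I would expand $\mathcal E_\lambda^{(Error)}\restriction_{\Sigma_0}$ and $\rd_t\mathcal E_\lambda^{(Error)}\restriction_{\Sigma_0}$ using \eqref{E3.IC} and the explicit form \eqref{phi.i.def}, \eqref{phid.i.def} of the data for $\phi_\lambda$ constructed in Lemma~\ref{lmini}. For the first line of \eqref{E3.IC}, the difference $\phi\restriction_{\Sigma_0}-\phi_0\restriction_{\Sigma_0}-\sum_\bA\lambda F_\bA\co{u_\bA/\lambda}\restriction_{\Sigma_0}$ equals $\Omega_0 r_0$, which by \eqref{r.small} and \eqref{Om.small} is bounded in $H^{10}$ (hence in $H^5$) by $\sqrt2\ep\lambda^2$; then subtract $\mathcal E_\lambda^{(elliptic)}\restriction_{\Sigma_0}$, which is $O(\ep^3\lambda^3)$ in $\sum_{k\leq5}\lambda^k\|\cdot\|_{H^k}$ by Proposition~\ref{E1.prop}. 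For the $\rd_t$ line, one first notes that $\rd_t\phi\restriction_{\Sigma_0}$ is determined by the lapse, shift and $\dot\phi_\lambda$ via $e_0$ and the relation \eqref{data.rescaled}: here I would use the explicit data \eqref{phid.i.def} together with the algebraic relation between $\dot\phi$ and $\rd_t\phi$, matching it against the parametrix terms $\rd_t\big(\sum_\bA\lambda F_\bA\co{u_\bA/\lambda}+\lambda^2\wht F_\bA\si{u_\bA/\lambda}+\lambda^2\wht F_\bA^{(2)}\co{2u_\bA/\lambda}+\lambda^2\wht F_\bA^{(3)}\si{3u_\bA/\lambda}\big)\restriction_{\Sigma_0}$. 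The leading $\rd_t$ of the $\lambda F_\bA\co{u_\bA/\lambda}$ term produces $-e_0u_\bA\,F_\bA\si{u_\bA/\lambda}/N_0 = -e^{\gamma_0}|\nabla u_\bA|F_\bA\si{u_\bA/\lambda}$ (using the eikonal relation $e_0 u_\bA = N_0 e^{-\gamma_0}|\nabla u_\bA|$ from the background, cf.\ the computation before Definition~\ref{def.null}), which is exactly the oscillatory correction built into $\dot\phi_\lambda$ in \eqref{phid.i.def}; the mismatch is $\Omega_1 r_1$ plus the $O(\lambda)$-corrections coming from $\rd_t$ hitting $F_\bA$ or $u_\bA/\lambda$ being a lower-order effect, all bounded (using $\wht F_\bA^{(a)}$ bounds \eqref{B1}, the Lemma~\ref{lmini} estimates, and $\|r_1\|_{H^9}\le\ep$, $|\Omega_1|\le\sqrt2\lambda^2$) by $C(C_0)\ep\lambda^2$ in $\sum_{k\leq4}\lambda^k\|\cdot\|_{H^k}$, after again subtracting $\rd_t\mathcal E_\lambda^{(elliptic)}\restriction_{\Sigma_0}=O(\ep^3\lambda^3)$ via Proposition~\ref{E1.prop}. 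The main obstacle here is bookkeeping: one must verify that the $\wht F_\bA$-, $\wht F_\bA^{(2)}$-, $\wht F_\bA^{(3)}$-terms subtracted in \eqref{E3.IC} have \emph{zero} initial data — which is true by construction, since $\wht F_\bA$, $\wht F_\bA^{(2)}$, $\wht F_\bA^{(3)}$ solve transport equations with zero initial data (see \eqref{whtfi}--\eqref{whtgi3}) — so their $\rd_t$-traces at $\Sigma_0$ are \emph{not} zero but are instead determined by the transport equations and hence controlled by Propositions~\ref{whtF.est} and \ref{whtG.prop}; tracking these contributions carefully is the only delicate point, and it is routine given those propositions.
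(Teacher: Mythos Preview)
Your proposal is essentially correct and follows the paper's own approach: bound $R$ by \eqref{R.bound}, then use that $\Box_g\mathcal E_\lambda^{(elliptic)}$ is constructed precisely to cancel the $\lambda B^{(2,\pm)}$, $\lambda B^{(3,\pm)}$, $\lambda D$ oscillating terms up to an $O_\ep(\lambda^2)$ remainder (the null-adaptedness being what makes the division in \eqref{E1.def} legitimate), and for the initial data invoke Lemma~\ref{lmini}, Proposition~\ref{E1.prop}, and the vanishing initial data of $\wht F_\bA$, $\wht F_\bA^{(2)}$, $\wht F_\bA^{(3)}$.

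One correction: your first paragraph should be excised entirely. The attempt to bound the $\lambda B$, $\lambda D$ terms \emph{directly} (without the cancellation) yields only $C(C_0)\ep^3\lambda$, and your proposed rescue $C(C_0)\ep^3\lambda\le C(C_1)\ep\lambda^2$ is \emph{false} in the regime $\lambda\ll\ep$ of the paper --- it would require $\ep^2\lesssim\lambda$, the opposite of the stated hierarchy. You recognize this and restart (``Let me restate this more cleanly''), and the restart is the correct argument; just delete the false start. The cancellation with $\Box_g\mathcal E_\lambda^{(elliptic)}$ is not optional --- it is the entire point of introducing $\mathcal E_\lambda^{(elliptic)}$ in the decomposition \eqref{E.decompose}.
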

\begin{proof}
To estimate the {RHS} of \eqref{E3.eqn}, we first note that the term $R$ satisfies the desired estimate by \eqref{R.bound} in Proposition \ref{main.term}. Then, note that  $\mathcal E_\lambda^{(elliptic)}$ is constructed so that $\Box_g\mathcal E_\lambda^{(elliptic)}$ cancels with the remaining terms up to error terms of size $C(C_0)\ep^3\lambda^2$, i.e.,
\begin{equation*}
\begin{split}
\sum_{k\leq 6}\lambda^k\|&\sum_{\pm}\sum_\bA\sum_{\bB\neq \bA} \lambda B^{(2,\pm)}_{\bA\bB}\si{\f{u_\bA\pm u_\bB}{\lambda}}+\sum_{\pm}\sum_\bA\sum_{\bB}  \lambda B^{(3,\pm)}_{\bA\bB}\co{\f{u_\bA\pm 2u_\bB}{\lambda}}\\
&+\sum_{\pm_1}\sum_{\pm_2}\sum_\bA\sum_{\bB\neq \bA}\sum_{\substack{{\bf C}\neq \bB\\{\bf C}\neq \bA}}\lambda D_{\bA\bB{\bf C}}^{(\pm_1,\pm_2)}\co{\f{u_\bA\pm_1 u_\bB\pm_2 u_{\bf C}}{\lambda}}-\Box_g\mathcal E_\lambda^{(elliptic)}\|_{H^k} \leq C(C_0)\ep^3\lambda^2.
\end{split}
\end{equation*}
We hence conclude \eqref{est.E3.inho.1}. To prove \eqref{est.E3.inho.2}, we recall \eqref{E3.IC}{. By Lemma~\ref{lmini},} 
$$\left\|\left(\phi-\phi_0-\sum_\bA \lambda F_\bA \cos\left(\frac{u_\bA}{\lambda}\right),\dot{\phi}-\dot{\phi_0}+\sum_\bA F_\bA |\nabla u_\bA| e^{\gamma_0} \sin\left(\frac{u_\bA}{\lambda}\right)\right)\restriction_{\Sigma_0}\right\|_{{H^5\times H^4}}{\leq C\ep\lambda^2.}$$ 
{By Proposition~\ref{E1.prop},}
$${\sum_{k\leq 8}\lambda^k\|\mathcal E_\lambda^{(elliptic)}\|_{H^k}+\sum_{k\leq 7}\lambda^{k+1}\|\rd_t\mathcal E_\lambda^{(elliptic)}\|_{H^k}\leq C(C_0)\ep^3 \lambda^3.}$$ 
{T}herefore the initial data for ${(}\mathcal E_\lambda^{(Error)}{, \rd_t\mathcal E_\lambda^{(Error)})}$ {obey the stated estimates}. ({Here, we have used} that the initial data for $\wht F_{\bA}$, $\wht F^{(2)}_{\bA}$ and $\wht F^{(3)}_{\bA}$ are zero, and if a time derivative falls on these term {(instead of the oscillating factors)}, this give a contribution {that can be} controlled by a higher power of $\lambda$).
\end{proof}

Proposition \ref{est.E3.inho} is already a good indication that $\mathcal E_\lambda^{(Error)}$ can be appropriately controlled. To proceed, we consider general energy estimates for solutions to general wave equations so that we can treat both $\mathcal E_\lambda^{(\rd_t\mfg_3)}$ and $\mathcal E_\lambda^{(Error)}$.

\begin{proposition}\label{EE.prop}
Let $f$ be a solution to
$$\Box_g f=h$$
with both $f$ and $h$ being compactly supported in $B(0,R_{supp})$. Then there exists a constant $C(C_0)>0$ such that $f$ obeys the estimate
\begin{equation}\label{EE.prop.main}
\sup_{t\in [0,1]}\|\rd f\|_{L^2}^2(t)\leq C(C_0) \left(\|\rd f\|_{L^2}^2(0)+\sup_{t\in [0,1]}\left|\int_0^t \int_{\Sigma_{t'}} (\rd_t f) h(t',x)\, \sqrt{|\det g|}\, dx\, dt'\right|\right).
\end{equation}
Consequently, it also holds that
\begin{equation}\label{EE.prop.2}
\sup_{t\in [0,1]}\|\rd f\|_{L^2}(t)\leq C(C_0) \left(\|\rd f\|_{L^2}(0)+\|h\|_{L^2}\right).
\end{equation}
\end{proposition}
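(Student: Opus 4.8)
The plan is to prove the energy estimate \eqref{EE.prop.main} by a standard multiplier argument applied to the vector field $e_0 = \rd_t - \beta^i\rd_i$ (equivalently, to contract the energy-momentum tensor of $f$ with the future-directed normal to $\Sigma_t$), and then deduce \eqref{EE.prop.2} from \eqref{EE.prop.main} by Cauchy--Schwarz in time. First I would write the energy-momentum tensor $Q_{\mu\nu}[f] := \rd_\mu f\,\rd_\nu f - \tfrac12 g_{\mu\nu}(g^{-1})^{\alpha\beta}\rd_\alpha f\,\rd_\beta f$ and recall the divergence identity $D^\mu Q_{\mu\nu} = (\Box_g f)\,\rd_\nu f = h\,\rd_\nu f$. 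Contracting with the multiplier $e_0$ and integrating over the spacetime slab $[0,t]\times\m R^2$ with respect to the volume form $\sqrt{|\det g|}\,dx\,dt'$, the divergence theorem produces a boundary term on $\Sigma_t$, a boundary term on $\Sigma_0$, and a bulk term; the bulk term splits into the source contribution $\int_0^t\int_{\Sigma_{t'}} (e_0 f)\, h\,\sqrt{|\det g|}$ and a term involving the deformation tensor $^{(e_0)}\pi_{\mu\nu} = \tfrac12(\q L_{e_0} g)_{\mu\nu}$ contracted against $Q[f]$. There is no boundary term at spatial infinity because $f$ is compactly supported in $B(0,R_{supp})$ for every $t$.

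The next step is to check positivity and boundedness of the boundary terms and to absorb the deformation-tensor bulk term by Gr\"onwall. Since $g$ is given in the form \eqref{g.form} with $e_0$ future-directed normal to $\Sigma_t$, the flux $\int_{\Sigma_t} Q[f](e_0, n)\,d\mu_{\Sigma_t}$ (with $n$ the unit normal) is, up to the lapse $N$ and the conformal factor $e^{2\gamma}$, comparable to $\ld{\rd f}^2(t)$ on the support of $f$; here one uses the uniform bounds on $N$, $\gamma$ and $\beta$ from Corollary~\ref{lwp.small} (more precisely the lapse is bounded above and below, and $|\beta|$ is small, so $e_0$ remains uniformly timelike), together with the fact that we only integrate over the compact set $B(0,R_{supp})$. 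The coercivity constant and the comparison are therefore controlled by $C(C_0)$. The deformation tensor $^{(e_0)}\pi$ involves one derivative of the metric components $N$, $\gamma$, $\beta$, together with $K_{ij}$; all of these are bounded in $L^\infty_t C^0_x$ on $B(0,R_{supp})$ by $C(C_0)$ using Corollary~\ref{lwp.small} and the elliptic equations \eqref{elliptic.1}--\eqref{elliptic.4} (alternatively the $H^{k+1}$ bounds plus Sobolev embedding). Hence the bulk error term is bounded by $C(C_0)\int_0^t \ld{\rd f}^2(t')\,dt'$, and Gr\"onwall's inequality gives
\begin{equation*}
\sup_{t\in[0,1]}\ld{\rd f}^2(t) \le C(C_0)\left(\ld{\rd f}^2(0) + \sup_{t\in[0,1]}\left|\int_0^t\int_{\Sigma_{t'}}(\rd_t f)\,h(t',x)\,\sqrt{|\det g|}\,dx\,dt'\right|\right),
\end{equation*}
where we have also replaced $e_0 f = \rd_t f - \beta^i\rd_i f$ by $\rd_t f$ in the source term at the cost of a term $\int_0^t \int \beta^i(\rd_i f) h$, which by Cauchy--Schwarz and $|\beta|\lesssim \ep$ is absorbed into the left-hand side after another Gr\"onwall step (or simply bounded by $C(C_0)(\ld{\rd f}^2(0)+\int_0^t\ld{\rd f}\,\ld h)$, then handled as below). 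This is \eqref{EE.prop.main}.

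Finally, to obtain \eqref{EE.prop.2} from \eqref{EE.prop.main}, bound the source term crudely: by Cauchy--Schwarz on $B(0,R_{supp})$ and the uniform bound $\sqrt{|\det g|} = e^{2\gamma}N \le C(C_0)$,
\begin{equation*}
\left|\int_0^t\int_{\Sigma_{t'}}(\rd_t f)\,h\,\sqrt{|\det g|}\,dx\,dt'\right| \le C(C_0)\int_0^1 \ld{\rd_t f}(t')\,\ld{h}(t')\,dt' \le C(C_0)\,\Big(\sup_{t'\in[0,1]}\ld{\rd f}(t')\Big)\,\|h\|_{L^2},
\end{equation*}
recalling that in our conventions $\|h\|_{L^2}$ denotes a fixed-time norm so one should read $\|h\|_{L^2}$ here as $\int_0^1 \ld{h}(t')\,dt'$ or, after enlarging the constant, $\sup_{t'}\ld{h}(t')$ (the two are comparable on $[0,1]$; alternatively keep $\int_0^1\ld{h}$ throughout). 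Plugging into \eqref{EE.prop.main} and using $ab \le \tfrac12\eta a^2 + \tfrac1{2\eta}b^2$ to absorb $C(C_0)(\sup_t\ld{\rd f})\|h\|_{L^2}$ into $\tfrac12\sup_t\ld{\rd f}^2$ yields $\sup_{t\in[0,1]}\ld{\rd f}^2 \le C(C_0)(\ld{\rd f}^2(0) + \|h\|_{L^2}^2)$, and taking square roots gives \eqref{EE.prop.2}. The main obstacle is purely bookkeeping: ensuring that the coercivity of the boundary flux and the control of the deformation tensor only cost a constant $C(C_0)$ (not $C(C_1)$), which is where one must be careful to invoke only the background metric estimates of Corollary~\ref{lwp.small} and the compact support of $f$, rather than the bootstrap quantities; the PDE input itself is the textbook energy identity for $\Box_g$.
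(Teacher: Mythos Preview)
Your strategy (energy--momentum tensor, divergence identity, Gr\"onwall) is the same as the paper's, but there are two points where your execution diverges in a way that matters.

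\textbf{Choice of multiplier.} The paper uses $\partial_t$ as the multiplier, not $e_0$. With your choice the source term emerging from the divergence identity is $\int_0^t\int_{\Sigma_{t'}}(e_0 f)\,h\,\sqrt{|\det g|}$, and to reach the exact form of \eqref{EE.prop.main} you must dispose of the extra piece $\int_0^t\int_{\Sigma_{t'}}\beta^i(\partial_i f)\,h\,\sqrt{|\det g|}$. Neither of your two suggestions works: this term is bounded by $C(C_0)\int_0^t\|\partial f\|_{L^2}\|h\|_{L^2}\,dt'$, which is \emph{not} of the form $\int_0^t\|\partial f\|_{L^2}^2$ (so Gr\"onwall does not absorb it) and is \emph{not} dominated by $\big|\int_0^t\int(\partial_t f)h\big|$ either. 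You would end up with an unwanted $\|h\|_{L^2}$ on the right-hand side, i.e.\ you prove a weaker inequality than \eqref{EE.prop.main}. That weaker inequality still suffices for \eqref{EE.prop.2}, but the precise form of \eqref{EE.prop.main} is exactly what is exploited in Proposition~\ref{E.rdtg3.est}, where one integrates by parts \emph{inside} the source integral in $t$. The fix is immediate: use $\partial_t$ as the multiplier. The boundary flux then becomes $Q[f](\partial_t, N^{-1}e_0)$, which is still coercive (two future-directed causal vectors), and the source term comes out directly as $(\partial_t f)h$.

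\textbf{Dependence of the constants.} Your claim that one should ``invoke only the background metric estimates of Corollary~\ref{lwp.small}~\ldots~rather than the bootstrap quantities'' is not quite right. The operator is $\Box_g$ for the \emph{full} metric $g_\lambda$, so both the coercivity of the boundary flux and the bound on the deformation tensor involve $g$, not $g_0$. The paper explicitly invokes \eqref{B2}, \eqref{BA3}, \eqref{BA4} here. The reason the constant is still $C(C_0)$ is that these assumptions give $\|\mfg-\mfg_0\|_{C^1(B(0,R_{supp}))}\le C(C_1)\ep\lambda\ll 1$, so $g$ is a $C^1$-small perturbation of $g_0$ on the support of $f$; the perturbation is absorbed and only the $C(C_0)$ bounds for $g_0$ survive.
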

\begin{proof}
Define a $2$-tensor $Q$ as follows:
$$Q_{\alpha \beta}[f] :=\partial_\alpha  f \partial_\beta  f -\frac{1}{2}g_{\alpha \beta}\gi^{\rho\sigma}\partial_\sigma 
 f \partial_\rho  f .$$
By the equation $\Box_g f=h$, we have
$$\gi^{\mu\alpha} D_{\mu} Q_{\alpha \beta}[f]=h \rd_\beta f,$$
where $D$ is the Levi-Civita connection associated to $g$. Define also the \emph{deformation tensor} associated to $\rd_t$
$$^{(\rd_t)}\pi_{\alpha\beta}=D_\alpha (\rd_t)_\beta+D_\beta (\rd_t)_\alpha.$$
By the Stoke's theorem, we have that for every $t\in (0,1]$,
\begin{equation*}
\begin{split}
&\int_{\Sigma_t} Q[f](\rd_t, \f{1}{{N}} e_0)(t,x)\, \sqrt{|\det {\bar{g}}|}\, dx\\
=&\int_{\Sigma_0} Q[f](\rd_t, \f{1}{{N}} e_0)(0,x)\, \sqrt{|\det {\bar{g}}|}\, dx\\
&-\int_0^t \int_{\Sigma_{t'}} \left((\rd_t f) h+\f 12 Q_{\alpha\beta}[f] { }^{(\rd_t)}\pi^{\alpha\beta}\right)(t',x)\, \sqrt{|\det g|}\, dx\, dt'{,}
\end{split}
\end{equation*}
where $\bar{g}$ is as in \eqref{g.form.0}. 
It follows from our assumptions on the background and \eqref{B2}, \eqref{BA3}, \eqref{BA4} that on the compact set $B(0,R_{supp})$,
$$Q[f](\rd_t, \f{1}{{N}} e_0)\sim_{C_0} |\rd f|^2,\quad {\sqrt{|\det \bar{g}|}\sim_{C_0}}\sqrt{|\det g|}\sim_{C_0} 1,\quad |Q_{\alpha\beta}[f] { }^{(\rd_t)}\pi^{\alpha\beta}|\leq C(C_0) |\rd f|^2,$$
where we denote $A\sim_{C_0} B$ if $A\leq C(C_0) B$ and $B\leq C(C_0) A$. \eqref{EE.prop.main} hence follows from the above observations and Gr\"onwall's inequality. To get from \eqref{EE.prop.main} to \eqref{EE.prop.2}, one simply applies the {Cauchy--Schwarz} inequality.
\end{proof}
From Proposition \ref{EE.prop}, we also derive the following energy estimates for higher derivatives:
\begin{cor}\label{EE.cor}
Let $f$ and $h$ be as in Proposition \ref{EE.prop}. For $1\leq k\leq 3$, it holds that
\begin{equation}%\label{EE.cor.2}
\begin{split}
&\sup_{t\in [0,1]}\left(\|\rd f\|_{H^k}(t)+\|\rd^2 f\|_{H^{k-1}}(t)\right)\\
\leq &C(C_0) \left(\|\rd f\|_{H^k}(0)+\|h\|_{H^k}+\sum_{k'\leq k-1}\lambda^{k'-k}\left(\|\rd f\|_{H^{k'}}+\|\rd^2 f\|_{H^{k'-1}}\right)\right).
\end{split}
\end{equation}
\end{cor}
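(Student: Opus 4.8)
The plan is to bootstrap the higher-order energy estimates from the $k=0$ case in Proposition~\ref{EE.prop}. First I would commute the wave equation $\Box_g f = h$ with spatial derivatives $\nabla^\alpha$ for $|\alpha|\leq k$. Since $\Box_g$ has coefficients built from $g^{-1}$ and $\log|\det g|$, the commutator $[\Box_g, \nabla^\alpha]$ produces terms of the schematic form $(\nabla^{\leq k}$ of metric coefficients$)\cdot(\partial^{\leq k+1} f)$; applying Proposition~\ref{EE.prop} (the inequality \eqref{EE.prop.2}) to $\nabla^\alpha f$ then gives
\begin{equation*}
\sup_{t}\|\partial \nabla^\alpha f\|_{L^2}(t)\leq C(C_0)\left(\|\partial\nabla^\alpha f\|_{L^2}(0)+\|\nabla^\alpha h\|_{L^2}+\|[\Box_g,\nabla^\alpha]f\|_{L^2}\right).
\end{equation*}
The point is then to absorb the commutator using the decomposition $g^{-1}=(g_0)^{-1}+g_1'+g_3'+\text{Error}$ from \eqref{g.inverse.decomp} together with Propositions~\ref{g1p.prop}, \ref{g3p.prop}, \ref{gp.error.prop}: the background part $(g_0)^{-1}$ is uniformly smooth so its contribution is harmless, while the perturbative parts $g_1', g_3', \text{Error}$ each come with a power of $\lambda^2$ (or at least $\lambda^{2-k}$ when $k$ derivatives land on them, cf.\ \eqref{BA4}). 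This is exactly the mechanism already used in Proposition~\ref{box.diff.prop}, so the same estimates apply.

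The key structural observation is that each term in $[\Box_g,\nabla^\alpha]f$ is either: (i) a product where at most $k-1$ derivatives fall on $f$, hence controlled by $\|\partial^2 f\|_{H^{k-1}}$ times a bounded metric coefficient (plus lower order in $f$), or (ii) a product where the full $k$ (or $k+1$) derivatives land on the \emph{perturbative} part of the metric, which by the estimates of Propositions~\ref{g1p.prop}--\ref{gp.error.prop} costs a factor $\lambda^{2-k}$ while multiplying only $\|\partial f\|_{L^\infty}$ or $\|\partial f\|_{H^{k'}}$ for $k'<k$. In case (i), the $\|\partial^2 f\|_{H^{k-1}}$ piece is recovered directly from the equation $\Box_g f = h$: writing $\partial_t^2 f = (g^{-1})^{tt,-1}(\text{spatial second derivatives} + \text{lower order} + h)$ using the uniform lower bound on $|(g^{-1})^{tt}|$, one controls $\|\partial^2 f\|_{H^{k-1}}$ by $\|\partial f\|_{H^k} + \|h\|_{H^{k}} + (\text{lower order})$. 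In case (ii), the $\lambda^{2-k}$ prefactor combined with the lower-order norms $\|\partial f\|_{H^{k'}}$, $\|\partial^2 f\|_{H^{k'-1}}$ produces precisely the terms $\sum_{k'\leq k-1}\lambda^{k'-k}(\|\partial f\|_{H^{k'}}+\|\partial^2 f\|_{H^{k'-1}})$ appearing on the right-hand side of the claimed estimate (after relabeling; the extra $\lambda^2$ is a gain, so $\lambda^{2-k}$ is certainly bounded by $\lambda^{k'-k}$ for $k'\geq 0$ up to constants, or one simply keeps the explicit $\lambda$ powers). Summing over $|\alpha|\leq k$ and adding back the $\|\partial^2 f\|_{H^{k-1}}$ bound from the equation yields the corollary.

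The main obstacle I expect is bookkeeping the derivative counting in the commutator $[\Box_g,\nabla^\alpha]$ carefully enough to see that every term either has a spare spatial derivative available to hit $f$ (so it is genuinely ``lower order'' and goes into the sum $\sum_{k'\leq k-1}$) or hits the perturbative metric with enough derivatives to extract the $\lambda^{2-k}$ smallness — and checking that no term requires more than $k+1$ derivatives on $f$ or more regularity on the background metric than is available (Corollary~\ref{lwp.small} provides $H^{k+2}_\delta$ control of $\gamma, N, \beta$, which is comfortably enough for $k\leq 3$). One must also verify that the restriction to $B(0,R_{supp})$ is consistent: since $\Box_g f = h$ with both supported in $B(0,R_{supp})$, finite speed of propagation keeps $f$ supported in a slightly larger ball, on which all the compact-set estimates of Propositions~\ref{g1p.prop}--\ref{gp.error.prop} are valid, so Sobolev norms and pointwise norms are interchangeable up to constants. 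With these points checked, the proof is a routine induction on $k$ using Proposition~\ref{EE.prop} at each stage.

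\begin{proof}
We argue by induction on $k$, the case $k=0$ being contained in Proposition~\ref{EE.prop}. Assume the estimate holds with $k$ replaced by $k-1$. Commuting $\Box_g f = h$ with a spatial derivative operator $\nabla^\alpha$, $|\alpha| = k$, we obtain $\Box_g(\nabla^\alpha f) = \nabla^\alpha h + [\Box_g,\nabla^\alpha] f =: h_\alpha$. Both $\nabla^\alpha f$ and $h_\alpha$ are supported in $B(0,R_{supp})$ by finite speed of propagation. Applying \eqref{EE.prop.2} of Proposition~\ref{EE.prop},
\begin{equation*}
\sup_{t\in[0,1]}\|\partial \nabla^\alpha f\|_{L^2}(t)\leq C(C_0)\left(\|\partial\nabla^\alpha f\|_{L^2}(0) + \|\nabla^\alpha h\|_{L^2} + \|[\Box_g,\nabla^\alpha]f\|_{L^2}\right).
\end{equation*}
Writing $\Box_g = \Box_{g_0} + (\Box_g - \Box_{g_0})$, the commutator with $\Box_{g_0}$ is bounded by $C(C_0)(\|\partial f\|_{H^{k-1}} + \|\partial^2 f\|_{H^{k-2}})$ using the uniform smoothness of the background from Corollary~\ref{lwp.small}; this is an acceptable error term of the claimed form. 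For $[\Box_g - \Box_{g_0}, \nabla^\alpha]f$, we use the decomposition \eqref{g.inverse.decomp} and the structure exhibited in the proof of Proposition~\ref{box.diff.prop}: each resulting term is, schematically, a product of $\partial^{j_1}$ of a perturbative metric coefficient ($g_1'$, $g_3'$, or the Error term, or $\gamma_1, N_1$, $\partial_t\gamma_3, \partial_t N_3, \partial_t\beta_3$) with $\partial^{j_2} f$ where $j_1 + j_2 \leq k+1$ and $j_2 \leq k$. When $j_2 \leq k-1$ the factor $\partial^{j_2} f$ is controlled by $\|\partial f\|_{H^{k-1}} + \|\partial^2 f\|_{H^{k-2}}$ and the metric factor by a uniform constant times $\lambda^{2}$ (or $\lambda$) coming from Propositions~\ref{g1p.prop}, \ref{g3p.prop}, \ref{gp.error.prop}; when $j_2 = k$, we have $j_1 \leq 1$, and the metric factor is controlled in $L^\infty$ by $C(C_1)\ep\lambda^2$ (respectively $C(C_1)\ep\lambda$ for $\partial_t g_3'$, using \eqref{g3.Linfty}, \eqref{BA5}), so this term is bounded by $C(C_1)\ep\lambda \|\partial^2 f\|_{H^{k-1}}$, which for $\lambda$ sufficiently small can be absorbed. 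Collecting, and using Propositions~\ref{g1p.prop}--\ref{gp.error.prop} to bound the $H^{k'}$ norms of the metric coefficients with the appropriate powers of $\lambda$, every term is bounded by $\|h\|_{H^k}$ plus a small multiple of $\sup_t \|\partial^2 f\|_{H^{k-1}}(t)$ plus $C(C_0)\sum_{k'\leq k-1}\lambda^{k'-k}(\|\partial f\|_{H^{k'}} + \|\partial^2 f\|_{H^{k'-1}})$.

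It remains to control $\|\partial^2 f\|_{H^{k-1}}$. From $\Box_g f = h$ and \eqref{g.inverse} (so that $(g^{-1})^{tt} = -N^{-2}$ is bounded away from $0$), we solve for $\partial_t^2 f$:
\begin{equation*}
(g^{-1})^{tt}\partial_t^2 f = h - 2(g^{-1})^{ti}\partial_t\partial_i f - (g^{-1})^{ij}\partial_i\partial_j f - (\text{first order in } f),
\end{equation*}
so that $\|\partial_t^2 f\|_{H^{k-1}} \leq C(C_0)(\|h\|_{H^{k-1}} + \|\partial f\|_{H^k} + \|\partial^2_{\text{mixed, spatial}} f\|_{H^{k-1}} + \text{l.o.t.})$, where the mixed and purely spatial second derivatives are themselves bounded by $\|\partial(\nabla^{\leq k} f)\|_{L^2}$ already estimated above. (The lower-order terms involving derivatives of $\log|\det g|$ are handled as before.) Substituting the bound for $\|\partial^2 f\|_{H^{k-1}}$ back in, absorbing the small term $C(C_1)\ep\lambda\|\partial^2 f\|_{H^{k-1}}$ into the left-hand side, and summing over $|\alpha|\leq k$, we obtain
\begin{equation*}
\sup_{t\in[0,1]}\left(\|\partial f\|_{H^k}(t) + \|\partial^2 f\|_{H^{k-1}}(t)\right) \leq C(C_0)\left(\|\partial f\|_{H^k}(0) + \|h\|_{H^k} + \sum_{k'\leq k-1}\lambda^{k'-k}\left(\|\partial f\|_{H^{k'}} + \|\partial^2 f\|_{H^{k'-1}}\right)\right),
\end{equation*}
which is the claimed estimate.
\end{proof}
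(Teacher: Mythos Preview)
Your approach is essentially the same as the paper's: commute $\Box_g$ with spatial derivatives $\nabla^\alpha$, apply \eqref{EE.prop.2} of Proposition~\ref{EE.prop}, estimate the commutator, and recover $\|\rd_t^2 f\|_{H^{k-1}}$ from the equation. There is, however, one slip in your derivative counting that leaves a gap.

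You assert that the background commutator $[\Box_{g_0},\nabla^\alpha]f$ is bounded by $C(C_0)(\|\partial f\|_{H^{k-1}}+\|\partial^2 f\|_{H^{k-2}})$, i.e.\ by strictly lower-order norms. This is off by one: the term $(\nabla g_0^{-1})^{\mu\nu}\,\partial_\mu\partial_\nu\nabla^{k-1} f$ sits in the commutator and is controlled only by $\|\partial^2 f\|_{H^{k-1}}$, which is top order. The paper handles exactly this via the terms $I_k,II_k$ in \eqref{box.commute}: the point is that every derivative of the \emph{background} metric carries a factor of $\ep$ (Corollary~\ref{lwp.small}), so the background commutator is bounded by $C(C_0)\ep\,(\|\rd f\|_{H^k}+\|\rd^2 f\|_{H^{k-1}})$. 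This top-order contribution must then be absorbed into the left-hand side together with the $C(C_1)\ep$ top-order contribution from the perturbative commutator (the $k_1'=0$ term of $III_k$ and the $k_1'=1$ term of $IV_k$ in the paper's notation). Your absorption step mentions only the perturbative piece $C(C_1)\ep\lambda\|\rd^2 f\|_{H^{k-1}}$; once you add the background piece $C(C_0)\ep\,(\|\rd f\|_{H^k}+\|\rd^2 f\|_{H^{k-1}})$ to what gets absorbed (using $C(C_0)\ep\ll 1$), the argument closes exactly as the paper's does.
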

\begin{proof}
We first estimate $\|\rd f\|_{H^k}$. The idea is to apply {\eqref{EE.prop.2}} for $\tilde{f}=\rd_{x^1}^{k_1} \rd_{x^2}^{k_2}f$ and then sum over all the estimates for $k_1+k_2\leq k$. For such $\tilde{f}$, we have
\begin{equation}\label{th}
\Box_g \tilde{f}=\tilde{h}:= \rd_{x^1}^{k_1} \rd_{x^2}^{k_2} h+[\Box_g, \rd_{x^1}^{k_1} \rd_{x^2}^{k_2}] f.
\end{equation}
Now for $k_1+k_2\leq k$, we {use Corollary~\ref{lwp.small} to control the background metric and obtain}
\begin{equation}\label{box.commute}
\begin{split}
&|[\Box_g, \rd_{x^1}^{k_1} \rd_{x^2}^{k_2}] f|\\
\leq &C(C_0)\left(\underbrace{\sum_{k'_1\leq k}\ep|\rd \nab^{k_1'} f|}_{=:I_k}+\underbrace{\sum_{k'_1\leq k-1}\ep|\rd^2 \nab^{k_1'} f|}_{=:II_k}\right.\\
&\qquad\qquad\left.+\underbrace{\sum_{i=1,2,3} \sum_{k_1'+k_2'\leq k} |\rd \nab^{k'_1} \mfg_i||\rd \nab^{k'_2} f|}_{=:III_k}+\underbrace{\sum_{i=1,2,3} \sum_{\substack{k_1'+k_2'\leq k\\k_1'\geq 1}} |\nab^{k'_1} \mfg_i||\rd^2 \nab^{k'_2} f|}_{=:IV_k}\right).
\end{split}
\end{equation}
Therefore, applying {\eqref{EE.prop.2}} to $\tilde{f}$ (for every $k_1+k_2\leq k$) and using \eqref{th} and \eqref{box.commute}, we get
\begin{equation}\label{EE.higher}
\begin{split}
\sup_{t\in [0,1]}\|\rd f\|_{H^k}(t)
\leq &C(C_0) \left(\|\rd f\|_{H^{k}}(0)+\|h\|_{H^k}+\|I_k\|_{L^2}+\|II_k\|_{L^2}+\|III_k\|_{L^2}+\|IV_k\|_{L^2}\right)\\
\end{split}
\end{equation}
It therefore remains to control the terms $I_k$, $II_k$, $III_k$ and $IV_k$. {It is immediate that}
\begin{align}
\|I_k\|_{L^2}\leq C(C_0) \|\rd f\|_{H^k}\label{I.bound}\\
\|II_k\|_{L^2}\leq C(C_0)\|\rd^2 f\|_{H^{k-1}}\label{II.bound}
\end{align}
For $III_k$, we split into the case where $k\leq 2$ and $k=3$. If \underline{$k\leq 2$}, then by {H\"older's inequality, \eqref{B2}, \eqref{g2.Linfty} and \eqref{g3.Linfty},}
\begin{equation}\label{III.bound.1}
\|III_k\|_{L^2}\leq C(C_1)\ep\sum_{\substack{k_1'+k_2'\leq k\\k_1'\leq 2}} \lambda^{-k_1'}\|\rd f\|_{H^{k_2'}}.
\end{equation}
If \underline{$k=3$} {and $i=1$, it can be treated as in the $k\leq 2$ case. If \underline{$k=3$} and} $i\neq 1$, we treat separately the cases $k_1'\leq 2$ and $k_1'=3$ {using H\"older's inequality, \eqref{BA3}, \eqref{BA4}, \eqref{g2.Linfty} and \eqref{g3.Linfty}} to get
\begin{equation}\label{III.bound.2}
\begin{split}
\|III_k\|_{L^2}\leq &C(C_1)\ep\sum_{\substack{k_1'+k_2'\leq k\\k_1'\leq 2}} \lambda^{-k_1'}\|\rd f\|_{H^{k_2'}}+\sum_{i=2,3}\|\rd \mfg_i\|_{H^3}\|\rd f\|_{L^\infty}\\
\leq &C(C_1)\ep\sum_{k_1'+k_2'\leq 3} \lambda^{-k_1'}\|\rd f\|_{H^{k_2'}}+C(C_1)\ep{\lambda^{-1}}\|\rd f\|_{H^{{2}}}\leq C(C_1)\ep\sum_{k_1'+k_2'\leq k} \lambda^{-k_1'}\|\rd f\|_{H^{k_2'}},
\end{split}
\end{equation}
where we have {also} used {Proposition~\ref{holder}} {(more speci{fi}cally $H^2\subset L^\infty$)} for $\rd f$. Hence, in both cases, $\|{III_k}\|_{L^2}$ {obeys the desired estimate}. Finally, for {$IV_k$}, using {\eqref{B2}, \eqref{g2.Linfty} and \eqref{g3.Linfty}}, we get
\begin{equation}\label{IV.bound}
\|IV_k\|_{L^2}\leq C(C_1) \ep \sum_{\substack{k_1'+k_2'\leq k\\ k_1'\geq 1}}\lambda^{1-k_1'} \|\rd^2 f\|_{H^{k_2'}}.
\end{equation}
Combining \eqref{EE.higher}, \eqref{I.bound}, \eqref{II.bound}, \eqref{III.bound.1}, \eqref{III.bound.2} {and} \eqref{IV.bound}, we obtain
\begin{equation}\label{EE.higher.2}
\begin{split}
&\sup_{t\in [0,1]}\|\rd f\|_{H^k}(t)\\
\leq &C(C_0) \left(\|\rd f\|_{H^{k}}(0)+\|h\|_{H^k}+C(C_1) \ep \left(\sum_{k_1'+k_2'\leq k} \lambda^{-k_1'}\|\rd f\|_{H^{k_2'}}+\sum_{\substack{k_1'+k_2'\leq k\\ k_1'\geq 1}}\lambda^{1-k_1'} \|\rd^2 f\|_{H^{k_2'}}\right)\right).
\end{split}
\end{equation}
The estimate \eqref{EE.higher.2} almost controls {all components of} $\|\rd^2 f\|_{H^{k-1}}(t)$, except for the term $\|\rd_t^2 f\|_{H^{k-1}}(t)$. Now using the wave equation and noticing that $(g^{-1})^{tt}$ is bounded away from $0$ on $B(0,R_{supp})$, we can write 
$$\rd_t^2 f=-\f{1}{(g^{-1})^{tt}}\left(2\gi^{ti}\rd^2_{ti} f+\gi^{ij}\rd^2_{ij} f-{\gi}^{\alpha \beta}\Gamma^\rho_{\alpha \beta} \partial_\rho f-h\right)$$
to get 
\begin{equation*}
\begin{split}
&\sup_{t\in [0,1]}\left(\|\rd f\|_{H^k}(t)+\|\rd^2 f\|_{H^{k-1}}(t)\right)\\
\leq &C(C_0) \left(\|\rd f\|_{H^{k}}(0)+\|h\|_{H^k}+C(C_1) \ep \left(\sum_{k_1'+k_2'\leq k} \lambda^{-k_1'}\|\rd f\|_{H^{k_2'}}+\sum_{\substack{k_1'+k_2'\leq k\\ k_1'\geq 1}}\lambda^{1-k_1'} \|\rd^2 f\|_{H^{k_2'}}\right)\right).
\end{split}
\end{equation*}
The conclusion follows after absorbing $C(C_1)\ep \left(\|\rd f\|_{H^k}(t)+\|\rd^2 f\|_{H^{k-1}}(t)\right)$ to the {LHS}.
\end{proof}

\begin{proposition}\label{E.rdtg3.est}
{$\mathcal E_\lambda^{(\rd_t\mfg_3)}$ satisfies}
$$\sum_{k\leq 3} \lambda^k\|\rd \mathcal E_\lambda^{(\rd_t\mfg_3)}\|_{H^k}+\sum_{k\leq 2} \lambda^{k+1}\|\rd^2 \mathcal E_\lambda^{(\rd_t\mfg_3)}\|_{H^k}\leq C(C_1)\ep \lambda^2.$$
\end{proposition}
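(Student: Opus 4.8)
The plan is to treat $f:=\mathcal E_\lambda^{(\rd_t\mfg_3)}$, which by \eqref{E2.def} solves $\Box_g f=h$ with zero data, where $h:=-\big(\rd_t(g_3')^{t\beta}+(g_0^{-1})^{t\beta}\f{\rd_t N_3}{N_0}\big)(\rd_\beta\phi_0)$ and both $f$ and $h$ are compactly supported in $B(0,R_{supp})$, by the energy estimates of Proposition~\ref{EE.prop} and Corollary~\ref{EE.cor}. A direct application of \eqref{EE.prop.2} together with Proposition~\ref{g3p.prop} only yields $\|\rd f\|_{L^2}\ls C(C_1)\ep\,\|\rd_t\mfg_3\|_{L^2}\ls C(C_1)\ep^2\lambda$, which is a full power of $\lambda$ short of the claim. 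The whole point is that this power is recovered by exploiting that $h$ is $\rd_t$ of a quantity of the size of $\mfg_3$ (hence $\ls C(C_1)\ep\lambda^2$ in $L^2(B(0,R_{supp}))$ by \eqref{BA4}) multiplied by the \emph{low-frequency} background factor $\rd_\beta\phi_0$.

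For the $k=0$ estimate I would use the refined identity \eqref{EE.prop.main}, so the source enters only through $\mathcal S:=\int_0^t\int_{\Sigma_{t'}}(\rd_{t'}f)\,h\,\sqrt{|\det g|}\,dx\,dt'$. Using \eqref{g3p.recall} and \eqref{g0.inverse.recall} one writes $h=-\sum_\beta(\rd_{t'}\mathcal M_\beta)(\rd_\beta\phi_0)+(\mbox{acceptable error})$, with $\mathcal M_t=N_3/N_0^3$ and $\mathcal M_i=-N_3\beta_0^i/N_0^3+\beta_3^i/N_0^2$ of the size of $\mfg_3$, the acceptable error collecting the (harmless, $O(\mfg_3)$ times bounded) terms where $\rd_t$ falls on a background quantity. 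Integrating by parts in $t'$ in $\mathcal S$ to move $\rd_{t'}$ off $\mathcal M_\beta$ trades $\|\rd_t\mfg_3\|_{L^2}\sim\ep\lambda$ for $\|\mfg_3\|_{L^2}\sim\ep\lambda^2$, and produces: a boundary term at $t'=t$ (the one at $t'=0$ vanishes since $\rd_t f|_{\Sigma_0}=0$) bounded by $\ls C(C_1)\ep^2\lambda^2\|\rd f\|_{L^2}(t)$, absorbed by Cauchy--Schwarz; a term with $\rd_{t'}$ on the bounded factor $(\rd_\beta\phi_0)\sqrt{|\det g|}$, likewise $\ls C(C_1)\ep^2\lambda^2\sup_{t'}\|\rd f\|_{L^2}$; and a term with $\rd_{t'}^2 f$. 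For the last I would use the wave equation $\rd_{t'}^2 f=\f{1}{(g^{-1})^{tt}}\big(h-2(g^{-1})^{ti}\rd_{t'}\rd_i f-(g^{-1})^{ij}\rd_i\rd_j f+(\mbox{bdd})\rd f\big)$; the two second-order spatial terms carry a $\rd_i$, so integrating by parts in $x^i$ reduces them to $(\nabla\mathcal M_\beta)(\rd f)$ and $\mathcal M_\beta(\rd f)$ times bounded factors, each $\ls C(C_1)\ep^2\lambda^2\sup_{t'}\|\rd f\|_{L^2}$ (using $\|\nabla\mfg_3\|_{L^2}\sim\ep\lambda^2$ from \eqref{BA4}).

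The delicate point — and the main obstacle — is the piece of $\rd_{t'}^2 f$ coming from the $h$-term just above: re-substituting $h=-\sum_{\beta'}(\rd_{t'}\mathcal M_{\beta'})(\rd_{\beta'}\phi_0)$ produces, schematically, $\sum_{\beta,\beta'}(\rd_\beta\phi_0)(\rd_{\beta'}\phi_0)(\mbox{bdd})\,\mathcal M_\beta\,(\rd_{t'}\mathcal M_{\beta'})$, which naively is only $\ls C(C_1)\ep^3\lambda^3$ and, entering \eqref{EE.prop.main} as a standalone term, would after taking square roots be too large. The resolution is that the coefficient $(\rd_\beta\phi_0)(\rd_{\beta'}\phi_0)$ is \emph{symmetric} in $(\beta,\beta')$, so this sum equals $\f12\rd_{t'}\big(\sum_{\beta,\beta'}(\rd_\beta\phi_0)(\rd_{\beta'}\phi_0)(\mbox{bdd})\,\mathcal M_\beta\mathcal M_{\beta'}\big)$ plus a term in which $\rd_{t'}$ hits only the bounded coefficient; integrating the total $\rd_{t'}$ by parts in $t'$ one final time reduces it to a boundary term $\ls\|\mfg_3\|_{L^2}^2\sim C(C_1)\ep^2\lambda^4$ and a bulk term of the same size. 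Collecting all contributions, \eqref{EE.prop.main} gives $\|\rd f\|_{L^2}^2(t)\le C(C_1)\ep^2\lambda^2\sup_{t'\le t}\|\rd f\|_{L^2}(t')+C(C_1)\ep^2\lambda^4$, whence $\|\rd f\|_{L^2}\le C(C_1)\ep\lambda^2$ by absorption.

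For $1\le k\le 3$ the estimates, together with the $\|\rd^2 f\|_{H^k}$ bounds, follow by induction on $k$ using Corollary~\ref{EE.cor}: by a product estimate $\|h\|_{H^k}\ls\|\rd_t g_3'\|_{H^k}\|\rd\phi_0\|_{L^\infty}+\|\rd_t g_3'\|_{L^\infty}\|\rd\phi_0\|_{H^k}+\cdots$, so Proposition~\ref{g3p.prop} and \eqref{g3.Linfty} give $\lambda^k\|h\|_{H^k}\ls C(C_1)\ep^2\lambda^2$ for $1\le k\le 3$, while the remaining terms $\sum_{k'\le k-1}\lambda^{k'-k}\big(\|\rd f\|_{H^{k'}}+\|\rd^2 f\|_{H^{k'-1}}\big)$ are $\ls C(C_1)\ep\lambda^{2-k}$ by the inductive hypothesis (the $k=0$ case supplying $\|\rd f\|_{L^2}$). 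This closes the proof.
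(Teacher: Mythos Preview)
Your proof is correct and follows the same overall strategy as the paper's: for $k=0$, integrate by parts in $t$ in the energy identity \eqref{EE.prop.main}, use the wave equation to eliminate $\rd_t^2 f$, and integrate the remaining spatial second derivatives by parts in $x$; then for $1\le k\le 3$, apply Corollary~\ref{EE.cor} inductively using the straightforward bounds $\lambda^k\|h\|_{H^k}\le C(C_1)\ep^2\lambda^2$.

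You are in fact more careful than the paper on one point. After substituting the wave equation for $\rd_t^2 f$, the $h$-term that re-enters gives, schematically, $\int(\rd_t\mfg_3)\,\mfg_3\,(\rd\phi_0)^2$, which naively is only $O(\ep^4\lambda^3)$. The paper absorbs this into its ``$C(C_1)\ep^2\lambda^4$'' remainder without comment, but since no relation of the form $\ep^2\lesssim\lambda$ is assumed, this is not immediate. Your observation that the coefficient $(\rd_\beta\phi_0)(\rd_{\beta'}\phi_0)/(g^{-1})^{tt}$ is symmetric in $(\beta,\beta')$, so that $\sum_{\beta,\beta'}(\text{sym})\,\mathcal M_\beta\,\rd_t\mathcal M_{\beta'}=\tfrac12\rd_t\big(\sum_{\beta,\beta'}(\text{sym})\,\mathcal M_\beta\mathcal M_{\beta'}\big)$ up to harmless terms, followed by one further integration by parts in $t$, is precisely what reduces this contribution to $O(\|\mfg_3\|_{L^2(B)}^2\,\|\rd\phi_0\|_{L^\infty}^2)\le C(C_1)\ep^4\lambda^4\le C(C_1)\ep^2\lambda^4$. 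This extra step fills a small gap in the paper's write-up.
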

\begin{proof}
The idea is of course to apply Proposition \ref{EE.prop} (more precisely, \eqref{EE.prop.main}) and Corollary \ref{EE.cor} for 
$$f=\mathcal E_\lambda^{(\rd_t\mfg_3)}\mbox{ and }h={-}\left(\rd_t (g'_3)^{t\beta}+(g_0^{-1})^{t\beta}\f{\rd_t N_3}{N_0}\right)(\rd_\beta \phi_0).$$ 
We will do this in the following steps:

{\bf Estimates for $\|\rd \mathcal E_\lambda^{(\rd_t\mfg_3)}\|_{L^2}$. } It turns out that this is the hardest case in this proposition. By the assumptions on the background solution and the bootstrap assumption \eqref{BA4}, we have $\|h\|_{L^2}\leq C(C_1)\ep\lambda$, which if we directly apply \eqref{EE.prop.2} gives an estimate that is off by one factor of $\lambda^{-1}$. Instead, we use \eqref{EE.prop.main}. To simplify the exposition, let us define $h=h_1+h_2$ with $h_1:=(\rd_t (g'_3)^{t\beta}){(\rd_\bt \phi_0)}$ and $h_2:=(g_0^{-1})^{t\beta}\f{\rd_t N_3}{N_0}(\rd_\beta \phi_0)$. Since they can be treated similarly, we will only consider $h_1$. More precisely, we consider the following term that arises in \eqref{EE.prop.main}:
\begin{equation*}
\begin{split}
\sup_{t\in [0,1]}\left|\int_0^t \int_{\Sigma_{t'}} (\rd_t \mathcal E_\lambda^{(\rd_t\mfg_3)}) (\rd_t (g'_3)^{t\beta})(\rd_\beta \phi_0)(t',x)\, \sqrt{|\det g|}\, dx\, dt'\right|.
\end{split}
\end{equation*}
The idea is to integrate by parts in $t$, use the wave equation for $\mathcal E_\lambda^{(\rd_t\mfg_3)}$ and integrate by parts in $x$. We integrate by parts below, but only writing down the main terms. The lower order terms (for instance when the derivatives fall on either $\phi_0$ or $\sqrt{|\det g|}$ or from the $\rd \mathcal E_\lambda^{(\rd_t\mfg_3)}$ term in the wave equation) are bounded by $C(C_1)\ep^2\lambda^4$ using \eqref{B2}, \eqref{BA2}, \eqref{BA3}, \eqref{BA4}{, \eqref{g2.Linfty}, \eqref{g3.Linfty}} and the {estimates for} the background solution\footnote{In particular, these lower order terms can be bounded with a factor of $C(C_1)\ep$ from $\phi_0$ (and its derivatives), a factor of $C(C_1)\ep\lambda^2$ from $g_3'$ and a factor of $C(C_1)\lambda^2$ from $\rd \mathcal E$.} {in Corollary~\ref{lwp.small}}.
\begin{equation*}
\begin{split}
&\sup_{t\in [0,1]}\left|\int_0^t \int_{\Sigma_{t'}} (\rd_t \mathcal E_\lambda^{(\rd_t\mfg_3)}) (\rd_t (g'_3)^{t\beta})(\rd_\beta \phi_0)(t',x)\, \sqrt{|\det g|}\, dx\, dt'\right|\\
\leq &\sup_{t\in [0,1]}\left|\int_0^t \int_{\Sigma_{t'}} (\rd_t^2 \mathcal E_\lambda^{(\rd_t\mfg_3)})  (g'_3)^{t\beta}(\rd_\beta \phi_0)(t',x)\, \sqrt{|\det g|}\, dx\, dt'\right|\\
&+2\sup_{t\in [0,1]}\left|\int_{\Sigma_{t}} (\rd_t \mathcal E_\lambda^{(\rd_t\mfg_3)}) (g'_3)^{t\beta}(\rd_\beta \phi_0)(t,x) \, \sqrt{|\det g|}\, dx\right|+C(C_1)\ep^2\lambda^4\\
\leq &\sup_{t\in [0,1]}\left|\int_0^t \int_{\Sigma_{t'}} \f{1}{\gi^{tt}}\left(2\gi^{ti}\rd^2_{it} \mathcal E_\lambda^{(\rd_t\mfg_3)}+\gi^{ij}\rd^2_{ij}\mathcal E_\lambda^{(\rd_t\mfg_3)}+ h \right)  (g'_3)^{t\beta}(\rd_\beta \phi_0)(t',x)\, \sqrt{|\det g|}\, dx\, dt'\right|\\
&+2\sup_{t\in [0,1]}\left|\int_{\Sigma_{t}} (\rd_t \mathcal E_\lambda^{(\rd_t\mfg_3)}) (g'_3)^{t\beta}(\rd_\beta \phi_0)(t,x) \, \sqrt{|\det g|}\, dx\right|+C(C_1)\ep^2\lambda^4\\
\leq &C(C_0)\ep\|\rd \mathcal E_\lambda^{(\rd_t\mfg_3)}\|_{L^2}\left(\|\nab \mfg_3\|_{L^2}+\|\mfg_3\|_{L^2}\right)+C(C_1)\ep^2\lambda^4\\
\leq &C(C_1)\ep^2\lambda^4,
\end{split}
\end{equation*}
where in the last line we have again used \eqref{BA2} and \eqref{BA4}.

Now, $h_2$ can be bounded in a similar fashion By \eqref{EE.prop.main} in Proposition \ref{EE.prop}, since the initial data vanish, we have
$$\|\rd \mathcal E_\lambda^{(\rd_t\mfg_3)}\|_{L^2}^2 \leq C(C_1)\ep^2\lambda^4,$$
which then implies after taking square root that
\begin{equation}\label{E2.est.1}
\|\rd \mathcal E_\lambda^{(\rd_t\mfg_3)}\|_{L^2} \leq C(C_1)\ep\lambda^2.
\end{equation}

{\bf Estimates for higher derivatives.} By the assumptions on the background solution and the bootstrap assumption \eqref{BA4}, we have
$$\sum_{k\leq 2}\|h\|_{H^k}\leq C(C_1)\ep\lambda,\quad \|h\|_{H^3}\leq C(C_1)\ep.$$
Since the initial data vanish, applying Corollary \ref{EE.cor} inductively in $k$ and using \eqref{E2.est.1}, we obtain the bound
\begin{equation}\label{E2.est.2}
\sum_{1\leq k\leq 3} \lambda^k\|\rd \mathcal E_\lambda^{(\rd_t\mfg_3)}\|_{H^k}+\sum_{k\leq 2} \lambda^{k+1}\|\rd^2 \mathcal E_\lambda^{(\rd_t\mfg_3)}\|_{H^k}\leq C(C_1)\ep \lambda^2.
\end{equation}
This concludes the proof of the proposition.
\end{proof}

Finally, we prove the following bounds for $\mathcal E_\lambda^{(Error)}$:
\begin{proposition}\label{E.error.est}
{$\mathcal E_{\lambda}^{(Error)}$ satisfies}
$$\sum_{k\leq 3}\lambda^k \|\rd \mathcal E_\lambda^{(Error)}\|_{H^k}+\sum_{k\leq 2}\lambda^{k+1} \|\rd^2 \mathcal E_\lambda^{(Error)}\|_{H^k}\leq C(C_0)\ep\lambda^2.$$
\end{proposition}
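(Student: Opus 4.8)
The plan is to read the estimate off directly from the wave equation \eqref{E3.eqn} satisfied by $\mathcal E_\lambda^{(Error)}$, using the energy estimates of Proposition~\ref{EE.prop} and Corollary~\ref{EE.cor}. First I would check that $\mathcal E_\lambda^{(Error)}$ is compactly supported in $B(0,R_{supp})$: $\mathcal E_\lambda^{(elliptic)}$ is, by its defining formula \eqref{E1.def} together with the support of the background and of the coefficients $B^{(2,\pm)}_{\bA\bB}$, $B^{(3,\pm)}_{\bA\bB}$, $D^{(\pm_1,\pm_2)}_{\bA\bB\bC}$; $\mathcal E_\lambda^{(\rd_t\mfg_3)}$ is, by finite speed of propagation, since its source $(\rd_t(g_3')^{t\beta}+(g_0^{-1})^{t\beta}\tfrac{\rd_tN_3}{N_0})(\rd_\beta\phi_0)$ vanishes outside $B(0,R_{supp})$ by Lemma~\ref{lm:Rsupp}; and $\mathcal E_\lambda$ is, by \eqref{eq:supp}; hence so is $\mathcal E_\lambda^{(Error)}=\mathcal E_\lambda-\mathcal E_\lambda^{(elliptic)}-\mathcal E_\lambda^{(\rd_t\mfg_3)}$. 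Consequently Proposition~\ref{EE.prop} and Corollary~\ref{EE.cor} apply to $f=\mathcal E_\lambda^{(Error)}$ with $h$ the right-hand side of \eqref{E3.eqn}. The inputs I would then feed in are Proposition~\ref{est.E3.inho}, which bounds $\sum_{k\leq 3}\lambda^k\|h\|_{H^k}$ and the initial data, together with the sharper initial-data information already used to derive \eqref{est.E3.inho.2}, namely that (by Lemma~\ref{lmini} and Proposition~\ref{E1.prop}) $\|\mathcal E_\lambda^{(Error)}\restriction_{\Sigma_0}\|_{H^5}+\|\rd_t\mathcal E_\lambda^{(Error)}\restriction_{\Sigma_0}\|_{H^4}\leq C(C_0)\ep\lambda^2$.

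The estimate then follows by induction on $k=0,1,2,3$. The base case $k=0$ is immediate from \eqref{EE.prop.2}: $\sup_{t\in[0,1]}\|\rd\mathcal E_\lambda^{(Error)}\|_{L^2}(t)\leq C(C_0)\big(\|\rd\mathcal E_\lambda^{(Error)}\|_{L^2}(0)+\|h\|_{L^2}\big)\leq C(C_0)\ep\lambda^2$. For $1\leq k\leq 3$, Corollary~\ref{EE.cor} bounds $\sup_t\big(\|\rd\mathcal E_\lambda^{(Error)}\|_{H^k}+\|\rd^2\mathcal E_\lambda^{(Error)}\|_{H^{k-1}}\big)$ by $C(C_0)$ times $\|\rd\mathcal E_\lambda^{(Error)}\|_{H^k}(0)+\|h\|_{H^k}+\sum_{k'\leq k-1}\lambda^{k'-k}\big(\|\rd\mathcal E_\lambda^{(Error)}\|_{H^{k'}}+\|\rd^2\mathcal E_\lambda^{(Error)}\|_{H^{k'-1}}\big)$. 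By the inductive hypothesis the $k'$-th summand is $\lesssim \lambda^{k'-k}\cdot C(C_0)\ep\lambda^{2-k'}=C(C_0)\ep\lambda^{2-k}$, and the initial-data and source terms are of the same size after multiplying through by $\lambda^k$; this closes the induction and, summing over $k\leq 3$ (which simultaneously produces the $\|\rd^2\cdot\|_{H^k}$ part for $k\leq 2$ via the level-$(k{+}1)$ estimate), gives the proposition.

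The delicate point — and the actual content of the argument — is that the final constant is $C(C_0)$ rather than merely $C(C_1)$, which forces one to track the structure of the source $h$ in \eqref{E3.eqn}. The high-frequency oscillatory terms on its right-hand side are cancelled against $\Box_g\mathcal E_\lambda^{(elliptic)}$ down to size $C(C_0)\ep^3\lambda^2$; this is exactly where the null-adaptedness of $\{u_\bA\}$ enters, since it is what makes the division in \eqref{E1.def} legitimate and renders $\Box_g$ effectively elliptic on these modes. The genuinely $C_1$-dependent remainder $R$ of Proposition~\ref{main.term}, although recorded with a $C_1$-dependent constant, always appears multiplied by a background quantity of size $O(\ep)$ (as does the residual commutator error of Corollary~\ref{EE.cor}, cf.~\eqref{box.commute}), so it is in fact $\lesssim C(C_1)\ep^2\lambda^2\leq C(C_0)\ep\lambda^2$ once $C(C_1)\ep\ll 1$ is used. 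Carrying this bookkeeping through the induction is the main obstacle; as a by-product it improves the bootstrap assumption \eqref{BA2} for the piece $\mathcal E_\lambda^{(Error)}$.
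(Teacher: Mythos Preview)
Your approach --- feed the source and initial-data bounds of Proposition~\ref{est.E3.inho} into \eqref{EE.prop.2} and then Corollary~\ref{EE.cor}, inducting on $k$ --- is exactly the paper's one-line proof, and your compact-support check and induction structure are correct fleshings-out of it.

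There is, however, an error in your ``sharper initial-data'' claim. You assert
\[
\|\mathcal E_\lambda^{(Error)}\restriction_{\Sigma_0}\|_{H^5}+\|\rd_t\mathcal E_\lambda^{(Error)}\restriction_{\Sigma_0}\|_{H^4}\leq C(C_0)\ep\lambda^2
\]
by Lemma~\ref{lmini} and Proposition~\ref{E1.prop}. While Lemma~\ref{lmini} does give an \emph{unweighted} $H^5\times H^4$ bound of size $\ep\lambda^2$ for $\phi-\phi_0-\sum_\bA\lambda F_\bA\cos(u_\bA/\lambda)$, the subtracted piece $\mathcal E_\lambda^{(elliptic)}$ in \eqref{E3.IC} carries oscillatory phases, and Proposition~\ref{E1.prop} only yields the \emph{weighted} bound $\lambda^k\|\mathcal E_\lambda^{(elliptic)}\|_{H^k}\leq C(C_0)\ep^3\lambda^3$; at $k=5$ this reads $\|\mathcal E_\lambda^{(elliptic)}\|_{H^5}\leq C(C_0)\ep^3\lambda^{-2}$, which (since $\lambda\ll\ep$) is far larger than $\ep\lambda^2$. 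What your induction actually needs, and what does hold, is only the weighted statement $\lambda^k\|\rd\mathcal E_\lambda^{(Error)}\|_{H^k}(0)\leq C(C_0)\ep\lambda^2$ for $k\leq 3$: the Lemma~\ref{lmini} part contributes $C\ep\lambda^{2}$ uniformly in $k\leq 4$ (better than needed), and the $\mathcal E_\lambda^{(elliptic)}$ part contributes $C(C_0)\ep^3\lambda^{2}$ after the $\lambda^k$ weight is restored. With this correction the argument closes.

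Your third-paragraph remark on the $C(C_0)$ constant is morally right --- each contribution to $R$ in Proposition~\ref{main.term} does carry an extra factor of $\ep$ beyond the stated $C(C_1)\ep\lambda^2$ --- but note that this is a claim about the \emph{proofs} of Propositions~\ref{box.phi0.prop}--\ref{cancellation}, not their statements, so invoking it requires reopening those arguments rather than citing the propositions as black boxes.
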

\begin{proof}
This is an immediate consequence of Proposition \ref{EE.prop} (more precisely, \eqref{EE.prop.2}), Corollary \ref{EE.cor} and Proposition \ref{est.E3.inho}.
\end{proof}

Finally, we combine all the estimates in this subsection in the following corollary:
\begin{cor}\label{E.cor}
Each of $\mathcal E_\lambda^{(elliptic)}$, $\mathcal E_\lambda^{(\rd_t\mfg_3)}$ and $\mathcal E_\lambda^{(Error)}$ satisfies
$$\sum_{k\leq 3}\lambda^k \|\rd \mathcal E_\lambda^{(\cdot)}\|_{H^k}+\sum_{k\leq 2}\lambda^{k+1} \|\rd^2 \mathcal E_\lambda^{(\cdot)}\|_{H^k}\leq C(C_1)\ep\lambda^2.$$
Consequently, we also have
$$\sum_{k\leq 3}\lambda^k \|\rd \mathcal E_\lambda\|_{H^k}+\sum_{k\leq 2}\lambda^{k+1} \|\rd^2 \mathcal E_\lambda\|_{H^k}\leq C(C_1)\ep\lambda^2.$$
\end{cor}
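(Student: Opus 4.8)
The proof of Corollary~\ref{E.cor} is essentially a matter of assembling the three separate estimates that have already been established in this subsection, so the plan is short. First, I would observe that the bound for $\mathcal E_\lambda^{(elliptic)}$ follows directly from Proposition~\ref{E1.prop}: indeed, that proposition gives $\sum_{k\leq 8}\lambda^k\|\mathcal E_\lambda^{(elliptic)}\|_{H^k}+\sum_{k\leq 7}\lambda^{k+1}\|\rd_t\mathcal E_\lambda^{(elliptic)}\|_{H^k}+\sum_{k\leq 6}\lambda^{k+2}\|\rd_t^2\mathcal E_\lambda^{(elliptic)}\|_{H^k}\leq C(C_0)\ep^3\lambda^3$, and since $\ep^3\lambda^3 \leq \ep\lambda^2$ (using $\ep,\lambda$ small) and the range of $k$ needed in the corollary ($k\leq 3$ for first derivatives, $k\leq 2$ for second derivatives) is contained in the range controlled by Proposition~\ref{E1.prop}, the claimed bound holds with room to spare (in fact with the better constant $C(C_0)$ and extra powers of $\ep\lambda$). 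One only needs to note that $\|\rd \mathcal E_\lambda^{(elliptic)}\|_{H^k}$ and $\|\rd^2 \mathcal E_\lambda^{(elliptic)}\|_{H^k}$ are each controlled by combinations of the $H^{k}$, $H^{k+1}$ norms of $\mathcal E_\lambda^{(elliptic)}$ and its first and second $\rd_t$ derivatives together with spatial derivatives, all of which appear on the left-hand side of Proposition~\ref{E1.prop}.

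Next, the bound for $\mathcal E_\lambda^{(\rd_t\mfg_3)}$ is exactly the content of Proposition~\ref{E.rdtg3.est}, and the bound for $\mathcal E_\lambda^{(Error)}$ is exactly the content of Proposition~\ref{E.error.est}. Thus each of the three pieces satisfies
$$\sum_{k\leq 3}\lambda^k \|\rd \mathcal E_\lambda^{(\cdot)}\|_{H^k}+\sum_{k\leq 2}\lambda^{k+1} \|\rd^2 \mathcal E_\lambda^{(\cdot)}\|_{H^k}\leq C(C_1)\ep\lambda^2,$$
with the understanding that for $\mathcal E_\lambda^{(elliptic)}$ the constant can even be taken to be $C(C_0)$. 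Finally, since $\mathcal E_\lambda = \mathcal E_\lambda^{(elliptic)}+\mathcal E_\lambda^{(\rd_t\mfg_3)}+\mathcal E_\lambda^{(Error)}$ by the decomposition \eqref{E.decompose}, the triangle inequality applied termwise in the sums over $k$ gives the stated bound for $\mathcal E_\lambda$ itself, with $C(C_1)$ absorbing the (finite) number of summands.

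There is no real obstacle here; the corollary is purely a bookkeeping step. If anything, the only point requiring a moment's care is to check that the norms appearing in the corollary, namely $\|\rd\mathcal E_\lambda^{(\cdot)}\|_{H^k}$ for $k\leq 3$ and $\|\rd^2\mathcal E_\lambda^{(\cdot)}\|_{H^k}$ for $k\leq 2$, are indeed dominated by the quantities controlled in Propositions~\ref{E1.prop}, \ref{E.rdtg3.est} and \ref{E.error.est}; but this is immediate since $\|\rd f\|_{H^k}\leq \|f\|_{H^{k+1}}$ and $\rd^2 f$ splits into $\nabla^2 f$, $\rd_t\nabla f$ and $\rd_t^2 f$, all of which are accounted for (the $\rd_t^2$ terms for $\mathcal E_\lambda^{(elliptic)}$ appearing in Proposition~\ref{E1.prop}, and for $\mathcal E_\lambda^{(\rd_t\mfg_3)}$, $\mathcal E_\lambda^{(Error)}$ in the $\|\rd^2 f\|_{H^k}$ terms of Propositions~\ref{E.rdtg3.est}, \ref{E.error.est}). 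Hence the proof is a one-line invocation of the three preceding propositions followed by the triangle inequality.

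\begin{proof}
The estimate for $\mathcal E_\lambda^{(elliptic)}$ follows from Proposition~\ref{E1.prop}, noting that $\|\rd \mathcal E_\lambda^{(elliptic)}\|_{H^k}\leq \|\mathcal E_\lambda^{(elliptic)}\|_{H^{k+1}}+\|\rd_t\mathcal E_\lambda^{(elliptic)}\|_{H^k}$ and $\|\rd^2\mathcal E_\lambda^{(elliptic)}\|_{H^k}\leq \|\mathcal E_\lambda^{(elliptic)}\|_{H^{k+2}}+\|\rd_t\mathcal E_\lambda^{(elliptic)}\|_{H^{k+1}}+\|\rd_t^2\mathcal E_\lambda^{(elliptic)}\|_{H^{k}}$, so that for $\ep$, $\lambda$ sufficiently small,
$$\sum_{k\leq 3}\lambda^k \|\rd \mathcal E_\lambda^{(elliptic)}\|_{H^k}+\sum_{k\leq 2}\lambda^{k+1} \|\rd^2 \mathcal E_\lambda^{(elliptic)}\|_{H^k}\leq C(C_0)\ep^3\lambda^3\leq C(C_0)\ep\lambda^2.$$
The estimate for $\mathcal E_\lambda^{(\rd_t\mfg_3)}$ is precisely Proposition~\ref{E.rdtg3.est}, and the estimate for $\mathcal E_\lambda^{(Error)}$ is precisely Proposition~\ref{E.error.est}. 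This proves the first claim. Since $\mathcal E_\lambda=\mathcal E_\lambda^{(elliptic)}+\mathcal E_\lambda^{(\rd_t\mfg_3)}+\mathcal E_\lambda^{(Error)}$ by \eqref{E.decompose}, the triangle inequality applied to each term in the sums over $k$ yields
$$\sum_{k\leq 3}\lambda^k \|\rd \mathcal E_\lambda\|_{H^k}+\sum_{k\leq 2}\lambda^{k+1} \|\rd^2 \mathcal E_\lambda\|_{H^k}\leq C(C_1)\ep\lambda^2,$$
as desired.
\end{proof}
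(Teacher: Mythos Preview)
Your proof is correct and follows exactly the same approach as the paper, which simply cites Propositions~\ref{E1.prop}, \ref{E.rdtg3.est} and \ref{E.error.est}; you have merely added the routine details of how the norms in Proposition~\ref{E1.prop} dominate those required here. One harmless slip: after converting $\|\rd\mathcal E_\lambda^{(elliptic)}\|_{H^k}$ to $\|\mathcal E_\lambda^{(elliptic)}\|_{H^{k+1}}+\|\rd_t\mathcal E_\lambda^{(elliptic)}\|_{H^k}$ you lose one power of $\lambda$, so the intermediate bound should read $C(C_0)\ep^3\lambda^2$ rather than $C(C_0)\ep^3\lambda^3$, but this of course still gives the desired $C(C_0)\ep\lambda^2$.
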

\begin{proof}
This is a consequence of {Propositions}~\ref{E1.prop}, \ref{E.rdtg3.est} and \ref{E.error.est}.
\end{proof}
In particular, we have improved the bootstrap assumption \eqref{BA2}.

\section{Estimates for the metric components}\label{secelliptic}

\subsection{Calculations of the main terms}\label{sec.g.main.terms}

 To begin, we consider the difference of the squares of the derivatives of the scalar field and collect the $O_\ep(1)$ terms as well as the $O_\ep(\lambda)$ terms:
\begin{proposition}\label{phi.phi.diff}
The difference $(\partial_\mu \phi)(\rd_\nu\phi)- (\partial_\mu \phi_0)(\rd_\nu \phi_0) $ can be expanded as follows:
\begin{equation}\label{g.expand}
\begin{split}
(\partial_\mu \phi)(\rd_\nu\phi)- (\partial_\mu \phi_0)(\rd_\nu \phi_0) 
=& \Theta_{\mu\nu}^{(main)}+\Theta_{\mu\nu}^{(1)} + \Theta_{\mu\nu}^{(2)} + \Theta_{\mu\nu}^{(remainder)},
\end{split}
\end{equation}
where
\begin{equation}\label{g.main.expand}
\Theta_{\mu\nu}^{(main)}:=\f 12\sum_{\bA} F_\bA^2 (\partial_\mu u_\bA)( \partial_\nu u_\bA ),
\end{equation}
and
\begin{equation}\label{g.O1.expand}
\begin{split}
\Theta_{\mu\nu}^{(1)}
:=&\underbrace{-2\sum_{\bA}F_\bA \partial_{(\mu} \phi_0\partial_{\nu)} u_\bA \si{\frac{u_\bA}{ \lambda}}}_{\mathfrak G_{1,1}}+ \underbrace{ \f 12 \sum_{\bA} F_\bA^2 \partial_{{\mu}} u_\bA \partial_{{\nu}} u_\bA \co{\frac{2u_\bA}{ \lambda}}}_{\mathfrak G_{1,2}}\\
&+ \underbrace{ \f 12\sum_{\pm}\sum_{\bA}\sum_{\bB \neq \bA}(\mp 1)\cdot F_\bA F_\bB {\rd_\mu u_{\bA} \rd_\nu u_{\bB} }\co{\frac{u_\bA {\pm u_\bB}}{ \lambda} }}_{\mathfrak G_{1,3}},
\end{split}
\end{equation}
and
\begin{equation}\label{g.Ol.expand}
\begin{split}
\Theta_{\mu\nu}^{(2)}
:=&\underbrace{- \lambda\sum_{\pm}\sum_{\bA}\sum_{\bB \neq \bA} \rd_{(\mu} F_{\bA} \rd_{\nu)} u_\bB F_\bB \si{\frac{u_\bA\pm u_\bB}{ \lambda}}}_{=:\mathfrak G_{2,1}}\\
&\underbrace{- \lambda\sum_{\bA} \rd_{(\mu} F_{\bA} \rd_{\nu)} u_\bA F_\bA \si{\frac{2 u_\bA}{ \lambda}}}_{=:\mathfrak G_{2,2}}\\
&{+} \underbrace{2\lambda \sum_{\bA}   \partial_{(\mu} \phi_0\partial_{\nu)} u_\bA \left(\wht F_\bA\co{\frac{u_\bA}{ \lambda}}-2 \wht F^{(2)}_\bA\si{\f{2u_\bA}{\lambda}}+3\wht F^{(3)}_\bA\co{\f{3u_\bA}{\lambda}}\right)}_{=:\mathfrak G_{2,3}}\\
&+\underbrace{2\lambda \sum_{\bA} \rd_{(\mu}\phi_0 \rd_{\nu)} F_\bA\co{\frac{u_\bA}{ \lambda}}}_{=:\mathfrak G_{2,4}}\\
&\underbrace{-2\lambda \sum_{\bA,\bB} \rd_{{(\mu}} u_\bA\rd_{{\nu)}} u_\bB F_\bA \si{\f{u_\bA}{\lambda}}\left(\wht F_\bB \co{\f{u_\bB}{\lambda}}-2\wht F^{(2)}_\bB\si{\f{2u_\bB}{\lambda}}+3\wht F^{(3)}_\bB\co{\f{3 u_\bB}{\lambda}}\right)}_{=:\mathfrak G_{2,5}}.
\end{split}
\end{equation}
and the remainder term satisfies
\begin{equation}\label{remainder.est}
\sum_{k\leq 3}\lambda^k\|\Theta^{(remainder)}\|_{H^k}+\sum_{k\leq 2}\lambda^{k+1}\|\rd_t\Theta^{(remainder)}\|_{H^k} \leq C(C_1) \ep^2 \lambda^2.
\end{equation}
\end{proposition}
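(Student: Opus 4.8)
\textbf{Proof proposal for Proposition \ref{phi.phi.diff}.}

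The plan is a direct brute-force expansion of the product $(\rd_\mu\phi)(\rd_\nu\phi)$ using the parametrix \eqref{phi.para}, sorting every resulting term by its order in $\lambda$ and by whether the associated oscillatory factor is of low frequency (constant in $\lambda^{-1}$), null-oscillating ($\propto u_{\bA}/\lambda$), or non-null-oscillating ($\propto (u_{\bA}\pm u_{\bB})/\lambda$, etc.). First I would write $\phi = \phi_0 + \Psi + \mathcal E_\lambda$, where
$$\Psi := \sum_\bA \lambda F_\bA \co{\tfrac{u_\bA}{\lambda}} + \sum_\bA \lambda^2 \wht F_\bA \si{\tfrac{u_\bA}{\lambda}} + \sum_\bA \lambda^2 \wht F_\bA^{(2)}\co{\tfrac{2u_\bA}{\lambda}} + \sum_\bA \lambda^2 \wht F_\bA^{(3)} \si{\tfrac{3u_\bA}{\lambda}},$$
so that
$$(\rd_\mu\phi)(\rd_\nu\phi) - (\rd_\mu\phi_0)(\rd_\nu\phi_0) = 2\,\rd_{(\mu}\phi_0\,\rd_{\nu)}\Psi + (\rd_\mu\Psi)(\rd_\nu\Psi) + 2\,\rd_{(\mu}\phi\,\rd_{\nu)}\mathcal E_\lambda + \text{(terms with $\mathcal E_\lambda$ only)}.$$
The $\mathcal E_\lambda$-terms are immediately remainder: by Corollary~\ref{E.cor} (the improved \eqref{BA2}), $\rd\mathcal E_\lambda = O_{H^k}(\lambda^{2-k})$ and $\rd^2\mathcal E_\lambda = O_{H^k}(\lambda^{1-k})$, and multiplying by the $C^\infty$-bounded factor $\rd\phi_0$ (Corollary~\ref{lwp.small}) or by $\rd\Psi$, which is $O_\ep(1)$ in the right norms, keeps everything within the bound \eqref{remainder.est}; the time derivative is handled the same way, losing one power of $\lambda$ as in the statement. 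Use the product estimate Proposition~\ref{product.local} throughout since all objects are supported in $B(0,R_{supp})$.

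Next I would expand $\rd_\mu\Psi$ explicitly. Each differentiation either hits an amplitude ($F_\bA$, $\wht F_\bA$, etc.), costing no power of $\lambda^{-1}$, or hits a phase $\co{\tfrac{k u_\bA}{\lambda}}$, producing $\pm\tfrac{k}{\lambda}(\rd_\mu u_\bA)\si{\tfrac{k u_\bA}{\lambda}}$, i.e.\ gaining a factor $\lambda^{-1}$. Keeping careful track: $\rd_\mu(\lambda F_\bA\co{\tfrac{u_\bA}{\lambda}}) = -F_\bA(\rd_\mu u_\bA)\si{\tfrac{u_\bA}{\lambda}} + \lambda(\rd_\mu F_\bA)\co{\tfrac{u_\bA}{\lambda}}$; the first is $O_\ep(1)$ and the second $O_\ep(\lambda)$. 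Similarly $\rd_\mu(\lambda^2\wht F_\bA\si{\tfrac{u_\bA}{\lambda}}) = \lambda \wht F_\bA(\rd_\mu u_\bA)\co{\tfrac{u_\bA}{\lambda}} + \lambda^2(\rd_\mu\wht F_\bA)\si{\tfrac{u_\bA}{\lambda}}$: first term $O_\ep(\lambda)$ by \eqref{F.Linfty}/\eqref{BA1}, second term genuinely $O_\ep(\lambda^2)$ and hence remainder. The $\wht F^{(2)},\wht F^{(3)}$ pieces behave likewise. Then I would form the three products. In $2\,\rd_{(\mu}\phi_0\,\rd_{\nu)}\Psi$: the cross with the $O_\ep(1)$ piece of $\rd\Psi$ gives $\mathfrak G_{1,1}$; the cross with the $O_\ep(\lambda)$ pieces gives $\mathfrak G_{2,3}$ (amplitude-hit on $\wht F$-type terms, with the $+2$ and $+3$ from differentiating $\co{\tfrac{2u}{\lambda}}$ and $\si{\tfrac{3u}{\lambda}}$) and $\mathfrak G_{2,4}$ (amplitude-hit $\lambda(\rd F)\co{\tfrac{u}{\lambda}}$), while the cross with the $O_\ep(\lambda^2)$ pieces is remainder. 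In $(\rd_\mu\Psi)(\rd_\nu\Psi)$: the product of the two $O_\ep(1)$ pieces, $F_\bA F_\bB(\rd_\mu u_\bA)(\rd_\nu u_\bB)\si{\tfrac{u_\bA}{\lambda}}\si{\tfrac{u_\bB}{\lambda}}$, splits via $\si a\si b = \tfrac12\sum_\pm(\mp1)\co{a\pm b}$ into the diagonal $\bA=\bB$ term $\mathfrak G_{1,2}$ (the $+$ sign produces $\co{\tfrac{2u_\bA}{\lambda}}$, the $-$ sign a constant that cancels against the $O_\ep(1)$ of $\tfrac12 F_\bA^2(\rd u)(\rd u)$ — wait, rather: the $-$ sign contributes the constant piece which is exactly $\mathfrak G_{1,2}$'s companion; bookkeeping shows the genuinely non-oscillating residue is absorbed into $\Theta^{(main)}$, see below) and the off-diagonal $\bA\ne\bB$ term $\mathfrak G_{1,3}$; the product of an $O_\ep(1)$ piece with an $O_\ep(\lambda)$ piece gives $\mathfrak G_{2,1}$, $\mathfrak G_{2,2}$, $\mathfrak G_{2,5}$ after the same trigonometric splitting; and all remaining products are $O_\ep(\lambda^2)$ or smaller, hence remainder.

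The one genuinely delicate point — and the main obstacle — is the identification of the non-oscillatory piece $\Theta^{(main)} = \tfrac12\sum_\bA F_\bA^2(\rd_\mu u_\bA)(\rd_\nu u_\bA)$. This arises entirely from the diagonal part of $(\rd_\mu\Psi)(\rd_\nu\Psi)$: the leading $O_\ep(1)$ term of $\rd_\mu\Psi$ is $-\sum_\bA F_\bA(\rd_\mu u_\bA)\si{\tfrac{u_\bA}{\lambda}}$, and its self-product contains $\sum_\bA F_\bA^2(\rd_\mu u_\bA)(\rd_\nu u_\bA)\sin^2(\tfrac{u_\bA}{\lambda}) = \tfrac12\sum_\bA F_\bA^2(\rd_\mu u_\bA)(\rd_\nu u_\bA)(1 - \co{\tfrac{2u_\bA}{\lambda}})$; the ``$1$'' gives $\Theta^{(main)}$ and the $\co{\tfrac{2u_\bA}{\lambda}}$ gives (after the sign bookkeeping) $\mathfrak G_{1,2}$. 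I would double-check that no other non-oscillatory $O_\ep(1)$ contributions are hiding: cross terms $\bA\ne\bB$ are genuinely oscillatory since $u_\bA\ne u_\bB$ (and in fact by angular separation $\nab(u_\bA\pm u_\bB)\ne 0$, though for \emph{this} proposition only non-triviality of the phase matters, not its non-null character — that is used later), and all $\phi_0$-cross terms carry at least one oscillating factor. For the remainder estimate \eqref{remainder.est}, I would collect: (i) all $\mathcal E_\lambda$-terms, bounded via Corollary~\ref{E.cor} and Proposition~\ref{product.local}; (ii) all $O_\ep(\lambda^2)$ amplitude-hit terms (those where $\rd$ hit $\wht F$, $\wht F^{(2)}$, $\wht F^{(3)}$), bounded via \eqref{BA1}, \eqref{B1}, Proposition~\ref{whtF.est}, Proposition~\ref{whtG.prop}; the loss of one $\lambda$ for $\rd_t\Theta^{(remainder)}$ comes from $\rd_t$ hitting a phase ($\lambda^{-1}$) combined with the fact that $\rd_t^2\wht F_\bA$ already costs an extra $\lambda^{-1}$ in \eqref{BA1}, but the accounting still closes at $C(C_1)\ep^2\lambda^2$ for the one-$\lambda$-weighted norm as stated. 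The time-derivative bookkeeping is the most error-prone part and would need to be carried out component by component, but involves no new ideas beyond the product rule, the chain rule on phases, and the bootstrap bounds.
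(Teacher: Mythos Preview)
Your proposal is correct and follows essentially the same approach as the paper: both proceed by brute-force expansion of the parametrix \eqref{phi.para}, sort terms by their $\lambda$-order and oscillatory factor, identify $\Theta^{(main)}$ from the diagonal $\sin^2(\tfrac{u_\bA}{\lambda})$ contribution, and push all $\mathcal E_\lambda$-terms and $O_\ep(\lambda^2)$ amplitude-hit terms into the remainder via the bootstrap bounds and Corollary~\ref{E.cor}. Your decomposition $\phi=\phi_0+\Psi+\mathcal E_\lambda$ is a slightly cleaner organizational device than the paper's term-by-term cross computation, but the content is the same; the only cosmetic slip is that your identity for the $\mathcal E_\lambda$-terms should read $2\,\rd_{(\mu}(\phi_0+\Psi)\,\rd_{\nu)}\mathcal E_\lambda + (\rd_\mu\mathcal E_\lambda)(\rd_\nu\mathcal E_\lambda)$ rather than $2\,\rd_{(\mu}\phi\,\rd_{\nu)}\mathcal E_\lambda + (\text{terms with }\mathcal E_\lambda\text{ only})$, which double-counts $(\rd\mathcal E_\lambda)^2$ --- harmless since everything involving $\mathcal E_\lambda$ is remainder anyway.
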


\begin{proof}
This is a direct computation using the parametrix \eqref{phi.para} for $\phi$ and estimating the resulting terms with \eqref{B1}, \eqref{BA1} and Corollary \ref{E.cor}. {\bf In order to simplify the exposition, let us only check the bound for $\sum_{k\leq 3}\lambda^k\|\Theta^{(remainder)}\|_{H^k}$ in \eqref{remainder.est} in the proof. The estimate for the time derivative can be verified in a completely identical manner.} We now compute
\begin{equation}\label{g.O1.expand.0}
\begin{split}
&\mbox{Cross terms between $\phi_0$ and $\displaystyle\sum_\bA \lambda F_\bA \co{\f{u_\bA}{\lambda}}$}\\
=&\underbrace{-2\sum_{\bA}F_\bA \partial_{(\mu} \phi_0\partial_{\nu)} u_\bA \si{\frac{u_\bA}{ \lambda}}}_{\mathfrak G_{1,1}}+\underbrace{2\lambda \sum_{\bA} \rd_{(\mu}\phi_0 \rd_{\nu)} F_\bA\co{\frac{u_\bA}{ \lambda}}}_{\mathfrak G_{2,4}}.
\end{split}
\end{equation}

\begin{equation}\label{g.O1.expand.1}
\begin{split}
&\mbox{The term $\rd_\mu\left(\displaystyle\sum_\bA \lambda F_\bA \co{\f{u_\bA}{\lambda}}\right)\rd_\nu\left(\displaystyle\sum_\bB \lambda F_\bB \co{\f{u_\bB}{\lambda}}\right)$}\\
=&\sum_{\bA,\bB}F_\bA F_\bB \partial_\mu u_\bA \partial_\nu u_\bB \si{\frac{u_\bA}{ \lambda}} \si{\frac{u_\bB}{\lambda}} - 2\lambda \sum_{\bA,\bB} \rd_{(\mu} F_{\bA} \rd_{\nu)} u_\bB F_\bB\co{\f{u_\bA}{\lambda}}\si{\frac{u_\bB}{\lambda}} {+ \dots}\\
=& \underbrace{\f 12\sum_{\bA} F_\bA^2 (\partial_\mu u_\bA)( \partial_\nu u_\bA )}_{\Theta^{(main)}}
+ \underbrace{ \f 12 \sum_{\bA} F_\bA^2 \partial_{{\mu}} u_\bA \partial_{{\nu}} u_\bA \co{\frac{2u_\bA}{ \lambda}}}_{\mathfrak G_{1,2}} \\
&+ \underbrace{ \f 12\sum_{\pm}\sum_{\bA}\sum_{\bB \neq \bA}(\mp 1)\cdot F_\bA F_\bB {\rd_\mu u_{\bA} \rd_\nu u_{\bB} } \co{\frac{u_\bA{\pm u_{\bB}}}{ \lambda}}}_{\mathfrak G_{1,3}} \\
&\underbrace{- \lambda\sum_{\pm}\sum_{\bA}\sum_{\bB \neq \bA} {(\pm 1)} \rd_{(\mu} F_{\bA} \rd_{\nu)} u_\bB F_\bB \si{\frac{u_\bA\pm u_\bB}{ \lambda}}}_{\mathfrak G_{2,1}}\underbrace{- \lambda\sum_{\bA} \rd_{(\mu} F_{\bA} \rd_{\nu)} u_\bA F_\bA \si{\frac{2 u_\bA}{ \lambda}}}_{\mathfrak G_{2,2}}{ + \dots,}
\end{split}
\end{equation}
{where here, and below, $\dots$ denotes terms satisfying the bound \eqref{remainder.est}. To justify that the $\dots$ terms indeed satisfy the necessary bounds, we have used here Corollary~\ref{lwp.small}.}

{Next, we use Corollary~\ref{lwp.small}, \eqref{B1} and \eqref{BA1} to obtain}
\begin{equation}\label{g.Ol.expand.1}
\begin{split}
&\mbox{Cross terms between $\phi_0$ and $\displaystyle\sum_\bA \lambda^2 \wht F_\bA \si{\frac{u_\bA}{\lambda}}+\displaystyle\sum_\bA \lambda^2 \wht F_\bA^{(2)}\co{\frac{2u_\bA}{\lambda}}+\displaystyle\sum_\bA  \lambda^2\wht F_\bA^{(3)} \si{\frac{3u_\bA}{\lambda }}$}\\
=&\underbrace{2\lambda \sum_{\bA}   \partial_{(\mu} \phi_0\partial_{\nu)} u_\bA \left(\wht F_\bA\co{\frac{u_\bA}{ \lambda}}-2 \wht F^{(2)}_\bA\si{\f{2u_\bA}{\lambda}}+3\wht F^{(3)}_\bA\co{\f{3u_\bA}{\lambda}}\right)}_{\mathfrak G_{2,3}}+\dots
\end{split}
\end{equation}
{Using again Corollary~\ref{lwp.small}, \eqref{B1} and \eqref{BA1}, we have}
\begin{equation}\label{g.Ol.expand.2}
\begin{split}
&\mbox{Cross terms between $\displaystyle\sum_\bA \lambda F_\bA \co{\f{u_\bA}{\lambda}}$}\\
&\qquad\qquad\mbox{ and $\displaystyle\sum_{{\bB}} \lambda^2 \wht F_{{\bB}} \si{\frac{u_{{\bB}}}{\lambda}}+\displaystyle\sum_{{\bB}} \lambda^2 \wht F_{{\bB}}^{(2)}\co{\frac{2u_{{\bB}}}{\lambda}}+\displaystyle\sum_{{\bB}}  \lambda^2\wht F_{{\bB}}^{(3)} \si{\frac{3u_{{\bB}}}{\lambda }}$}\\
=&\underbrace{-2\lambda \sum_{\bA,\bB} \rd_{{\mu}} u_\bA\rd_{{\nu}} u_\bB F_\bA \si{\f{u_\bA}{\lambda}}\left(\wht F_\bB \co{\f{u_\bB}{\lambda}}-2\wht F^{(2)}_\bB\si{\f{2u_\bB}{\lambda}}+3\wht F^{(3)}_\bB\co{\f{3 u_\bB}{\lambda}}\right)}_{\mathfrak G_{2,5}}+\dots
\end{split}
\end{equation}

In {\eqref{g.O1.expand.0}}, \eqref{g.O1.expand.1}, \eqref{g.Ol.expand.1} and \eqref{g.Ol.expand.2}, we have seen all of the $\Theta_{\mu\nu}^{(main)}$, $\Theta_{\mu\nu}^{(1)}$ and $\Theta_{\mu\nu}^{(2)}$ terms. Therefore, it suffices to show that all remaining terms satisfy the estimate \eqref{remainder.est}.

For the cross terms between $\displaystyle\sum_\bA \lambda^2 \wht F_\bA \si{\frac{u_\bA}{\lambda}}$, $\displaystyle\sum_\bA \lambda^2 \wht F_\bA^{(2)}\co{\frac{2u_\bA}{\lambda}}$ and $\displaystyle\sum_\bA  \lambda^2\wht F_\bA^{(3)} \si{\frac{3u_\bA}{\lambda }}$, we argue as follows. For the $\wht F \cdot \wht F$ term, we have
\begin{equation}\label{whtF.whtF}
\begin{split}
&\sum_{k\leq 3}\lambda^k\|\cdot\|_{H^k}\\
\leq &\sum_{k\leq 3}\sum_{\bA,\bB}\lambda^{k+2} \left(\sum_{k_1+k_2=k}\lambda^{-k_1}\|(\wht F_\bA)( \wht F_\bB)\|_{H^{k_2}}+\sum_{k_1+k_2=k}\lambda^{-k_1+1}\|\wht F_\bA( \rd\wht F_\bB)\|_{H^{{k_2}}}\right.\\
&\left.\qquad\qquad+\sum_{k_1+k_2=k}\lambda^{-k_1+2}\|(\rd \wht F_\bA)( \rd\wht F_\bB)\|_{H^{{k_2}}}\right)\\
\leq &C(C_0)\sum_{k\leq 3}\sum_{\bA,\bB}\lambda^{k+2} \left(\|\wht F_\bA\|_{L^\infty}\sum_{k_1+k_2=k}\left(\lambda^{-k_1}\|\wht F_\bB\|_{H^{k_2}}+\lambda^{-k_1+1}\| \rd\wht F_\bB\|_{H^{k_2}}\right)\right.\\
&\left.\qquad\qquad +\|\rd \wht F_\bA\|_{L^\infty}\sum_{k_1+k_2=k}\left(\lambda^{-k_1+1}\|\wht F_\bB\|_{H^{k_2}}+\lambda^{-k_1+2}\| \rd\wht F_\bB\|_{H^{k_2}}\right)\right)\\
\leq &C(C_1)\ep\lambda^2,
\end{split}
\end{equation}
as desired. Here, in deriving \eqref{whtF.whtF}, we have used 
\begin{itemize}
\item {the fact that (by Corollary~\ref{lwp.small}) every derivative of the phase function gives $\lambda^{-1}$,}
\item the standard product estimate {in Proposition~\ref{product}},
\item the bootstrap assumption \eqref{BA1} to control the $H^k$ norms of $\wht F_\bA$ and $\rd \wht F_\bA$,
\item the {estimates in \eqref{F.Linfty}} to control the $L^\infty$ norms of $\wht F_\bA$ and $\rd \wht F_\bA$. 
\end{itemize}
Next, we note that in the estimate \eqref{whtF.whtF}, if we replace one (or both) of the $\wht F_\bA$ by ${\wht F}^{(a)}_\bA$ for $a=2,3$, the exact same estimate applies. (In fact, it is slightly better since according to \eqref{B1}, ${\wht F}^{(a)}_\bA$ satisfies better bounds.) Therefore, all the cross terms between $\displaystyle\sum_\bA \lambda^2 \wht F_\bA \si{\frac{u_\bA}{\lambda}}$, $\displaystyle\sum_\bA \lambda^2 \wht F_\bA^{(2)}\co{\frac{2u_\bA}{\lambda}}$ and $\displaystyle\sum_\bA  \lambda^2\wht F_\bA^{(3)} \si{\frac{3u_\bA}{\lambda }}$ can be dealt with similarly.

Finally, it remains to estimate the term $2\rd_{(\mu}\mathcal E_\lambda \rd_{\nu)}\phi_\lambda$. We now consider each term in the parametrix \eqref{phi.para} and bound them. Below, we will {often} use {the} product estimate {Proposition~\ref{product} and also the simpler
$$\|uv\|_{H^k}\leq \|u\|_{H^k} \|v\|_{C^k},$$}
as well as the Sobolev embedding {(Proposition~\ref{holder}, and especially the $H^2\subset L^\infty$ embedding).} {We will use these estimates without further comments.}

Using {Corollaries~\ref{lwp.small} and} \ref{E.cor}, we have
$$\sum_{k\leq 3}\lambda^k\|\rd \mathcal E_\lambda \rd\phi_0\|_{H^k}\leq C(C_0)\sum_{k\leq 3}\lambda^k\|\rd\mathcal E_\lambda\|_{H^k}\|\rd\phi_0\|_{H^5}\leq C(C_1)\ep^2\lambda^2.$$
Using {Corollaries~\ref{lwp.small} and} \ref{E.cor}, we have
\begin{equation*}
\begin{split}
&\sum_{k\leq 3}\sum_\bA\lambda^k \left\|(\rd\mathcal E_\lambda)\left(\rd \sum_\bA \lambda F_\bA \co{\f{u_\bA}{\lambda}}\right)\right\|_{H^k}\\
\leq & C(C_0)\sum_{k\leq 3}\sum_\bA\lambda^k
\sum_{k_1+k_2=k}\|\rd\mathcal E_\lambda\|_{H^{k_1}} \left\|\partial( \lambda F_\bA\co{\f{u_\bA}{\lambda}})\right\|_{{C}^{k_2}}\\
\leq &C(C_1)\ep^2 \sum_{k\leq 3}\lambda^k \sum_{k_1+k_2= k} \lambda^{2-k_1-k_2}\leq C(C_1)\ep^2\lambda^2.
\end{split}
\end{equation*}
Using the bootstrap assumption \eqref{BA1}, the estimates \eqref{F.Linfty}, \eqref{E.Linfty}, Corollaries~\ref{lwp.small} and \ref{E.cor}, we have
\begin{equation*}
\begin{split}
&{\sum_{k\leq 3}}\lambda^k \left\|(\rd\mathcal E_\lambda)\left(\rd \sum_\bA \lambda^2 \wht F_\bA \si{\f{u_\bA}{\lambda}}\right)\right\|_{H^k}\\
\leq & C(C_0)\sum_{k\leq 3}\sum_\bA\lambda^k\left(\|\rd\mathcal E_\lambda\|_{H^k}\left(\lambda^2\| \rd \wht F_{\bA}\|_{L^\infty}+\lambda\|\wht F_\bA\|_{L^\infty}\right)\right.\\
&\left. +\|\rd\mathcal E_\lambda\|_{L^\infty}\left(\sum_{k_1+k_2=k}\lambda^{{-k_1+2}}\|\rd \wht F_\bA\|_{H^{k_2}}+
\sum_{{k_1+k_2=k}}
\lambda^{-k_1+1}\| \wht F_\bA\|_{H^{k_2}}\right)
\right)\\
\leq & C(C_1)\sum_{k\leq 3}\lambda^k \left(\ep^2\lambda^{2-k}\cdot (\lambda^2\cdot\lambda^{-1}+\lambda)+\ep^2 {\lambda} \sum_{k_1+k_2=k}\lambda^{-k_1+2}\lambda^{\min\{-k_2+1,0\}}\right.\\
&\quad\quad \left.
+\ep^2 {\lambda} \sum_{{k_1+k_2=k}}
\lambda^{-k_1+1}\lambda^{\min\{-k_2+2,0\}}
\right)\\
\leq &C(C_1)\ep^2\lambda^2.
\end{split}
\end{equation*}

Next, we note that $\wht F_\bA^{(2)}$ and $\wht F_\bA^{(3)}$ both obey better bounds than $\wht F_\bA$ according to \eqref{B1} {and \eqref{Fa.Linfty}}. Therefore, we can argue similarly as above to get
\begin{equation*}
\begin{split}
&{\sum_{k\leq 3}}\lambda^k\left(\left\|(\rd\mathcal E_\lambda)\left(\rd \sum_\bA \lambda \wht F^{(2)}_\bA \co{\f{2u_\bA}{\lambda}}\right)\right\|_{H^k} + \left\|(\rd\mathcal E_\lambda)\left(\rd \sum_\bA \lambda \wht F^{(3)}_\bA \si{\f{3u_\bA}{\lambda}}\right)\right\|_{H^k}\right)\\
\leq &C(C_1)\ep^2\lambda^2.
\end{split}
\end{equation*}
Finally, using Corollary \ref{E.cor}, we obtain
$$
\sum_{k\leq 3}\lambda^k\|(\partial  \mathcal E_\lambda)^2\|_{H^k}\leq C(C_0)\sum_{k\leq 3}\lambda^k \|\partial \mathcal E_\lambda\|_{L^\infty}
\|\partial \mathcal E_\lambda\|_{H^k}\\
\leq C(C_0) \sum_{k\leq 3}\lambda^k\|\partial \mathcal E_\lambda\|_{H^2}
\|\partial \mathcal E_\lambda\|_{H^k}\leq C(C_1)\ep^2\lambda^{2}.$$
\end{proof}

The following proposition shows that $\mfg_1$ (defined in \eqref{g1.def}) is chosen so that $\Delta \mfg_1$ cancels with the term $\Theta^{(1)}$, at the expense of some additional $O(\lambda)$ terms.

\begin{proposition}\label{g1.prop}
For $\mfg_1$ as in \eqref{g1.def} and ${{\bf \Gamma}}_0(\mfg)$ as in \eqref{G.g}, \eqref{G.N}, \eqref{G.b1} and \eqref{G.b2} (with metric components replaced by their background values) and $\Theta^{(1)}$ as in \eqref{g.O1.expand}, we have that $\Delta \mfg_1- \Gamma_0(\mfg)^{\mu\nu}\Theta^{(1)}_{\mu\nu}$ depends only on the background solution with
\begin{equation*}
\begin{split}
&\Delta \mfg_1- {{\bf \Gamma}}_0(\mfg)^{\mu\nu}\Theta^{(1)}_{\mu\nu}\\
=&\lambda\sum_{\bA}\left(\mathcal G_{1,1,\bA}^{(\Delta)}(\mfg) \co{\frac{u_\bA}{\lambda}}+\mathcal G_{1,2,\bA}^{(\Delta)}(\mfg) \si{\frac{2u_\bA}{\lambda}}\right)\\
&+\lambda\sum_{\pm}\sum_{\bA}\sum_{\bB \neq \bA}\mathcal G_{2,1,\bA,\bB,\pm}^{(\Delta)}(\mfg)\si{\frac{ u_\bA \pm u_{\gra B}}{\lambda}}+\mathcal G_{error}^{(\Delta)}(\mfg)
\end{split}
\end{equation*}
{for some $\mathcal G_{1,1,\bA}^{(\Delta)}(\mfg)$, $\mathcal G_{1,2,\bA}^{(\Delta)}(\mfg)$, $\mathcal G_{2,1,\bA,\bB,\pm}^{(\Delta)}(\mfg)$ and $\mathcal G_{error}^{(\Delta)}(\mfg)$} such that (for any $\mfg\in \{N,\gamma,\beta^i\}$)
\begin{itemize}
\item $\mathcal G_{1,1,\bA}^{(\Delta)}(\mfg)$, $\mathcal G_{1,2,\bA}^{(\Delta)}(\mfg)$ and $\mathcal G_{2,1,\bA,\bB,\pm}^{(\Delta)}(\mfg)$ are all compactly supported in $B(0,R_{supp})$ in the time interval $[0,1]$ and obey the estimates
$$\|\cdot\|_{H^7\cap {C^{7}}}+\|\rd_t(\cdot)\|_{H^6\cap {C^{6}}}\leq C(C_0)\ep^2; $$
\item $\mathcal G_{error}^{(\Delta)}(\mfg)$ is also compactly supported in $B(0,R_{supp})$ in the time interval $[0,1]$ and obeys the estimate 
$$\sum_{k\leq 3}\lambda^k \|\mathcal G_{error}^{(\Delta)}(\mfg)\|_{H^k}+\sum_{k\leq 2}\lambda^{k+1} \|\rd_t\mathcal G_{error}^{(\Delta)}(\mfg)\|_{H^k}\leq C(C_0)\ep^2\lambda^2.$$
\end{itemize}

\end{proposition}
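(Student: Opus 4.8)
The plan is to prove Proposition~\ref{g1.prop} by a direct computation of $\Delta\mfg_1$, exploiting the explicit formula \eqref{g1.def} for $\mfg_1$ and matching it against $\bfG_0(\mfg)^{\mu\nu}\Theta^{(1)}_{\mu\nu}$, where $\Theta^{(1)}$ is given by \eqref{g.O1.expand}. The starting observation is that $\mfg_1$ is a finite sum of terms of the schematic form $\lambda^2 a_{\bA}(t,x)\,\co{2u_{\bA}/\lambda}$, $\lambda^2 b_{\bA}(t,x)\,\si{u_{\bA}/\lambda}$, and $\lambda^2 c_{\bA\bB\pm}(t,x)\,\co{(u_{\bA}\pm u_{\bB})/\lambda}$, where the coefficients $a_{\bA}, b_{\bA}, c_{\bA\bB\pm}$ depend only on the background and (by Corollary~\ref{lwp.small} together with the spatially-adapted hypothesis, which guarantees the relevant $|\nab(\cdot)|$ are bounded below) lie in $H^8\cap C^8$ with $\rd_t$-derivative in $H^7\cap C^7$, all with size $C(C_0)\ep^2$. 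Indeed the prefactors $F_\bA^2/|\nab u_\bA|^2$, $F_\bA/|\nab u_\bA|^2$, $F_\bA F_\bB/|\nab(u_\bA\pm u_\bB)|^2$ times $\partial_\mu u_\bA\partial_\nu u_\bB$ times entries of $\bfG_0(\mfg)^{\mu\nu}$ are all of this type.

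First I would record the elementary identity for the Laplacian of a modulated oscillation: for a phase $v$ and amplitude $a$,
\begin{equation*}
\Delta\left(\lambda^2 a\,\co{\tfrac{v}{\lambda}}\right) = -a|\nab v|^2\co{\tfrac{v}{\lambda}} - \lambda\big(2\nab a\cdot\nab v + a\Delta v\big)\si{\tfrac{v}{\lambda}} + \lambda^2(\Delta a)\co{\tfrac{v}{\lambda}},
\end{equation*}
and similarly with $\co{}\leftrightarrow\si{}$ and an overall sign flip. Applying this to each of the three groups in \eqref{g1.def}: the leading $O_\ep(1)$ term $-a|\nab v|^2\co{v/\lambda}$ (or its $\si{}$ analogue) is, by the choice of the prefactor $1/|\nab v|^2$ in \eqref{g1.def}, \emph{exactly} the corresponding piece of $\bfG_0(\mfg)^{\mu\nu}\Theta^{(1)}_{\mu\nu}$ — one checks term by term that $\mathfrak G_{1,1}$ matches the $\si{u_\bA/\lambda}$-part, $\mathfrak G_{1,2}$ the $\co{2u_\bA/\lambda}$-part, and $\mathfrak G_{1,3}$ the $\co{(u_\bA\pm u_\bB)/\lambda}$-part, using that $|\nab u_\bA|=|\nab v|$ with $v=2u_\bA$ giving $|\nab v|^2=4|\nab u_\bA|^2$ to absorb the $\tfrac18$ vs $\tfrac12$ bookkeeping, etc. The remaining $O(\lambda)$ terms $-\lambda(2\nab a\cdot\nab v+a\Delta v)\si{v/\lambda}$ are precisely the $\mathcal G^{(\Delta)}_{1,1,\bA}$, $\mathcal G^{(\Delta)}_{1,2,\bA}$, $\mathcal G^{(\Delta)}_{2,1,\bA,\bB,\pm}$ terms (after using product and chain rule, $\Delta v$ and $\nab a$ stay in $H^7\cap C^7$ by Corollary~\ref{lwp.small}), and the $O(\lambda^2)$ terms $\lambda^2(\Delta a)\co{v/\lambda}$ go into $\mathcal G^{(\Delta)}_{error}(\mfg)$, which by the $H^8$-control on $a$ obeys $\sum_{k\le3}\lambda^k\|\cdot\|_{H^k}\le C(C_0)\ep^2\lambda^2$ since $\|\lambda^2\Delta a\,\co{v/\lambda}\|_{H^k}\lesssim \lambda^2\cdot\lambda^{-k}\|a\|_{H^{k+2}}$. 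One must also be careful that $\Delta$ falling on the prefactor $1/|\nab v|^2$ or on the $\partial_\mu u\,\partial_\nu u$ factors — all background quantities — produces only such acceptable error terms, and that there is an exact matching of the $\pm$ and $\bA\ne\bB$ index structures between \eqref{g1.def} and \eqref{g.O1.expand}.

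The support statement is immediate: $F_\bA$ is supported in $B(0,R_{supp})$ by Lemma~\ref{lm:Rsupp}, hence so are all the coefficients. The bounds on $\mathcal G^{(\Delta)}_{1,1,\bA}(\mfg)$ etc.\ follow from the background estimates in Corollary~\ref{lwp.small} (which give $k\le 10$ regularity, more than enough), the product estimates of Appendix~\ref{weightedsobolev}, and the lower bounds on $|\nab u_\bA|$ and $|\nab(u_\bA\pm u_\bB)|$ from the spatially-adapted property (Definition~\ref{def.spatial}), which keep the denominators $1/|\nab(\cdot)|^2$ smooth and bounded. I expect no genuine obstacle here — the proposition is essentially a bookkeeping identity by design of \eqref{g1.def} — the only real care needed is the \emph{matching}: verifying that every $O_\ep(1)$ oscillatory term produced by $\Delta\mfg_1$ cancels the corresponding term in $\bfG_0(\mfg)^{\mu\nu}\Theta^{(1)}_{\mu\nu}$ exactly, with the right combinatorial factors (the $-\tfrac18$, $-2$, $-\tfrac12$ coefficients in \eqref{g1.def} are tuned for precisely this), and correctly sorting the $O(\lambda)$ remainders into the named $\mathcal G^{(\Delta)}$ functions versus $\mathcal G^{(\Delta)}_{error}$. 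The estimate for the $\rd_t$-derivatives is identical, using that $\rd_t$ of the coefficients loses one degree of Sobolev regularity (hence $H^7\cap C^7$ and $H^6\cap C^6$), and $\rd_t$ of the phase $u_\bA/\lambda$ contributes a further $\lambda^{-1}$, consistently with the claimed bounds.
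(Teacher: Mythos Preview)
Your proposal is correct and follows essentially the same approach as the paper: expand $\Delta\mfg_1$ via the product rule on each modulated oscillation, observe that the terms where both derivatives hit the phase cancel $\bfG_0(\mfg)^{\mu\nu}\Theta^{(1)}_{\mu\nu}$ exactly by design of \eqref{g1.def}, collect the terms with one derivative on the phase into the named $\mathcal G^{(\Delta)}$ coefficients, and put the rest into $\mathcal G^{(\Delta)}_{error}$. Your write-up is in fact more explicit than the paper's own proof, which simply asserts these three points without writing out the Laplacian identity or the bookkeeping.
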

\begin{proof}
{Note that the definition of $\mfg_1$ in \eqref{g1.def} is such that if each derivative of $\Delta$ hits the oscillating factor, the resulting term cancels exactly ${\bf \Gamma}_0^{\mu\nu}(\mfg) \Theta^{(1)}_{\mu\nu}$.}

{In the case that exactly one derivative of $\Delta$ hits the oscillating factor, then the phases become $\si{\frac{2u_\bA}{\lambda}}$, $\co{\f{u_{\bA}}{\lambda}}$ and $\si{\f{u_{\bA}\pm u_{\bB}}{\lambda}}$. This then gives rise to the three main terms in the proposition, and one checks using the definition of $\mfg_1$ from \eqref{g1.def} that $\mathcal G_{1,1,\bA}^{(\Delta)}(\mfg)$, $\mathcal G_{1,2,\bA}^{(\Delta)}(\mfg)$, $\mathcal G_{2,1,\bA,\bB,\pm}^{(\Delta)}(\mfg)$ all depend only on the background solution, are compactly supported in $[0,1]\times B(0,R_{supp})$ and obey the stated bounds.
}

{
Finally, denote by $\mathcal G_{error}^{(\Delta)}(\mfg)$ all the terms that $\Delta$ does not hit on the oscillating factor. It is easy to check using \eqref{g1.def} that it has all the desired properties.
}

\end{proof}

\begin{proposition}\label{theta2.prop}
{For ${\bf \Gamma}^{\mu\nu}_0(\mfg)$ as in \eqref{G.g}, \eqref{G.N}, \eqref{G.b1} and \eqref{G.b2} (with metric components replaced by their background values) and $\Theta^{(2)}$ as in \eqref{g.Ol.expand},} ${{\bf \Gamma}}^{\mu\nu}_0{(\mfg)}\Theta^{(2)}_{\mu\nu}$ can be written as 
\begin{equation*}
\begin{split}
{{\bf \Gamma}}^{\mu\nu}_0{(\mfg)} \Theta^{(2)}_{\mu\nu}= &\sum_{\bA}\lambda \mathcal G_{1,1,\bA}^{(\Theta)}(\mfg)\co{\frac{u_\bA}{ \lambda}}+\sum_{\bA}\lambda \mathcal G_{1,2,\bA}^{(\Theta)}(\mfg) \si{\frac{2u_\bA}{\lambda}}\\
&+\sum_{\bA}\lambda\mathcal G_{1,3,\bA}^{(\Theta)}(\mfg) \co{\frac{3u_\bA}{\lambda}}+\sum_{\pm}\sum_{\bA}\sum_{\bB\neq \bA}\lambda \mathcal G_{2,1,\bA,\bB,\pm}^{(\Theta)}(\mfg)\si{\frac{ u_\bA \pm u_{\gra B}}{\lambda}}\\
&+\sum_{\pm}\sum_{\bA}\sum_{\bB\neq \bA} \lambda \mathcal G_{2,2,\bA,\bB,\pm}^{(\Theta)}(\mfg)\co{\frac{ u_\bA \pm 2 u_{\gra B}}{\lambda}}\\
&+\sum_{\pm}\sum_{\bA,\bB} \lambda \mathcal G_{2,3,\bA,\bB,\pm}^{(\Theta)}(\mfg) \si{\f{u_\bA\pm 3u_\bB}{\lambda}},
\end{split}
\end{equation*}
{for some}
$\mathcal G_{1,1,\bA}^{(\Theta)}(\mfg)$, $\mathcal G_{1,2,\bA}^{(\Theta)}(\mfg)$, $\mathcal G_{1,3,\bA}^{(\Theta)}(\mfg)$, $\mathcal G_{2,1,\bA,\bB,\pm}^{(\Theta)}(\mfg)$, $\mathcal G_{2,2,\bA,\bB,\pm}^{(\Theta)}(\mfg)$, $\mathcal G_{2,3,\bA,\bB,\pm}^{(\Theta)}(\mfg)$ {where each of them is} compactly supported in $B(0,R_{supp})$ and satisfies the estimate
$$\sum_{k\leq 3} \lambda^k \|\cdot\|_{H^{2+k}}+ \lambda^k \|\partial_t (\cdot)\|_{H^{k+1}} \leq C(C_1)\ep^2.$$
\end{proposition}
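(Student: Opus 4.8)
Proposition~\ref{theta2.prop} is a direct computation, and my plan is to unwind the definition of $\Theta^{(2)}_{\mu\nu}$ term by term, contract each piece with the background matrix ${\bf \Gamma}^{\mu\nu}_0(\mfg)$, expand all products of trigonometric functions with the standard identities, and then verify the size/regularity claims. The key preliminary observation is that each summand $\mathfrak G_{2,1},\dots,\mathfrak G_{2,5}$ in \eqref{g.Ol.expand} already carries an explicit factor of $\lambda$ in front, so the only thing to check is that (i) after contraction the coefficient functions have the claimed $\|\cdot\|_{H^{2+k}}+\|\partial_t(\cdot)\|_{H^{k+1}}$ regularity uniformly (in the $\sum_{k\le 3}\lambda^k$-weighted sense), and (ii) the phase functions appearing are exactly those listed: $u_\bA$, $2u_\bA$, $3u_\bA$, $u_\bA\pm u_\bB$ ($\bA\neq\bB$), $u_\bA\pm 2u_\bB$ ($\bA\neq\bB$), and $u_\bA\pm 3u_\bB$. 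The last point is purely bookkeeping with product-to-sum formulas, noting for instance that $\mathfrak G_{2,1}$ and $\mathfrak G_{2,2}$ already have phases $u_\bA\pm u_\bB$ and $2u_\bA$; that $\mathfrak G_{2,3}$ and $\mathfrak G_{2,4}$ have phases $u_\bA$, $2u_\bA$, $3u_\bA$; and that $\mathfrak G_{2,5}$ produces $\si{\tfrac{u_\bA}{\lambda}}\cdot\{\co{\tfrac{u_\bB}{\lambda}},\si{\tfrac{2u_\bB}{\lambda}},\co{\tfrac{3u_\bB}{\lambda}}\}$, which upon expansion gives exactly $u_\bA\pm u_\bB$, $u_\bA\pm 2u_\bB$, $u_\bA\pm 3u_\bB$ (and when $\bA=\bB$ the degenerate phases $0$, $u_\bA$, $2u_\bA$, $3u_\bA$, $4u_\bA$ — the constant and $4u_\bA$ terms must be examined, see below).

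The regularity bookkeeping is where one has to be slightly careful, so I would organize it as follows. The entries of ${\bf \Gamma}_0(\mfg)$ are smooth functions of $(\gamma_0, N_0, \beta_0)$, hence by Corollary~\ref{lwp.small} (and Moser/product estimates from Appendix~\ref{weightedsobolev}, i.e. Propositions~\ref{product} and \ref{holder}) they lie in $C^\infty_{loc}$ with the relevant norms bounded by $C(C_0)$; on the compact set $B(0,R_{supp})$ — which suffices since everything in sight is supported there by \eqref{eq:supp} — they are bounded in $H^{k}$ for all $k\le 9$ or so. The background matter factors $\phi_0$, $F_\bA$, $\partial_\mu u_\bA$ are controlled in $H^k$ ($k$ large) by Corollary~\ref{lwp.small}; the only non-background factors are $\wht F_\bA$, $\wht F^{(2)}_\bA$, $\wht F^{(3)}_\bA$ appearing in $\mathfrak G_{2,3}$ and $\mathfrak G_{2,5}$, which are controlled by \eqref{BA1}, \eqref{B1}, \eqref{F.Linfty}, \eqref{Fa.Linfty} and Proposition~\ref{whtG.prop}. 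Since $\mathfrak G_{2,3}$ and $\mathfrak G_{2,5}$ contribute coefficients that are \emph{at worst} products of a background function with one of $\wht F_\bA, \wht F^{(a)}_\bA$, and since $\wht F_\bA \in H^{2+k}$ with norm $\le C(C_1)\ep\lambda^{-k}$ (from \eqref{BA1}), the resulting coefficient is bounded in the weighted norm $\sum_{k\le 3}\lambda^k\|\cdot\|_{H^{2+k}}$ by $C(C_1)\ep^2$, and similarly with one time-derivative using the $\partial_t$ parts of \eqref{BA1}–\eqref{B1}. The terms coming from $\mathfrak G_{2,1}, \mathfrak G_{2,2}, \mathfrak G_{2,4}$ involve only background quantities, so their coefficients are bounded by $C(C_0)\ep^2$, which is stronger.

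The one genuine subtlety — and the step I expect to need the most care — is the treatment of the ``degenerate'' phases arising from $\mathfrak G_{2,5}$ when $\bA=\bB$, namely the constant ($0$-phase) term and the $\co{\tfrac{4u_\bA}{\lambda}}$ term, and more generally any low-frequency (constant) contributions: these are not among the listed phases and so must be absorbed. The resolution is that these terms come with a prefactor of $\lambda$ and smooth background coefficients, so a constant-phase term is simply of size $O(\lambda)$ in any $H^k$ and is harmless — but strictly speaking the statement as written only allows the six listed oscillatory phases, so one should either note that the $\bA=\bB$ contributions are reorganized: the $0$-phase piece can be moved (it would more naturally belong to a lower-order correction, but here it is $O(\lambda)$ and I would fold it into, say, $\mathcal G_{1,1,\bA}^{(\Theta)}$ times a trivially-satisfied relabeling, or more honestly observe the proposition's list should be read as including these as special/degenerate cases). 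I would handle this by remarking explicitly that when $\bA=\bB$ in $\mathfrak G_{2,5}$, the product-to-sum expansion produces phases $0, u_\bA, 2u_\bA, 3u_\bA, 4u_\bA$; the $0, u_\bA, 2u_\bA, 3u_\bA$ pieces are incorporated into $\mathcal G_{1,1,\bA}^{(\Theta)}, \mathcal G_{1,2,\bA}^{(\Theta)}, \mathcal G_{1,3,\bA}^{(\Theta)}$ (the $0$-phase absorbed harmlessly since the later elliptic inversion of a smooth $O(\ep^2)$ source is itself fine), and the $\co{\tfrac{4u_\bA}{\lambda}}$ piece is genuinely of the form $\mathcal G_{2,2,\bA,\bB,\pm}^{(\Theta)}$ with $\bB=\bA$ via the identification $4u_\bA = u_\bA + 3u_\bA$ (or is simply listed among the $u_\bA\pm 3u_\bB$ family). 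Modulo this relabeling, every term has been matched, the coefficient estimates follow from Corollary~\ref{lwp.small}, \eqref{BA1}, \eqref{B1}, \eqref{F.Linfty}, \eqref{Fa.Linfty} and the product estimates of Appendix~\ref{weightedsobolev}, and the support statement follows from \eqref{eq:supp}, completing the proof.
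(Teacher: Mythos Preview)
Your approach is correct and matches the paper's (very terse) proof: expand $\Theta^{(2)}_{\mu\nu}$ via product-to-sum identities, observe that every coefficient is either a background quantity or a product of a background quantity with one of $\wht F_\bA,\wht F^{(2)}_\bA,\wht F^{(3)}_\bA$, and invoke Corollary~\ref{lwp.small}, \eqref{B1}, \eqref{BA1} for the estimates. Your identification of the $\wht F_\bA$-dependent terms as the regularity bottleneck is exactly the point the paper singles out.

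One minor overcomplication: your worry about a constant ($0$-phase) term in the $\bA=\bB$ case of $\mathfrak G_{2,5}$ is unfounded. The three products are $\sin(u_\bA/\lambda)\cos(u_\bA/\lambda)$, $\sin(u_\bA/\lambda)\sin(2u_\bA/\lambda)$, and $\sin(u_\bA/\lambda)\cos(3u_\bA/\lambda)$, which yield phases $\{2u_\bA\}$, $\{u_\bA,3u_\bA\}$, $\{2u_\bA,4u_\bA\}$ respectively --- no zero phase arises, since $\sin\cdot\cos$ and $\sin\cdot\sin$ with distinct integer multiples of $u_\bA$ never produce a constant. The genuine $4u_\bA$ term you correctly noted is absorbed into $\mathcal G^{(\Theta)}_{2,3,\bA,\bA,+}$, consistent with the proposition's unrestricted $\sum_{\bA,\bB}$ in that family. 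So the ``subtlety'' paragraph can be safely shortened.
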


\begin{proof}
{The factor ${\bf \Gamma}^{\mu\nu}_0(\mfg)$, which depends only on the background metric, clearly obeys much better bound than we need according to Corollary~\ref{lwp.small}. It therefore suffices to control $\Theta^{(2)}_{\mu\nu}$ from \eqref{g.Ol.expand}. According to \eqref{g.Ol.expand}, $\Theta^{(2)}_{\mu\nu}$ indeed can be expanded into a sum of terms, each with a high frequency oscillating factor as in the statement of the proposition. We thus define the $\mathcal G^{(\Theta)}$ terms accordingly.} {Most of these $\mathcal G^{(\Theta)}$ terms depend only on the background, and they obey the desired bounds according to Corollary~\ref{lwp.small} and \eqref{B1}.} The only $\mathcal G^{(\Theta)}$ {terms}
which do not depend only on the background are of the form $F_\bA \wht F_\bB$ or
$\partial \phi_0 \wht F_{\bA}${: to show that they obey the desired estimates, we use \eqref{BA1}, in addition to Corollary~\ref{lwp.small} and \eqref{B1}. We remark that it is exactly the application of \eqref{BA1} that limits the regularity of $\mathcal G^{(\Theta)}$.}
\end{proof}

\begin{proposition}\label{gamma.prop}
$$\sum_{k\leq 3}\lambda^k\|\left({{\bf \Gamma}}^{\mu\nu}(\mfg)-{{\bf \Gamma}}_0^{\mu\nu}(\mfg)\right)\rd_\mu\phi\rd_\nu\phi\|_{H^k}+\sum_{k\leq 2}\lambda^{k+1}\|\rd_t\left(\left({{\bf \Gamma}}^{\mu\nu}(\mfg)-{{\bf \Gamma}}_0^{\mu\nu}(\mfg)\right)\rd_\mu\phi\rd_\nu\phi\right)\|_{H^k}\leq C(C_1)\ep^2\lambda^2.$$
\end{proposition}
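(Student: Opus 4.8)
The statement to be proved is that
$$\left({\bf \Gamma}^{\mu\nu}(\mfg)-{\bf \Gamma}_0^{\mu\nu}(\mfg)\right)\rd_\mu\phi\,\rd_\nu\phi$$
is small in the weighted norms, so the whole task splits into (i) controlling the metric-coefficient difference ${\bf \Gamma}^{\mu\nu}(\mfg)-{\bf \Gamma}_0^{\mu\nu}(\mfg)$ and (ii) controlling the product with $\rd_\mu\phi\,\rd_\nu\phi$, where $\phi=\phi_\lambda$ is the full scalar field given by the parametrix \eqref{phi.para}. The plan is first to expand ${\bf \Gamma}^{\mu\nu}(\mfg)-{\bf \Gamma}_0^{\mu\nu}(\mfg)$, using the explicit forms \eqref{G.g}--\eqref{G.b2}: since each ${\bf \Gamma}(\mfg)$ is a smooth (in fact polynomial/exponential) function of $(\gamma,N,\beta^i)$, a first-order Taylor expansion gives ${\bf \Gamma}^{\mu\nu}(\mfg)-{\bf \Gamma}_0^{\mu\nu}(\mfg)$ as a sum of terms each carrying at least one factor of $\mfg-\mfg_0=\mfg_1+\mfg_2+\mfg_3$ (for $\mfg\in\{\gamma,N,\beta^i\}$). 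By the bounds \eqref{B2}, \eqref{BA3}, \eqref{BA4} together with the pointwise estimates \eqref{g2.Linfty}, \eqref{g3.Linfty}, each such factor is $O(C(C_1)\ep\lambda^2)$ in $L^\infty$ on $B(0,R_{supp}+1)$, and the corresponding $H^k$ norms lose at most $\lambda^{-k}$. Here it is important to note that we only need these estimates on the compact set $B(0,R_{supp})$, since $\phi_\lambda$ (hence $\rd\phi_\lambda$) is supported there by Lemma~\ref{lm:Rsupp}; this is exactly the situation where Proposition~\ref{product.local} applies and we need not worry about the logarithmic growth of $\gamma$, $N$ at infinity.

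Second, I would estimate $\rd_\mu\phi_\lambda\,\rd_\nu\phi_\lambda$. Using the parametrix \eqref{phi.para}, $\rd\phi_\lambda$ is a sum of: (a) $\rd\phi_0$, bounded in $H^5\cap C^5$ by $C(C_0)\ep$ via Corollary~\ref{lwp.small}; (b) $\rd\big(\lambda F_\bA\co{u_\bA/\lambda}\big)$, which is $O(\ep)$ in $L^\infty$ but costs $\lambda^{-1}$ per extra derivative (again Corollary~\ref{lwp.small}); (c) the $\lambda^2$-terms involving $\wht F_\bA$, $\wht F^{(a)}_\bA$, controlled by \eqref{BA1}, \eqref{B1}, \eqref{F.Linfty}, \eqref{Fa.Linfty}; and (d) $\rd\mathcal E_\lambda$, controlled by Corollary~\ref{E.cor} and \eqref{E.Linfty}. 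In all cases $\rd\phi_\lambda$ is uniformly bounded in $L^\infty$ by $C(C_1)\ep$, with $H^k$ norm losing at most $\lambda^{-k}$, so $\rd\phi\otimes\rd\phi$ obeys the same type of bound with constant $C(C_1)\ep^2$. Then one simply multiplies the two pieces, applying the Leibniz rule and the standard product estimate (Proposition~\ref{product}, or the simpler $\|uv\|_{H^k}\le\|u\|_{H^k}\|v\|_{C^k}$ on $B(0,R_{supp})$): distributing $k$ derivatives, the worst case puts all derivatives either on the oscillating phases or on the metric difference, and in every distribution one picks up a total of $\lambda^{-k}$ from $k$ derivatives while retaining the gain $\lambda^2$ from the metric-difference factor, yielding $\lambda^k\|\cdot\|_{H^k}\le C(C_1)\ep^2\lambda^2$. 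Summing over $\mu,\nu$ and $k\le 3$ gives the first asserted bound.

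Third, for the time-derivative bound I would argue identically, except that when $\rd_t$ lands on one of the $\mfg_i$ factors I use the $\rd_t$-versions of the metric estimates: $\rd_t$ of $\mfg_1$ from \eqref{B2}, of $\mfg_2$ from \eqref{BA3}/\eqref{g2.Linfty}, and of $\wht\mfg_3$ from \eqref{BA4}/\eqref{g3.Linfty} — and crucially, for $\mfg=\gamma$ one needs the improved estimate \eqref{BA5} for $\rd_t\wht\gamma_3$ in the $L^2$ (i.e. $k=0$) case, exactly as in the proof of Proposition~\ref{box.diff.prop}. When $\rd_t$ lands on a phase or on $\wht F_\bA$ one uses \eqref{BA1}, and when it lands on $\mathcal E_\lambda$ one uses Corollary~\ref{E.cor}. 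Since the time-derivative estimates all carry the same structural $\lambda$-budget (one $\rd_t$ costs $\lambda^{-1}$, consistent with the $\lambda^{k+1}$ weight in the statement), the bookkeeping goes through verbatim.

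\emph{Main obstacle.} The computation itself is routine multilinear bookkeeping; the only genuinely delicate point is the $\rd_t\gamma_3$ term in the time-derivative estimate at top order of integration ($k=0$), where the naive bound from \eqref{BA4} would be off by one power of $\lambda$ and one must instead invoke the special bootstrap assumption \eqref{BA5} on the compact set $B(0,R_{supp}+1)$ — this is precisely the structural subtlety that was already isolated in Propositions~\ref{box.diff.prop} and \ref{box.diff.bis.prop}, so here it only needs to be reused carefully rather than re-derived. A secondary, purely organizational, point is keeping track of which terms in $\rd\phi_\lambda$ require $L^\infty$ control (to be paired with the $H^k$ norm of the metric difference) versus $H^k$ control (to be paired with the $L^\infty$ norm of the metric difference); choosing, in each product, to put the $H^k$ norm on whichever factor does \emph{not} already carry the $\lambda^2$-gain, so that the $\lambda^2$ always survives.
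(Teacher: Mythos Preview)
Your approach is correct and coincides with the paper's: apply the product estimate (Proposition~\ref{product} or \ref{product.local}) on the compact set $B(0,R_{supp})$, putting the $H^k$ norm alternately on ${\bf\Gamma}-{\bf\Gamma}_0$ and on $\rd\phi\,\rd\phi$, and use \eqref{B2}, \eqref{BA3}, \eqref{BA4}, \eqref{g2.Linfty}, \eqref{g3.Linfty} for the metric difference together with the $O_\ep(1)$ pointwise and $O_\ep(\lambda^{-k})$ $H^k$ bounds for $\rd\phi$. The paper's proof is in fact terser than yours (it treats $\rd\phi$ directly rather than decomposing via the parametrix, and omits the $\rd_t$ case as similar), but the substance is identical.

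One correction to your ``main obstacle'': the improved estimate \eqref{BA5} for $\rd_t\gamma_3$ is \emph{not} needed here, and the paper does not invoke it. In Proposition~\ref{box.diff.prop} the dangerous term was $\rd_t\gamma_3\cdot\rd f$, and the target bound $C(C_1)\ep\lambda^{2-k}\|\rd f\|_{L^\infty}$ forced the use of \eqref{BA5} at $k=0$. Here, by contrast, the product is $\rd_t\gamma_3\cdot(\rd\phi)^2$, and the extra factor $\|\rd\phi\|_{L^\infty}\le C(C_1)\ep$ supplies an additional $\ep$: using only \eqref{BA4} (via $H^2_\de\hookrightarrow L^2$ on the compact set) one gets $\|\rd_t\gamma_3\|_{L^2(B(0,R_{supp}+1))}\le C_1\ep\lambda$, and hence
\[
\lambda\cdot\|\rd_t\gamma_3\|_{L^2(B)}\cdot\|\rd\phi\|_{L^\infty}^2\;\le\;\lambda\cdot C_1\ep\lambda\cdot C(C_1)\ep^2\;=\;C(C_1)\ep^3\lambda^2\;\le\;C(C_1)\ep^2\lambda^2,
\]
which already suffices. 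So the genuinely delicate point you flag is absent in this quadratic setting; the proof is a straightforward product estimate throughout.
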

\begin{proof}
Using the compact support of $\phi$, {Proposition~\ref{product}} give{s}
\begin{equation*}
\begin{split}
&\sum_{k\leq 3}\lambda^k\|\left({{\bf\Gamma}}^{\mu\nu}(\mfg)-\Gamma_0^{\mu\nu}(\mfg)\right)\rd_\mu\phi\rd_\nu\phi\|_{H^k}\\
\leq &C(C_0)\left(\|\rd\phi\|_{L^\infty}^2\sum_{k\leq 3}\lambda^k\|{{\bf\Gamma}}^{\mu\nu}(\mfg)-{{\bf\Gamma}}_0^{\mu\nu}(\mfg)\|_{H^k(B(0,R_{supp}{+1}))}\right.\\
&\left.\qquad +\|{{\bf\Gamma}}^{\mu\nu}(\mfg)-{{\bf\Gamma}}_0^{\mu\nu}(\mfg)\|_{L^\infty(B(0,R_{supp}{+1}))}\|\rd\phi\|_{L^\infty}\sum_{k\leq 3}\lambda^k\|\rd\phi\|_{H^k}\right)\\
\leq &C(C_1)\left(\ep^2\cdot \ep\lambda^2 + \ep\lambda^2\cdot \ep \cdot \ep\right)\leq C(C_1)\ep^2\lambda^2,
\end{split}
\end{equation*}
where in the last step we have used the estimates in \eqref{B1}, \eqref{B2}, \eqref{BA1}, \eqref{BA2}, \eqref{BA3}{,} \eqref{BA4}{, \eqref{g2.Linfty}, \eqref{g3.Linfty}} and Corollary \ref{E.cor}.
\end{proof} 

\subsection{Definition of $\mfg_2$}\label{sec.g2.def}

Our goal in this subsection is to show that $\Delta(\mfg-\mfg_0-\mfg_1-\mfg_2)$ is appropriately small. This of course involve a suitable choice of $\mfg_2$. Recall from \eqref{g2.def} that we have given an expression for $\mfg_2$, but the functions $\mathcal G_{1,1,\bA}(\mfg)$, $\mathcal G_{1,2,\bA}(\mfg)$, $\mathcal G_{1,3,\bA}(\mfg)$, $\mathcal G_{2,1,\bA,\bB,\pm}(\mfg)$, $\mathcal G_{2,2,\bA,\bB,\pm}(\mfg)$ and $\mathcal G_{2,3,\bA,\bB,\pm}(\mfg)$ are yet to be defined. They will be defined in this subsection. Together with the calculations from the previous subsection, we then obtain a good estimate for $\Delta(\mfg-\mfg_0-\mfg_1-\mfg_2)$.

Before we define $\mfg_2$, we need another piece of notation. Let (compare \eqref{up1}, \eqref{up2} and \eqref{up3})
\begin{align}
\Upsilon_1(\gamma):=&-\frac{e^{2\gamma_0}}{4N_0^2}\delta^{ij}\delta^{k\ell}(L\beta_1)_{ik}(L\beta_0)_{{j}\ell},\label{up1.g}\\
\Upsilon_1(N):=&\f{e^{2\gamma_0}}{2N_0}\delta^{ij}\delta^{k\ell}(L\beta_1)_{ik}(L\beta_0)_{{j}\ell},\label{up1.N}\\
\Upsilon_1(\beta^i):=&{\delta^{jk}}\delta^{{i}\ell}\left((\rd_k\log N_0)(L\beta_1)_{{j}\ell}+(\rd_k\log N_1)(L\beta_0)_{{j}\ell}-2(\rd_k\gamma_0)(L\beta_1)_{{j}\ell}-2(\rd_k\gamma_1)(L\beta_0)_{{j}\ell}\right)\label{up1.b}.
\end{align}
The terms defined above are meant to be the ``main term'' in $\Upsilon-\Upsilon_0$. {We will show that i}t also {can be decomposed into the ``right type'' of high frequency terms} so that we can define $\mfg_2$ in the parametrix appropriately to ``remove this main term'' in $\Delta(\mfg-\mfg_0-\mfg_1)$. This is made precise in the next two propositions.

\begin{proposition}\label{up1.prop}
$\Upsilon_1(\mfg)$ can be written as 
\begin{equation*}
\begin{split}
\Upsilon_1 (\mfg)=&\lambda\sum_{\bA}\left(\mathcal G_{1,1,\bA}^{(\Upsilon)}(\mfg) \co{\frac{u_\bA}{\lambda}}+\mathcal G_{1,2,\bA}^{(\Upsilon)}(\mfg) \si{\frac{2u_\bA}{\lambda}}\right)\\
&+\lambda\sum_{\pm}\sum_{\bA}\sum_{\bB \neq \bA}\mathcal G_{2,1,\bA,\bB,\pm}^{(\Upsilon)}(\mfg)\si{\frac{ u_\bA \pm u_{\gra B}}{\lambda}} + \mathcal G_{error}^{(\Upsilon)}(\mfg),
\end{split}
\end{equation*}
for some $\mathcal G_{1,1,\bA}^{(\Upsilon)}(\mfg)$, $\mathcal G_{1,2,\bA}^{(\Upsilon)}(\mfg)$ and $\mathcal G_{2,1,\bA,\bB,\pm}^{(\Upsilon)}(\mfg)$ {which} are all compactly supported in $B(0,R_{supp})$ in the time interval $[0,1]$ and obey the estimates
$$\|\cdot\|_{H^7\cap {C^{7}}}+\|\rd_t(\cdot)\|_{H^6\cap {C^{6}}}\leq C(C_0)\ep^2, $$
and some $\mathcal G_{error}^{({\Upsilon})}(\mfg)$  also compactly supported in $B(0,R_{supp})$ in the time interval $[0,1]$ and which obeys the estimate 
$$\sum_{k\leq 3}\lambda^k \|\mathcal G_{error}^{(\Upsilon)}(\mfg)\|_{H^k}+\sum_{k\leq 2}\lambda^{k+1} \|\rd_t\mathcal G_{error}^{(\Upsilon)}(\mfg)\|_{H^k}\leq C(C_0)\ep^2\lambda^2.$$

\end{proposition}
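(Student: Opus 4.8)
The proof of Proposition~\ref{up1.prop} follows the same structure as the proofs of Propositions~\ref{g1.prop} and \ref{theta2.prop}, and is purely computational in nature. The plan is to inspect the explicit expressions \eqref{up1.g}, \eqref{up1.N}, \eqref{up1.b} for $\Upsilon_1(\mfg)$, substitute the parametrix \eqref{g1.def} for $\beta_1$, $\gamma_1$ and $N_1$, and expand everything using standard trigonometric identities, thereby identifying the high-frequency content explicitly.

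First I would recall that by \eqref{g1.def}, each of $\gamma_1$, $N_1$ and $\beta^i_1$ is a finite sum of terms of the form (background coefficient)$\cdot\lambda^2\cos(2u_\bA/\lambda)$, (background coefficient)$\cdot\lambda^2\sin(u_\bA/\lambda)$ and (background coefficient)$\cdot\lambda^2\cos((u_\bA\pm u_\bB)/\lambda)$, where the background coefficients are controlled in $H^8\cap C^8$ with $\rd_t$-derivatives in $H^7\cap C^7$ by $C(C_0)\ep^2$, thanks to Corollary~\ref{lwp.small} (this is exactly the content of \eqref{B2} and Proposition~\ref{g1p.prop}). Next, observe that $\Upsilon_1(\mfg)$ as defined in \eqref{up1.g}--\eqref{up1.b} is linear in the ``subscript-$1$'' quantities $(L\beta_1)$, $\rd_k\log N_1$, $\rd_k\gamma_1$ (with the remaining factors being purely background). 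Differentiating the oscillating factors in $\beta_1$, $\gamma_1$, $N_1$ produces either a $\lambda^{-1}$ (when $\rd_k$ hits the phase) or no gain (when $\rd_k$ hits the background coefficient); since each such component carries a prefactor $\lambda^2$, the worst term is $O(\lambda)$, which is precisely the size of the three explicit ``main'' terms $\mathcal G^{(\Upsilon)}_{1,1,\bA}$, $\mathcal G^{(\Upsilon)}_{1,2,\bA}$, $\mathcal G^{(\Upsilon)}_{2,1,\bA,\bB,\pm}$ asserted. Any term in which $\rd_k$ does not hit a phase gains an extra power of $\lambda$, and is collected into $\mathcal G^{(\Upsilon)}_{error}(\mfg)$, which is thus $O(\lambda^2)$.

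The key bookkeeping step is the phase analysis. A single differentiated phase from $\beta_1$ (or $\gamma_1$, $N_1$) has the form $\sin(u_\bA/\lambda)$, $\cos(2u_\bA/\lambda)$ or $\cos((u_\bA\pm u_\bB)/\lambda)$, and this gets multiplied by the \emph{background} quantity $(L\beta_0)_{j\ell}$ (or $\rd_k\log N_0$, $\rd_k\gamma_0$), which is \emph{non-oscillating}. Hence no new phases are generated and no new phase cancellations occur: the surviving oscillating factors are exactly $\cos(u_\bA/\lambda)$, $\sin(2u_\bA/\lambda)$ and $\sin((u_\bA\pm u_\bB)/\lambda)$, matching the three main-term phases in the statement. (Here one uses the elementary identities $\rd_k(\sin(u_\bA/\lambda)) = \lambda^{-1}(\rd_k u_\bA)\cos(u_\bA/\lambda) + \dots$, etc., as in the proof of Lemma~\ref{lmini}.) The coefficients $\mathcal G^{(\Upsilon)}_{1,1,\bA}(\mfg)$, $\mathcal G^{(\Upsilon)}_{1,2,\bA}(\mfg)$ and $\mathcal G^{(\Upsilon)}_{2,1,\bA,\bB,\pm}(\mfg)$ are then read off as explicit bilinear expressions in the background coefficients of $\beta_1,\gamma_1,N_1$ (which depend only on the background via \eqref{g1.def}) and the background quantities $L\beta_0$, $\rd\log N_0$, $\rd\gamma_0$; each is compactly supported in $B(0,R_{supp})$ on $[0,1]$ because $\beta_1,\gamma_1,N_1$ are (cf.~\eqref{eq:supp} and Proposition~\ref{g1p.prop}) and it obeys $\|\cdot\|_{H^7\cap C^7}+\|\rd_t(\cdot)\|_{H^6\cap C^6}\leq C(C_0)\ep^2$ by the product estimates of Appendix~\ref{weightedsobolev} together with the bounds of Corollary~\ref{lwp.small} and \eqref{B2}. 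The error term $\mathcal G^{(\Upsilon)}_{error}(\mfg)$ collects all contributions where at least one spatial derivative lands on a background coefficient rather than a phase, so it carries an extra $\lambda$; the bound $\sum_{k\leq 3}\lambda^k\|\mathcal G^{(\Upsilon)}_{error}(\mfg)\|_{H^k}+\sum_{k\leq 2}\lambda^{k+1}\|\rd_t\mathcal G^{(\Upsilon)}_{error}(\mfg)\|_{H^k}\leq C(C_0)\ep^2\lambda^2$ follows since each factor of $\lambda^{-k}$ from $k$ differentiated phases is matched against a $\lambda^2$ prefactor and at least one extra $\lambda$, using once more the product and Sobolev estimates in Appendix~\ref{weightedsobolev}.

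The main (and only) obstacle is purely organizational: the expression $\Upsilon_1(\beta^i)$ in \eqref{up1.b} contains \emph{two} kinds of terms, those where the ``$1$''-labelled quantity is $L\beta_1$ with a background logarithmic-derivative prefactor, and those where the ``$1$''-labelled quantity is $\rd_k\log N_1$ or $\rd_k\gamma_1$ with a background $L\beta_0$ prefactor. One must check that in both cases the differentiation of the phase (which can occur either on the $L\beta_1$ factor, or on $\log N_1$/$\gamma_1$, or — with no $\lambda^{-1}$ gain — already present inside $L\beta_1$ before any further $\rd_k$) produces only the three listed phase types and nothing worse in size. Since $\beta_1,\gamma_1,N_1$ all share the same three-phase structure from \eqref{g1.def}, this is automatic, but it requires care to confirm that no term of size $O(1)$ (i.e.\ without the extra $\lambda$) survives except those matching the three main terms, which is guaranteed by the $\lambda^2$ prefactor in \eqref{g1.def} together with the fact that at most \emph{one} derivative in $\Upsilon_1$ hits a phase (the other factor being purely background and hence non-oscillating). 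This completes the verification, and the proposition follows.
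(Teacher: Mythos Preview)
Your proposal is correct and follows exactly the same approach as the paper, which simply asserts that the result ``follows immediately from the definition of $\mfg_1$ in \eqref{g1.def}, the definition of $\Upsilon_1$ in \eqref{up1.g}, \eqref{up1.N} and \eqref{up1.b} and the estimates for the background solution in Corollary~\ref{lwp.small}.'' You have in effect supplied the details the paper omits: the linearity of $\Upsilon_1(\mfg)$ in the subscript-$1$ quantities, the phase analysis showing that differentiating the three phase types in \eqref{g1.def} produces exactly $\cos(u_\bA/\lambda)$, $\sin(2u_\bA/\lambda)$, $\sin((u_\bA\pm u_\bB)/\lambda)$, and the observation that the non-oscillating background factors $(L\beta_0)$, $\rd_k\log N_0$, $\rd_k\gamma_0$ introduce no new phases.
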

\begin{proof}
This follows immediately from the definition of $\mfg_1$ in \eqref{g1.def}, the definition of $\Upsilon_1$ in \eqref{up1.g}, \eqref{up1.N} and \eqref{up1.b} and the estimates for the background solution in Corollary~\ref{lwp.small}.
\end{proof}

\begin{proposition}\label{upsilon.prop}
$$\sum_{k\leq 3}\lambda^k \|\Upsilon(\mfg)-\Upsilon_0(\mfg)-\Upsilon_1(\mfg)\|_{H^k_{\delta+2}}{+\sum_{k\leq 2}\lambda^{k+1} \|\rd_t\left(\Upsilon(\mfg)-\Upsilon_0(\mfg)-\Upsilon_1(\mfg)\right)\|_{H^k_{\delta+2}}}\leq C(C_1)\ep^2\lambda^2.$$
\end{proposition}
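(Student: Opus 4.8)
The plan is to estimate $\Upsilon(\mfg)-\Upsilon_0(\mfg)-\Upsilon_1(\mfg)$ directly by expanding each of the three quantities according to their definitions \eqref{up1}--\eqref{up3}, \eqref{up1.g}--\eqref{up1.b}, and the definition of $\mfg_1$ in \eqref{g1.def}, and then showing that the difference collects only terms that are at least quadratic in the ``small'' metric pieces $\{\mfg_1,\mfg_2,\mfg_3\}$, or else linear in $\mfg_2$ or $\mfg_3$ with one derivative. Concretely, write each metric component as $\mfg = \mfg_0+\mfg_1+\mfg_2+\mfg_3$ and substitute into $\Upsilon(\mfg)$, which is a smooth (quadratic, or bilinear-with-one-derivative) function of the metric components $(\gamma, N, \beta^i)$ and their first derivatives. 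Taylor-expanding around the background, the zeroth-order term is exactly $\Upsilon_0(\mfg)$; the terms linear in $\mfg_1$ are exactly $\Upsilon_1(\mfg)$ by construction (this is the content of the definitions \eqref{up1.g}--\eqref{up1.b}, which pick out precisely the $L\beta_0 \cdot L\beta_1$-type cross terms and the $\partial\mfg_1$-times-$L\beta_0$ type terms). What remains in $\Upsilon(\mfg)-\Upsilon_0(\mfg)-\Upsilon_1(\mfg)$ is therefore: (i) all terms linear in $\mfg_2$; (ii) all terms linear in $\mfg_3$; (iii) all terms at least quadratic in the collection $\{\mfg_1,\mfg_2,\mfg_3\}$ (including the quadratic-in-$\mfg_1$ terms, which are genuinely small since $\mfg_1 = O(\lambda^2)$).

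Next I would run through these three groups and bound each in the norm $\sum_{k\le 3}\lambda^k\|\cdot\|_{H^k_{\de+2}}$, together with the corresponding time-derivative norm, using the product estimates and the multiplication properties of the weighted spaces in Appendix~\ref{weightedsobolev}. For group (i), the relevant bounds are \eqref{BA3} and \eqref{g2.Linfty}, which give $\sum_{k\le 5}\lambda^k\|\mfg_2\|_{H^k}+\sum_{k\le 4}\lambda^{k+1}\|\rd_t\mfg_2\|_{H^k}\le C_1\ep\lambda^3$ and the $C^k$ counterparts; since $\Upsilon_1$ only uses $\mfg_1$-linear terms, all $\mfg_2$-linear terms (paired with background factors controlled by Corollary~\ref{lwp.small}) inherit the $\lambda^3$ smallness, which is even better than needed. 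For group (ii) we use \eqref{BA4} for $\wht\mfg_3$ (which lives in $H^{2+k}_\de$, so $\rd\mfg_3 \in H^{1+k}_{\de+1}$, consistent with the $\de+2$ weight after one more derivative in the $\Upsilon$-expression or absorption of a background factor) and \eqref{BA5}/\eqref{g3.Linfty} for the better behavior of $\rd_t\gamma_3$; here I must be careful that $\Upsilon(\beta^i)$ involves $\rd\gamma$, so whenever a $\rd_t$ derivative falls on $\gamma_3$ I invoke \eqref{BA5} to gain the extra power of $\lambda$. The $(\mfg_3)_{asymp}$ log-part contributes only through its (bounded, $O(\ep\lambda^2)$) $t$-dependence and interacts with the cutoff $\chi$ and background factors in a harmless way. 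For group (iii), the quadratic terms: $\mfg_1\cdot\mfg_1$ gives $O(\lambda^4\ep^2)$ by \eqref{B2}, $\mfg_1\cdot(\mfg_2+\mfg_3)$ and $\mfg_3\cdot\mfg_3$ etc. all beat $\lambda^2$ using \eqref{B2}, \eqref{BA3}, \eqref{BA4}, \eqref{g2.Linfty}, \eqref{g3.Linfty}; one places the lower-regularity factor in $H^k$ and the higher-regularity (or $C^k$) factor in $L^\infty$, as in the proof of Proposition~\ref{gamma.prop}. The weight bookkeeping works because each of $\wht\mfg_i$ decays, $\mfg_1$ and $\mfg_2$ are compactly supported, and products of decaying functions decay faster.

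The main obstacle, I expect, is the weight and regularity bookkeeping for the $\mfg_3$-linear terms in $\Upsilon(\beta^i)$: these involve $(\rd_k\log N_3)(L\beta_0)$ and $(\rd_k\gamma_3)(L\beta_0)$, i.e. one derivative of $\mfg_3$ against the background $L\beta_0$, and one needs the $H^3_{\de+1}$-type control of $\rd\wht\mfg_3$ from \eqref{BA4} to land in $H^k_{\de+2}$ after pairing with the decaying background factor $L\beta_0 \in H^{k+1}_{\de+1}$ — this is a genuine but routine check using the weighted product estimate $H^{k}_{\de_1}\cdot H^{k}_{\de_2}\hookrightarrow H^k_{\de_1+\de_2+?}$ from Appendix~\ref{weightedsobolev}, plus the fact that one of the two factors is bounded in $C^k$. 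The $\rd_t\gamma_3$ term is the one place where naive use of \eqref{BA4} alone would be off by a factor of $\lambda^{-1}$ in the top-order norm, and it is precisely \eqref{BA5} (on the compact set $B(0,R_{supp}+1)$, which suffices since everything here is effectively supported there or decaying) that rescues the estimate. Once these are handled, assembling all contributions gives the claimed bound $\le C(C_1)\ep^2\lambda^2$.
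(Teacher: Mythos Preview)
Your overall strategy—expand $\Upsilon(\mfg)$ around the background, subtract $\Upsilon_0$ and the $\nabla\mfg_1$-linear piece $\Upsilon_1$, then bound the remainder termwise—is the same as the paper's. Two points, however, deserve correction.

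First, your invocation of \eqref{BA5} for $\partial_t\gamma_3$ is misplaced. Inspect \eqref{up1}--\eqref{up3}: $\Upsilon(\mfg)$ contains only \emph{spatial} derivatives ($\partial_k\gamma$, $\partial_k\log N$, $L\beta$), so no $\partial_t\gamma_3$ appears in $\Upsilon-\Upsilon_0-\Upsilon_1$ itself. When one differentiates the whole expression in $t$ (the second sum in the statement), the norm carries the weight $\lambda^{k+1}$ rather than $\lambda^k$, and this extra $\lambda$ exactly absorbs the single power of $\lambda^{-1}$ lost in the $\partial_t\wht\mfg_3$ bound from \eqref{BA4}. The paper makes precisely this remark and does not use \eqref{BA5} here. (Note also that \eqref{BA5} only controls $\|\partial_t\gamma_3\|_{L^2}$ on a compact set, so it could not help with $\partial_t\nabla\gamma_3$ terms anyway.)

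Second, and more importantly, the genuine subtlety you label a ``routine check'' is the decay at spatial infinity, and it is \emph{not} routine. Both $\gamma$ and $N$ have asymptotic pieces growing like $\chi(|x|)\log|x|$, so $\nabla\mfg_0$ and $\nabla\mfg_3$ each contain terms decaying only like $|x|^{-1}$. A product of two such factors decays like $|x|^{-2}$, which for $-\tfrac12<\delta<0$ fails to lie in $L^2_{\delta+2}$. The reason the estimate closes is a structural one: every term in $\Upsilon(\gamma)$, $\Upsilon(N)$, $\Upsilon(\beta^i)$ contains at least one factor of $L\beta$, and by \eqref{g.asymp.sign.2} one has $(\beta_0)_{asymp}=(\beta_3)_{asymp}=0$. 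Hence in every quadratic-in-$\nabla\mfg$ term at least one factor is $\nabla\wht\beta\in H^{1+k}_{\delta+1}$ with no slowly-decaying log tail, and the product lands in $H^k_{\delta+2}$. This is the observation you must make explicit; without it the weighted estimate would not hold.

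A minor organizational point: $\Upsilon_1$ as defined in \eqref{up1.g}--\eqref{up1.b} captures only the $\nabla\mfg_1$-linear terms, not all $\mfg_1$-linear ones. Expanding $e^{2\gamma}$ or $N^{-2}$ also produces undifferentiated $\gamma_1,N_1$ terms (the paper's type (I)), which do not fit your groups (i)--(iii). These are harmless since $\|\mfg_1\|_{C^0}\leq C(C_0)\ep^2\lambda^2$ by \eqref{B2}, but your decomposition should account for them.
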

\begin{proof}
{\textbf{Structure of the terms.}
First, $\Upsilon(\mfg)$, $\Upsilon_0(\mfg)$ and $\Upsilon_1(\mfg)$ are defined such that $\Upsilon(\mfg)-\Upsilon_0(\mfg)-\Upsilon_1(\mfg)$ can be written as a sum of terms of the following types:
\begin{enumerate}[(I)]
\item a term with a factor of $\mfg - \mfg_0$ (without derivative) multiplied by factors depending on $\nab \mfg$,
\item a term with a quadratic factor $(\nab \mfg_1 + \nab \mfg_2 + \nab \mfg_3)\cdot (\nab \mfg_1 + \nab \mfg_2 + \nab \mfg_3)$, multiplied by factors depending on the background metric $\mfg_0$,
\item a term with a factor of $\nab \mfg_2$ or a factor of $\nab \mfg_3$, multiplied by factors depending on $\mfg_0$ and $\nab\mfg_0$.
\end{enumerate}
The key point here is that there are no terms which are linear in $\nab \mfg_1$ (and, say, multiplied by background quantities). Such terms would give estimates which are $O_\ep(\lambda)$ instead of $O_\ep(\lambda^2)$. To illustrate this structure, consider $\mfg = \gamma$ (the other cases are similar):
\begin{equation*}
\begin{split}
& \Upsilon(\gamma) - \Upsilon_0(\gamma) - \Upsilon_1(\gamma)\\
=& \left(-\f{e^{2\gamma}}{8N^2} + \f{e^{2\gamma_0}}{8N_0^2}\right) |L\beta|^2 + \f{e^{2\gamma_0}}{8N_0^2} \left(\de^{ij}\de^{k\ell} (L\beta)_{ik}(L\beta)_{j\ell} - \de^{ij}\de^{k\ell} (L\beta_0)_{ik}(L\beta_0)_{j\ell} - 2\de^{ij}\de^{k\ell} (L\beta_1)_{ik} (L\beta_0)_{j\ell} \right)\\
%=& \left(-\f{e^{2\gamma}}{8N^2} + \f{e^{2\gamma_0}}{8N_0^2}\right) |L\beta|^2 + \f{e^{2\gamma_0}}{8N_0^2} (\de^{ij}\de^{k\ell}(L\beta-L\beta_0)_{ik}(L\beta+L\beta_0)_{j\ell} - 2\de^{ij}\de^{k\ell} (L\beta_1)_{ik} (L\beta_0)_{j\ell})\\
=& \underbrace{\left(-\f{e^{2\gamma}}{8N^2} + \f{e^{2\gamma_0}}{8N_0^2}\right) |L\beta|^2}_{=:I} + \f{e^{2\gamma_0}}{8N_0^2} \left(\underbrace{|L\beta-L\beta_0|^2}_{=:II} + \underbrace{2 \de^{ij}\de^{k\ell}(L\beta-L\beta_0-L\beta_1)_{ik}(L\beta_0)_{j\ell}}_{=:III}\right).
\end{split}
\end{equation*}
}
{Note that the terms $I$, $II$ and $III$ exactly satisfy the conditions $I$, $II$ and $III$ above.}

{\textbf{Decay at infinity.}} We consider first the issue of decay at infinity. {Let us first consider the product of the spatial derivative of $\mfg$.} This concern{s} only {products} of the form {$\nabla \mfg_0\nabla \mfg_3$ and $\nabla \mfg_3\nabla \mfg_3$}, since the other terms contain at least a factor which is compactly supported. {Now, $\mfg_0 = \chi(|x|)\log(|x|)(\mfg_0)_{asymp} + \wht \mfg_0$ and $\mfg_3=\chi(|x|)\log(|x|)(\mfg_3)_{asymp}+\wht \mfg_3$, where $(\mfg_0)_{asymp}$ and $(\mfg_3)_{asymp}$ are independent of $x$ (but potentially $t$-dependent) and the terms $\wht \mfg_0$ and $\wht \mfg_3$ decay better (cf. Corollary~\ref{lwp.small} and \eqref{BA4}).} 
{There are three types of products:}
\begin{itemize}
\item The product {$\nabla \wht \mfg_3\nabla \wht \mfg_3$ or $\nabla \wht \mfg_0\nabla \wht \mfg_3$} has more decay than necessary to be in $H^k_{\delta+2}$ (see Proposition \ref{produit}).
\item The product {of the form $\nabla (\chi({|x|})\log({|x|})(\mfg_3)_{asymp})\nabla (\chi({|x|})\log({|x|})(\mfg_3)_{asymp})$ (or similarly the product $\nabla (\chi({|x|})\log({|x|})(\mfg_0)_{asymp})\nabla (\chi({|x|})\log({|x|})(\mfg_3)_{asymp})$)} would not have enough decay, but such terms are \underline{absent}, since all the terms {in \eqref{up1}, \eqref{up2}, \eqref{up3}, \eqref{up1.g}, \eqref{up1.N}, \eqref{up1.b}} contain a factor of the form $\nabla \beta_3$ {and $(\beta_3)_{asymp}=0$ by \eqref{g.asymp.sign.2}}. 
\item It remains only to consider the {``cross terms''}
$$\partial_k {\left((N_3)_{asymp}\chi(|x|)\log(|x|)\right)}(L \beta_3)_{i{\ell}} \quad \partial_k \left((\gamma_3)_{asymp} \chi({|x|})\log({|x|})\right)(L \beta_3)_{i{\ell}}.$$
Note that the factors $ \partial_k {\left((N_3)_{asymp}\chi(|x|)\log(|x|)\right)}$ and $\partial_k {\left((\gamma_3)_{asymp}\chi(|x|)\log(|x|)\right)}$ are $O\left(\frac{1}{{|x|}}\right)$ as ${|x|}$ tend to infinity, so the product with $L\beta$ have the right decay {i.e., belongs to $H^k_{\de+2}$} (see Proposition \ref{produit2}).
\end{itemize}
{Finally, let us note that the terms on the RHS of \eqref{up1}, \eqref{up2}, \eqref{up3}, \eqref{up1.g}, \eqref{up1.N}, \eqref{up1.b} are not just products of derivatives of $\mfg$, but has factors of $e^{2\gamma}$ or $\f{1}{N}$. Nevertheless, by \eqref{g.asymp.sign.1}, \eqref{g.asymp.sign.2}, $\gamma_{asymp}\leq 0$, $N_c=1$, $N_{asymp}\geq 1$, hence all these factors are \underline{favorable} from the point of view of the weights. We will therefore suppress the discussion of them.}

{\textbf{Proving the estimates.}}
{We now prove the estimates taking into account the discussions above. For clarity of the exposition, we will only discuss the $L^2_{\de+2}$ estimates for terms of type $I$, $II$ and $III$ discussed above. The higher derivative estimates are similar: we note that
\begin{itemize}
\item All the norms we use are such that every additional spatial derivative ``costs'' at most an extra power of $\lambda^{-1}$.
\item We only use up to $H^2$ and $C^1$ norms for $\mfg_2$ and $H^2_\de$ and $C^1_{\de+1}$ norms $\wht \mfg_3$. Hence, according to \eqref{BA3}, \eqref{BA4}, \eqref{g2.Linfty} and \eqref{g3.Linfty}, we still have sufficient regularity to control these terms upon taking up to $3$ spatial derivatives.
\item When taking higher derivatives, there may be more nonlinear terms when the derivatives hit on $e^{2\gamma}$ or $\f{1}{N}$. Nevertheless, these terms are all easy to handle, since $\nab \mfg$ is bounded in $L^\infty$ (and the $e^{2\gamma}$ or $\f{1}{N}$ terms are favorable in terms of weights).
\item The above considerations also apply for up to one $\rd_t$ derivative, again because the additional $\rd_t$ derivative ``costs'' at most an extra power of $\lambda^{-1}$.
\end{itemize}
For terms with the structure as in $I$, we use Corollary~\ref{lwp.small}, \eqref{B2}, \eqref{BA3}, \eqref{BA4}, \eqref{g2.Linfty}, \eqref{g3.Linfty}, Lemma~\ref{der} and Proposition~\ref{holder} to obtain
\begin{equation*}
\begin{split}
\|I\|_{L^2_{\de+2}}\ls &\left(\| \mfg_1\|_{C^0_{\de+1}} + \| \mfg_2\|_{C^0_{\de+1}} + |(\mfg_3)_{asymp}| + \| \wht \mfg_3\|_{C^0_{\de+1}}\right) \left(\|\nab \wht \mfg\|_{L^4_{\de+\f 32}} \|\nab \wht \mfg\|_{L^4_{\de+\f 32}} + |\mfg_{asymp}| \|\nab \wht \mfg\|_{L^2_{\de+1}}\right)\\
\ls &\left(\| \mfg_1\|_{C^0_{\de+1}} + \| \mfg_2\|_{C^0_{\de+1}} + |(\mfg_3)_{asymp}| + \| \wht \mfg_3\|_{C^0_{\de+1}}\right) \left(\|\wht \mfg\|_{H^2_{\de}} \|\wht \mfg\|_{H^2_{\de}} + |\mfg_{asymp}| \|\wht \mfg\|_{H^1_{\de}}\right) \ls C(C_1)\ep \lambda^2.
\end{split}
\end{equation*}
For terms with the structure as in $II$, we use \eqref{B2}, \eqref{BA3}, \eqref{BA4}, Lemma~\ref{der} and Proposition~\ref{holder} to obtain\footnote{We also use the fact that $\mfg_1$ and $\mfg_2$ are compactly supported and we can put in arbitrary weights in the estimates.}
\begin{equation*}
\begin{split}
\|II\|_{L^2_{\de+2}}\ls & \left(\|\nab \mfg_1\|_{L^4_{\de+\f 32}} + \|\nab \mfg_2\|_{L^4_{\de+\f 32}} + |(\mfg_3)_{asymp}| + \|\nab \wht \mfg_3\|_{L^4_{\de + \f32}}\right)\\
&\quad \times\left(\|\nab \mfg_1\|_{L^4_{\de+\f 52}} + \|\nab \mfg_2\|_{L^4_{\de+\f 52}} + \|\nab \wht \mfg_3\|_{L^2_{\de+1}} + \|\nab \wht \mfg_3\|_{L^4_{\de+\f 32}} \right)\\
\ls & \left(\|\nab \mfg_1\|_{L^4_{\de+\f 32}} + \|\mfg_2\|_{H^2_{\de}} + |(\mfg_3)_{asymp}| + \|\wht \mfg_3\|_{H^2_{\de}} \right)\left(\|\nab \mfg_1\|_{L^4_{\de+\f {{5}}2}} + \|\mfg_2\|_{H^2_{\de+1}} + \|\wht \mfg_3\|_{H^2_{\de}} \right) \\
\ls & C(C_1)\ep^2\lambda^2,
\end{split}
\end{equation*}
where we have used
	$$\|\nab \mfg_1\|_{L^4_{\de+\f 32}}{+ \|\nab \mfg_1\|_{L^4_{\de+\f 52}}} \leq C(C_0)\ep\lambda,$$
which is a direct consequence of {\eqref{B2}, H\"older's inequality and the support properties of $\mfg_1$}.

For terms with the structure as in $III$, we use Corollary~\ref{lwp.small}, \eqref{BA3}, \eqref{BA4}, Lemma~\ref{der} and Proposition~\ref{holder} to obtain
\begin{equation*}
\begin{split}
 \|III\|_{L^2_{\de+2}}\ls  & \left( \|\nab \mfg_2\|_{L^2_{\de+1}} + |(\mfg_3)_{asymp}| + \|\nab \wht \mfg_3\|_{L^2_{\de + 1}}\right) \|\nab \wht \mfg_0\|_{C^0_{\de+2}} + \|\nab \wht \mfg_3\|_{L^2_{\de + 1}} |(\mfg_0)_{asymp}| \\
\ls & \left( \|\mfg_2\|_{H^1_{\de}} + |(\mfg_3)_{asymp}| + \|\wht \mfg_3\|_{H^1_{\de}}\right) \|\wht \mfg_0\|_{H^3_{\de}} + \| \wht \mfg_3\|_{H^1_{\de}} |(\mfg_0)_{asymp}| \ls C(C_1)\ep^2\lambda^2.
\end{split}
\end{equation*}
}{These estimates, and their higher derivative analogues, imply the conclusion of the proposition.
}
\end{proof}

We {now} define $\mfg_2$. {Recall that in \eqref{g2.def}, we have defined $\mfg_2$ modulo some functions on the RHS that we have not defined. Here, we define them using the decompositions in Propositions~\ref{g1.prop}, \ref{theta2.prop} and \ref{up1.prop}.}

\begin{df}\label{g2.def.RHS}
We define
\begin{align}
\mathcal G_{1,1,\bA}(\mfg):=&\mathcal G_{1,1,\bA}^{(\Delta)}(\mfg)+\mathcal G_{1,1,\bA}^{(\Theta)}(\mfg)+\mathcal G_{1,1,\bA}^{(\Upsilon)}(\mfg),\\
\mathcal G_{1,2,\bA}(\mfg):=&\mathcal G_{1,2,\bA}^{(\Delta)}(\mfg)+\mathcal G_{1,2,\bA}^{(\Theta)}(\mfg)+\mathcal G_{1,2,\bA}^{(\Upsilon)}(\mfg),\\
\mathcal G_{1,3,\bA}(\mfg):=&\mathcal G_{1,3,\bA}^{(\Theta)}(\mfg),\\
\mathcal G_{2,1,\bA,\bB,\pm}(\mfg):=&\mathcal G_{2,1,\bA,\bB,\pm}^{(\Delta)}(\mfg)+\mathcal G_{2,1,\bA,\bB,\pm}^{(\Theta)}(\mfg)+\mathcal G_{2,1,\bA,\bB,\pm}^{(\Upsilon)}(\mfg),\\
\mathcal G_{2,2,\bA,\bB,\pm}(\mfg):=&\mathcal G_{2,2,\bA,\bB,\pm}^{(\Theta)}(\mfg),\\
\mathcal G_{2,3,\bA,\bB,\pm}(\mfg):=&\mathcal G_{2,3,\bA,\bB,\pm}^{(\Theta)}(\mfg){,}
\end{align}
{where the terms on the RHS are as in Propositions~\ref{g1.prop}, \ref{theta2.prop} and \ref{up1.prop}.}
\end{df}
The choice of $\mfg_2$ gives the following:
\begin{proposition}\label{deltag2.prop}
$$ \sum_{k\leq 3} \lambda^k\|\Delta \mfg_2-\Gamma_0^{\mu\nu}(\mfg)\Theta^{(2)}_{\mu\nu}-\Upsilon_1(\mfg)+ \Delta\mfg_1-\Gamma_0^{\mu\nu}(\mfg)\Theta^{{(1)}}_{\mu\nu}\|_{H^k}{\leq} C(C_1)\ep^2 \lambda^2{,}$$
{and}
$${\sum_{k\leq 2} \lambda^{k+1}\|\rd_t(\Delta \mfg_2-\Gamma_0^{\mu\nu}(\mfg)\Theta^{(2)}_{\mu\nu}-\Upsilon_1(\mfg)+ \Delta\mfg_1-\Gamma_0^{\mu\nu}(\mfg)\Theta^{{(1)}}_{\mu\nu})\|_{H^k}{\leq} C(C_1)\ep^2 \lambda^2.}$$
\end{proposition}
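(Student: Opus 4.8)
The plan is to compute $\Delta\mfg_2$ from the explicit formula \eqref{g2.def} by distributing the Laplacian, isolate the leading $O_\ep(\lambda)$ high-frequency part, and observe that \eqref{g2.def} together with Definition~\ref{g2.def.RHS} is arranged precisely so that this leading part cancels, term by term, against the high-frequency parts of $\Gamma_0^{\mu\nu}(\mfg)\Theta^{(2)}_{\mu\nu}$, $\Upsilon_1(\mfg)$ and $\Delta\mfg_1-\Gamma_0^{\mu\nu}(\mfg)\Theta^{(1)}_{\mu\nu}$ identified in Propositions~\ref{theta2.prop}, \ref{up1.prop} and \ref{g1.prop}. Concretely, each summand of $\mfg_2$ in \eqref{g2.def} has the shape $\lambda^3 k^{-2}|\nab v|^{-2}\mathcal G\,\Phi$, where $v\in\{u_\bA,\ u_\bA\pm u_\bB,\ u_\bA\pm 2u_\bB,\ u_\bA\pm 3u_\bB\}$, $\Phi=\co{kv/\lambda}$ or $\si{kv/\lambda}$ for $k\in\{1,2,3\}$, and $\mathcal G$ is the coefficient assigned in Definition~\ref{g2.def.RHS}. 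Since $\Delta\Phi=-k^2\lambda^{-2}|\nab v|^2\Phi+O(\lambda^{-1})$, the Leibniz rule splits $\Delta(\lambda^3k^{-2}|\nab v|^{-2}\mathcal G\,\Phi)$ into the term with both Laplacian derivatives on $\Phi$, which equals $\lambda\,\mathcal G\,\Phi$ (the $k^{-2}$, $|\nab v|^{-2}$ and $|\nab v|^2$ factors cancelling), plus a term with exactly one derivative on $\Phi$, of size $O(\lambda^2)$, plus a term with no derivative on $\Phi$, of size $O(\lambda^3)$. Summing the first pieces over all summands reconstitutes, up to the signs and constants built into \eqref{g2.def}, exactly $\lambda$ times the right-hand sides of Propositions~\ref{g1.prop}, \ref{theta2.prop}, \ref{up1.prop} once Definition~\ref{g2.def.RHS} is substituted, so that in the combination $\Delta\mfg_2-\Gamma_0^{\mu\nu}(\mfg)\Theta^{(2)}_{\mu\nu}-\Upsilon_1(\mfg)+\Delta\mfg_1-\Gamma_0^{\mu\nu}(\mfg)\Theta^{(1)}_{\mu\nu}$ all the $O_\ep(\lambda)$ high-frequency terms cancel, leaving only: (a) the $O(\lambda^2)$ and $O(\lambda^3)$ subleading pieces of $\Delta\mfg_2$ just described, and (b) the error terms $\mathcal G^{(\Delta)}_{error}(\mfg)$ and $\mathcal G^{(\Upsilon)}_{error}(\mfg)$ from Propositions~\ref{g1.prop} and \ref{up1.prop} (the $\Gamma_0^{\mu\nu}(\mfg)\Theta^{(2)}_{\mu\nu}$ side carries no such error).

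It then remains to bound (a) and (b) in the two norms in the statement. Term (b) is already controlled by Propositions~\ref{g1.prop} and \ref{up1.prop}, which give $\sum_{k\le 3}\lambda^k\|\mathcal G^{(\cdot)}_{error}(\mfg)\|_{H^k}+\sum_{k\le 2}\lambda^{k+1}\|\rd_t\mathcal G^{(\cdot)}_{error}(\mfg)\|_{H^k}\le C(C_0)\ep^2\lambda^2$. For (a), a typical subleading term is $\lambda^2\nab(k^{-2}|\nab v|^{-2}\mathcal G)\cdot\nab\Phi$ or $\lambda^3\Delta(k^{-2}|\nab v|^{-2}\mathcal G)\Phi$; one estimates it in $\sum_{k\le3}\lambda^k\|\cdot\|_{H^k}$ using the product and Sobolev estimates of the Appendix (Propositions~\ref{product}, \ref{holder}), the fact that each spatial or time derivative landing on a phase costs a factor $\lambda^{-1}$, the background bounds from Corollary~\ref{lwp.small}, and the coefficient bounds $\mathcal G^{(\Delta)}(\mfg),\mathcal G^{(\Upsilon)}(\mfg)\in H^7\cap C^7$ with norm $\le C(C_0)\ep^2$ (Propositions~\ref{g1.prop}, \ref{up1.prop}) and $\sum_{k\le3}\lambda^k\|\mathcal G^{(\Theta)}(\mfg)\|_{H^{2+k}}+\lambda^k\|\rd_t\mathcal G^{(\Theta)}(\mfg)\|_{H^{k+1}}\le C(C_1)\ep^2$ (Proposition~\ref{theta2.prop}). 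The worst contribution to the $\sum_{k\le3}\lambda^k\|\cdot\|_{H^k}$ norm places all $k$ derivatives on $\Phi$ and none on the coefficient, giving $\lambda^2\lambda^{-k}$ times a low-order norm of the coefficient, hence $\le C(C_1)\ep^2\lambda^{2-k}$, which is exactly the claimed bound; redistributing a derivative onto the coefficient only improves the power of $\lambda$. The $\rd_t$ estimate is obtained verbatim, now using the $\rd_t$ bounds on the $\mathcal G$'s and on the background.

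The work is essentially careful bookkeeping plus the routine estimates above, and I expect two points to require the most attention. First, matching signs and multiplicities: under $\Delta$ the factor $\co{kv/\lambda}$ contributes $-k^2\lambda^{-2}|\nab v|^2$, which is precisely why \eqref{g2.def} carries the normalizing factors $k^{-2}|\nab v|^{-2}$, and one must check this carefully against the coefficients generated in Propositions~\ref{g1.prop}, \ref{theta2.prop}, \ref{up1.prop} and recombined in Definition~\ref{g2.def.RHS}, keeping track of the various $\pm$, $\mp$ and of the half-symmetrizations in the definitions of $\Theta^{(1)}$, $\Theta^{(2)}$. Second, the coefficients $\mathcal G^{(\Theta)}(\mfg)$ have only limited regularity, since they depend on $\wht F_\bA$ for which merely the bootstrap bound \eqref{BA1} is available; one therefore has to verify that the subleading pieces of $\Delta\mfg_2$ built from them still close in the $\sum_{k\le3}\lambda^k\|\cdot\|_{H^k}$ norm, which works precisely because, as noted above, the dominant contribution places no derivative on the coefficient and so needs only a low-order norm of $\mathcal G^{(\Theta)}(\mfg)$.
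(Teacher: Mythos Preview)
Your proposal is correct and follows essentially the same approach as the paper: both argue that the definitions \eqref{g2.def} and Definition~\ref{g2.def.RHS} are arranged so that the leading part of $\Delta\mfg_2$ (both derivatives on the phase) cancels the main terms from Propositions~\ref{g1.prop}, \ref{theta2.prop}, \ref{up1.prop}, leaving only the error terms $\mathcal G^{(\Delta)}_{error}$, $\mathcal G^{(\Upsilon)}_{error}$ and the subleading pieces of $\Delta\mfg_2$, which are then estimated using the coefficient bounds from those propositions. Your write-up is somewhat more explicit than the paper's about the mechanism of cancellation and about why the limited regularity of $\mathcal G^{(\Theta)}$ does not obstruct the estimate, but the logic is the same.
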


\begin{proof}
{We will only prove the first estimate as the second estimate is similar.}
By \eqref{g2.def}, Definition~\ref{g2.def.RHS}, Propositions~\ref{g1.prop}, \ref{theta2.prop}, \ref{up1.prop}, {we note that $\mfg_2$ is defined so that when one expands $\Delta \mfg_2$, and when both derivatives hit on the highly oscillatory phase, the terms cancel with the main terms in Proposition~\ref{g1.prop} (i.e., the terms except for $\mathcal G_{error}^{(\Delta)}(\mfg)$), the term ${\bf\Gamma}^{\mu\nu}_0(\mfg)\Theta_{\mu\nu}^{(2)}$ and the main term{s} in $\Upsilon_1(\mfg)$ (i.e., the terms except for $\mathcal G_{error}^{(\Upsilon)}(\mfg)$)}. {Therefore, to estimate $\Delta \mfg_2-\Gamma_0^{\mu\nu}(\mfg)\Theta^{(2)}_{\mu\nu}-\Upsilon_1(\mfg)+\Delta\mfg_1-\Gamma_0^{\mu\nu}(\mfg)\Theta^{(2)}_{\mu\nu}$, it suffices to control the term $\mathcal G_{error}^{(\Delta)}(\mfg)$ from Proposition~\ref{g1.prop}{, the term $\mathcal G_{error}^{(\Upsilon)}(\mfg)$ in $\Upsilon_1(\mfg)$,} and the second derivatives of $\mfg_2$ where at least one derivative does not hit on the oscillating phase. Denoting by $\mathcal G$ any of $\mathcal G^{(\Delta)}$, $\mathcal G^{(\Theta)}$ or $\mathcal G^{(\Upsilon)}$, we have}
\begin{align*}
& \sum_{k\leq 3} \lambda^k\|\Delta \mfg_2-\Gamma_0^{\mu\nu}(\mfg)\Theta^{(2)}_{\mu\nu}-\Upsilon_1(\mfg)+\Delta\mfg_1-\Gamma_0^{\mu\nu}(\mfg)\Theta^{(2)}_{\mu\nu}\|_{H^k}\\
\lesssim &\sum_{k\leq 3} \lambda^k {(}\|\mathcal G_{error}^{(\Delta)}(\mfg)\|_{H^k} {+ \|\mathcal G_{error}^{(\Upsilon)}(\mfg)\|_{H^k})}+ \sum_{k\leq 3} \lambda^k\sum_{{\ell} \leq k} {\left(\lambda^{3+{\ell}-k} \|\mathcal G\|_{H^{2+{\ell}}}
+ \lambda^{2+{\ell}-k}\|\mathcal G\|_{H^{1+{\ell}}}\right)}\leq  C(C_1)\ep^2\lambda^2,
\end{align*}
where we have used the estimates in Propositions~\ref{g1.prop}, \ref{theta2.prop} and \ref{up1.prop} as well as bounds for the background solution in Corollary~\ref{lwp.small}.
\end{proof}

Finally, we conclude this subsection with the following estimate on $\Delta(\mfg-\mfg_0-\mfg_1-\mfg_2)$:
\begin{proposition}\label{g.para.est}
$$\sum_{k\leq 3}\lambda^k \|\Delta(\mfg-\mfg_0-\mfg_1-\mfg_2)\|_{H^k_{\delta+2}}+\sum_{k\leq 2}\lambda^{k+1} \|\rd_t\left(\Delta(\mfg-\mfg_0-\mfg_1-\mfg_2)\right)\|_{H^k_{\delta+2}}\leq C(C_1)\ep^2\lambda^2.$$
\end{proposition}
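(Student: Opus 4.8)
The plan is to subtract the elliptic equations for $\mfg$ and for $\mfg_0$ and carefully match the resulting terms against the quantities $\Delta \mfg_1$ and $\Delta \mfg_2$, so that everything that is not manifestly $O_\ep(\lambda^2)$ cancels. Concretely, recall that by \eqref{g.elliptic} and \eqref{g0.elliptic},
\begin{equation*}
\Delta(\mfg-\mfg_0) = {\bf \Gamma}(\mfg)^{\mu\nu}\rd_\mu\phi\,\rd_\nu\phi - {\bf \Gamma}_0(\mfg)^{\mu\nu}\rd_\mu\phi_0\,\rd_\nu\phi_0 - \f 12 \sum_\bA F_\bA^2 {\bf \Gamma}_0(\mfg)^{\mu\nu}(\rd_\mu u_\bA)(\rd_\nu u_\bA) + \Upsilon(\mfg) - \Upsilon_0(\mfg).
\end{equation*}
I would write ${\bf \Gamma}(\mfg)^{\mu\nu}\rd_\mu\phi\,\rd_\nu\phi - {\bf \Gamma}_0(\mfg)^{\mu\nu}\rd_\mu\phi_0\,\rd_\nu\phi_0 = {\bf \Gamma}_0(\mfg)^{\mu\nu}\big((\rd_\mu\phi)(\rd_\nu\phi)-(\rd_\mu\phi_0)(\rd_\nu\phi_0)\big) + ({\bf \Gamma}(\mfg)^{\mu\nu}-{\bf \Gamma}_0(\mfg)^{\mu\nu})\rd_\mu\phi\,\rd_\nu\phi$. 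The second piece is $O_\ep(\ep^2\lambda^2)$ in the relevant norms by Proposition~\ref{gamma.prop}. For the first piece I apply Proposition~\ref{phi.phi.diff}: the $\Theta^{(main)}$ contribution cancels exactly the $-\f 12\sum_\bA F_\bA^2{\bf \Gamma}_0(\mfg)^{\mu\nu}(\rd_\mu u_\bA)(\rd_\nu u_\bA)$ term, the $\Theta^{(remainder)}$ contribution is $O_\ep(\ep^2\lambda^2)$ by \eqref{remainder.est}, and we are left with ${\bf \Gamma}_0(\mfg)^{\mu\nu}(\Theta^{(1)}_{\mu\nu}+\Theta^{(2)}_{\mu\nu})$. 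Similarly, $\Upsilon(\mfg)-\Upsilon_0(\mfg) = \Upsilon_1(\mfg) + (\Upsilon(\mfg)-\Upsilon_0(\mfg)-\Upsilon_1(\mfg))$, and the last difference is $O_\ep(\ep^2\lambda^2)$ in $H^k_{\de+2}$ (and its $\rd_t$-analogue) by Proposition~\ref{upsilon.prop}.

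Thus, up to acceptable errors,
\begin{equation*}
\Delta(\mfg-\mfg_0) = {\bf \Gamma}_0(\mfg)^{\mu\nu}\Theta^{(1)}_{\mu\nu} + {\bf \Gamma}_0(\mfg)^{\mu\nu}\Theta^{(2)}_{\mu\nu} + \Upsilon_1(\mfg) + (\text{acceptable error}).
\end{equation*}
Now I subtract $\Delta(\mfg_1+\mfg_2)$. By Proposition~\ref{deltag2.prop}, $\Delta\mfg_2 - {\bf \Gamma}_0(\mfg)^{\mu\nu}\Theta^{(2)}_{\mu\nu} - \Upsilon_1(\mfg) + \Delta\mfg_1 - {\bf \Gamma}_0(\mfg)^{\mu\nu}\Theta^{(1)}_{\mu\nu}$ is precisely $O_\ep(\ep^2\lambda^2)$ in $\sum_{k\le 3}\lambda^k\|\cdot\|_{H^k}$, and correspondingly for one $\rd_t$ derivative. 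Hence $\Delta(\mfg-\mfg_0-\mfg_1-\mfg_2)$ equals $\big(\Delta(\mfg-\mfg_0) - {\bf \Gamma}_0(\mfg)^{\mu\nu}\Theta^{(1)}_{\mu\nu} - {\bf \Gamma}_0(\mfg)^{\mu\nu}\Theta^{(2)}_{\mu\nu} - \Upsilon_1(\mfg)\big)$ minus $\big(\Delta\mfg_1 + \Delta\mfg_2 - {\bf \Gamma}_0(\mfg)^{\mu\nu}\Theta^{(1)}_{\mu\nu} - {\bf \Gamma}_0(\mfg)^{\mu\nu}\Theta^{(2)}_{\mu\nu} - \Upsilon_1(\mfg)\big)$, and both brackets are sums of acceptable error terms. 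Summing the contributions with the weights $\sum_{k\le3}\lambda^k$ and $\sum_{k\le2}\lambda^{k+1}$ (for the $\rd_t$ term) gives the claimed bound with a constant $C(C_1)$.

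The one point requiring care — and the main obstacle — is bookkeeping the weighted spaces $H^k_{\de+2}$ consistently: Propositions~\ref{phi.phi.diff}, \ref{gamma.prop} and \ref{deltag2.prop} are stated with compactly supported (hence unweighted) $H^k$ norms, since $\phi$, $\mfg_1$, $\mfg_2$ and the various $\mathcal G$-coefficients are all supported in $B(0,R_{supp})$ by \eqref{eq:supp}; on such a fixed compact set $\|\cdot\|_{H^k}$ and $\|\cdot\|_{H^k_{\de+2}}$ are equivalent, so those contributions transfer directly. Only $\Upsilon(\mfg)-\Upsilon_0(\mfg)-\Upsilon_1(\mfg)$ genuinely lives on all of $\m R^2$, and that term is already controlled in $H^k_{\de+2}$ by Proposition~\ref{upsilon.prop} — whose proof is exactly where the decay structure (absence of linear-in-$\nab\mfg_1$ terms, $(\beta_3)_{asymp}=0$, the $O(1/|x|)$ decay of $\nab(\chi\log|x|)$) was used. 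So here I just quote it. The remaining work is purely to assemble these cancellations and keep track of the powers of $\lambda$, which the weights in \eqref{BA3}, \eqref{BA4} and the cited propositions are set up to make routine.
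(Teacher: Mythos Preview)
Your proof is correct and follows essentially the same approach as the paper: subtract \eqref{g.elliptic} and \eqref{g0.elliptic}, invoke Propositions~\ref{phi.phi.diff} and \ref{gamma.prop} to reduce to ${\bf \Gamma}_0^{\mu\nu}(\Theta^{(1)}_{\mu\nu}+\Theta^{(2)}_{\mu\nu})+\Upsilon(\mfg)-\Upsilon_0(\mfg)$, then apply Propositions~\ref{deltag2.prop} and \ref{upsilon.prop}. Your explicit remark that the compactly supported pieces pass from unweighted $H^k$ to $H^k_{\de+2}$ trivially, while only the $\Upsilon(\mfg)-\Upsilon_0(\mfg)-\Upsilon_1(\mfg)$ term genuinely requires the weighted estimate of Proposition~\ref{upsilon.prop}, is a point the paper leaves implicit.
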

\begin{proof}
{We will only prove the first estimate as the second estimate is similar.} By \eqref{g.elliptic} and \eqref{g0.elliptic}, we have
\begin{equation*}
\begin{split}
\Delta (\mfg-\mfg_0)
=&\left({\bf \Gamma}(\mfg)^{\mu\nu}-{\bf \Gamma}_0(\mfg)^{\mu\nu}\right)\rd_\mu\phi\,\rd_\nu\phi +{\bf \Gamma}_0(\mfg)^{\mu\nu}\left(\rd_\mu\phi\,\rd_\nu\phi-\rd_\mu\phi_0\,\rd_\nu\phi_0\right)\\
&-\f 12 \sum_{\bA} F^2_{\bA}{\bf \Gamma}_0(\mfg)^{\mu\nu}(\rd_\mu u_{\bA})(\rd_\nu u_{\bA})+\Upsilon(\mfg)-\Upsilon_0(\mfg).
\end{split}
\end{equation*}
Using Proposition \ref{phi.phi.diff}, this implies
\begin{equation*}
\begin{split}
\Delta (\mfg-\mfg_0)
=&\left({\bf \Gamma}(\mfg)^{\mu\nu}-{\bf \Gamma}_0(\mfg)^{\mu\nu}\right)\rd_\mu\phi\,\rd_\nu\phi +{\bf \Gamma}_0(\mfg)^{\mu\nu}\left(\Theta^{(1)}_{\mu\nu}+\Theta^{(2)}_{\mu\nu}+\Theta^{(remainder)}_{\mu\nu}\right)+\Upsilon(\mfg)-\Upsilon_0(\mfg).
\end{split}
\end{equation*}
{By Corollary~\ref{lwp.small}, \eqref{remainder.est} and} Proposition \ref{gamma.prop}, {${\bf \Gamma}_0(\mfg)^{\mu\nu}\Theta^{(remainder)}_{\mu\nu}$ and} $\left({\bf \Gamma}(\mfg)^{\mu\nu}-{\bf \Gamma}_0(\mfg)^{\mu\nu}\right)\rd_\mu\phi\,\rd_\nu\phi$ satisfy the desired estimate. As usual, we denote {such terms} by {$\dots$}. {By Proposition~\ref{deltag2.prop},} we {then have}
\begin{equation*}
\begin{split}
\Delta (\mfg-\mfg_0-\mfg_1)
=&\Upsilon(\mfg)-\Upsilon_0(\mfg)+\Delta \mfg_2 -\Upsilon_1(\mfg) + \dots
\end{split}
\end{equation*}
We conclude thanks to Proposition~\ref{upsilon.prop}.
\end{proof}

\subsection{Estimates for $\mfg_1$ and $\mfg_2$}

Recall the definition of $\mfg_1$ in \eqref{g1.def}. By inspection and using the regularity of the background solution, we immediately have the following bounds, as is stated in \eqref{B2}.
\begin{proposition}\label{mfg1.prop}
$\mfg_1$ depends only on the background solution, is compactly supported in $B(0,R_{supp})$ and satisfies the estimate
$$\sum_{k\leq 8}\lambda^{k}\|\mfg_1\|_{H^k\cap {C^{k}}}+\sum_{k\leq 7}\lambda^{k+1}\|\rd_t\mfg_1\|_{H^k\cap {C^{k}}}+\sum_{k\leq 6}\lambda^{k+2}\|\rd_t^2\mfg_1\|_{H^k\cap {C^{k}}}\leq C(C_0)\ep^2\lambda^2.$$
\end{proposition}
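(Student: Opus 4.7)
The assertion that $\mfg_1$ depends only on the background solution is immediate from the formula \eqref{g1.def}, since every factor appearing there -- namely $\bfG_0(\mfg)$, $F_\bA$, $u_\bA$, $\phi_0$ -- is a background quantity, and the spatial adaptedness of $\{u_\bA\}_{\bA\in\mathcal A}$ (Definition \ref{def.spatial}) guarantees that each denominator $|\nab u_\bA|^2$ or $|\nab(u_\bA\pm u_\bB)|^2$ is bounded below by $\tfrac{1}{16}$, so that the expression is well-defined. For the support statement, observe that each summand in \eqref{g1.def} contains a factor of $F_\bA^2$, $F_\bA\partial\phi_0$, or $F_\bA F_\bB$; since $F_\bA$ and $\phi_0$ are both supported in $B(0,R_{supp})$ by Lemma \ref{lm:Rsupp}, so is $\mfg_1$.

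For the estimates, each term of $\mfg_1$ has the schematic form $\lambda^2 Q(t,x)\,\mathcal O(u/\lambda)$, where $Q$ is a smooth, compactly supported function built from $\bfG_0(\mfg)$, $F_\bA$, $\partial\phi_0$, and $|\nab u|^{-2}$-factors, and $\mathcal O \in \{\sin,\cos\}$ acts on a linear combination of the $u_\bA$'s. Using the background bounds in Corollary \ref{lwp.small} together with the lower bound $|\nab u_\bA|>\tfrac14$, one obtains
\[
\|Q\|_{H^8\cap C^8}+\|\rd_t Q\|_{H^7\cap C^7}+\|\rd_t^2 Q\|_{H^6\cap C^6}\le C(C_0)\ep^2,
\]
where the $\ep^2$ comes from the two factors involving $F_\bA$ or $\partial\phi_0$ each of size $O(\ep)$.

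The remaining task is bookkeeping. For any spatial multi-index of length $k$, a standard Leibniz expansion produces terms in which $\ell$ derivatives fall on the phase $\mathcal O(u/\lambda)$, each costing a factor $\lambda^{-1}\partial u$, and $k-\ell$ derivatives fall on $Q$. Since $\partial u$, $\partial^2 u$, \dots\ are bounded in $L^\infty$ uniformly in $\lambda$ by Corollary \ref{lwp.small}, one obtains pointwise control by $C(C_0)\ep^2\,\lambda^{2-k}$, which upon multiplication by $\lambda^k$ gives the desired $C(C_0)\ep^2\lambda^2$. For the $H^k$ norm, one further uses compactness of the support and Proposition \ref{product.local} to combine the $L^\infty$ bound on derivatives of the phase with the $H^{k-\ell}$ bound on $Q$. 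Time derivatives are handled identically: $\rd_t\mathcal O(u/\lambda) = \lambda^{-1}(\rd_t u)\mathcal O'(u/\lambda)$, and the extra $\lambda^{-1}$ is balanced by the additional prefactor $\lambda^{k+1}$ (or $\lambda^{k+2}$ for two time derivatives); the regularity of $Q$ with respect to $\rd_t$ and $\rd_t^2$ is exactly what is recorded in the display above.

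I expect no genuine obstacle here: the proposition is essentially a direct consequence of the explicit form of $\mfg_1$, the lower bounds from spatial adaptedness, and the background estimates in Corollary \ref{lwp.small}; the only mild care needed is to track how many derivatives are allowed on $Q$ (which limits the total order to $k\le 8$, $k\le 7$, $k\le 6$ respectively, consistent with the background having $C_{high}$ bounds in norms of order $k+2 \le 10$).
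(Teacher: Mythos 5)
Your argument is correct and is the (unstated) ``by inspection'' argument the paper has in mind: identify each summand of \eqref{g1.def} as $\lambda^2 Q\,\mathcal O(\cdot/\lambda)$ with $Q$ a background, compactly supported quantity carrying two $O(\ep)$ factors, observe that every derivative landing on the phase costs $\lambda^{-1}$ and every time derivative costs an extra $\lambda^{-1}$, and then check that the weighted norms exactly compensate. One small point worth making explicit: the $C^k$ estimate on $Q$ is not deduced from the $H^k$ estimate (the 2D embedding is $H^{k+2}\subset C^k$) but directly from the fact that $F_\bA$, $\nab\phi_0$, $\nab u_\bA$, and the metric factors in $\bfG_0(\mfg)$ are controlled in high-order Sobolev spaces (up to $H^{10}$ and beyond) in Corollary~\ref{lwp.small}, so $Q$ is genuinely smoother than $H^8$; the paper records the $H^8\cap C^8$ bound simply because that is what is needed downstream.
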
 

For $\mfg_2$, we again have by inspection that 
\begin{proposition}\label{g2.est}
$\mfg_2$ satisfies the following estimate:
$$
\sum_{k \leq 5}
\lambda^k\|\mfg_2\|_{H^k}+\sum_{k\leq 4}\lambda^{k+1} \|\partial_t \mfg_2\|_{H^k}\leq C(C_1) \ep^2 \lambda^3 .$$
\end{proposition}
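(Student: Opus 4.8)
The plan is to read off the estimates directly from the explicit formula \eqref{g2.def} for $\mfg_2$, combined with the bounds already established for the coefficient functions $\mathcal G_{1,1,\bA}(\mfg)$, $\mathcal G_{1,2,\bA}(\mfg)$, $\mathcal G_{1,3,\bA}(\mfg)$, $\mathcal G_{2,1,\bA,\bB,\pm}(\mfg)$, $\mathcal G_{2,2,\bA,\bB,\pm}(\mfg)$ and $\mathcal G_{2,3,\bA,\bB,\pm}(\mfg)$. By Definition~\ref{g2.def.RHS} and Propositions~\ref{g1.prop}, \ref{theta2.prop} and \ref{up1.prop}, each of these coefficients is compactly supported in $B(0,R_{supp})$ and obeys $\sum_{k\leq 3}\lambda^k\|\cdot\|_{H^{2+k}} + \sum_{k\leq 3}\lambda^k\|\rd_t(\cdot)\|_{H^{1+k}}\leq C(C_1)\ep^2$ (the $\mathcal G^{(\Delta)}$ and $\mathcal G^{(\Upsilon)}$ pieces are in $H^7\cap C^7$ with a better constant $C(C_0)\ep^2$, so the worst term is the $\mathcal G^{(\Theta)}$ piece whose regularity is limited by \eqref{BA1}). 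Each summand of $\mfg_2$ is $\lambda^3$ times such a coefficient, divided by a background quantity of the form $|\nab(u_\bA\pm\dots)|^{-2}$, which by the spatially adapted property (Definition~\ref{def.spatial}) is a bounded background factor controlled via Corollary~\ref{lwp.small}, and multiplied by an oscillating factor $\cos$ or $\sin$ of a phase $u_\bA/\lambda$, $2u_\bA/\lambda$, etc.

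First I would record that the denominators $|\nab(u_\bA\pm u_\bB)|^{-2}$, etc., together with all the background factors, are smooth functions bounded in $H^{k}\cap C^{k}$ (on $B(0,R_{supp})$) for $k$ up to the regularity of the background (at least $k\le 9$), with norms depending only on $C_0$; this uses the spatially adapted bounds $|\nab(u_\bA\pm\dots)|>\tfrac14$ and the product and quotient estimates in Appendix~\ref{weightedsobolev}. Next I would apply the product estimate (Proposition~\ref{product}, or the cruder $\|uv\|_{H^k}\le \|u\|_{H^k}\|v\|_{C^k}$) to each summand: writing it schematically as $\lambda^3\,\mathcal G\,\cdot\,(\text{background})\,\cdot\,\cos(\text{phase}/\lambda)$, every spatial derivative falling on the oscillating factor costs $\lambda^{-1}$, every derivative on $\mathcal G$ or the background costs a norm that is already controlled. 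Thus for $k\le 5$, $\|\mfg_2\|_{H^k}\lesssim \lambda^3\sum_{k_1+k_2\le k}\lambda^{-k_1}\|\mathcal G\|_{H^{k_2}}\cdot C(C_0)$, and since $\|\mathcal G\|_{H^{k_2}}\le C(C_1)\ep^2\lambda^{-(k_2-2)_+}$ for $k_2\le 5$ by Proposition~\ref{theta2.prop} (noting $k_2-2\le 3$ so this range is covered), every term is bounded by $C(C_1)\ep^2\lambda^3\cdot\lambda^{-k}$, i.e.\ $\lambda^k\|\mfg_2\|_{H^k}\le C(C_1)\ep^2\lambda^3$. The $\rd_t$ estimate is identical except one uses the $\rd_t$ bounds $\|\rd_t\mathcal G\|_{H^{1+k}}\le C(C_1)\ep^2$ from the same propositions, and the extra power $\lambda^{-1}$ when $\rd_t$ hits the oscillating factor, giving $\lambda^{k+1}\|\rd_t\mfg_2\|_{H^k}\le C(C_1)\ep^2\lambda^3$ for $k\le 4$. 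Summing over the finitely many indices $\bA$, $\bB$, $\bf C$ and signs $\pm$ only changes the constant.

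The one point requiring a little care — and the closest thing to an obstacle — is the bookkeeping of which $H^{k_2}$ norms of the coefficients $\mathcal G$ are actually available: Proposition~\ref{theta2.prop} only gives $\|\mathcal G^{(\Theta)}\|_{H^{2+k}}$ for $k\le 3$, i.e.\ up to $H^5$, and $\|\rd_t\mathcal G^{(\Theta)}\|_{H^{1+k}}$ for $k\le 3$, i.e.\ up to $H^4$. This is exactly matched to the claimed ranges $k\le 5$ for $\mfg_2$ itself and $k\le 4$ for $\rd_t\mfg_2$: when $k=5$ all five spatial derivatives must go onto the oscillating factor (costing $\lambda^{-5}$) while $\mathcal G$ is only measured in $H^5$, and when $k=4$, $\rd_t\mfg_2$ needs $\mathcal G$ in $H^4$ for its $\rd_t$ part. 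So the proof is genuinely just ``inspection'', and the statement that the exponents line up is the content of ``by inspection''. Concretely:

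\begin{proof}
By \eqref{g2.def} and Definition~\ref{g2.def.RHS}, $\mfg_2$ is a finite sum (over $\bA$, $\bB$, $\bf C$ and signs) of terms of the form
$$\lambda^3\,\frac{\mathcal G(\mfg)}{|\nab(u_\bA\pm\cdots)|^2}\,\Xi\!\left(\frac{u_\bA\pm\cdots}{\lambda}\right),$$
where $\Xi$ is $\cos$ or $\sin$, the phases are as in \eqref{g2.def}, and $\mathcal G(\mfg)$ is one of the coefficient functions of Definition~\ref{g2.def.RHS}. By Propositions~\ref{g1.prop}, \ref{theta2.prop} and \ref{up1.prop}, each such $\mathcal G(\mfg)$ is compactly supported in $B(0,R_{supp})$ and satisfies
$$\sum_{k\leq 3}\lambda^k\|\mathcal G(\mfg)\|_{H^{2+k}}+\sum_{k\leq 3}\lambda^k\|\rd_t\mathcal G(\mfg)\|_{H^{1+k}}\leq C(C_1)\ep^2.$$
Since $\{u_\bA\}$ is spatially adapted, $|\nab(u_\bA\pm\cdots)|>\tfrac14$, so by Corollary~\ref{lwp.small} and the product/quotient estimates of Appendix~\ref{weightedsobolev}, the background factor $|\nab(u_\bA\pm\cdots)|^{-2}$ (as well as any background factors implicit in the $u_\bA$) is bounded in $(H^k\cap C^k)(B(0,R_{supp}))$ for $k\le 8$ with norms depending only on $C_0$.

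Fix $k\le 5$. Applying $\nab^k$ to a single summand, each derivative either falls on the oscillating factor (producing a factor $\lambda^{-1}\nab(u_\bA\pm\cdots)$, with $\nab(u_\bA\pm\cdots)$ bounded in the relevant norms by Corollary~\ref{lwp.small}), or on $\mathcal G(\mfg)$ or on the background factor. Hence, using $\|uv\|_{H^{k_2}}\le C(C_0)\|u\|_{H^{k_2}}\|v\|_{C^{k_2}}$ and $\|u\|_{C^{k_2}}\le C\|u\|_{H^{k_2+2}}$ (Proposition~\ref{holder}) for the background factor,
$$\|\mfg_2\|_{H^k}\leq C(C_0)\,\lambda^3\sum_{k_1+k_2\leq k}\lambda^{-k_1}\|\mathcal G(\mfg)\|_{H^{k_2}}\leq C(C_1)\ep^2\,\lambda^3\sum_{k_1+k_2\leq k}\lambda^{-k_1}\lambda^{-(k_2-2)_+}\leq C(C_1)\ep^2\,\lambda^{3-k},$$
where we used $\|\mathcal G(\mfg)\|_{H^{k_2}}\le C(C_1)\ep^2\lambda^{-(k_2-2)_+}$ for $k_2\le 5$, which follows from the displayed bound for $\mathcal G(\mfg)$ (valid since $k_2-2\le 3$). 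Multiplying by $\lambda^k$ gives $\lambda^k\|\mfg_2\|_{H^k}\le C(C_1)\ep^2\lambda^3$ for $k\le 5$.

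For the time derivative, fix $k\le 4$. Applying $\rd_t\nab^k$ to a summand, the $\rd_t$ either hits the oscillating factor (a further factor $\lambda^{-1}$) or hits $\mathcal G(\mfg)$ or the background factor (which is then controlled using $\|\rd_t(\,\cdot\,)\|_{H^{k_2}}$, bounded by $C(C_0)$ for the background factor and by the displayed bound for $\mathcal G(\mfg)$). Together with the spatial count above, this gives
$$\|\rd_t\mfg_2\|_{H^k}\leq C(C_1)\ep^2\,\lambda^3\Big(\lambda^{-1}\sum_{k_1+k_2\leq k}\lambda^{-k_1}\lambda^{-(k_2-2)_+}+\sum_{k_1+k_2\leq k}\lambda^{-k_1}\lambda^{-(k_2-1)_+}\Big)\leq C(C_1)\ep^2\,\lambda^{2-k},$$
using $\|\rd_t\mathcal G(\mfg)\|_{H^{k_2}}\le C(C_1)\ep^2\lambda^{-(k_2-1)_+}$ for $k_2\le 4$. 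Hence $\lambda^{k+1}\|\rd_t\mfg_2\|_{H^k}\le C(C_1)\ep^2\lambda^3$ for $k\le 4$. Summing over the finitely many indices $\bA$, $\bB$, $\bf C$ and signs changes only the constant, and the result follows.
\end{proof}
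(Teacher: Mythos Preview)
Your proof is correct and follows essentially the same approach as the paper's: both read off the estimate directly from the explicit formula \eqref{g2.def}, distributing derivatives between the oscillating factor (each costing $\lambda^{-1}$) and the coefficient functions $\mathcal G$, then invoking the bounds from Propositions~\ref{g1.prop}, \ref{theta2.prop} and \ref{up1.prop}. Your version simply supplies more of the bookkeeping detail (the background denominator, the precise $H^{k_2}$ ranges, the observation that regularity is limited by the $\mathcal G^{(\Theta)}$ piece via \eqref{BA1}) that the paper leaves implicit in its one-line ``by inspection'' argument.
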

\begin{proof}

{Using} the definition \eqref{g2.def} we have
$$\lambda^k \|\mfg_2\|_{H^k}\lesssim \lambda^k \sum_{{\ell}\leq k} \lambda^{3-k+{\ell}}\|\mathcal G\|_{H^{{\ell}}} \lesssim \ep^2 \lambda^3,$$
where we have {used} $\mathcal G$ {to denote} any of $\mathcal G^{\Delta}$, estimated thanks to Proposition \ref{g1.prop}, $\mathcal G^{\Theta}$, estimated by Proposition \ref{theta2.prop} and $\mathcal G^{\Upsilon}$, estimated by Proposition \ref{up1.prop}.
Notice that the regularity in this estimate is limited by that of $\wht F_\bA$. The estimate for $\partial_t \mfg_2$ is similar. 
\end{proof}
We note in particular that the estimate in {Proposition~\ref{g2.est}} improves the bootstrap assumption \eqref{BA3}.

\subsection{Estimates for the term $\mfg_3$}

Next, we obtain the estimates for $\mfg_3$. By Proposition \ref{g.para.est}, we already have a good estimate for $\Delta \mfg_3$ and $\Delta (\rd_t \mfg_3)$. Therefore, the following is a consequence of Corollary \ref{coro}:

\begin{cor}\label{g3.cor}
$$\mfg_3=(\mfg_3)_{asymp}(t)\chi({|x|})\log({|x|})+\wht{\mfg}_3,$$
where
$$|(\mfg_3)_{asymp}|+|\rd_t(\mfg_3)_{asymp}|+\sum_{k\leq 3}\lambda^k\|\wht{\mfg}_3\|_{H^{2+k}_{{\delta}}}+\sum_{k\leq 2}\lambda^{k+1}\|\rd_t\wht{\mfg}_3\|_{H^{{2}+k}_{{\delta}}}\leq C(C_1)\ep^2 \lambda^2.$$
\end{cor}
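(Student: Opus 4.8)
The plan is to deduce Corollary~\ref{g3.cor} directly from the elliptic estimate in Proposition~\ref{g.para.est} together with the weighted elliptic theory recalled in Appendix~\ref{weightedsobolev} (in particular ``Corollary~\ref{coro}''). First I would recall that by definition $\mfg_3 = \mfg - \mfg_0 - \mfg_1 - \mfg_2$, so that $\Delta \mfg_3 = \Delta(\mfg - \mfg_0 - \mfg_1 - \mfg_2)$ is exactly the quantity estimated in Proposition~\ref{g.para.est}; thus
$$\sum_{k\leq 3}\lambda^k\|\Delta\mfg_3\|_{H^k_{\delta+2}} + \sum_{k\leq 2}\lambda^{k+1}\|\rd_t\Delta\mfg_3\|_{H^k_{\delta+2}} \leq C(C_1)\ep^2\lambda^2.$$
Since $\mfg_1$ and $\mfg_2$ are compactly supported (Propositions~\ref{mfg1.prop}, \ref{g2.est}) while $\mfg_0$ has the asymptotic form $(\mfg_0)_c + (\mfg_0)_{asymp}(t)\chi(|x|)\log(|x|) + \wht\mfg_0$ with $\wht\mfg_0 \in H^{k+2}_\delta$ (Corollary~\ref{lwp.small}), and $\mfg$ itself has the analogous decomposition from Theorem~\ref{lwp}, subtracting gives that $\mfg_3$ has an asymptotic expansion $\mfg_3 = (\mfg_3)_{asymp}(t)\chi(|x|)\log(|x|) + \wht\mfg_3$ for some $t$-dependent constant $(\mfg_3)_{asymp}(t)$ and some $\wht\mfg_3$ decaying at the rate $H^{\cdot}_\delta$; this is already built into the statement of the bootstrap assumption \eqref{BA4}, but here I want the \emph{quantitative improvement}.

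The key step is then to invoke the weighted elliptic estimate ``Corollary~\ref{coro}'' from the appendix: for a function $f$ on $\m R^2$ admitting a decomposition $f = f_{asymp}\chi(|x|)\log(|x|) + \wht f$ with $\Delta f \in H^k_{\delta+2}$ (and $-\tfrac12 < \delta < 0$ so that $\delta+2$ is a non-exceptional weight and the logarithmic term is precisely the obstruction to lying in $H^{k+2}_\delta$), one has the bound $|f_{asymp}| + \|\wht f\|_{H^{k+2}_\delta} \lesssim \|\Delta f\|_{H^k_{\delta+2}}$ (plus, here, the appropriate initial-data/normalization contributions, which are zero for $\mfg_3$ since $\mfg_3\restriction_{\Sigma_0}$ is built from $\mfg - \mfg_0 - \mfg_1 - \mfg_2$ and the parametrix is arranged to match — more precisely one tracks that the relevant constant and decay norms of $\mfg_3\restriction_{\Sigma_0}$ are already $O(C(C_1)\ep^2\lambda^2)$). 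Applying this with $f = \mfg_3$, $k \leq 3$, and pairing with the factor $\lambda^k$ gives
$$|(\mfg_3)_{asymp}| + \sum_{k\leq 3}\lambda^k\|\wht\mfg_3\|_{H^{2+k}_\delta} \leq C(C_1)\ep^2\lambda^2.$$
For the time derivative, I would commute $\rd_t$ through: $\Delta(\rd_t\mfg_3) = \rd_t(\Delta\mfg_3)$, and apply ``Corollary~\ref{coro}'' again — now noting that $\rd_t$ of the logarithmic profile is $\rd_t(\mfg_3)_{asymp}(t)\,\chi(|x|)\log(|x|)$, so the same elliptic inversion yields $|\rd_t(\mfg_3)_{asymp}| + \sum_{k\leq 2}\lambda^{k+1}\|\rd_t\wht\mfg_3\|_{H^{2+k}_\delta} \leq C(C_1)\ep^2\lambda^2$, using the second sum in Proposition~\ref{g.para.est}. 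Combining these two displays gives the corollary.

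The main obstacle — really the only non-bookkeeping point — is making sure the weighted elliptic theory applies cleanly: one must check that $\delta + 2 \in (\tfrac 32, 2)$ is a non-exceptional weight so that the kernel/cokernel structure of $\Delta: H^{k+2}_\delta \to H^k_{\delta+2}$ is exactly spanned by the constants and the $\log|x|$ mode (this is where the $\chi(|x|)\log(|x|)$ ansatz and the single scalar $(\mfg_3)_{asymp}$ come from), and that the right-hand side $\Delta\mfg_3$ indeed lies in the compactly-supported-plus-decaying class so that no spurious growth appears. Here one uses that $\phi$, $F_\bA$ are supported in $B(0,R_{supp})$ (Lemma~\ref{lm:Rsupp}), that $\mfg_1, \mfg_2$ are compactly supported, and that the source terms in \eqref{g.elliptic}, \eqref{g0.elliptic} generating $\Delta\mfg_3$ are either compactly supported or have the favorable weighted decay established in Propositions~\ref{gamma.prop}, \ref{deltag2.prop}, \ref{upsilon.prop}. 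Once this is in place, the corollary is an immediate application of Proposition~\ref{g.para.est} and the appendix; in particular it improves the bootstrap assumption \eqref{BA4} (the constant $C_1\ep\lambda^2$ is replaced by $C(C_1)\ep^2\lambda^2 \leq \tfrac{C_1}{2}\ep\lambda^2$ once $\ep$ is small enough relative to $C_1$).
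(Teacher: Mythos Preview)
Your proposal is correct and follows essentially the same approach as the paper: the paper's proof is the single sentence ``By Proposition~\ref{g.para.est}, we already have a good estimate for $\Delta \mfg_3$ and $\Delta (\rd_t \mfg_3)$. Therefore, the following is a consequence of Corollary~\ref{coro}.'' Your write-up simply expands this with the relevant bookkeeping. One small remark: your parenthetical about ``initial-data/normalization contributions'' is misplaced --- the equation for $\mfg_3$ on each $\Sigma_t$ is elliptic, so there are no initial conditions; what makes the decomposition and the estimate unambiguous is the \emph{uniqueness} of solutions of the form $c\,\chi(|x|)\log(|x|)+H^2_\delta$ to $\Delta u=f$, which follows from the injectivity in Theorem~\ref{laplacien} (and the fact that $\int \Delta(\chi\log)\neq 0$).
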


\begin{rk}
We have $(\beta_3)_{asymp}= \partial_t (\beta_3)_{asymp}= 0$, as a consequence of the local well-posedness result {(Theorem~\ref{lwp})}.
\end{rk}

Notice that Corollary \ref{g3.cor} is an improvement of \eqref{BA4}.

\subsection{Improved estimates for $\rd_t \gamma_3$}

The key is the following lemma:
\begin{lm}\label{g1.main.lemma}
$$\sum_{k\leq {7}} \lambda^k\|(\rd_t-\beta_0^i\rd_i) \gamma_1 -\f 12 \rd_i\beta_1^i\|_{H^k\cap {C^{k}}}\leq C(C_0)\ep^2 \lambda^2.$$
\end{lm}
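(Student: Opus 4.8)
\textbf{Proof proposal for Lemma~\ref{g1.main.lemma}.}

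The plan is to extract the claimed identity from the structure of the maximality condition $\tau = 0$, which ties together $\rd_t$ of the conformal factor $\gamma$ and the spatial divergence of $\beta$, and then to linearize it around the background in the high-frequency parameter $\lambda$. Recall from \eqref{K} and \eqref{K.tr.trfree} that, in our gauge, $K_{ij} = -\tfrac{1}{2N}\q L_{e_0}\bar g_{ij} = H_{ij} + \tfrac12 \bar g_{ij}\tau$, with $\tau = 0$ for all the solutions in question (both for $g_0$ and for $g_\lambda$, since Theorem~\ref{lwp} produces solutions in the elliptic gauge). With $\bar g_{ij} = e^{2\gamma}\delta_{ij}$, one computes $\q L_{e_0}\bar g_{ij} = 2 e^{2\gamma}\big((\rd_t - \beta^k\rd_k)\gamma\big)\delta_{ij} + e^{2\gamma}\big(\delta_{kj}\rd_i\beta^k + \delta_{ik}\rd_j\beta^k\big)$. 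Taking the trace with $\bar g^{ij} = e^{-2\gamma}\delta^{ij}$ and using $\tau = 0$ gives the \emph{exact} relation
\begin{equation*}
(\rd_t - \beta^k\rd_k)\gamma = -\tfrac{1}{2}\,\rd_k\beta^k ,
\end{equation*}
valid for \emph{both} $g_0$ and $g_\lambda$ (this is essentially the statement alluded to in point (1) of the discussion in Section~\ref{ideas}, i.e. propagation of maximality rewritten as an equation for $\rd_t\gamma$). Subtracting the $g_0$-version from the $g_\lambda$-version, and using the decompositions $\gamma = \gamma_0 + \gamma_1 + \gamma_2 + \gamma_3$, $\beta^i = \beta_0^i + \beta_1^i + \beta_2^i + \beta_3^i$, one gets, at the level of the sum of all correction terms,
\begin{equation*}
(\rd_t - \beta_0^k\rd_k)(\gamma_1 + \gamma_2 + \gamma_3) + \tfrac12\rd_k(\beta_1^k + \beta_2^k + \beta_3^k) = (\beta_1^k + \beta_2^k + \beta_3^k)\rd_k(\gamma_0 + \gamma_1 + \gamma_2 + \gamma_3).
\end{equation*}

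Next I would isolate the $\mfg_1$-contribution. The strategy is to show that all terms on the left involving $\mfg_2$ or $\mfg_3$, as well as the entire right-hand side, are $O_\ep(\lambda^2)$ in $H^k\cap C^k$ for $k \leq 7$, and hence can be absorbed into the error. For the $\mfg_2$ terms this follows from the bounds in \eqref{BA3}/Proposition~\ref{g2.est} (which give $O_\ep(\lambda^3)$, with each of the $\leq 7$ derivatives costing at most $\lambda^{-1}$, leaving room to spare) together with \eqref{g2.Linfty}; for the $\rd_t\gamma_3$ and $\rd_i\beta_3^i$ terms one uses \eqref{BA4}, Corollary~\ref{g3.cor} and \eqref{g3.Linfty} (and, if needed for the $L^2$ level, the improved bound \eqref{BA5})---although care must be taken about the number of derivatives, since $\wht\mfg_3$ is only controlled up to $H^{2+k}_\delta$ with $k \leq 3$; I would therefore expect the lemma's range $k \leq 7$ to actually force a more careful accounting, possibly using that $\mfg_1$ and $\beta_1^i$ are compactly supported and that the needed identity, after cancellation of the $\mfg_1$-part, really only involves lower-order quantities, or else restricting the effective top order. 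The quadratic-in-corrections right-hand side $(\beta_j^k)\rd_k\gamma_\ell$ with $j,\ell \geq 1$ is straightforwardly $O_\ep(\lambda^4)$ by Propositions~\ref{mfg1.prop}, \ref{g2.est}, Corollary~\ref{g3.cor} and the $C^k$ bounds, using the standard product estimate $\|uv\|_{H^k} \le \|u\|_{H^k}\|v\|_{C^k}$ and the compact support of $\beta_1$, $\beta_2$; while $(\beta_j^k)\rd_k\gamma_0$ with $j \geq 1$ is $O_\ep(\lambda^2)$ (or better) directly from $\|\beta_j\|\lesssim \ep\lambda^2$ and $\|\gamma_0\|_{C^k}\lesssim\ep$ via Corollary~\ref{lwp.small}. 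What remains after discarding all error terms is precisely
\begin{equation*}
(\rd_t - \beta_0^k\rd_k)\gamma_1 + \tfrac12\rd_k\beta_1^k = (\text{error of size } C(C_0)\ep^2\lambda^2 \text{ in } H^k\cap C^k,\ k\leq 7),
\end{equation*}
which is the claim.

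Alternatively---and this is probably the cleaner route that avoids worrying about whether the solution identities hold to the needed order---one can prove the identity \emph{directly from the explicit formula} \eqref{g1.def} for $\gamma_1$ and $\beta_1^i$. Both are given by the same background tensor $\bfG_0(\mfg)^{\mu\nu}$ contracted against the same collection of high-frequency building blocks $\tfrac{\lambda^2 F_\bA^2}{|\nab u_\bA|^2}(\rd_\mu u_\bA)(\rd_\nu u_\bA)\co{\tfrac{2u_\bA}{\lambda}}$, etc. Applying $(\rd_t - \beta_0^k\rd_k) = \q L_{e_0}$ (with respect to the \emph{background} $e_0$) to $\gamma_1$: when the derivative hits the oscillating factor one gets $\tfrac1\lambda$ times the phase times $e_0 u_\bA = N_0 e^{-\gamma_0}|\nab u_\bA|$ (using that $u_\bA$ is a background eikonal function, as in Definition~\ref{def.null}'s preliminary computation); when it hits a background coefficient one gets an $O_\ep(\lambda^2)$ term by Corollary~\ref{lwp.small}. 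Similarly $\tfrac12\rd_k\beta_1^k$: the derivative hitting the phase produces $\tfrac1\lambda$ times the phase times $\rd_k u_\bA$ (dotted into the relevant index of $\bfG_0(\beta^k)$), and the derivative hitting a coefficient is again $O_\ep(\lambda^2)$. The point is then a purely algebraic identity among the matrices $\bfG_0(\gamma)$, $\bfG_0(N)$, $\bfG_0(\beta^i)$ in \eqref{G.g}--\eqref{G.b2}: one checks that for each building block, the leading $\tfrac1\lambda$-contribution from $\q L_{e_0}\gamma_1$ cancels exactly the leading $\tfrac1\lambda$-contribution from $\tfrac12\rd_k\beta_1^k$. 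Concretely, for the block indexed by $\bA$ with phase argument $\theta$ and wave-covector $\xi_\mu := \rd_\mu(\text{phase})$, one needs $\bfG_0(\gamma)^{\mu\nu}\xi_\mu\xi_\nu\,(e_0\cdot\xi) + \tfrac12\,\delta^{jk}\xi_k\,\bfG_0(\beta^j)^{\mu\nu}\xi_\mu\xi_\nu = 0$ (up to the $|\nab u|^2$ normalization), which is a finite computation using $\xi$ null with respect to $g_0$, i.e. $\xi_0 = e^{-\gamma_0}N_0|\nab u|$ after accounting for the shift. I expect this algebraic verification to be the main obstacle---not because it is deep, but because it is the one place where one cannot appeal to a cited result and must grind through the $3\times 3$ matrix contractions for each of the five types of building blocks in \eqref{g1.def}; however, the structure (every $\bfG_0(\mfg)$ has the same $\phi$-quadratic origin from \eqref{elliptic.g.1}--\eqref{elliptic.g.3}, and the maximality relation is exactly what links the $\gamma$-equation to the divergence of the $\beta$-equation) guarantees the cancellation. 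Finally, the $H^k\cap C^k$ bound with the stated weights follows by collecting the $O_\ep(\lambda^2)$ remainder terms and applying Corollary~\ref{lwp.small} together with $|\nab u_\bA| > \tfrac14$ (spatial adaptedness) to control the denominators; every spatial derivative costs one power of $\lambda^{-1}$ from the phase, which is consistent with the $\sum_{k\leq 7}\lambda^k(\cdot)$ weighting.
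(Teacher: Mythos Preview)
Your second approach---direct computation from the explicit formula \eqref{g1.def}---is correct and is exactly what the paper does: write out $\gamma_1$ and $\beta_1^i$ using $\bfG_0(\gamma)$ and $\bfG_0(\beta^i)$, differentiate, observe that the $O_\ep(\lambda)$ terms (where the derivative hits the phase) cancel via the background eikonal relation $\tfrac{e^{2\gamma_0}}{N_0^2}({\bf e}_0 u_\bA)^2 = |\nab u_\bA|^2$, and collect the $O_\ep(\lambda^2)$ remainder. Your description of this is accurate, though note a sign slip: the Lie derivative of $\bar g_{ij}$ along $e_0=\rd_t-\beta^k\rd_k$ picks up $-e^{2\gamma}(\de_{kj}\rd_i\beta^k+\de_{ik}\rd_j\beta^k)$ (the spatial components of $e_0$ are $-\beta^k$), so the maximality identity reads $(\rd_t-\beta^k\rd_k)\gamma = +\tfrac12\rd_k\beta^k$, consistent with the minus sign in the lemma's combination. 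Correspondingly, the algebraic cancellation you need is $\bfG_0(\gamma)^{\mu\nu}\xi_\mu\xi_\nu\,({\bf e}_0 u) - \tfrac12\de^{jk}\xi_k\,\bfG_0(\beta^j)^{\mu\nu}\xi_\mu\xi_\nu=0$, not with a plus.

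Your first approach, however, cannot prove the lemma \emph{as stated}. Subtracting the maximality identities for $g_\lambda$ and $g_0$ brings $\mfg_2$ and $\mfg_3$ onto the right-hand side, and those are only controlled through the bootstrap assumptions \eqref{BA3}--\eqref{BA5}: the constants are $C(C_1)$, not $C(C_0)$, and the available regularity tops out well below $k=7$ (e.g.\ $\wht\mfg_3$ only sits in $H^{2+k}_\de$ for $k\leq 3$). Since $\gamma_1,\beta_1^i$ depend only on the background, the lemma rightly demands a background-only constant and high regularity; the solution identity cannot deliver these. The point you half-acknowledge (``force a more careful accounting'') is in fact a hard obstruction. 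Amusingly, your first approach is exactly Proposition~\ref{dtg3.prop} run in reverse: the paper proves the lemma directly from the formula, then feeds it into the maximality identity to bound $\rd_t\gamma_3$---not the other way around.
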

\begin{proof}
This is proven by an explicit computation. For the purpose of this proof, it is helpful to introduce a notation for the background $e_0$ - we will denote ${\bf e}_0:=\rd_t-\beta_0^i\rd_i$. By \eqref{G.g} and \eqref{g1.def}, we have
\begin{equation*}
\begin{split}
\gamma_1=&\f 18\sum_\bA \frac{\lambda^2 F_\bA^2}{|\nabla u_\bA|^2}\left(  |\nab u_\bA|^2 +\f{e^{2\gamma_0}}{N^2_0}({\bf e}_0 u_\bA)( {\bf e}_0 u_\bA)\right) \co{\frac{2u_\bA}{\lambda}}\\
&+2\sum_\bA\frac{\lambda^2 F_\bA}{|\nabla u_\bA|^2} \left(  (\nab\phi_0)\cdot(\nab u_\bA) +\f{e^{2\gamma_0}}{N^2_0}({\bf e}_0 \phi_0)( {\bf e}_0 u_\bA)\right)   \si{\frac{u_\bA}{\lambda}}\\
&+{\f 12}\sum_{\pm} \sum_{\bA}\sum_{\bB \neq \bA}\frac{(\mp 1)\cdot \lambda^2 F_\bA F_{\gra B}}{|\nabla (u_\bA \pm u_{\gra B})|^2} \left(  (\nab u_\bA)\cdot(\nab u_{\bB}) +\f{e^{2\gamma_0}}{N^2_0}({\bf e}_0 u_\bA)( {\bf e}_0 u_\bB)\right)  \co{\frac{ u_\bA \pm u_{\gra B}}{\lambda}}.
\end{split}
\end{equation*}
Differentiating, we obtain
\begin{equation}\label{e0.g1}
\begin{split}
{\bf e}_0\gamma_1=&-\f 14\sum_\bA \frac{\lambda F_\bA^2}{|\nabla u_\bA|^2}({\bf e}_0 u_\bA)\left(  |\nab u_\bA|^2 +\f{e^{2\gamma_0}}{N^2_0}({\bf e}_0 u_\bA)^2\right) \si{\frac{2u_\bA}{\lambda}}\\
&+2\sum_\bA\frac{\lambda F_\bA}{|\nabla u_\bA|^2} ({\bf e}_0 u_\bA)\left(  (\nab\phi_0)\cdot(\nab u_\bA) +\f{e^{2\gamma_0}}{N^2_0}({\bf e}_0 \phi_0)( {\bf e}_0 u_\bA)\right)   \co{\frac{u_\bA}{\lambda}}\\
&-{\f 12}\sum_{\pm} \sum_{\bA}\sum_{\bB \neq \bA}\frac{(\mp 1)\cdot \lambda F_\bA F_{\gra B}}{|\nabla (u_\bA \pm u_{\gra B})|^2} ({\bf e}_0 (u_\bA\pm u_\bB))\\
&\qquad\qquad\times\left(  (\nab u_\bA)\cdot(\nab u_{\bB}) +\f{e^{2\gamma_0}}{N^2_0}({\bf e}_0 u_\bA)( {\bf e}_0 u_\bB)\right)  \si{\frac{ u_\bA \pm u_{\gra B}}{\lambda}}+\dots,
\end{split}
\end{equation}
where here, and below in this proof, we have used $\dots$ to denote terms with $\sum_{k\leq 3} \lambda^k\|\cdot\|_{H^k\cap {C^{k}}}$ norms bounded above by $C(C_0)\ep^2 \lambda^2$. (These terms arise when the derivative does not act on the oscillating factor{s}.) On the other hand, we compute $\beta^i_1$ according to \eqref{G.b1}, \eqref{G.b2} and \eqref{g1.def} to get
\begin{equation*}
\begin{split}
\beta^i_1=&\f 12\delta^{ij}\sum_\bA \frac{\lambda^2 F_\bA^2}{|\nabla u_\bA|^2}  ({\bf e}_0 u_\bA)( \rd_j u_\bA) \co{\frac{2u_\bA}{\lambda}}\\
&+4\delta^{ij}\sum_\bA\frac{\lambda^2 F_\bA}{|\nabla u_\bA|^2} \left(  ({\bf e}_0\phi_0)(\rd_j u_\bA) +({\bf e}_0 u_\bA)( \rd_j \phi_0)\right)   \si{\frac{u_\bA}{\lambda}}\\
&+2\delta^{ij}\sum_{\pm} \sum_{\bA}\sum_{\bB \neq \bA}\frac{(\mp 1)\cdot \lambda^2 F_\bA F_{\gra B}}{|\nabla (u_\bA \pm u_{\gra B})|^2}  ({\bf e}_0 u_\bA)(\rd_j u_{\bB})  \co{\frac{ u_\bA \pm u_{\gra B}}{\lambda}}.
\end{split}
\end{equation*}
Therefore, taking the divergence (with respect to ${\delta_{ij}}$), {(and noting that the last term remains unchanged under $\bA \leftrightarrow \bB$),} we obtain
\begin{equation}\label{div.b}
\begin{split}
\rd_i\beta^i_1=&-\sum_\bA \frac{{\lambda} F_\bA^2}{|\nabla u_\bA|^2}  ({\bf e}_0 u_\bA)| \nab u_\bA|^2 \si{\frac{2u_\bA}{\lambda}}\\
&+4\sum_\bA\frac{{\lambda} F_\bA}{|\nabla u_\bA|^2} \left(  ({\bf e}_0\phi_0)|\nab u_\bA|^2 +({\bf e}_0 u_\bA)(( \nab \phi_0)\cdot(\nab u_\bA))\right)   \co{\frac{u_\bA}{\lambda}}\\
&{-} \sum_{\pm} \sum_{\bA}\sum_{\bB \neq \bA}\frac{(\mp 1)\cdot {\lambda} F_\bA F_{\gra B}}{|\nabla (u_\bA \pm u_{\gra B})|^2}  ({\bf e}_0 u_\bA)((\nab u_{\bB}) \cdot(\nab(u_\bA\pm u_\bB)) \si{\frac{ u_\bA \pm u_{\gra B}}{\lambda}}\\
& {- \sum_{\pm} \sum_{\bA}\sum_{\bB \neq \bA}\frac{(\mp 1)\cdot \lambda F_\bA F_{\gra B}}{|\nabla (u_\bA \pm u_{\gra B})|^2}  ({\bf e}_0 u_\bB)((\nab u_{\bA}) \cdot(\nab(u_\bA\pm u_\bB)) \si{\frac{ u_\bA \pm u_{\gra B}}{\lambda}}+\dots}
\end{split}
\end{equation}
Finally, recalling that $u_\bA$ is an eikonal function on the background, we have $\f{e^{2\gamma_0}}{N_0^2}({\bf e}_0 u_\bA)^2=|\nab u_\bA|^2$. Therefore, in the expression $({\bf e}_0 \gamma_1- \f 12\rd_i\beta_1^i)$, the main contributions in \eqref{e0.g1} and \eqref{div.b} cancel exactly. This yields the proposition.
\end{proof}

Equipped with Lemma \ref{g1.main.lemma}, the necessary improved estimates for $\rd_t\gamma_3$ follows easily from the gauge conditions:
\begin{proposition}\label{dtg3.prop}
$$\|\rd_t\gamma_3\|_{L^2(B(0,R_{supp}{+1}))}\leq C(C_1)\ep^2\lambda^2.$$
\end{proposition}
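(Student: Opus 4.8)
\textbf{Proof plan for Proposition~\ref{dtg3.prop}.} The idea is to exploit the maximality gauge condition $\tau = 0$, which propagates in time, to rewrite $\rd_t\gamma$ in terms of spatial derivatives of $\beta$, and then to subtract off the known pieces $\gamma_0$, $\gamma_1$, $\gamma_2$. First recall that the maximality condition together with \eqref{K}, \eqref{K.tr.trfree} and \eqref{uniformized.g} gives an exact relation: since $\tau = \mathrm{tr}_{\bar g} K = 0$ and $\bar g_{ij} = e^{2\gamma}\delta_{ij}$, the definition $K_{ij} = -\frac{1}{2N}\q L_{e_0}\bar g_{ij}$ yields, upon taking the $\bar g$-trace, an identity of the schematic form $e_0\gamma = \frac{1}{2}\left(\rd_i\beta^i + (\text{terms with no }\rd_t)\right)$, i.e. $(\rd_t - \beta^i\rd_i)\gamma = \frac{1}{2}\rd_i\beta^i$ exactly (with $\beta^i$ the full shift). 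This is the analogue, at the level of the full solution, of Lemma~\ref{g1.main.lemma} which establishes the corresponding statement for the parametrix piece $\gamma_1$. The same identity of course holds for the background: $(\rd_t - \beta_0^i\rd_i)\gamma_0 = \frac{1}{2}\rd_i\beta_0^i$.

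The plan is then to combine these. Subtracting the background identity from the full identity and using $\gamma = \gamma_0 + \gamma_1 + \gamma_2 + \gamma_3$, $\beta^i = \beta_0^i + \beta_1^i + \beta_2^i + \beta_3^i$, one obtains an expression for $(\rd_t - \beta_0^i\rd_i)(\gamma_1 + \gamma_2 + \gamma_3)$ in terms of $\rd_i(\beta_1^i + \beta_2^i + \beta_3^i)$ plus lower-order terms coming from the difference of the "terms with no $\rd_t$" and from the $\beta^i\rd_i$ versus $\beta_0^i\rd_i$ discrepancy (the latter involving $\beta_1^i + \beta_2^i + \beta_3^i$ times $\rd_i\gamma$, which is small). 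Now I would invoke Lemma~\ref{g1.main.lemma} to cancel the $\gamma_1$/$\beta_1^i$ main terms, leaving
$$(\rd_t - \beta_0^i\rd_i)(\gamma_2 + \gamma_3) = \tfrac{1}{2}\rd_i(\beta_2^i + \beta_3^i) + (\text{error terms bounded by }C(C_1)\ep^2\lambda^2\text{ in }L^2(B(0,R_{supp}+1))).$$
Here the error terms are controlled using Corollary~\ref{lwp.small} for the background, \eqref{B2} for $\mfg_1$, Proposition~\ref{g2.est} (improved \eqref{BA3}) for $\mfg_2$ and Corollary~\ref{g3.cor} (improved \eqref{BA4}) for $\mfg_3$ — and crucially Proposition~\ref{g2.est}/\eqref{g2.Linfty} tell us $\rd_t\gamma_2$ and $\rd_i\beta_2^i$ are $O(C_1\ep\lambda^2)$ on compact sets. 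So I can move the $\gamma_2$ and $\beta_2^i$ contributions to the error side, obtaining $\rd_t\gamma_3 = \beta_0^i\rd_i\gamma_3 + \frac{1}{2}\rd_i\beta_3^i + O_{L^2(B(0,R_{supp}+1))}(C(C_1)\ep^2\lambda^2)$.

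Finally, to close, I bound the two remaining genuinely "$\mfg_3$" terms on the right: $\beta_0^i\rd_i\gamma_3$ is controlled by $\|\beta_0\|_{L^\infty(B(0,R_{supp}+1))}\|\nabla\gamma_3\|_{L^2(B(0,R_{supp}+1))} \leq C(C_0)\cdot C(C_1)\ep^2\lambda^2$ using Corollary~\ref{g3.cor} (note $\|\nabla\wht\gamma_3\|_{H^2_\delta}$ and the $(\gamma_3)_{asymp}$ contribution both give the bound, with the logarithmic weight harmless on a compact set); and $\rd_i\beta_3^i$ is likewise $\leq C(C_1)\ep^2\lambda^2$ in $L^2(B(0,R_{supp}+1))$ by Corollary~\ref{g3.cor}, indeed with more room since $(\beta_3)_{asymp}=0$. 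Assembling these gives $\|\rd_t\gamma_3\|_{L^2(B(0,R_{supp}+1))}\leq C(C_1)\ep^2\lambda^2$, which in particular improves the bootstrap assumption \eqref{BA5}. The main obstacle is carrying out the algebra in the first step carefully: one must verify that the "terms with no $\rd_t$" in the maximality identity genuinely involve only spatial derivatives of the metric components (so that their difference is handled by the already-established elliptic estimates) and that all the discrepancy terms from expanding products really do land in the $O_\ep(\lambda^2)$ error class rather than $O_\ep(\lambda)$ — this is exactly where the structure exploited in Lemma~\ref{g1.main.lemma} (that $\beta_1^i$-linear terms cancel against $\gamma_1$-derivative terms) is essential, and the same cancellation mechanism must be tracked at the level of $\gamma_2, \beta_2^i$ and $\gamma_3, \beta_3^i$.
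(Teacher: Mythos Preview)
Your approach is correct and essentially identical to the paper's proof: both use the exact maximality identity $(\rd_t-\beta^i\rd_i)\gamma=\tfrac12\rd_i\beta^i$, subtract off the background, invoke Lemma~\ref{g1.main.lemma} to handle the $\gamma_1/\beta_1$ cancellation, and bound the remaining $\mfg_2$, $\mfg_3$ contributions directly. Your closing worry is unnecessary, though: no further cancellation is needed at the $\gamma_2,\beta_2$ or $\gamma_3,\beta_3$ level, since those terms are already individually of size $C(C_1)\ep^2\lambda^2$ in $L^2$ (via Proposition~\ref{g2.est} and Corollary~\ref{g3.cor}); the cancellation in Lemma~\ref{g1.main.lemma} is needed only because $\rd_t\gamma_1$ and $\rd_i\beta_1^i$ are each only $O_\ep(\lambda)$ separately.
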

\begin{proof}
Since the mean curvature of each $\Sigma_t$ vanishes, we have
$$(\rd_t-\beta^i\rd_i)\gamma =\f 12 \rd_i \beta^i .$$
The same formula holds for the background, with $\gamma \mapsto \gamma_0$ and $\beta^i\mapsto \beta^i_0$, i.e.,
$$(\rd_t-\beta_0^i\rd_i)\gamma_0 =\f 12 \rd_i \beta_0^i .$$
{Subtracting}, we thus obtain
$$(\rd_t-\beta_0^i\rd_i)(\gamma-\gamma_0)=\f 12 \rd_i (\beta^i-\beta_0^i)+(\beta^i-\beta^i_0)\rd_i\gamma,$$
i.e.,
$$(\rd_t-\beta_0^i\rd_i)(\gamma_1+\gamma_2+\gamma_3)=\f 12 \rd_i (\beta_1^i+\beta_2^i+\beta_3^i)+(\beta^i-\beta^i_0)\rd_i\gamma.$$
As a consequence,
\begin{equation*}
\begin{split}
&\|\rd_t\gamma_3\|_{L^{{2}}(B(0,R_{supp}{+1}))}\\
\leq &C(C_0)\left(\|{\rd} \mfg_2\|_{L^{{2}}(B(0,R_{supp}{+1}))}+\|\nabla \mfg_3\|_{L^{{2}}(B(0,R_{supp}{+1}))}+\|(\rd_t-\beta_0^i\rd_i)\gamma_1-\f 12 \rd_i\beta_1^i\|_{L^{{2}}(B(0,R_{supp}{+1}))}\right.\\
&\left.\qquad\quad+{\|\beta^i-\beta^i_0\|_{L^\infty(B(0,R_{supp}+1)} \|\rd_i\gamma\|_{L^2(B(0,R_{supp}+1)}}\right)\leq C(C_1)\ep^2\lambda^2,
\end{split}
\end{equation*}
where we have used \eqref{B2}, \eqref{BA3}, \eqref{BA4}{, \eqref{g2.Linfty}, \eqref{g3.Linfty},} as well as Lemma \ref{g1.main.lemma}.
\end{proof}
Since on the compact set $B(0,R_{supp}{+1})$, $\rd_t\gamma_3=\rd_t\wht \gamma_3$, we have thus improved the bootstrap assumption \eqref{BA5}.

\section{Putting everything together{: Proof of Theorem~\ref{main.thm.2}}}\label{sec.concl}
Thanks to Propositions \ref{whtF.est}, Corollary \ref{E.cor}, Proposition \ref{g2.est}, Corollary \ref{g3.cor} and Proposition \ref{dtg3.prop} we have proved the estimates {in the bootstrap assumptions \eqref{BA1}--\eqref{BA5}}, with $C_1\ep$ replaced by $C(C_0)\ep+ C(C_1)\ep^2,$
where $C(C_0)$ is a constant, depending on the background and independent of $C_1$. We can choose $C_1$ {sufficiently large} such that
$C_1 \geq 4C(C_0),$
and $\ep$ {sufficiently small} such that
$C(C_1)\ep \leq \frac{1}{4}C_1,$
we then have {improved} the {estimates in in the bootstrap assumptions \eqref{BA1}--\eqref{BA5},} with $C_1$ replaced by $\frac{C_1}{2}$. This {proves the bootstrap theorem (Theorem~\ref{thm:BS}).}

{Now, by the the local existence theorem (Theorem~\ref{lwp}), we see that when $F_{\bA} \equiv 0$ (which is the case under consideration), the solution only breaks down when at least one of the following holds:
\begin{enumerate}
\item (Higher norms of matter fields blow up)
$$\liminf_{t\to T_*}\left(\|\f{e^{2\gamma}}{N}(e_0\phi)\|_{H^k}(t) + \|\nab\phi\|_{H^k}(t) + \max_{\bA} \|F_{\bA} e^{\f{\gamma}{2}}\|_{H^k}(t) \right) = +\infty$$
\item (Lower norms of matter fields leave smallness regime\footnote{Note that with our estimates, we also have control over the support of the solution. Therefore, according to Theorem~\ref{lwp}, there is indeed an $\ep_{low}$ that we can talk about, which depends on this upper bound of the support of the solution, as well as $k$ and $\de$.})
$$\liminf_{t\to T_*} \left(\|\f{e^{2\gamma}}{N}(e_0\phi)\|_{L^\infty}(t) + \|\nab\phi\|_{L^\infty}(t) + \max_{\bA} \|F_{\bA} e^{\f{\gamma}{2}}\|_{L^\infty}(t) \right) > \ep_{low}.$$
\end{enumerate}
However, by choosing $\ep$ sufficiently small, the estimates in the bootstrap argument precisely show that neither of these can occur. Hence, the solution exists on the whole time interval $[0,1]$.
}

{It remains to show that we have the desired convergence, which follows easily from the parametrix \eqref{phi.para}, \eqref{g.para.def}, and estimates established in the bootstrap argument.} This concludes the proof of Theorem \ref{main.thm.2}.

\appendix

\section{Weighted Sobolev spaces}\label{weightedsobolev}
For {the} sake of complet{e}ness, we recall here useful properties on weighted Sobolev spaces. For relevant definitions, see Definition \ref{def.spaces}. {Unless otherwise stated, we will only be interested in weighted Sobolev spaces on $\mathbb R^2$. Most of the results can be found in \cite[Appendix I]{livrecb} (although we use slightly different notations).} 

\subsection{Embedding theorems}
The following lemma is an immediate consequence of the definition.
\begin{lm}\label{der} Let $m \geq 1$, $p\in [1, \infty)$ and $\delta \in \mathbb{R}$. Then {for $j=1,2$}, 
$$\| \partial_j u\|_{W^{m-1}_{\delta+1,p}} {\ls_{m,\de,p}} \|u \|_{W^m_{\delta,p}}.$$
Similarly, for $m \geq 1$, $\delta \in \mathbb{R}$, $j=1,2$, 
$$\| \partial_j u\|_{C^{m-1}_{\delta+1}} {\ls_{m,\de,p}} \|u \|_{C^m_{\delta}}.$$
\end{lm}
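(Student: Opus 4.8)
\textbf{Proof plan for Lemma~\ref{der}.}
The statement is a direct consequence of unwinding the definitions of the weighted norms from Definition~\ref{def.spaces}, so the ``proof'' is really just bookkeeping; there is no genuine obstacle. First I would recall that, for a multi-index $\beta$ with $|\beta|\le m-1$, we have $\nab^\beta(\rd_j u) = \rd_j(\nab^\beta u) = \nab^{\beta'} u$ for the multi-index $\beta'$ obtained from $\beta$ by adding one to its $j$-th entry, and $|\beta'| = |\beta|+1 \le m$. Hence each summand in the definition of $\|\rd_j u\|_{W^{m-1}_{\de+1,p}}$, namely $\|(1+|x|^2)^{\frac{(\de+1)+|\beta|}{2}}\nab^\beta(\rd_j u)\|_{L^p} = \|(1+|x|^2)^{\frac{\de+|\beta'|}{2}}\nab^{\beta'} u\|_{L^p}$, is one of the summands appearing in the definition of $\|u\|_{W^m_{\de,p}}$ (here we used $(\de+1)+|\beta| = \de + |\beta'|$).

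The only point to be slightly careful about is that distinct pairs $(\beta,j)$ with $|\beta|\le m-1$ can produce the same $\beta'$, so summing over all of them overcounts; this is harmless since it only affects the implicit constant. Concretely, since the number of multi-indices $\beta$ of length $\le m-1$ in two variables is bounded by a constant $C_m$ depending only on $m$, we get
$$
\|\rd_j u\|_{W^{m-1}_{\de+1,p}} = \sum_{|\beta|\le m-1}\big\|(1+|x|^2)^{\frac{\de+|\beta'|}{2}}\nab^{\beta'} u\big\|_{L^p} \le C_m \sum_{1\le|\gamma|\le m}\big\|(1+|x|^2)^{\frac{\de+|\gamma|}{2}}\nab^{\gamma} u\big\|_{L^p} \le C_m \|u\|_{W^m_{\de,p}},
$$
which is the claimed inequality with implicit constant depending only on $m$ (and not on $\de$ or $p$, though one may absorb those harmlessly).

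The H\"older-space statement is proved verbatim, replacing every $L^p$ norm above by an $L^\infty$ norm and $W$ by $C$; no new ideas are needed, and the continuous differentiability of $u$ up to order $m$ ensures $\rd_j u$ is continuously differentiable up to order $m-1$ so that $\|\rd_j u\|_{C^{m-1}_\de}$ is well defined. This completes the plan; the entire argument is a one-line change of index, and I expect no difficulty whatsoever.
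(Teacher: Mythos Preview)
Your proposal is correct and matches the paper's approach: the paper simply states that the lemma ``is an immediate consequence of the definition'' and gives no further argument, which is exactly the index-bookkeeping you wrote out. One minor remark: for a \emph{fixed} $j$ the map $\beta\mapsto\beta+e_j$ is injective, so there is in fact no overcounting and the inequality holds with implicit constant $1$; your concern about distinct pairs $(\beta,j)$ producing the same $\beta'$ would only arise if you summed over $j$ as well.
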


{We have an easy embedding result, which is a straightforward application of the H\"older's inequality:
\begin{lm}\label{weight:emb}
If $1 \leq p_1 \leq p_2 \leq \infty$ and $\delta_2-\delta_1 > 2\left(\f{1}{p_1}-\f{1}{p_2}\right)$, 
then we have the continuous embedding
$$W^{0}_{\delta_2,p_2}\subset W^{0}_{\delta_1,p_1}.$$
\end{lm}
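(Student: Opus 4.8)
\textbf{Proof plan for Lemma~\ref{weight:emb}.}
The statement is the weighted analogue of the trivial $L^p$-inclusion on a finite-measure space, with $(1+|x|^2)^{\delta/2}$ playing the role of a (non-integrable) weight whose extra decay we must pay for by H\"older. The plan is to reduce everything to a single application of H\"older's inequality with exponents adapted to $p_1$ and $p_2$.

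First I would dispose of the endpoint case $p_2=\infty$: here $\|u\|_{W^0_{\delta_1,p_1}}=\|(1+|x|^2)^{\delta_1/2}u\|_{L^{p_1}}\le \|(1+|x|^2)^{\delta_2/2}u\|_{L^\infty}\,\big\|(1+|x|^2)^{(\delta_1-\delta_2)/2}\big\|_{L^{p_1}}$, and the last factor is finite precisely because $(\delta_2-\delta_1)p_1>2$, which is the hypothesis $\delta_2-\delta_1>2/p_1$ in this case. For the generic case $1\le p_1<p_2<\infty$, set $r$ by $\tfrac{1}{r}=\tfrac{1}{p_1}-\tfrac{1}{p_2}\in(0,1]$, so that $\tfrac{1}{p_1}=\tfrac{1}{p_2}+\tfrac1r$. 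Write
\[
(1+|x|^2)^{\delta_1/2}u=\big[(1+|x|^2)^{\delta_2/2}u\big]\cdot (1+|x|^2)^{(\delta_1-\delta_2)/2},
\]
and apply H\"older with exponents $p_2/p_1$ and $(p_2/p_1)'=r/p_1$ to the $p_1$-th powers (equivalently, the three-exponent H\"older inequality $\tfrac1{p_1}=\tfrac1{p_2}+\tfrac1r$ applied directly to the product). This gives
\[
\|(1+|x|^2)^{\delta_1/2}u\|_{L^{p_1}}\le \|(1+|x|^2)^{\delta_2/2}u\|_{L^{p_2}}\cdot\big\|(1+|x|^2)^{(\delta_1-\delta_2)/2}\big\|_{L^{r}}.
\]
It remains to check the second factor is finite: $\big\|(1+|x|^2)^{(\delta_1-\delta_2)/2}\big\|_{L^r(\mathbb R^2)}^r=\int_{\mathbb R^2}(1+|x|^2)^{r(\delta_1-\delta_2)/2}\,dx$, which converges iff $r(\delta_2-\delta_1)>2$, i.e. iff $\delta_2-\delta_1>2/r=2(\tfrac1{p_1}-\tfrac1{p_2})$ — exactly the hypothesis. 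This yields $\|u\|_{W^0_{\delta_1,p_1}}\le C_{p_1,p_2,\delta_1,\delta_2}\|u\|_{W^0_{\delta_2,p_2}}$, proving the claimed continuous embedding.

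There is no real obstacle here; the only points requiring any care are bookkeeping the exponent relation $\tfrac1{p_1}=\tfrac1{p_2}+\tfrac1r$ and verifying that the strict inequality in the hypothesis is precisely what makes the weight $(1+|x|^2)^{(\delta_1-\delta_2)/2}$ lie in $L^r(\mathbb R^2)$ (using that the ambient dimension is $2$, so the integrability threshold for $(1+|x|^2)^{-s}$ is $s>1$, i.e. $s=r(\delta_2-\delta_1)/2>1$). The density of $C_0^\infty$ in the weighted spaces (part of Definition~\ref{def.spaces}) lets us first prove the inequality for smooth compactly supported $u$ and then pass to the limit.
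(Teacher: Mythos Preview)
Your proof is correct and is exactly the approach the paper has in mind: the paper states the lemma without proof, noting only that it ``is a straightforward application of the H\"older's inequality,'' and your argument is precisely that application. (The trivial case $p_1=p_2$ is not covered by your two cases, but there the embedding is immediate from $\delta_1<\delta_2$.)
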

}

The following simple lemma will be useful as well.

\begin{lm}\label{produit2}
	Let $\alpha \in \mathbb{R}$ and $g \in L^\infty_{loc}$ be such that
	$$|g(x)| \lesssim (1+|x|^2)^\alpha.$$
	Then the multiplication by $g$ maps $H^0_{\delta}$ to $H^0_{\delta -2\alpha}$ {with operator norm bounded by $\sup_{x\in \m R^2} \f{|g(x)|}{(1+|x|^2)^{\alp}}$}.
\end{lm}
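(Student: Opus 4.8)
\textbf{Proof plan for Lemma~\ref{produit2}.} The statement is elementary and amounts to an $L^2$ estimate with a pointwise bound on the multiplier. The plan is to work directly from Definition~\ref{def.spaces} with $m=0$, $p=2$, so that $\|u\|_{H^0_\delta} = \|(1+|x|^2)^{\delta/2} u\|_{L^2}$ and $\|gu\|_{H^0_{\delta-2\alpha}} = \|(1+|x|^2)^{(\delta-2\alpha)/2} gu\|_{L^2}$. First I would write, for $u \in C^\infty_0$ (the dense subset on which the norm is defined),
\begin{equation*}
(1+|x|^2)^{\frac{\delta-2\alpha}{2}} |g(x) u(x)| = \left( \frac{|g(x)|}{(1+|x|^2)^{\alpha}} \right) \cdot (1+|x|^2)^{\frac{\delta}{2}} |u(x)| \leq \left( \sup_{y\in\m R^2} \frac{|g(y)|}{(1+|y|^2)^\alpha} \right) (1+|x|^2)^{\frac{\delta}{2}} |u(x)|,
\end{equation*}
where the supremum is finite by the hypothesis $|g(x)| \lesssim (1+|x|^2)^\alpha$. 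Taking $L^2$ norms over $\m R^2$ and using monotonicity of the integral then gives
\begin{equation*}
\|gu\|_{H^0_{\delta-2\alpha}} \leq \left( \sup_{y\in\m R^2} \frac{|g(y)|}{(1+|y|^2)^\alpha} \right) \|u\|_{H^0_\delta},
\end{equation*}
which is exactly the asserted bound on the operator norm. Finally I would note that since $C^\infty_0$ is dense in $H^0_\delta$ and the inequality is continuous in $u$, the multiplication operator extends by density to all of $H^0_\delta$ with the same bound.

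There is essentially no obstacle here: the only point that needs a word of care is that $g$ is merely in $L^\infty_{loc}$, so $gu$ makes sense pointwise a.e.\ for $u\in C^\infty_0$ and the above computation is justified by the pointwise bound; the extension to general $u\in H^0_\delta$ is then the standard density argument. One could also phrase the whole thing in one line as "multiply and divide by $(1+|x|^2)^\alpha$", but I would include the two displayed inequalities above so the constant is visibly identified as the stated supremum.
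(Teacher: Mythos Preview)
Your proof is correct. The paper does not actually prove this lemma (it states it as a simple observation and moves on), so your direct computation from Definition~\ref{def.spaces}---multiplying and dividing by $(1+|x|^2)^\alpha$ and taking $L^2$ norms---is exactly the intended one-line argument, with the density step made explicit.
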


Next, we have a Sobolev embedding theorem with weights:
\begin{prp}\label{holder} 
Let $s,m \in \m N \cup \{0\}$, $1<p<\infty$. The following Sobolev embedding theorems hold:
\begin{itemize}
\item Suppose $s >\frac{2}{p}$ {and $\beta \leq \delta +\frac{2}{p}$}. Then, we have the continuous embedding
	$$W^{s+m}_{\delta,p}\subset {C^{m}_{\beta}}.$$
\item Suppose $s < \f 2p$. Then, we have the continuous embedding
  $$W^{s+m}_{\delta,p}\subset W^{m}_{\delta+s, \f{np}{n-sp}}.$$
\item Suppose $s = \f 2p$. Then, we have the continuous embedding for all $q<\infty$
	$$W^{s+m}_{\delta,p}\subset W^{m}_{\delta+s, q}.$$
\end{itemize}
\end{prp}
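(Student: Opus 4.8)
The final statement to be proved is the weighted Sobolev embedding theorem, Proposition~\ref{holder}. Since the problem is invariant under the usual localization procedure, my plan is to reduce each embedding to the classical (unweighted) Sobolev embedding on $\mathbb R^2$ via a dyadic decomposition.

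\textbf{Setup via dyadic annuli.} First I would fix a smooth partition of unity $\{\chi_j\}_{j\ge 0}$ on $\mathbb R^2$ subordinate to the annuli $A_j = \{2^{j-1}\le |x|\le 2^{j+1}\}$ for $j\ge 1$ and $A_0 = \{|x|\le 2\}$, with $\sum_j \chi_j \equiv 1$, $|\nab^\beta \chi_j|\lesssim_\beta 2^{-j|\beta|}$. For $u$ supported (morally) on $A_j$, the weighted norms are comparable to $2^{j(\delta+|\beta|)}$ times the corresponding unweighted norm of $\nab^\beta u$; this is because on $A_j$ one has $(1+|x|^2)^{1/2}\sim 2^j$. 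The idea is then to rescale $A_j$ to a fixed annulus $A_1$ by setting $u_j(y) := (\chi_j u)(2^j y)$, apply the classical Sobolev embedding on $A_1$ (or on $\mathbb R^2$ after extension), and track the powers of $2^j$ that appear under the rescaling; the point is that the weights are chosen precisely so that the powers of $2^j$ match up and one can sum (or take the sup) over $j$.

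\textbf{The three cases.} For the first item ($s>2/p$, embedding into $C^m_\beta$), after rescaling, the classical embedding $W^{s+m,p}(A_1)\hookrightarrow C^{m}(\overline{A_1})$ gives pointwise control of $\nab^\beta u_j$ for $|\beta|\le m$; undoing the scaling multiplies by $2^{-j|\beta|}$ while the weight demands a factor $2^{j(\beta+|\beta|)}$, and since $\beta\le \delta + 2/p$ one checks that the resulting sum over $j$ of the weighted $W^{s+m}_{\delta,p}$-pieces dominates, using $\ell^p\subset\ell^\infty$. For the second item ($s<2/p$), one uses the scaling-invariant classical Gagliardo--Nirenberg--Sobolev embedding $\dot W^{s,p}\hookrightarrow L^{2p/(2-sp)}$ (here $n=2$, though the statement is written with a general $n$ — I will treat $n=2$ as in the rest of the paper, noting the typo), applied on each annulus to $\nab^m(\chi_j u)$; the exponent is chosen so that the rescaling introduces \emph{no} net power of $2^j$ beyond the weight shift by $s$, so the $\ell^p$-summation over $j$ closes exactly (Hölder in $j$ is not even needed — it is term-by-term). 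The third item ($s=2/p$) is the borderline case: instead of $L^\infty$ one lands in every $L^q$, $q<\infty$, and the only subtlety is the loss of the endpoint, handled by the classical fact that $W^{2/p,p}\not\hookrightarrow L^\infty$ but does embed in all $L^q$; the weighted version then follows by the same rescaling, with the mild loss in the summation over $j$ absorbed by taking $q<\infty$ strictly.

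\textbf{Main obstacle.} The routine part is the bookkeeping of the powers of $2^j$ under rescaling, and the honest work is making the partition-of-unity argument rigorous — in particular, that multiplying by $\chi_j$ and summing the pieces back up is bounded on the weighted spaces (commuting derivatives with $\chi_j$ produces lower-order terms with favorable powers of $2^{-j}$, which only help). The one genuinely delicate point is the second bullet: one needs the \emph{homogeneous} (scale-invariant) form of the fractional Sobolev inequality on each annulus, uniformly in the annulus, which requires either working with $\dot W^{s,p}$ seminorms or carefully checking that the inhomogeneous lower-order terms carry the right (better) weights. Since this is a standard appendix result, I would in fact simply cite \cite[Appendix~I]{livrecb} for the bulk of it and only spell out the dyadic reduction for completeness, remarking that the general-$n$ statement specializes to $n=2$ in our setting.
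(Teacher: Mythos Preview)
The paper does not prove this proposition: it is stated without proof, with the blanket remark at the start of Appendix~\ref{weightedsobolev} that ``most of the results can be found in \cite[Appendix~I]{livrecb}''. Your dyadic-decomposition sketch is the standard route to such weighted embeddings and is correct in outline (and you rightly flag the typo $n\mapsto 2$); in fact you provide more than the paper does, since your closing remark that you would ultimately cite \cite[Appendix~I]{livrecb} is exactly what the paper does in lieu of any argument.
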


{It will be convenient to prove a refined (in terms of scaling) Sobolev embedding theorem for the $H^2 \subset L^\infty$ embedding:
\begin{prp}\label{scaled.Sobolev}
The following holds for all functions $u$ such that the RHS is finite:
$$\| u \|_{L^\infty}\lesssim \| u \|_{L^2}^{\f 12} \left(\sum_{|\alp|=2}\|\nab^{\alp} u \|_{L^2}\right)^{\f 12}.$$
\end{prp}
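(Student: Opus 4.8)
\textbf{Proof strategy for Proposition~\ref{scaled.Sobolev}.} The plan is to reduce the statement to the ordinary (unscaled) Sobolev embedding $H^2(\m R^2)\subset L^\infty(\m R^2)$ by inserting a scaling parameter. First I would recall that the standard Sobolev embedding in two dimensions gives a constant $C>0$ with
\begin{equation*}
\|v\|_{L^\infty(\m R^2)}\leq C\left(\|v\|_{L^2(\m R^2)}+\sum_{|\alpha|=1}\|\nab^\alpha v\|_{L^2(\m R^2)}+\sum_{|\alpha|=2}\|\nab^\alpha v\|_{L^2(\m R^2)}\right)
\end{equation*}
for all $v\in H^2(\m R^2)$; by density it suffices to prove the claimed inequality for $v\in C_c^\infty(\m R^2)$, and we may assume the right-hand side of the desired inequality is strictly positive (otherwise $u$ vanishes). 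The middle term, involving first derivatives, can be removed by interpolation: $\sum_{|\alpha|=1}\|\nab^\alpha v\|_{L^2}\lesssim \|v\|_{L^2}^{1/2}\big(\sum_{|\alpha|=2}\|\nab^\alpha v\|_{L^2}\big)^{1/2}$, which follows from integration by parts ($\|\nab v\|_{L^2}^2=-\int v\,\Delta v\leq \|v\|_{L^2}\|\Delta v\|_{L^2}$) together with $\|\Delta v\|_{L^2}\lesssim \sum_{|\alpha|=2}\|\nab^\alpha v\|_{L^2}$. Hence it is enough to show
\begin{equation*}
\|v\|_{L^\infty}\lesssim \|v\|_{L^2}^{1/2}\Big(\sum_{|\alpha|=2}\|\nab^\alpha v\|_{L^2}\Big)^{1/2}.
\end{equation*}

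Next I would introduce the rescaling. For $\mu>0$ set $v_\mu(x):=v(\mu x)$. A change of variables in $\m R^2$ gives $\|v_\mu\|_{L^2}=\mu^{-1}\|v\|_{L^2}$, $\|\nab^\alpha v_\mu\|_{L^2}=\mu^{|\alpha|-1}\|\nab^\alpha v\|_{L^2}$, and $\|v_\mu\|_{L^\infty}=\|v\|_{L^\infty}$. Applying the (interpolated) Sobolev inequality to $v_\mu$ yields
\begin{equation*}
\|v\|_{L^\infty}=\|v_\mu\|_{L^\infty}\lesssim \mu^{-1}\|v\|_{L^2}+\mu\sum_{|\alpha|=2}\|\nab^\alpha v\|_{L^2}.
\end{equation*}
Now I would optimize in $\mu$: choosing $\mu=\big(\|v\|_{L^2}/\sum_{|\alpha|=2}\|\nab^\alpha v\|_{L^2}\big)^{1/2}$ (which is a legitimate positive choice since both quantities are positive for $v\not\equiv 0$ with nonzero second derivatives; the degenerate cases are handled separately or by a limiting argument) balances the two terms and produces exactly
\begin{equation*}
\|v\|_{L^\infty}\lesssim \|v\|_{L^2}^{1/2}\Big(\sum_{|\alpha|=2}\|\nab^\alpha v\|_{L^2}\Big)^{1/2}.
\end{equation*}
Combining this with the interpolation step from the first paragraph gives the claim for $v\in C_c^\infty$, and then for all admissible $u$ by density and monotone convergence on the right-hand side.

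The only mildly delicate point — and the one I expect to be the "main obstacle", though it is minor — is the bookkeeping of the degenerate cases: if $\sum_{|\alpha|=2}\|\nab^\alpha u\|_{L^2}=0$ then (being on $\m R^2$ with finite $L^2$ norm) $u$ is affine and $L^2$, hence $u\equiv 0$ and the inequality is trivial; if instead $\|u\|_{L^2}=\infty$ or the second-derivative norm is infinite there is nothing to prove. Apart from that, every step is a standard change of variables, integration by parts, or the classical Sobolev embedding in two dimensions, all of which may be invoked directly. I would present the argument in the compact form above without grinding through the explicit constants.
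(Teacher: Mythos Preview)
Your proof is correct and follows essentially the same route as the paper: start from the unscaled $H^2(\m R^2)\hookrightarrow L^\infty(\m R^2)$ embedding, rescale $x\mapsto \mu x$, and optimize in $\mu$. The only (minor) difference is that the paper simply quotes the embedding in the form $\|v\|_{L^\infty}\lesssim \|v\|_{L^2}+\sum_{|\alpha|=2}\|\nabla^\alpha v\|_{L^2}$ without the first-derivative term, whereas you explicitly remove that term by the interpolation $\|\nabla v\|_{L^2}^2=-\int v\,\Delta v$; your version is therefore slightly more self-contained.
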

\begin{proof}
By Proposition~\ref{holder}, for every $v$,
$$\| v \|_{L^\infty}\ls \| v \|_{L^2} + \sum_{|\alp|=2}\|\nab^{\alp} v \|_{L^2}.$$
Apply this to $v(x) = u_{\mu}(x) := u(\f{x}{\mu})$ for $\mu>0$, we obtain
$$\| u \|_{L^\infty}\ls \mu \| u \|_{L^2} + \mu^{-1} \sum_{|\alp|=2}\|\nab^{\alp} u \|_{L^2}.$$
Choose\footnote{We can assume without loss of generality that $\|u\|_{L^2} \neq 0$ for otherwise the conclusion is trivial.} $\mu = \|u\|_{L^2}^{-\f 12} \left(\sum_{|\alp|=2}\|\nab^{\alp} u \|_{L^2}\right)^{\f 12}$ yields the conclusion.
\end{proof}
}

\subsection{{Product estimates}}

{We have three product estimates. The first can be found in \cite[Appendix I]{livrecb}.}
\begin{prp}\label{produit}
	Let $s,s_1,s_2 \in \m N \cup \{0\}$, $p \in [1,\infty]$, $\de, \de_1, \de_2 \in \m R$. Assume that $s\leq \min(s_1,s_2)$ and $s<s_1+s_2-{\f 2 p}$. Let $\delta < \delta_1 + \delta_2 + {\f 2p}$. Then $\forall (u,v) \in {W^{s_1}_{\delta_1, p}\times W^{s_2}_{\delta_2, p}}$,
	$$\|uv\|_{{W^s_{\delta,p}}} \ls_{{s,s_1,s_2,p,\de,\de_1,\de_2}} \|u\|_{{W^{s_1}_{\delta_1,p}}} \|v\| _{{W^{s_2}_{\delta_2, p}}}.$$
\end{prp}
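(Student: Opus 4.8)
\textbf{Proof plan for Proposition~\ref{produit}.} The statement is the standard weighted-Sobolev product estimate, and I would either cite \cite[Appendix~I]{livrecb} directly or reconstruct the proof along the following lines. The plan is to reduce the general weighted statement to the unweighted (or constant-weight) product estimate on $\mathbb R^2$ by a dyadic decomposition in space. First I would fix a Littlewood–Paley–type partition of $\mathbb R^2$ into annuli $A_j = \{2^{j-1}\le |x|\le 2^{j+1}\}$ for $j\ge 1$ together with the ball $A_0 = \{|x|\le 2\}$, subordinate to a smooth partition of unity $\{\chi_j\}$ with $\sum_j \chi_j = 1$, $\operatorname{supp}\chi_j\subset A_j$, and $|\nabla^\beta \chi_j|\lesssim 2^{-j|\beta|}$. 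On each annulus $A_j$ the weight $(1+|x|^2)^{1/2}$ is comparable to $2^j$, so the weighted norm $\|u\|_{W^{s}_{\delta,p}}$ is comparable (up to the usual $\ell^p$-summation of the dyadic pieces) to $\big(\sum_j 2^{j p(\delta + s)} \|\chi_j u\|_{W^{s,p}(A_j)}^p\big)^{1/p}$ after one rescales $A_j$ to a fixed annulus $A$ of unit size; here each spatial derivative on the rescaled piece gains a factor $2^j$, which is exactly what converts the $s$ derivatives into the weight shift $2^{js}$.

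With this reduction in hand, the core of the argument is the unweighted product estimate on a fixed bounded annulus (or the whole plane), namely that for $s\le\min(s_1,s_2)$ and $s<s_1+s_2-2/p$ one has $\|uv\|_{W^{s,p}}\lesssim \|u\|_{W^{s_1,p}}\|v\|_{W^{s_2,p}}$. The key steps there are: expand $\nabla^\beta(uv)$ by Leibniz into a sum of terms $\nabla^{\beta_1}u\,\nabla^{\beta_2}v$ with $|\beta_1|+|\beta_2|\le s$; for each such term apply Hölder's inequality with exponents chosen so that one factor sits in an $L^q$ space reached by Sobolev embedding from $W^{s_1,p}$ and the other from $W^{s_2,p}$. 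The condition $s\le\min(s_1,s_2)$ guarantees that, say, if $|\beta_1|=s$ then $|\beta_2|=0$ and $v$ can be placed in $L^\infty$ or a high-Lebesgue space via $s_2$ derivatives, and the strict inequality $s<s_1+s_2-2/p$ ensures the Sobolev exponents add up correctly (this is precisely where the borderline $s_1+s_2=2/p$ fails). The scaling bookkeeping on the annulus $A_j$ is harmless because $A_j$ after rescaling has unit size, so all the Sobolev constants are uniform in $j$.

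The last step is to reassemble the dyadic pieces: one checks that the products of the localizing cutoffs $\chi_j u$, $\chi_j v$ only interact for $|j-j'|\le 1$ (finite overlap of the annuli), so $uv = \sum_j (\chi_j u)(\widetilde\chi_j v)$ up to a bounded-overlap correction, and then a discrete Hölder / triangle inequality in the $\ell^p$ weight-sum gives
$$\|uv\|_{W^s_{\delta,p}}^p \lesssim \sum_j 2^{jp(\delta+s)}\|\chi_j u\|_{W^{s,p}}^p\|\widetilde\chi_j v\|_{W^{s_2,p}(A_j)}^p \lesssim \Big(\sup_j 2^{jp(\delta_2+s)}\|\widetilde\chi_j v\|_{W^{s_2,p}}^p\Big)\sum_j 2^{jp(\delta_1)}\|\chi_j u\|_{W^{s_1,p}}^p,$$
and one balances the exponents $\delta_1,\delta_2$ against $\delta$ using $\delta<\delta_1+\delta_2+2/p$ (the $2/p$ comes from a discrete Hölder inequality in $j$ with the summable sequence $2^{-j\epsilon}$). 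The main obstacle I expect is purely bookkeeping: tracking the decay/weight exponents through the rescaling of each annulus and verifying that the three hypotheses ($s\le\min(s_1,s_2)$, $s<s_1+s_2-2/p$, $\delta<\delta_1+\delta_2+2/p$) are exactly what is needed to make both the Sobolev embeddings on each piece and the final $\ell^p$-summation in $j$ converge — there is no deep difficulty, but the constant-chasing is where an error would creep in. Given that this is a standard result, I would most likely simply invoke \cite[Appendix~I]{livrecb} and omit the details.
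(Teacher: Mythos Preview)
Your final conclusion matches the paper exactly: the paper does not prove this proposition at all but simply states that it ``can be found in \cite[Appendix~I]{livrecb}'' and moves on. The dyadic-annulus reduction you outline is a reasonable way to reconstruct the argument if needed, but no such reconstruction appears in the paper, so invoking the reference is precisely what is expected here.
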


{The second is for unweighted Sobolev spaces, which can be found in \cite[Appendix~A]{Tao}.}
{
\begin{prp}\label{product}
 Let $s \in \m N$. Then $\forall (u,v)\in (H^s\cap L^\infty) \times (H^s\cap L^\infty)$,
   $$\| uv \|_{H^s} \ls_{s} \|u\|_{H^s} \|v\|_{L^\infty} + \|u\|_{L^\infty} \|v\|_{H^s}.$$
\end{prp}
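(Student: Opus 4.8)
\textbf{Proof proposal for Proposition~\ref{product}.}

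The plan is to prove the standard tame product estimate $\|uv\|_{H^s}\ls_s \|u\|_{H^s}\|v\|_{L^\infty}+\|u\|_{L^\infty}\|v\|_{H^s}$ for $s\in\m N$ by reducing to the case of pure derivatives of top order and then interpolating. First I would observe that it suffices to estimate $\|\nab^{\alpha}(uv)\|_{L^2}$ for every multi-index $\alpha$ with $|\alpha|\le s$, and by the Leibniz rule this is bounded by a finite sum (with combinatorial constants depending only on $s$) of terms $\|(\nab^{\beta}u)(\nab^{\gamma}v)\|_{L^2}$ with $|\beta|+|\gamma|=|\alpha|\le s$. So the whole estimate follows from the bilinear bound
\begin{equation*}
\|(\nab^{\beta}u)(\nab^{\gamma}v)\|_{L^2}\ls_s \|u\|_{H^s}\|v\|_{L^\infty}+\|u\|_{L^\infty}\|v\|_{H^s},\qquad |\beta|+|\gamma|\le s.
\end{equation*}
If $\beta=0$ this is immediate (pull $v$ out in $L^\infty$); likewise if $\gamma=0$. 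So assume $1\le|\beta|,|\gamma|$ and set $j=|\beta|$, $k=|\gamma|$, $j+k\le s$, $j,k\ge1$.

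For the generic term I would use Hölder's inequality with a pair of conjugate exponents adapted to $j,k$: write $\|(\nab^{\beta}u)(\nab^{\gamma}v)\|_{L^2}\le\|\nab^{\beta}u\|_{L^{p}}\|\nab^{\gamma}v\|_{L^{q}}$ with $\tfrac1p+\tfrac1q=\tfrac12$, $p=\tfrac{2(j+k)}{j}$, $q=\tfrac{2(j+k)}{k}$. Then invoke the Gagliardo--Nirenberg interpolation inequalities on $\m R^2$,
\begin{equation*}
\|\nab^{j}u\|_{L^{p}}\ls \|u\|_{L^\infty}^{1-\theta}\|\nab^{j+k}u\|_{L^2}^{\theta},\quad \theta=\frac{j}{j+k},\qquad
\|\nab^{k}v\|_{L^{q}}\ls \|v\|_{L^\infty}^{1-\theta'}\|\nab^{j+k}v\|_{L^2}^{\theta'},\quad \theta'=\frac{k}{j+k},
\end{equation*}
which hold since the scaling/dimension count $\tfrac1p=\tfrac j{2(j+k)}=\tfrac{(1-\theta)}{\infty}+\theta\big(\tfrac12-\tfrac{k}{2}\cdot\tfrac{1}{1}\big)$ works out (in $2$ dimensions $\tfrac{j}{p}-j=\theta(\tfrac{j+k}{2}-(j+k))\cdot\tfrac{2}{2}$, i.e. the standard G--N admissibility). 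Multiplying the two interpolation bounds and using $\theta+\theta'=1$ together with Young's inequality $a^{\theta}b^{1-\theta}c^{1-\theta}d^{\theta}\le \tfrac12(a c)+\tfrac12(bd)$ (after grouping the $L^2$ top-order norms against the $L^\infty$ norms) yields
\begin{equation*}
\|(\nab^{\beta}u)(\nab^{\gamma}v)\|_{L^2}\ls \|u\|_{L^\infty}\|\nab^{j+k}v\|_{L^2}+\|\nab^{j+k}u\|_{L^2}\|v\|_{L^\infty}\ls \|u\|_{L^\infty}\|v\|_{H^s}+\|u\|_{H^s}\|v\|_{L^\infty},
\end{equation*}
since $j+k\le s$. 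Summing over the finitely many multi-indices arising from Leibniz then gives the claim. Since this is exactly the content of \cite[Appendix~A]{Tao}, one may alternatively simply cite that reference; I would include the short argument above for completeness.

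The only mild subtlety --- the ``hard part'' --- is making the Gagliardo--Nirenberg exponents legitimate: one needs the interpolation inequality $\|\nab^{j}u\|_{L^{p}}\ls\|u\|_{L^\infty}^{1-\theta}\|u\|_{\dot W^{j+k,2}}^{\theta}$ to be an admissible endpoint case (it involves $L^\infty$, which is the borderline $r=\infty$ case of G--N and requires $j<(j+k)$, i.e. $k\ge1$, which we have). On $\m R^2$ this endpoint is valid for all integer orders with $1\le j<j+k$; if one prefers to avoid the $L^\infty$ endpoint of G--N entirely, an equivalent route is Littlewood--Paley: decompose $u=\sum_N u_N$, $v=\sum_M v_M$, split $uv$ into high-low, low-high and high-high pieces, bound the high-low piece by $\|u\|_{H^s}\|v\|_{L^\infty}$ using that $\|v_M\|_{L^\infty}$ sums (after a further harmless frequency-envelope argument) to $\lesssim\|v\|_{L^\infty}$ up to logarithms --- which are absorbed because we only need $s\in\m N$ and can afford $\ell^1$-in-frequency via Bernstein at the cost of $s$ derivatives --- and symmetrically for low-high, while the high-high piece is handled by Bernstein on $\m R^2$. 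Either way the constants depend only on $s$. I expect no genuine obstruction; the statement is classical and the write-up is a matter of bookkeeping the Leibniz sum and one interpolation inequality.
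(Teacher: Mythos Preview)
Your proposal is correct: the Leibniz expansion followed by H\"older with exponents $p=2(j+k)/j$, $q=2(j+k)/k$ and the Gagliardo--Nirenberg interpolation $\|\nab^{j}u\|_{L^{2(j+k)/j}}\ls\|u\|_{L^\infty}^{k/(j+k)}\|\nab^{j+k}u\|_{L^2}^{j/(j+k)}$ is exactly the standard argument, and the paper itself does not give a proof at all but simply cites \cite[Appendix~A]{Tao}, which is where this very argument appears. One small slip: in your Young's inequality parenthetical the grouping should read $(ad)^{\theta}(bc)^{1-\theta}\le \theta\,ad+(1-\theta)\,bc$ rather than $\tfrac12(ac)+\tfrac12(bd)$, but your displayed conclusion on the next line is the correct one.
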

}
{The third product estimate is an immediate corollary of Proposition~\ref{product}, which is useful when one of the two functions is compactly supported.
\begin{prp}\label{product.local}
 Let $s \in \m N$. Then $\forall (u,v)\in (H^s\cap L^\infty) \times (H^s\cap L^\infty)$ such that $supp(u)\subset B(0,R_{supp})$,
   $$\| uv \|_{H^s} \ls_{s,R_{supp}} \|u\|_{H^s} \|v\|_{L^\infty(B(0,R_{supp}+1))} + \|u\|_{L^\infty} \|v\|_{H^s(B(0,R_{supp}+1))}.$$
	\end{prp}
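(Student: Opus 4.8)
\textbf{Proof proposal for Proposition~\ref{product.local}.}

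The plan is to reduce this to Proposition~\ref{product} by a standard cutoff argument, exploiting the hypothesis $\mathrm{supp}(u)\subset B(0,R_{supp})$. First I would fix a smooth cutoff function $\zeta\in C^\infty_c(\mathbb R^2)$ with $\zeta\equiv 1$ on $B(0,R_{supp})$ and $\mathrm{supp}(\zeta)\subset B(0,R_{supp}+1)$ (such a $\zeta$ exists and its $C^m$ norms depend only on $R_{supp}$). Since $u$ is supported in $B(0,R_{supp})$, we have $u=\zeta u$ and hence $uv=u(\zeta v)$ as functions on all of $\mathbb R^2$.

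The key step is then to apply Proposition~\ref{product} to the pair $(u,\zeta v)$:
\begin{equation*}
\|uv\|_{H^s}=\|u(\zeta v)\|_{H^s}\lesssim_s \|u\|_{H^s}\|\zeta v\|_{L^\infty}+\|u\|_{L^\infty}\|\zeta v\|_{H^s}.
\end{equation*}
For the $L^\infty$ factor, $\|\zeta v\|_{L^\infty}\le \|v\|_{L^\infty(B(0,R_{supp}+1))}$ since $\zeta$ is supported there and $|\zeta|\le 1$ (or more generally $\lesssim_{R_{supp}}1$). For the $H^s$ factor, I would use the Leibniz rule: $\|\zeta v\|_{H^s}\lesssim_{s}\sum_{|\alpha|\le s}\|\partial^\alpha\zeta\|_{L^\infty}\|v\|_{H^{s}(B(0,R_{supp}+1))}\lesssim_{s,R_{supp}}\|v\|_{H^s(B(0,R_{supp}+1))}$, where in the last step the $C^s$ norms of $\zeta$ are absorbed into the implicit constant (they depend only on $R_{supp}$) and we note that only the values of $v$ on $\mathrm{supp}(\zeta)\subset B(0,R_{supp}+1)$ enter. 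Combining these two bounds yields exactly the claimed estimate with implicit constant depending only on $s$ and $R_{supp}$.

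There is no real obstacle here; the only minor point to be careful about is that $\|\zeta v\|_{H^s}$ genuinely depends only on $v\restriction_{B(0,R_{supp}+1)}$ (so that the right-hand side makes sense as stated with the local norm of $v$), which is immediate since $\zeta$ vanishes outside $B(0,R_{supp}+1)$, and that Proposition~\ref{product} requires both factors to lie in $H^s\cap L^\infty$ — which holds for $\zeta v$ because $\zeta$ is smooth with compact support and $v\in H^s\cap L^\infty$. One could alternatively phrase the whole argument without introducing $\zeta$ at all, directly estimating $\partial^\alpha(uv)=\sum_{\beta\le\alpha}\binom{\alpha}{\beta}\partial^\beta u\,\partial^{\alpha-\beta}v$ via Hölder on $B(0,R_{supp})$ and Gagliardo--Nirenberg interpolation, but the cutoff reduction to the already-stated Proposition~\ref{product} is cleaner and shorter.
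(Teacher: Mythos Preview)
Your proof is correct and takes exactly the same approach as the paper: introduce a smooth cutoff $\eta$ (your $\zeta$) supported in $B(0,R_{supp}+1)$ and identically $1$ on $B(0,R_{supp})$, write $uv=u(\eta v)$, and apply Proposition~\ref{product} to the pair $(u,\eta v)$. The paper's proof is more terse, leaving the estimates for $\|\eta v\|_{L^\infty}$ and $\|\eta v\|_{H^s}$ implicit, but the argument is identical.
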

\begin{proof}
Let $\eta$ be a cutoff function compactly supported in $B(0,R_{supp}+1)$ which is $\equiv 1$ in $B(0,R_{supp})$. Noting that $uv = uv\eta$, we apply Proposition~\ref{product} with $(u,v\eta)$ instead of $(u,v)$.
\end{proof}
}

\subsection{{Inversion of the Laplacian}}
We then discuss the invertibility of the Laplacian on weighted Sobolev spaces. The following theorem is due to McOwen:

\begin{thm}\label{laplacien}(Theorem 0 in \cite{laplacien})
	Let $m\in \mathbb{Z}$, $m\geq 0$, $1<p<\infty$ and $-\frac{2}{p}+m<\delta<m+1-\frac{2}{p}$. The Laplace operator $\Delta :W^{2+m}_{\delta,p} \rightarrow W^{m}_{\delta+2,p}$ is an injection with closed range 
	$$\left \{f \in W^{m}_{\delta+2,p}\; | \;\int fv =0 \quad \forall v \in \cup_{i=0}^m \mathcal{H}_i \right \},$$
	where $\mathcal{H}_i$ is the set of harmonic polynomials of degree $i$.
	Moreover, $u$ obeys the estimate
	$$\|u\|_{W^{2+m}_{\delta,p}} \leq C(\delta,m,p)\|\Delta u\|_{W^{m}_{\delta+2,p}},$$
	where $C(\delta,m,p) > 0$ is a constant depending on $\delta$, $m$ and $p$.
\end{thm}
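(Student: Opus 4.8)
\textbf{Proof proposal for Theorem~\ref{laplacien}.} This is McOwen's theorem, so the plan is to indicate the proof strategy rather than reproduce the full argument from \cite{laplacien}. The result has two essentially separate components: the \emph{a priori estimate} $\|u\|_{W^{2+m}_{\delta,p}} \leq C\|\Delta u\|_{W^m_{\delta+2,p}}$ for $u$ in the relevant weighted space, and the \emph{Fredholm/range characterization}, identifying the closed range of $\Delta$ with the $L^2$-orthogonal complement of the harmonic polynomials of degree $\leq m$. I would organize the proof around these two pieces.

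First I would establish the a priori estimate. The natural route is a scaling/dyadic decomposition argument: cover $\mathbb R^2$ by annuli $A_j = \{2^j \leq |x| \leq 2^{j+1}\}$ (together with a ball near the origin), rescale each annulus to a fixed annulus of unit size, apply the standard interior (unweighted) Calder\'on--Zygmund elliptic estimate $\|u\|_{W^{2+m,p}} \lesssim \|\Delta u\|_{W^{m,p}} + \|u\|_{L^p}$ on the fixed annulus, and then track how the weights $(1+|x|^2)^{(\delta+|\beta|)/2}$ transform under the rescaling. Summing the resulting estimates over $j$ with the weights reinserted gives $\|u\|_{W^{2+m}_{\delta,p}} \lesssim \|\Delta u\|_{W^m_{\delta+2,p}} + \|u\|_{L^p_{\delta}}$. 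The lower-order term is then absorbed: because $\delta$ lies in the subcritical window $-\frac 2p + m < \delta < m+1-\frac 2p$, the operator $\Delta$ has no kernel in $W^{2+m}_{\delta,p}$ (a decaying harmonic function on $\mathbb R^2$ in this weight range must vanish, by a Liouville-type argument using the mean value property and the growth restriction), so a standard compactness-contradiction argument removes the $\|u\|_{L^p_\delta}$ term. Injectivity of $\Delta$ on $W^{2+m}_{\delta,p}$ follows immediately from the estimate.

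Second I would treat the range. The estimate just proved shows $\Delta(W^{2+m}_{\delta,p})$ is closed in $W^m_{\delta+2,p}$. To identify it, one computes the cokernel: by duality, $f \in W^m_{\delta+2,p}$ lies in the range iff $f$ annihilates the kernel of the formal adjoint $\Delta$ acting on the dual weighted space $W^{-m}_{-\delta-2,p'}$ (with $\frac 1p + \frac 1{p'} = 1$). The point is that the dual weight $-\delta-2$ sits in the \emph{conjugate} window, where $\Delta$ has a finite-dimensional kernel consisting exactly of the harmonic polynomials of degree $\leq m$ — this is the classical fact that homogeneous harmonic polynomials of degree $i$ (for $0 \leq i \leq m$) are precisely the harmonic functions with the admissible polynomial growth in that dual weight range, and there are no others. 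Hence the range is $\{f : \int f v = 0 \ \forall v \in \bigcup_{i=0}^m \mathcal H_i\}$, as claimed. I would cite \cite{laplacien} for the precise identification of these kernel/cokernel spaces and for the duality pairing bookkeeping, since these are exactly the technical heart of McOwen's paper.

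The main obstacle is the bookkeeping in the dyadic scaling argument — keeping precise track of the powers of $2^j$ produced by the weights versus those produced by rescaling the $W^{2+m,p} \leftrightarrow \Delta$-on-$W^{m,p}$ estimate, and checking that the summation over $j$ converges, which is exactly where the two-sided restriction $-\frac 2p + m < \delta < m+1-\frac 2p$ on $\delta$ is used (the upper bound ensures convergence on one end, the lower bound on the other, and both together guarantee triviality of the kernel). A secondary delicate point is the Liouville-type argument ruling out kernel elements, which must be done carefully in dimension $2$ where harmonic functions can grow logarithmically; but for the stated open interval of weights this growth is excluded. Since all of this is carried out in detail in \cite{laplacien}, for the purposes of this paper I would simply invoke the theorem as stated.
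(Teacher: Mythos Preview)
The paper does not prove this theorem at all; it is stated as a direct citation of McOwen's result (Theorem~0 in \cite{laplacien}) and invoked without argument. Your final sentence already recognizes this, and that is exactly the approach the paper takes. Your outlined proof strategy (dyadic rescaling plus Calder\'on--Zygmund for the a priori estimate, duality to identify the cokernel with harmonic polynomials) is a correct sketch of the standard argument and is consistent with what McOwen does, but none of it appears in the present paper.
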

An immediate corollary is the following:
\begin{cor} \label{coro} Let $-1<\delta<0$ and $f\in H^0_{\delta+2}$. Then there exists a solution $u$ of 
	$$\Delta u =f$$
	which can be written 
	$$u=\frac{1}{2\pi}\left(\int f\right)\chi({|x|})\log({|x|}) +v,$$
	where $\chi:[0,\infty)\to \mathbb R$ is a smooth cutoff function $=0$ for ${|x|}\leq 1$ and $=1$ for ${|x|}\geq 2$, and $v \in H^2_{\delta}$ is such that
	$\|v\|_{H^2_\delta} \leq C(\delta)\|f\|_ {H^0_{ \delta+2}}$.
\end{cor}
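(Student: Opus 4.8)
\textbf{Proof plan for Corollary~\ref{coro}.}
The plan is to deduce this from Theorem~\ref{laplacien} applied in the case $m=0$, $p=2$, but after first subtracting off the explicit logarithmic profile to land in the closed range of the Laplacian. First I would fix the cutoff $\chi$ as in the statement and set $w := \frac{1}{2\pi}\left(\int_{\m R^2} f\right)\chi(|x|)\log(|x|)$. A direct computation gives $\Delta w = \frac{1}{2\pi}\left(\int f\right)\Delta(\chi\log|x|)$, and since $\Delta(\log|x|) = 0$ away from the origin while $\chi$ is supported in $\{|x|\geq 1\}$, the function $\Delta(\chi\log|x|)$ is smooth, compactly supported in the annulus $\{1\leq |x|\leq 2\}$, and — crucially — has total integral equal to $1$ (this is the standard fact that $\log|x|$ is, up to the constant $2\pi$, the fundamental solution of $\Delta$ on $\m R^2$; it can be checked by Stokes' theorem on $\{1\leq |x|\leq 2\}$ using that $\partial_r \log r = 1/r$ and integrating over the circle of radius $2$). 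Hence $\Delta w$ is a smooth compactly supported function with $\int \Delta w = \int f$.

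Next I would consider $\tilde f := f - \Delta w$. Since $f\in H^0_{\delta+2}$ with $-1<\delta<0$, and $\Delta w$ is smooth and compactly supported (so it lies in every $H^m_{\delta'}$), we have $\tilde f\in H^0_{\delta+2}$ with $\int \tilde f = \int f - \int \Delta w = 0$. For $m=0$ and $p=2$, the range condition in Theorem~\ref{laplacien} requires exactly $\int \tilde f \cdot v = 0$ for all $v\in \mathcal H_0$, i.e.\ for all constants $v$, which is precisely $\int \tilde f = 0$. The hypothesis $-\frac{2}{p}+m<\delta<m+1-\frac{2}{p}$ becomes $-1<\delta<0$, which holds. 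Therefore Theorem~\ref{laplacien} produces $v\in H^2_\delta = W^2_{\delta,2}$ with $\Delta v = \tilde f$ and $\|v\|_{H^2_\delta}\leq C(\delta)\|\tilde f\|_{H^0_{\delta+2}}$.

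Finally I would set $u := w + v$, so that $\Delta u = \Delta w + \tilde f = f$, which is the desired form $u = \frac{1}{2\pi}\left(\int f\right)\chi(|x|)\log(|x|) + v$. For the norm bound on $v$, I would note $\|\tilde f\|_{H^0_{\delta+2}} \leq \|f\|_{H^0_{\delta+2}} + \|\Delta w\|_{H^0_{\delta+2}}$ and that $\|\Delta w\|_{H^0_{\delta+2}} \lesssim |\int f| \lesssim \|f\|_{H^0_{\delta+2}}$, the last inequality by Cauchy--Schwarz against the weight (here one uses $\delta+2 > 1$, equivalently $\delta > -1$, so that $(1+|x|^2)^{-(\delta+2)/2}\in L^2(\m R^2)$, giving $|\int f| = |\int (1+|x|^2)^{-(\delta+2)/2}\cdot (1+|x|^2)^{(\delta+2)/2} f|\lesssim \|f\|_{H^0_{\delta+2}}$). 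Combining yields $\|v\|_{H^2_\delta}\leq C(\delta)\|f\|_{H^0_{\delta+2}}$ as claimed.

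The only mildly delicate point — and the one I would be careful about — is the normalization that $\Delta(\chi\log|x|)$ has integral exactly $1$, since getting the constant wrong would break the cancellation $\int\tilde f=0$; everything else is bookkeeping with the weighted norms and a direct appeal to McOwen's theorem. There is no substantive obstacle here: the statement is essentially the $2$-dimensional Fredholm alternative for $\Delta$ packaged with the known log-growth of solutions with nonzero mass.
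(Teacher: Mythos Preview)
Your approach is exactly the intended one: the paper states the result as an ``immediate corollary'' of McOwen's theorem without giving a proof, and what you wrote is precisely the standard deduction (subtract off the log profile to land in the closed range, then invoke Theorem~\ref{laplacien} with $m=0$, $p=2$).

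One small numerical slip to fix: the total integral of $\Delta(\chi\log|x|)$ is $2\pi$, not $1$ --- your own divergence-theorem computation on $\{|x|\leq R\}$ gives $\int_{|x|=R}\tfrac{1}{R}\,ds = 2\pi$. This is of course exactly what the prefactor $\tfrac{1}{2\pi}$ in $w$ is there to cancel, so your conclusion $\int\Delta w = \int f$ is correct; just adjust the intermediate sentence. Everything else (the range condition reducing to $\int\tilde f=0$, the weighted Cauchy--Schwarz bound $|\!\int f|\lesssim\|f\|_{H^0_{\delta+2}}$ using $\delta>-1$) is fine.
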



\begin{thebibliography}{1}

\bibitem{Burnett}
Gregory A. Burnett, \emph{The high-frequency limit in general relativity}, J. of Math. Phys. 30, 90--96 (1989).

\bibitem{CBHF}
Yvonne Choquet-Bruhat, \emph{Construction de solutions radiatives approch\'ees des \'equations d'Einstein}, Comm. Math. Phys. 12, 16--35 (1969).

\bibitem{livrecb}
Yvonne Choquet-Bruhat, \emph{General Relativity and the Einstein Equations}, Oxford Mathematical Monographs (2009).

\bibitem{GW1}
Stephen R. Green and Robert M. Wald, \emph{New framework for analyzing the effects of small scale inhomogeneities in cosmology}, Phys. Rev. D 83, 084020, arXiv:1011.4920 (2011).

\bibitem{GW2}
Stephen R. Green and Robert M. Wald, \emph{Examples of backreaction of small scale inhomogeneities in cosmology}, Phys. Rev. D 87, 124037, arXiv:1304.2318 (2013).

\bibitem{Huneau.constraints}
C\'ecile Huneau, \emph{Constraint equations for $3+1$ vacuum Einstein equations with a translational space-like Killing field in the asymptotically flat case}, Ann. Henri Poincar\'e 17(2), {271--299, arXiv:1302.1473} (2016).

\bibitem{HL}
C\'ecile Huneau and Jonathan Luk, \emph{Einstein {equations} under polarized $\mathbb U(1)$ symmetry in an elliptic gauge}, preprint (2017).

\bibitem{Isaacson1}
Richard A. Isaacson, \emph{Gravitational Radiation in the Limit of High Frequency. I. The Linear Approximation and Geometrical Optics}, Phys. Rev. 166, 1263--1271 (1968).

\bibitem{Isaacson2}
Richard A. Isaacson, \emph{Gravitational Radiation in the Limit of High Frequency. II. Nonlinear Terms and the Effective Stress Tensor}, Phys. Rev. 166, 1272--1279 (1968).

\bibitem{KRS}
Sergiu Klainerman, Igor Rodnianski and J\'er\'emie Szeftel, 
\emph{The bounded {$L^2$} curvature conjecture},
 Invent. Math., 202(1), 91--216{, arXiv:1204.1767} (2015).

\bibitem{LR1}
{Jonathan Luk and Igor Rodnianski, \emph{Local propagation of impulsive gravitational waves}, Comm. Pure and Appl. Math. 68(4), 511-624, arXiv:1209.1130 (2015).}

\bibitem{LR2}
Jonathan Luk and Igor Rodnianski, \emph{Nonlinear interaction of impulsive gravitational waves for the vacuum Einstein equations}, to appear in Cambridge Journal of Math., arXiv:1301.1072 (2013).

\bibitem{LRHF}
Jonathan Luk and Igor Rodnianski, \emph{High frequency limits in general relativity}, in preparation.

\bibitem{MT}
M. A. H. MacCallum and A. H. Taub, \emph{The averaged Lagrangian and high-frequency gravitational waves}, Comm. Math. Phys. 30(2), 153--169 (1973).

\bibitem{laplacien}
Robert C. McOwen, \emph{The behavior of the Laplacian on weighted Sobolev spaces}, Comm. Pure Appl. Math. 32(6), 783--795 (1979). 

\bibitem{Tao}
Terence Tao, \emph{Nonlinear Dispersive Equations: Local and Global Analysis}, CBMS Regional Conference Series in Mathematics
Volume: 106, 2006.




\end{thebibliography}
\end{document}